\documentclass[10pt]{article}

\usepackage{amsmath,amssymb}
\usepackage{mathtools}
\usepackage{geometry} 

\usepackage{authblk}

\usepackage[amsmath,amsthm,thmmarks]{ntheorem} \usepackage[nottoc]{tocbibind}
\usepackage{appendix}
\usepackage{subcaption}

\usepackage{stackrel} \usepackage{leftidx} \usepackage{scalerel} 
\usepackage{amsfonts}
\usepackage{euscript}
\usepackage{pxfonts}
\usepackage{bbold}

\usepackage[shortlabels]{enumitem}
\usepackage{hyperref}
\hypersetup{
    colorlinks=true,
    linkcolor=red!50!black,
    citecolor=green!50!black,
    urlcolor=blue!80!black
}
\usepackage{cleveref}
\usepackage[normalem]{ulem}

\usepackage[dvipsnames]{xcolor}
\usepackage{tcolorbox}
\usepackage{colortbl}

\usepackage{graphicx}
\usepackage{adjustbox}
\usepackage{float} 
\usepackage[all,2cell,arrow,matrix]{xy} \UseAllTwocells \SilentMatrices
\usepackage{tikz-cd}
\usepackage{tikz}
\usetikzlibrary{arrows}
\usetikzlibrary{decorations.pathreplacing,decorations.markings,decorations.pathmorphing}
\usepackage{commutative-diagrams} 

\usepackage{multirow}
\usepackage{array}
\usepackage{makecell} 
 \makeatletter

\newcommand\pf                {\begin{proof}}
\newcommand\epf               {\end{proof}}
\newcommand\be                {\begin{equation}}
\newcommand\ee                {\end{equation}}
\newcommand\bea               {\begin{eqnarray}}
\newcommand\eea               {\end{eqnarray}}
\newcommand\bnu               {\begin{enumerate}}
\newcommand\enu               {\end{enumerate}}
\newcommand\bit               {\begin{itemize}}
\newcommand\eit               {\end{itemize}}

\newcommand\id                {\mathrm{id}}

\newcommand{\widesim}[3][1.5] {\mathrel{\overset{#2}{\underset{#3}{\scalebox{#1}[1]{$\sim$}}}}}

\def\Circlearrowright{\ensuremath{\rotatebox[origin=c]{90}{$\circlearrowright$}}}

\newcommand{\ncomm}{\mathrel{\raisebox{-0.6ex}{\scalebox{1}[2.2]{$\boldsymbol{\setminus}$}}\kern-0.77em\raisebox{-0.7ex}{\scalebox{1.5}{$\Circlearrowright$}}}}

\DeclareMathOperator{\Hom}           {\mathrm{Hom}}

\DeclareMathOperator{\Id}            {\mathrm{Id}}

\DeclareMathOperator{\Fun}           {\EuScript{F}\mathrm{un}}

\DeclareMathOperator{\Alg}           {\EuScript{A}\mathrm{lg}}

\DeclareMathOperator{\Mod}           {\mathrm{Mod}}

\DeclareMathOperator{\LMod}          {\mathrm{LMod}}

\DeclareMathOperator{\Aut}           {\mathcal{A}\mathrm{ut}}
\DeclareMathOperator{\Ind}           {\mathrm{Ind}}

\DeclareMathOperator{\ot}			       {\otimes}

\DeclareMathOperator*{\td}{\times} \DeclareMathOperator*{\otd}{\otimes} \DeclareMathOperator*{\btd}{\boxtimes} \DeclareMathOperator*{\odotd}{\odot}

\newcommand\Cat			        {\EuScript{C}\mathrm{at}}

\newcommand\Ab			        {\EuScript{A}\mathrm{b}}
\newcommand\vect			      {\mathrm{Vec}}

\newcommand\Mrt             {\EuScript{M}\mathrm{rt}}

\newcommand{\bscale}	{0.7}
\newcommand{\bc}[2][]	{{\@ec{#1}{#2}}} \newcommand{\@ec}[2]	{\mathchoice
    {\displaystyle \raise.9ex\hbox{$\scaleobj{\bscale}{#1}$} {#2}}{\textstyle \raise.9ex\hbox{$\scaleobj{\bscale}{#1}$} {#2}}{\scriptstyle \raise.55ex\hbox{$\scriptstyle \scaleobj{\bscale}{#1}$} {#2}}{\scriptscriptstyle \raise.38ex\hbox{$\scriptscriptstyle \scaleobj{\bscale}{#1}$} {#2}}}

\newcommand\rso{\bgroup\markoverwith {\textcolor{red}{\rule[0.5ex]{2pt}{0.4pt}}}\ULon}

\newcommand\CA              {\EuScript{A}}
\newcommand\CB              {\EuScript{B}}
\newcommand\CC              {\EuScript{C}}
\newcommand\CD              {\EuScript{D}}
\newcommand\CE              {\EuScript{E}}

\newcommand\CM              {\EuScript{M}}
\newcommand\CN              {\EuScript{N}}

\newcommand\CP              {\EuScript{P}}

\newcommand\CS              {\EuScript{S}}

\DeclareMathAlphabet{\mathcal}{OMS}{cmsy}{m}{n}

\newcommand\fZ              {\mathfrak{Z}}

\newcommand\bfone           {\mathbf{1}}

\newcommand\sB			    {\mathsf{B}}

\newcommand\sW			    {\mathsf{W}}
\newcommand\sX			    {\mathsf{X}}

\newcommand\bC			    {\mathbb{C}}

\newcommand\bk			    {\mathbb{k}}
\newcommand\bone            {\mathbb{1}}

\tikzset{->-/.style={decoration={markings,mark=at position #1 with {\arrow{stealth}}},postaction={decorate}},->-/.default=0.55}

\newtheorem{thm}{Theorem}
\newtheorem{prp}[thm]{Proposition}
\newtheorem{lem}[thm]{Lemma}
\newtheorem{crl}[thm]{Corollary}

\newtheorem{cnj}{Conjecture}

\theoremstyle{definition}
\newtheorem{dfn}{Definition}

\newtheorem{cexpl}[dfn]{Counter-Example}

\newtheorem{notation}{Notation}

\theoremstyle{remark}
\newtheorem{rmk}{Remark}

\numberwithin{thm}{section}
\numberwithin{dfn}{section}
\numberwithin{rmk}{section}

 \DeclareMathSizes{6}{6}{5}{4}

\newcommand{\hindAux}[4]{\vcenter{\hbox{$#1 #2$}}#3\vcenter{\hbox{$#1 #4$}}}

\newcommand{\hind}[3]{\mathchoice
    {\hindAux{\scriptstyle}{#1}{#2}{#3}}{\hindAux{\scriptstyle}{#1}{#2}{#3}}{\hindAux{\scriptscriptstyle}{#1}{#2}{#3}}{\hindAux{\scriptscriptstyle}{#1}{#2}{#3}}}

\newcommand{\vind}[3]{\overset{#1}{\underset{#3}{#2}}
}

\DeclareRobustCommand{\vfus}[3]{\mathpalette\vfusAux{{#1}{#2}{#3}}}
\newcommand{\vfusAux}[2]{\vfusDraw{#1}#2}
\newcommand{\vfusDraw}[4]{\mathord{\vcenter{\offinterlineskip
    \halign{\hfil$\m@th #1##$\hfil\cr
      #2\cr
      \noalign{\kern0.25ex}#3\cr
      \noalign{\kern0.25ex}#4\cr
    }}}}

\newsavebox{\catboxReferenceBox}
\newsavebox{\catboxContentBox}
\newsavebox{\catboxDisplayFrame}
\newsavebox{\catboxTextFrame}
\newsavebox{\catboxScriptFrame}
\newsavebox{\catboxScriptscriptFrame}
\newlength{\catboxInnerWidth}
\newlength{\catboxInnerHeight}
\newlength{\catboxFitWidth}
\newlength{\catboxFitHeight}
\newlength{\catboxClearance}
\newlength{\catboxOuterWidth}
\newlength{\catboxOuterHeight}
\newlength{\catboxFrameHeight}
\newlength{\catboxContentHeight}
\newlength{\catboxContentRaise}
\newcommand{\catboxBuildFrame}[3]{\begingroup
    \sbox{\catboxReferenceBox}{$\m@th #1 \CC$}\setlength{\catboxInnerWidth}{\wd\catboxReferenceBox}\setlength{\catboxInnerHeight}{\dimexpr\ht\catboxReferenceBox+\dp\catboxReferenceBox\relax}\setlength{\catboxOuterWidth}{\catboxInnerWidth}\addtolength{\catboxOuterWidth}{#2}\addtolength{\catboxOuterWidth}{#2}\setlength{\catboxOuterHeight}{\catboxInnerHeight}\addtolength{\catboxOuterHeight}{#2}\addtolength{\catboxOuterHeight}{#2}\ifdim\catboxOuterWidth<\catboxOuterHeight
      \setlength{\catboxOuterWidth}{\catboxOuterHeight}\else
      \setlength{\catboxOuterHeight}{\catboxOuterWidth}\fi
    \global\sbox{#3}{\begin{tikzpicture}[inner sep=0pt,outer sep=0pt]
        \node[draw,rectangle,line width=0.4pt,inner sep=0pt,outer sep=0pt,
          minimum width=\catboxOuterWidth,
          minimum height=\catboxOuterHeight] {};
      \end{tikzpicture}}\endgroup
}
\newcommand{\catboxInitializeFrames}{\catboxBuildFrame{\displaystyle}{1.6pt}{\catboxDisplayFrame}\catboxBuildFrame{\textstyle}{1.6pt}{\catboxTextFrame}\catboxBuildFrame{\scriptstyle}{1.1pt}{\catboxScriptFrame}\catboxBuildFrame{\scriptscriptstyle}{0.8pt}{\catboxScriptscriptFrame}}
\DeclareRobustCommand{\btc}[1]{\mathchoice
    {\catboxCompose{\displaystyle}{1.6pt}{#1}{\catboxDisplayFrame}}{\catboxCompose{\textstyle}{1.6pt}{#1}{\catboxTextFrame}}{\catboxCompose{\scriptstyle}{1.1pt}{#1}{\catboxScriptFrame}}{\catboxCompose{\scriptscriptstyle}{0.8pt}{#1}{\catboxScriptscriptFrame}}}
\newcommand{\catboxCompose}[4]{\begingroup
    \sbox{\catboxReferenceBox}{$\m@th #1 \CC$}\setlength{\catboxInnerWidth}{\wd\catboxReferenceBox}\setlength{\catboxInnerHeight}{\dimexpr\ht\catboxReferenceBox+\dp\catboxReferenceBox\relax}\setlength{\catboxOuterWidth}{\catboxInnerWidth}\addtolength{\catboxOuterWidth}{#2}\addtolength{\catboxOuterWidth}{#2}\setlength{\catboxOuterHeight}{\catboxInnerHeight}\addtolength{\catboxOuterHeight}{#2}\addtolength{\catboxOuterHeight}{#2}\ifdim\catboxOuterWidth<\catboxOuterHeight
      \setlength{\catboxOuterWidth}{\catboxOuterHeight}\else
      \setlength{\catboxOuterHeight}{\catboxOuterWidth}\fi
    \setlength{\catboxClearance}{#2}\divide\catboxClearance by 2\relax
    \setlength{\catboxFitWidth}{\catboxOuterWidth}\addtolength{\catboxFitWidth}{-\catboxClearance}\setlength{\catboxFitHeight}{\catboxOuterHeight}\addtolength{\catboxFitHeight}{-\catboxClearance}\sbox{\catboxContentBox}{\adjustbox{max width=\catboxFitWidth,
        max totalheight=\catboxFitHeight}{$\m@th #1 #3$}}\setlength{\catboxFrameHeight}{\dimexpr\ht#4+\dp#4\relax}\setlength{\catboxContentHeight}{\dimexpr\ht\catboxContentBox+\dp\catboxContentBox\relax}\setlength{\catboxContentRaise}{\catboxFrameHeight}\addtolength{\catboxContentRaise}{-\catboxContentHeight}\divide\catboxContentRaise by 2\relax
    \addtolength{\catboxContentRaise}{\dp\catboxContentBox}\addtolength{\catboxContentRaise}{-\dp#4}\mathord{\vcenter{\hbox{\rlap{\usebox{#4}}\makebox[\wd#4][c]{\raisebox{\catboxContentRaise}{\usebox{\catboxContentBox}}}}}}\endgroup
}

\newsavebox{\catotimesReferenceBox}
\newsavebox{\catotimesContentBox}
\newsavebox{\catotimesDisplayFrame}
\newsavebox{\catotimesTextFrame}
\newsavebox{\catotimesScriptFrame}
\newsavebox{\catotimesScriptscriptFrame}
\newlength{\catotimesOuterSize}
\newlength{\catotimesFitSize}
\newlength{\catotimesFrameHeight}
\newlength{\catotimesContentHeight}
\newlength{\catotimesContentRaise}
\newcommand{\catotimesBuildFrame}[2]{\begingroup
    \sbox{\catotimesReferenceBox}{$\m@th #1\otimes$}\setlength{\catotimesOuterSize}{\wd\catotimesReferenceBox}\ifdim\catotimesOuterSize<\dimexpr
        \ht\catotimesReferenceBox+\dp\catotimesReferenceBox\relax
      \setlength{\catotimesOuterSize}{\dimexpr
        \ht\catotimesReferenceBox+\dp\catotimesReferenceBox\relax}\fi
    \global\sbox{#2}{\begin{tikzpicture}[inner sep=0pt,outer sep=0pt]
        \node[draw,circle,line width=0.4pt,inner sep=0pt,outer sep=0pt,
          minimum size=\catotimesOuterSize] {};
      \end{tikzpicture}}\endgroup
}
\newcommand{\catotimesInitializeFrames}{\catotimesBuildFrame{\displaystyle}{\catotimesDisplayFrame}\catotimesBuildFrame{\textstyle}{\catotimesTextFrame}\catotimesBuildFrame{\scriptstyle}{\catotimesScriptFrame}\catotimesBuildFrame{\scriptscriptstyle}{\catotimesScriptscriptFrame}}
\DeclareRobustCommand{\otc}[1]{\mathchoice
    {\catotimesCompose{\displaystyle}{\scriptstyle}{1.2pt}{#1}{\catotimesDisplayFrame}}{\catotimesCompose{\textstyle}{\scriptstyle}{1.2pt}{#1}{\catotimesTextFrame}}{\catotimesCompose{\scriptstyle}{\scriptscriptstyle}{0.8pt}{#1}{\catotimesScriptFrame}}{\catotimesCompose{\scriptscriptstyle}{\scriptscriptstyle}{0.6pt}{#1}{\catotimesScriptscriptFrame}}}
\newcommand{\catotimesCompose}[5]{\begingroup
    \sbox{\catotimesReferenceBox}{$\m@th #1\otimes$}\setlength{\catotimesOuterSize}{\wd\catotimesReferenceBox}\ifdim\catotimesOuterSize<\dimexpr
        \ht\catotimesReferenceBox+\dp\catotimesReferenceBox\relax
      \setlength{\catotimesOuterSize}{\dimexpr
        \ht\catotimesReferenceBox+\dp\catotimesReferenceBox\relax}\fi
    \setlength{\catotimesFitSize}{\catotimesOuterSize}\addtolength{\catotimesFitSize}{-#3}\addtolength{\catotimesFitSize}{-#3}\sbox{\catotimesContentBox}{\adjustbox{max width=\catotimesFitSize,
        max totalheight=\catotimesFitSize}{$\m@th #2 #4$}}\setlength{\catotimesFrameHeight}{\dimexpr\ht#5+\dp#5\relax}\setlength{\catotimesContentHeight}{\dimexpr\ht\catotimesContentBox+\dp\catotimesContentBox\relax}\setlength{\catotimesContentRaise}{\catotimesFrameHeight}\addtolength{\catotimesContentRaise}{-\catotimesContentHeight}\divide\catotimesContentRaise by 2\relax
    \addtolength{\catotimesContentRaise}{\dp\catotimesContentBox}\addtolength{\catotimesContentRaise}{-\dp#5}\mathbin{\vcenter{\hbox{\rlap{\usebox{#5}}\makebox[\wd#5][c]{\raisebox{\catotimesContentRaise}{\usebox{\catotimesContentBox}}}}}}\endgroup
}

\AtBeginDocument{\catboxInitializeFrames
  \catotimesInitializeFrames
}

\newcommand{\sideindices}[3]{\mathpalette\sideindices@{{#1}{#2}{#3}}}

\newcommand{\sideindices@}[2]{\sideindices@@#1#2}

\newcommand{\sideindices@@}[4]{\vcenter{\hbox{\scalebox{0.75}{$\m@th#1#2$}}}#3\vcenter{\hbox{\scalebox{0.75}{$\m@th#1#4$}}}}

 \makeatother

\definecolor{hmDeepGreen}{HTML}{9EB5A7}
\definecolor{hmLightGreen}{HTML}{DCE8E1}
\definecolor{hmDeepCyan}{HTML}{82C8BD}
\definecolor{hmMidCyan}{HTML}{C3E7E0}
\definecolor{hmLightCyan}{HTML}{D8F1EC}
\definecolor{hmLineGreen}{HTML}{1F4E79}
\definecolor{hmLineLightBlue}{HTML}{8FAAC2}
\definecolor{hmLineMidBlue}{HTML}{577C9E}
\tikzset{
    topological bulk/.style={fill=hmDeepCyan},
    condensed phase/.style={fill=hmLightCyan},
    topological wall/.style={draw=hmLineGreen, very thick}
}

\begin{document}

\title{2-Morita Theory of $E_2$-Algebras and Module Categories}
\author[a]{Rongge Xu \thanks{Email: \href{mailto:xurongge@westlake.edu.cn}{\tt xurongge@tsinghua.edu.cn}}}
\author[b]{Holiverse Yang \thanks{Email: \href{mailto:holiversey@gmail.com}{\tt holiversey@gmail.com}}}
\affil[a]{Yau Mathematical Science Center, Tsinghua University, Beijing,  China}
\affil[b]{Department of Physics, The Chinese University of Hong Kong,\authorcr Shatin, New Territories, Hong Kong, China}

\date{\vspace{-5ex}}

\maketitle

\begin{abstract}

    In our previous work \cite{XY25}, we introduced 2-Morita equivalence for $E_2$-algebras using topological pictures. Here we develop a more systematic understanding of Morita equivalence for topological orders in different dimensions through the notion of $n$-Morita categories $\Mrt_{E_n}(\CC)$, in which different definitions of $n$-Morita equivalence can be unified as equivalences of objects in $\Mrt_{E_n}(\CC)$. This framework also provides a comparison between Haugseng's and Gwilliam--Scheimbauer's constructions of higher Morita categories.

    For $n=1$ and $n=2$, we prove that the functor $\mathrm{Mod}_n:\Mrt_{E_n}(\CC)\to \Mrt_{E_{n-1}}(\LMod^{\mathrm{rep}}(\CC))$ 
    is an equivalence, relating the algebraic description to its module category realization. In this formulation, the useful notion of a bi-bimodule emerges naturally, which unifies local modules and confined modules underneath. Its explicit orientation, together with the corresponding fusion rules, helps clarify the relations among defects arising in the condensation theory of topological orders.

\end{abstract}
 
\tableofcontents

\section{Introduction} \label{sec:introduction}

Classical Morita theory \cite{Morita58} has become a fundamental tool in many areas of mathematics, particularly in representation theory \cite{Ost03}. In recent years, Morita-theoretic ideas have also played an important role in quantum many-body systems and quantum field theories, including open--closed conformal field theory \cite{KR08,KR09} and topological quantum field theory \cite{FSV13,CRS18,CRS18defect,Turzillo20}. In these settings, Morita equivalence appears in several closely related forms: as an equivalence of module categories, as the existence of an invertible bimodule, and, for fusion categories, through equivalence of their Drinfeld centers.

These are all manifestations of 1-Morita theory, which governs $E_1$-algebras. Classically, two algebras $A_1$ and $A_2$ are Morita equivalent if their categories of modules are equivalent. Equivalently, under the usual hypotheses, there exists an invertible $A_1$--$A_2$ bimodule. This admits a natural categorical reformulation: one considers the 2-category whose objects are algebras, whose 1-morphisms are bimodules, and whose 2-morphisms are bimodule homomorphisms. Morita equivalence is then simply equivalence of objects in this 2-category. This formulation places the different classical descriptions of Morita equivalence in a common categorical framework and is particularly well suited for higher-dimensional generalizations.

For $E_n$-algebras, however, the corresponding higher Morita structures are considerably more intricate. Different constructions of higher Morita categories exist, and the relation between algebraic descriptions, module-category realizations, and their interpretations in higher-dimensional topological phases is less immediate. As categorical methods become increasingly important in the study of topological orders and quantum field theories, it is therefore natural to seek a systematic framework in which these different notions of higher Morita equivalence can be compared and unified.

\subsection{Higher Morita Theory and Module Categories}\label{subsec:higher-morita-module-categories}
There are several models and concrete realizations of higher Morita categories \cite{Hau17,GS18,DSPS20,BJS21}. We briefly recall the algebraic model of Haugseng and the factorization-algebra model of Gwilliam--Scheimbauer, before introducing a third description in terms of bi-bimodules that will be used throughout this paper.

\begin{dfn}[Haugseng's model, \cite{Hau17}]\label{dfn:higher_morita_haugseng}
Let $\CC$ be a suitable $E_n$-monoidal $\infty$-category in which the required relative tensor products exist and are preserved by tensoring.
Haugseng's higher Morita category $\Mrt^{Haug}_{E_n}(\CC)$ is an $(\infty,n+1)$-category whose
\begin{itemize}
\item objects are $E_n$-algebras in $\CC$;
\item for two $E_n$-algebras $\mathsf{X}_1$ and $\mathsf{X}_2$, 1-morphisms are objects of
\begin{align*}
\Alg_{E_{n-1}}\bigl(\hind{\mathsf{X}_1}{\CC}{\mathsf{X}_2}\bigr);
\end{align*}
\item for two 1-morphisms $\mathsf{W}_1$ and $\mathsf{W}_2$ from $\mathsf{X}_1$ to $\mathsf{X}_2$, 2-morphisms are objects of
\begin{align*}
\Alg_{E_{n-2}}\bigl(
\hind{\mathsf{W}_1}
{\hind{\mathsf{X}_1}{\CC}{\mathsf{X}_2}}
{\mathsf{W}_2}
\bigr);
\end{align*}
\item for $2<k<n$, the construction is iterated by taking $E_{n-k}$-algebras in the corresponding iterated bimodule category;
\item $n$-morphisms are the final iterated bimodules, without an additional pointing; and
\item $(n+1)$-morphisms are maps of the final iterated bimodules.
\end{itemize}
Recursively, for two $E_n$-algebras $\mathsf{X}_1$ and $\mathsf{X}_2$, its mapping $(\infty,n)$-category is
\begin{align*}
\hom_{\Mrt^{Haug}_{E_n}(\CC)}(\mathsf{X}_1,\mathsf{X}_2)
\simeq \Mrt^{Haug}_{E_{n-1}}\bigl(\hind{\mathsf{X}_1}{\CC}{\mathsf{X}_2}\bigr).
\end{align*}
\end{dfn}

Haugseng's construction gives an algebraic and operadic organization of iterated bimodules, with composition encoded by relative tensor products. It is therefore particularly well suited for formulating higher Morita categories abstractly and for studying their categorical structure.

\begin{dfn}[Gwilliam--Scheimbauer's model, \cite{GS18}]\label{dfn:higher_morita_gs}
Let $\CS$ be a sufficiently well-behaved symmetric monoidal $(\infty,N)$-category whose tensor product preserves sifted colimits separately in each variable.
The factorization higher Morita category $\Mrt^{GS}_{E_n}(\CS)$ is an $(\infty,n+N)$-category whose objects are $E_n$-algebras in $\CS$ and whose $k$-morphisms, for $1\leq k\leq n$, are $E_{n-k}$-algebras in the appropriate iterated bimodule categories.
In the iterated-bimodule notation used in this paper, a 1-morphism from $\mathsf{X}_1$ to $\mathsf{X}_2$ is schematically an object of
\begin{align*}
\hind{\mathsf{X}_1}{\Alg_{E_{n-1}}(\CS)}{\mathsf{X}_2},
\end{align*}
and, for $n\geq 2$, a 2-morphism between $\mathsf{W}_1$ and $\mathsf{W}_2$ is schematically an object of
\begin{align*}
\hind{\mathsf{W}_1}{\hind{\mathsf{X}_1}{\Alg_{E_{n-2}}(\CS)}{\mathsf{X}_2}}{\mathsf{W}_2}.
\end{align*}
This pattern is iterated through the $n$-morphisms; at the top level, an $E_0$-algebra is a pointed iterated bimodule. The remaining $N$ levels of morphisms are inherited from $\CS$.
\end{dfn}

The Gwilliam--Scheimbauer construction provides a geometric realization of the same hierarchy through locally constant and constructible factorization algebras. In this picture, higher morphisms can be interpreted as defects of increasing codimension, and their compositions are encoded geometrically by factorization products. This makes the model particularly natural for locality and defect-theoretic questions \cite{Sch14,GS18,JS17}.

The two constructions describe closely related hierarchies of $E_n$-algebras and iterated bimodules, although they differ in conventions at the highest morphism level: in the factorization model the top-dimensional bimodules are naturally pointed, while Haugseng's model is unpointed there. Gwilliam and Scheimbauer give a dictionary between the two approaches \cite[Sections~1.1 and~1.4]{GS18}, but we do not assume a general equivalence of higher categories between them.

For the study of topological defects, it is useful to make the multiple bimodule structures carried by higher morphisms more explicit. This motivates the following model, formulated in terms of the bi-bimodules introduced above.

\begin{dfn}[Bi-bimodule model]\label{dfn:higher_morita_bibimodule}
Let $\CC$ be an $E_{m+2}$-monoidal $n$-category.
The $(n+1)$-category $\Mrt_{E_n}(\CC)$ consists of
\begin{itemize}
\item objects are $E_n$-algebras in $\CC$, i.e. objects in $\Alg_{E_n}(\CC)$;
\item for two $E_n$-algebras $\mathsf{X}_1$ and $\mathsf{X}_2$, 1-morphisms are objects in $\hind{\sX_1}{\Alg_{E_{n-1}}(\CC)}{\sX_2}$;
\item 2-morphisms are objects in
\begin{align*}
    \hind{\sW_1}{\vind{\sX_1}{\Alg_{E_{n-2}}(\CC)}{\sX_2}}{\sW_2},
\end{align*}
which is the category of $\sX_1$--$\sX_2$-bi-$\sW_1$--$\sW_2$-bimodules in $E_{n-2}$-algebras;
\item $k$-morphisms ($k\leq n$) are objects in the corresponding iterated bi-bi-$\cdots$-bimodule categories of $E_{n-k}$-algebras.
\end{itemize}
\end{dfn}

The advantage of this description is not to provide another abstract construction of the higher Morita category, but to expose explicitly the source, target, and transverse bimodule structures of each higher morphism. These structures have a direct interpretation in topological orders: the different module actions record the phases and defects adjacent to a given defect, while relative tensor products describe their fusion.

In this paper, we compare the three descriptions level by level. For each $k$, we identify the corresponding $k$-morphisms in the Haugseng, Gwilliam--Scheimbauer, and bi-bimodule models. We do not, however, establish a comparison of all higher compositions and coherence data, and therefore do not claim an equivalence of the three higher categories in full generality.

This levelwise comparison is sufficient for our applications. Haugseng's model provides the abstract algebraic framework for higher Morita theory, while the factorization-algebra model gives a geometric interpretation in terms of stratified defects. The bi-bimodule model makes the orientations and fusion rules of these defects explicit, and is therefore especially convenient for describing and computing condensation defects of different dimensions in topological orders.

\vspace{3em}

For a suitable $E_n$-monoidal category $\CC$, the category $\LMod(\CC)$ of left $\CC$-module categories carries an $E_{n-1}$-monoidal structure \cite{DR18}. This leads to the expected module-realization functor
\begin{align*}
    \mathrm{Mod}_n:
    \Mrt_{E_n}(\CC)
    \longrightarrow
    \Mrt_{E_{n-1}}(\LMod(\CC)).
\end{align*}
Its representable essential image $\Mrt_{E_{n-1}}(\LMod^{rep}(\CC))\subset \Mrt_{E_{n-1}}(\LMod(\CC))$ is described by the following data:
\begin{itemize}
    \item objects are the representable left module $n$-categories $\hind{}{\CC}{\mathsf{X}_i}$ associated with $E_n$-algebras $\mathsf{X}_i$ in $\CC$;
    \item 1-morphisms are the $E_{n-2}$-monoidal bimodules $\hind{}{\CC}{\mathsf{W}}$ induced by $E_{n-1}$-algebra bimodules $\mathsf{W}$ between the corresponding $E_{n-1}$-monoidal categories $\hind{}{\CC}{\mathsf{X}_i}$;
    \item for $2\leq k\leq n-1$, $k$-morphisms are obtained inductively as $E_{n-k-1}$-monoidal bimodules between the $E_{n-k}$-monoidal bimodules at the preceding level;
    \item $n$-morphisms are compatible bimodule functors; and
    \item $(n+1)$-morphisms are bimodule natural transformations compatible with all the preceding module structures.
\end{itemize}

For $n=1$, this recovers the classical 2-functor that sends an algebra $A$ to its module category $\hind{}{\CC}{A}$ and a bimodule to the corresponding functor given by relative tensor product. In this paper, we construct the module-realization functor for $n=1,2$ and establish the following result.
\begin{thm}
    For $n\in\{1,2\}$, the functor
    \begin{align*}
        \mathrm{Mod}_n:
        \Mrt_{E_n}(\CC)
        \longrightarrow
        \Mrt_{E_{n-1}}(\LMod(\CC))
    \end{align*}
    is $n$-fully faithful.
\end{thm}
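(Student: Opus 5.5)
The plan is to prove $n$-fully faithfulness one morphism level at a time. For each level we exhibit the induced functor on hom-categories as an equivalence, and the essential tool is the Eilenberg--Watts type identification of module functors. Here $n$-fully faithful means the following. For all objects the functor on $1$-morphisms is essentially surjective and fully faithful up to level $n$; equivalently, for every $k<n$ the induced functor on $k$-morphism categories is an equivalence, and at the top level it is fully faithful.

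\emph{Case $n=1$.} Here $\mathrm{Mod}_1$ sends an algebra $A$ to the left $\CC$-module category $\hind{}{\CC}{A}$ of right $A$-modules. It sends an $A$--$B$ bimodule $M$ to the functor $-\otimes_A M:\hind{}{\CC}{A}\to\hind{}{\CC}{B}$, and a bimodule map to the induced natural transformation. The key step is the Eilenberg--Watts statement
\begin{align*}
\hind{A}{\CC}{B}\;\simeq\;\Fun_{\CC}\bigl(\hind{}{\CC}{A},\hind{}{\CC}{B}\bigr).
\end{align*}
We restrict to functors preserving the relevant colimits, which is built into $\LMod(\CC)$. The inverse sends a $\CC$-module functor $F$ to $F(A)$. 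This object carries a right $B$-action because it lies in $\hind{}{\CC}{B}$. It carries a left $A$-action through $A\otimes F(A)\simeq F(A\otimes A)\to F(A)$, using the $\CC$-module structure of $F$ and the multiplication of $A$. To check that the two constructions are mutually inverse, we present every right $A$-module $N$ as the coequalizer (bar resolution) of $N\otimes A\otimes A \rightrightarrows N\otimes A$. Then $F(N)\simeq N\otimes_A F(A)$, because $F$ is a $\CC$-module functor preserving these colimits. This proves the case $n=1$.

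\emph{Case $n=2$.} The object $\hind{}{\CC}{\sX}$ is now $E_1$-monoidal via $\otimes_{\sX}$, where $\sX$ is $E_2$. A $1$-morphism $\sW\in\hind{\sX_1}{\Alg_{E_1}(\CC)}{\sX_2}$ maps to the monoidal bimodule $\hind{}{\CC}{\sW}$. A $2$-morphism, that is, an $\sX_1$--$\sX_2$-bi-$\sW_1$--$\sW_2$-bimodule $\sB$, maps to the bimodule functor $-\otimes_{\sW_1}\sB$. We then argue in two steps.

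\emph{Step 1: 2-morphisms.} We apply the $n=1$ Eilenberg--Watts equivalence inside the base $\hind{\sX_1}{\CC}{\sX_2}$. This identifies $\sW_1$--$\sW_2$-bimodules with $\CC$-module functors $\hind{}{\CC}{\sW_1}\to\hind{}{\CC}{\sW_2}$. We then check that the additional $\sX_1$--$\sX_2$ bimodule structure on $\sB$, together with its compatibility with the $\sW_i$-actions, corresponds exactly to the structure of a bimodule functor over the monoidal categories $\hind{}{\CC}{\sX_1}$ and $\hind{}{\CC}{\sX_2}$. This is done by evaluating on free modules, as in the $n=1$ case. Bimodule maps then correspond to bimodule natural transformations, so the functor is an equivalence on $2$-morphisms and $3$-morphisms.

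\emph{Step 2: 1-morphisms.} On $1$-morphisms we show that $\sW\mapsto\hind{}{\CC}{\sW}$ is fully faithful onto representable monoidal bimodules up to equivalence. By Step 1, hom-categories are preserved. Essential surjectivity onto $\LMod^{\mathrm{rep}}$ holds by definition: a representable bimodule is of the form $\hind{}{\CC}{\sW}$, and the monoidal structure on $\hind{}{\CC}{\sW}$ recovers the $E_1$-structure on $\sW$ as the endomorphisms of the unit. This gives $2$-fully faithfulness.

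The main obstacle is the last part of Step 1, which lies in the $n=2$ case. We must match the half-braiding-type compatibility, between the $\sX_i$-actions (coming from $E_2$) and the $\sW_i$-actions, with the coherence data of a bimodule functor. The concrete difficulty is verifying that $F(\sW_1)$ inherits a compatible bi-bimodule structure and that the two relevant relative tensor products commute. We would handle this by using the $E_2$-structure to identify $\otimes_{\sX}$ on $\hind{}{\CC}{\sX}$ with the two-sided tensor product, and then by checking the coherence axioms on free objects only, since free objects generate under colimits.
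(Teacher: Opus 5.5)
Your $n=1$ argument is the paper's own (Theorem~\ref{thm:E1_module_realization} via Theorem~\ref{thm:generalized_Eilenberg-Watts}). The $n=2$ case, however, has a definitional error and a real gap.

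\textbf{The definitional error.} In the paper, $2$-fully faithful for the $3$-functor $\mathrm{Mod}_2$ only asks that, for each parallel pair $\sW_1,\sW_2:\sX_1\to\sX_2$, the functor on $2$- and $3$-morphisms be an equivalence. That is exactly Theorem~\ref{thm:E2_EW}. Your reading also demands an equivalence on $1$-morphism categories, and under it the statement is false for the stated target $\Mrt_{E_1}(\LMod(\CC))$: the Counter-Example in Section~\ref{sec:functor_to_module_cat} gives an invertible $1$-morphism $\CM_\alpha$ not of the form $\hind{}{\CC}{A}$. Your Step~2 survives only by silently shrinking the target to $\LMod^{\mathrm{rep}}(\CC)$. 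Its appeal to ``the monoidal structure on $\hind{}{\CC}{\sW}$'' is also misplaced, since for $n=2$ the target $1$-morphisms are plain bimodule categories. Drop Step~2; it is not needed.

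\textbf{The gap.} Step~1 is the entire content of the theorem, and it is only asserted. Its one concrete sentence conflates two statements. Eilenberg--Watts applied inside the monoidal category $\hind{\sX_1}{\CC}{\sX_2}$ gives $\hind{\sX_1}{\CC}{\sX_2}$-module functors between the categories $\hind{}{(\hind{\sX_1}{\CC}{\sX_2})}{\sW_i}$. It does not give $\CC$-module functors between the $\hind{}{\CC}{\sW_i}$, let alone $\hind{}{\CC}{\sX_1}$--$\hind{}{\CC}{\sX_2}$-bimodule functors.

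The paper bridges this with a chain you do not supply:
\begin{enumerate}
\item fold the two endpoint actions into a single action of $\hind{}{\CC}{\sX_1}\btc{\CC}\hind{}{\CC}{\sX_2}^{\mathrm{rev}}$ (Appendix~\ref{sec:appendix-module-categories});
\item identify this monoidally with $\hind{\sX_1}{\CC}{\sX_2}$ (Corollary~\ref{crl:B1CCCB2_is_E1_equivalent_to_B1CB2});
\item use $\hind{}{(\hind{\sX_1}{\CC}{\sX_2})}{\sW}\simeq\hind{}{\CC}{\sW}$ (Corollary~\ref{crl:B_1CB_2A_equivalent_to_CA});
\item apply $E_1$ Eilenberg--Watts in $\hind{\sX_1}{\CC}{\sX_2}$;
\item translate internal bimodules into bi-bimodules (Theorem~\ref{thm:bibimod_equivalent_to_bimod_in_bimod}).
\end{enumerate}

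Your alternative, Eilenberg--Watts in $\CC$ followed by matching the extra structure, can also work. But then you must actually prove the following.
\begin{enumerate}
\item For a right exact $\CC$-module functor $F$, the bimodule-functor structure is a property. It is forced on free objects $c\otimes\sX_i$ by $\xi^F$ and extended along reflexive coequalizers.
\item The $\hind{}{\CC}{\sX_1}$-action on $\hind{}{\CC}{\sW}$ induces on a right $\sW$-module $N$ the action $u^{r,-}_N$, and the $\hind{}{\CC}{\sX_2}$-action induces $d^r_N$, matching Theorem~\ref{thm:modules_in_bimodule_categories}.
\item Evaluating at $N=\sW_1$ (the regular bimodule satisfies both identities) yields exactly $u^l_M=u^{r,-}_M$ and $d^{l,-}_M=d^r_M$ for $M=F(\sW_1)$, i.e.\ the coherence of Definition~\ref{dfn:bibimod}.
\item Conversely, this coherence constructs the bimodule-functor constraints on $-\otimes_{\sW_1}M$.
\end{enumerate}
The crossing convention is precisely where this can fail: using $u^{r,+}_M$ instead would impose an extra mutual-locality condition. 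You label this ``the main obstacle'' and leave it unresolved, so the $n=2$ case is not yet proved.
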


We expect the same $n$-fully faithful conclusion to hold in all dimensions.

The module-realization functor also translates algebraic higher Morita data into a direct geometric interpretation: here we show the case of $n=1,2$:
\begin{figure}[H]
    \centering
    \begin{tikzpicture}
        \path[fill=hmDeepCyan] (-2,-1.5) rectangle(0,1.5);
        \path[fill=hmMidCyan] (0,-1.5) rectangle(4,1.5);
        \draw[draw=hmLineGreen,line width=1.1pt] (0,-1.5) -- (0,1.5);
        \draw[draw=hmLineGreen,line width=1.1pt] (0,0) -- (2,0);
        \draw[draw=hmLineLightBlue,line width=1.1pt] (2,0) -- (4,0);
        \filldraw[fill=white,draw=hmLineGreen,line width=0.7pt]
            (-0.05,-0.05) rectangle(0.05,0.05);
        \filldraw[fill=white,draw=hmLineGreen,line width=0.7pt]
            (1.95,-0.05) rectangle(2.05,0.05);

        \node at(-1.5,0){$\CC$};
        \node at(-0.3,1.1){$\hind{}{\vind{B_1}{\CC}{B_1}}{B_1}$};
        \node at(-0.3,-1.1){$\hind{}{\vind{B_2}{\CC}{B_2}}{B_2}$};
        \node at(-0.3,0){$\hind{}{\CC}{A_1}$};
        \node at(2,1){$\hind{B_1}{\vind{B_1}{\CC}{B_1}}{B_1}$};
        \node at(2,-1){$\hind{B_2}{\vind{B_2}{\CC}{B_2}}{B_2}$};
        \node at(1,-0.3){\tiny$\hind{A_1}{\vind{B_1}{\CC}{B_2}}{A_1}$};
        \node at(2,0.35){\tiny$\hind{A_1}{\vind{B_1}{\CC}{B_2}}{A_2}$};
        \node at(3,-0.3){\tiny$\hind{A_2}{\vind{B_1}{\CC}{B_2}}{A_2}$};
    \end{tikzpicture}
\end{figure}

In the finite setting relevant to topological orders, representability gives a stronger conclusion: for algebras in finite multi-tensor $n$-categories and after restricting the target to finite module categories, we prove the following result for $n=1,2$.
\begin{thm}
    For $n\in\{1,2\}$, the functor
    \begin{align*}
        \mathrm{Mod}^{fin}_n:
        \Mrt_{E_n}(\CC)
        \longrightarrow
        \Mrt_{E_{n-1}}(\LMod^{fin}(\CC))
    \end{align*}
    is an equivalence, where $\LMod^{fin}(\CC):=\LMod_{\CC}(\Cat^{fin}_{\bk})$ is the category of finite $\bk$-linear left $\CC$-module categories.
\end{thm}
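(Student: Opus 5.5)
The plan is to combine the previous theorem ($n$-full faithfulness of $\mathrm{Mod}_n$) with an essential surjectivity statement at every level. For finite multi-tensor categories, this essential surjectivity comes from representability: every finite module category is a category of modules over an algebra, by Ostrik's theorem and its 2-categorical analogues (Décoppet, Johnson-Freyd--Reutter) for finite semisimple 2-categories. An equivalence of $(n+1)$-categories is then a functor that is essentially surjective on objects and, recursively, an equivalence on all hom-categories. For $n\in\{1,2\}$ this reduces to three checks: full faithfulness at the top levels, which is given; essential surjectivity on objects; and essential surjectivity on $k$-morphisms for $1\le k\le n-1$.

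\emph{Case $n=1$.} Here $\mathrm{Mod}_1$ sends an algebra $A$ to $\hind{}{\CC}{A}$ and a bimodule $M$ to $-\otimes_A M$. Any finite left $\CC$-module category $\CM$ is equivalent to $\hind{}{\CC}{A}$, with $A=\underline{\Hom}(m,m)$ for a $\CC$-generator $m$ (Ostrik). This gives essential surjectivity on objects. On 1-morphisms, every $\CC$-module functor $\hind{}{\CC}{A_1}\to\hind{}{\CC}{A_2}$ is right exact in the finite setting, so the Eilenberg--Watts argument shows it is of the form $-\otimes_{A_1}M$ with $M=F(A_1)$. Combined with the previous theorem, this yields the equivalence.

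\emph{Case $n=2$.} Objects of $\Mrt_{E_1}(\LMod^{fin}(\CC))$ are $E_1$-algebras in finite $\CC$-module categories, i.e. monoidal finite $\CC$-module categories $\CM$ whose action is compatible with the braiding. I would proceed in four steps.
\begin{enumerate}
\item Represent $\CM\simeq\hind{}{\CC}{A}$ via the internal end of the unit $\bone_\CM$, which is a $\CC$-generator after passing to the multi-setting.
\item Transport the monoidal structure of $\CM$ and its compatibility with the braided $\CC$-action along this equivalence. The goal is to show that $A=\underline{\Hom}(\bone_\CM,\bone_\CM)$ acquires a commutative (i.e. $E_2$) structure, and that the tensor product on $\CM$ corresponds to $\otimes_A$. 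This uses the half-braiding that encodes the $E_1$-structure in $\LMod(\CC)$ in the sense of \cite{DR18}.
\item For 1-morphisms, take an $E_0$-monoidal bimodule $\CN$ between $\hind{}{\CC}{A_1}$ and $\hind{}{\CC}{A_2}$, i.e. a pointed finite bimodule category. Apply Ostrik again to the distinguished object $n$ and set $\sW=\underline{\Hom}(n,n)$. This gives $\CN\simeq\hind{}{\CC}{\sW}$, and the two bimodule actions make $\sW$ an $A_1$--$A_2$ bimodule in $E_1$-algebras.
\item Handle 2- and 3-morphisms through the previous theorem.
\end{enumerate}

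The hard step is step 2 together with its analogue in step 3. We must show that the structure transported to the internal End is genuinely an $E_2$-algebra (respectively an $E_1$-algebra with compatible $E_2$-actions on both sides), and that the equivalence is one of $E_1$-monoidal (respectively $E_0$) module categories. The naive issue is that $\bone_\CM$ need not generate $\CM$ when $\CM$ is multi-fusion-like. I would first try replacing $\bone_\CM$ by a generator $\bigoplus_i m_i$, then show that the resulting algebra is Morita equivalent, in $\Mrt_{E_2}(\CC)$, to one carrying an $E_2$-structure. This Morita-invariance is precisely why the statement concerns equivalence in the Morita category rather than an isomorphism of algebras.

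If this becomes unwieldy, the fallback is to use the identification of $E_1$-algebras in $\LMod^{fin}(\CC)$ with finite $\CC$-central multi-tensor categories. Representability there is known through the Drinfeld-center formalism: $\CM\simeq\hind{}{\CC}{A}$ with $A$ the full center–type commutative algebra.
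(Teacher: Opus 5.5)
\textbf{Case $n=1$.} Here your route is the paper's (Theorem~\ref{thm:finite_module_realization}): full faithfulness from Eilenberg--Watts, plus reconstruction of finite module categories. Two details need repair.
\begin{enumerate}
\item A mere $\CC$-generator is not enough outside the exact or semisimple case. You need a $\CC$-projective generator $P$ (Lemma~\ref{lem:existence_of_progenerator}), so that $[P,-]$ is right exact and Theorem~\ref{thm:reconstruct_of_module_cat} gives $\CM\simeq\hind{}{\CC}{[P,P]}$.
\item Module functors between finite module categories are not automatically right exact; for $\CC=\vect$, every linear functor is a module functor. Right exactness has to be part of the definition of a 1-morphism, and only then does Eilenberg--Watts apply.
\end{enumerate}

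\textbf{Case $n=2$: the gap.} Your step 2 has a genuine gap, and it cannot be closed: the target as written is too large. The paper does not close it either. For $n=2$ it proves only 2-full faithfulness (Theorem~\ref{thm:E2_EW}) and then isolates the representable image (Definition~\ref{dfn:MrtE1LmodRep}).

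The obstruction is that $\Omega^2$ is a Morita invariant. An equivalence of objects in the target 3-category induces a braided equivalence of their double endomorphism categories; this is the argument of Proposition~\ref{prp:2-Morita_equiv_implies_local_2-Morita_equiv}. By the corollary to Theorem~\ref{thm:E2_EW}, $\Omega^2\,\mathrm{Mod}_2(B)\simeq\hind{}{\CC}{B}^{loc}$.

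Now take $\CC=\vect$, which is even modular, so the topological hypotheses do not exclude it.
\begin{itemize}
\item For every commutative algebra $B$, the object $(\Mod_B,\otimes_B)$ has $\Omega^2\simeq\Mod_B$ with the symmetric braiding.
\item The object $\vect_{\mathbb{Z}/2}\in\Alg_{E_1}(\LMod^{fin}(\vect))$ has $\Omega^2\simeq\fZ_1(\vect_{\mathbb{Z}/2})$, the toric-code category. Its objects $e,m$ have double braiding $-1$, so it is not symmetric.
\end{itemize}
Hence $\vect_{\mathbb{Z}/2}$ is not equivalent in $\Mrt_{E_1}(\LMod^{fin}(\vect))$ to anything in the image of $\mathrm{Mod}_2$. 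Replacing the internal End by a Morita-equivalent algebra cannot help, since $\Omega^2$ sees the whole Morita class. Your full-center fallback also fails: it produces a commutative algebra in the (relative) Drinfeld center of $\CM$, not in $\CC$.

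\textbf{What survives.} What can be proved is an equivalence onto the full sub-3-category of $\Mrt_{E_1}(\LMod^{fin}(\CC))$ on the representable objects $\hind{}{\CC}{B}$. There the only nontrivial level is essential surjectivity on 1-morphisms, which is exactly where finiteness matters (compare the cocomplete Counter-Example in Section~\ref{sec:functor_to_module_cat}).

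Your step 3 should not use the distinguished point. 1-morphisms of $\Mrt_{E_1}$ are unpointed bimodule categories, and a point need not generate. Instead:
\begin{enumerate}
\item Fold $\CN$ into a left module over $\hind{}{\CC}{B_1}\btc{\CC}\hind{}{\CC}{B_2}^{\mathrm{rev}}\simeq\hind{B_1}{\CC}{B_2}$ (Corollary~\ref{crl:B1CCCB2_is_E1_equivalent_to_B1CB2}).
\item Choose a $\CC$-projective generator $P$ of $\CN$, and let $A$ be its internal End for the folded action.
\item The $\CC$-action is the restriction along the free functor (Lemmas~\ref{lem:folded_relative_tensor_action} and~\ref{lem:free_forget_adjunction}). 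So the underlying object of $A$ is $[P,P]$, and Theorem~\ref{thm:reconstruct_of_module_cat} yields $\CN\simeq\hind{}{\CC}{A}$.
\item By Theorem~\ref{thm:algebras_in_bimodule_categories}, $A\in\Alg_{E_1}(\hind{B_1}{\CC}{B_2})\simeq\hind{B_1}{\Alg_{E_1}(\CC)}{B_2}$, and Corollary~\ref{crl:B_1CB_2A_equivalent_to_CA} matches the two module categories.
\end{enumerate}
Your step 4 is then exactly Theorem~\ref{thm:E2_EW}.
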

We expect an appropriately finitary analogue of this equivalence for general $n$.

In a $(2+1)$D topological order described by a modular fusion category $\CC$, a condensable $E_2$-algebra $B$ determines both the condensed bulk phase $\hind{}{\CC}{B}^{loc}$ and the intervening gapped domain wall $\hind{}{\CC}{B}$. In the bi-bimodule description, the local-module category is recovered as
\begin{align*}
    \hind{B}{\vind{B}{\CC}{B}}{B}
    \simeq
    \hind{}{\CC}{B}^{loc}.
\end{align*}
Condensation may also take place within a one-dimensional gapped domain wall \cite{XY25}.
The module realization functor $\mathrm{Mod}_2$ places point defects, one-dimensional domain walls, and their higher junctions in a single framework: the bi-bimodule structures record the adjacent condensed $E_2$ and $E_1$ phases, while relative tensor products encode the fusion of condensation defects \ref{sec:functor_to_module_cat}.

\subsection{Notation and Conventions}

In this paper, for $n=1,2$, we mainly consider two universes.
The basic one is $\Cat^{rex}$, which consists of
\begin{itemize}
    \item objects given by finitely cocomplete categories;
    \item 1-morphisms given by right exact functors, i.e. functors that preserve finite colimits;
    \item 2-morphisms given by natural transformations;
    \item a symmetric monoidal structure given by the Kelly--López tensor product \cite{Kel82,LF13}.
\end{itemize}
When discussing topological orders, we consider the category $\Cat^{fin}_{\bk}$ over an algebraically closed field $\bk$:
\begin{itemize}
    \item objects are finite $\bk$-linear categories;
    \item 1-morphisms are $\bk$-linear functors;
    \item its monoidal structure is given by the Deligne--Kelly tensor product \cite{EGNO15,LF13};
    \item the tensor unit is $\vect$, the category of finite-dimensional $\bk$-vector spaces.
\end{itemize}
We denote the tensor product of categories and the tensor unit in these two universes by $\boxtimes$ and $1$, respectively.
We also consider some special objects in $\Cat^{fin}_{\bk}$, such as finite multi-tensor categories, i.e. $E_1$-algebras in $\Cat^{fin}_{\bk}$ with rigidity.

\begin{dfn}[\cite{Lur17}]
    An {\bf $E_k$-algebra} is an algebra over an $E_k$-operad, in which the $n$-ary operations on an object are parameterized by rectilinear disjoint embeddings of $n$ $k$-dimensional cubes into another $k$-dimensional cube.
\end{dfn}

Intuitively, an $E_m$-monoidal $n$-category $\CC$ can be understood as an $n$-category equipped with $m$ compatible tensor products.

\begin{dfn}
    An {\bf $E_m$-monoidal $n$-category} is an $E_m$-algebra in $n\Cat$.
\end{dfn}

For example, an $E_1$-algebra in the 2-category $\Cat$ is equivalent to an $E_1$-monoidal 1-category, since the higher morphisms encoding higher coherences are trivial in $\Cat$.

We also use the following iterative working definition of $E_k$-algebras, which is equivalent to the original one in the sense that applying the functor $\Alg_{E_1}$ at each step provides a new compatible tensor product structure.
\begin{dfn}[Working Definition, \cite{Lur17}]
    Let $\CC$ be an $E_m$-monoidal $1$-category.
    An {\bf $E_1$-algebra} is a unital associative algebra in $\CC$.
    We denote the category of $E_1$-algebras in $\CC$ by $\Alg_{E_1}(\CC)$, whose objects are $E_1$-algebras in $\CC$ and whose morphisms are $E_1$-algebra homomorphisms.
    An {\bf $E_k$-algebra} is an $E_1$-algebra in the category $\Alg_{E_{k-1}}(\CC)$ of $E_{k-1}$-algebras in $\CC$, where
    $\Alg_{E_k}(\CC)=\Alg_{E_1}(\Alg_{E_{k-1}}(\CC))$.
\end{dfn}

\begin{notation}
    We use $A_i$ to denote $E_1$-algebras, $B_i$ to denote $E_2$-algebras, etc., and $M_i$ to denote $E_0$-algebras.
\end{notation}

\begin{notation}
    For an $E_1$-monoidal $n$-category $\CC$, we use $\hind{A_1}{\CC}{A_2}$ to denote the category of $A_1$-$A_2$-bimodules in $\CC$ whose actions are horizontal.
    For an $E_2$-monoidal $n$-category, there are two possible multiplication directions. We use $\vind{A_3}{\CC}{A_4}$ to denote the category of $A_3$-$A_4$-bimodules in $\CC$ whose actions are vertical.
\end{notation}

\begin{notation}
    We use $X_1\otc{Y}X_2$ to denote the relative tensor product $X_1\otd\limits_{Y} X_2$, representing the horizontal relative fusion of $X_1$ and $X_2$.
    Similarly, we use $\vfus{X_1}{\otc{Y}}{X_2}$ to denote the relative tensor product $X_1\otd\limits_{Y} X_2$, representing the vertical relative fusion of $X_1$ and $X_2$.
\end{notation}

\begin{notation}
    We use $\CC_1\btc{\CD}\CC_2$ to denote the relative tensor product $\CC_1\btd\limits_{\CD} \CC_2$, representing the horizontal relative fusion of $\CC_1$ and $\CC_2$.
    We use $\vfus{\CC_1}{\btc{\CD}}{\CC_2}$ to denote the relative tensor product $\CC_1\btd\limits_{\CD} \CC_2$, representing the vertical relative fusion of $\CC_1$ and $\CC_2$.
\end{notation}

\begin{notation}
    We use $\hind{A_1}{\overset{\rotatebox[origin=c]{-90}{$\scriptstyle($}}{\underset{\rotatebox[origin=c]{90}{$\scriptstyle($}}{\vind{B_1}{\CC}{B_2}}}}{A_2}$ \footnote{We omit the brackets in bi-bimodules when there is no ambiguity.} to denote the category of $B_1$-$B_2$-bi-$A_1$-$A_2$-bimodules in $\CC$ (see Definition \ref{dfn:cat_of_bibimod}).
    We also use $\hind{(B_1}{\vind{A_1}{\CC}{A_2}}{B_2)}$ when $E_1$-algebras are placed between the left and right $E_2$-algebras.
    We use $\hind{A_1}{\vind{A_2}{\CC}{A_3}}{A_4}$ to denote the category of $A_1$-$A_2$-$A_3$-$A_4$-quadrimodules in $\CC$ (see Definition \ref{dfn:quadrimodule}).

\begin{figure}[ht]
\centering
\begin{subfigure}[c]{0.45\textwidth}
    \centering
    \begin{tikzpicture}[line cap=round,line join=round]
\path[use as bounding box] (-2,-2) rectangle (2,2);

        \filldraw[fill=gray!40,draw=white]
            (-2,1.33) rectangle (2,-1.33);

        \draw[thick] (-2,0) -- (2,0);
        \fill[black] (0,0) circle (0.07);

        \node at (0,0.8) {$B_1$};
        \node at (0,-0.8) {$B_2$};

        \node at (-1,0.25) {$A_1$};
        \node at (1,0.25) {$A_2$};

        \node at (0,-0.2) {$M$};
    \end{tikzpicture}
\end{subfigure}
\begin{subfigure}[c]{0.45\textwidth}
    \centering
    \begin{tikzpicture}[line cap=round,line join=round]
\path[use as bounding box] (-2,-2) rectangle (2,2);

        \filldraw[fill=gray!40,draw=white]
            (-2,1.33) rectangle (2,-1.33);

        \draw[thick] (-2,0) -- (2,0);
        \draw[thick] (0,-1.33) -- (0,1.33);
        \fill[black] (0,0) circle (0.07);

        \node at (-1,0.8) {$B_1$};
        \node at (-1,-0.8) {$B_4$};
        \node at (1,0.8) {$B_2$};
        \node at (1,-0.8) {$B_3$};

        \node at (-1.4,0.25) {$A_1$};
        \node at (1.4,0.25) {$A_2$};
        \node at (0.3,1) {$A_3$};
        \node at (0.3,-1) {$A_4$};

        \node at (-0.25,-0.2) {$M'$};
    \end{tikzpicture}
\end{subfigure}
\caption{$M \in \hind{A_1}{\overset{\rotatebox[origin=c]{-90}{$\scriptstyle($}}{\underset{\rotatebox[origin=c]{90}{$\scriptstyle($}}{\vind{B_1}{\CC}{B_2}}}}{A_2}$ for the left figure, and $M' \in \hind{A_1}{\vind{A_2}{\CC}{A_3}}{A_4} \simeq \hind{\vfus{A_3}{\otc{B_1}}{A_1}}{\overset{\rotatebox[origin=c]{-90}{$\scriptstyle($}}{\underset{\rotatebox[origin=c]{90}{$\scriptstyle($}}{\vind{B_2}{\CC}{B_4}}}}{\vfus{A_2}{\otc{B_3}}{A_4}} \simeq \hind{(B_4}{\vind{A_1\otc{B_1}A_3}{\CC}{A_4\otc{B_3}A_2}}{B_2)}$ for the right figure.}
\end{figure}
\end{notation}

\begin{notation}
    We use $\hind{\CC}{\vind{\CA_1}{\Fun}{\CA_2}}{}(\CM_1,\CM_2)$ to denote the category of $\CA_1$-$\CA_2$-bimodule functors in $\LMod(\CC)$.
\end{notation}

\begin{dfn}[\cite{EGNO15}]
    A {\bf multitensor category} (over $\bC$) is a $\bC$-linear abelian rigid monoidal category such that $\otimes:\CC\times \CC\to \CC$ is $\bC$-linear.
    A {\bf fusion category} is a finite semisimple tensor category whose tensor unit is simple.
\end{dfn}

\begin{dfn}
    An algebra $(A,m)$ in a monoidal category $\CC$ is called {\bf separable} if there exists an $A$-$A$-bimodule homomorphism $s:A\to A\ot A$ such that $m\circ s=\id_A$.
\end{dfn}

\subsection*{Layout}\label{subsec:layout}
The paper is organized as follows.
In Section~\ref{sec:introduction}, we introduce the background of higher Morita theory, and establish the notations used throughout the paper.
Section~2 is devoted to 1-Morita equivalences. We review different definitions of 1-Morita equivalence and use their equivalence to study their relation to condensations in topological orders.
In Section~3, we develop a model of the 2-Morita category using bi-bimodules and construct the $E_2$ module realization functor $\mathrm{Mod}_2$. We also compare 2-Morita equivalence of $E_2$-algebras with $E_2$-monoidal equivalence of the corresponding local module categories, and study higher-codimensional defect condensations. 
Section~4 briefly extends these constructions to the $n$-Morita category. We also compare Haugseng's model with the Gwilliam--Scheimbauer model at each morphism level.

\vspace{3em}
 
\section{The 1-Morita Category and Module Realization}\label{sec:one-morita}
This section develops 1-Morita theory and its module realization. We introduce algebras, bimodules, and Morita equivalence in \cref{subsec:one-morita-algebras-bimodules}, construct the module realization functor in \cref{subsec:one-morita-module-realization}, specialize to finite tensor categories in \cref{subsec:one-morita-finite}, and relate local 1-Morita equivalence to condensation in \cref{subsec:local-one-morita}.

Classically, two ($E_1$-)algebras are said to be (1-)Morita equivalent if their categories of modules are equivalent as module categories \cite{Morita58}:

\begin{dfn}[\cite{Morita58}]\label{dfn:1-Morita}
Let $\CC$ be an $E_1$-monoidal $n$-category.
Two $E_1$-algebras $A_1,A_2\in\CC$ are {\bf 1-Morita equivalent} if
\[
    \hind{}{\CC}{A_1}\simeq \hind{}{\CC}{A_2}
\]
as left $\CC$-module $n$-categories.
\end{dfn}

On the other hand, one may organize $E_1$-algebras and bimodules into a (1-)Morita category \cite{Benabou1967}, in which Morita equivalence is expressed as equivalence between the corresponding algebra objects.

These two viewpoints are related by the generalized Eilenberg--Watts theorem, which identifies bimodules with module functors. In this way, Morita equivalence can be described equivalently at both the algebraic and categorical levels. In the physical setting, this allows us to treat condensable algebras and the corresponding topological orders within the same framework.

\subsection{Algebras, Bimodules, and Morita Equivalence}\label{subsec:one-morita-algebras-bimodules}

\begin{figure}[H]
    \centering
    \begin{tikzpicture}
        \draw[thick] (-2,0) -- (4,0);
        \fill[black] (0,0) circle (0.06);
        \fill[black] (2,0) circle (0.06);

        \node at(-1,0.5){$A_1$};
        \node at(1,0.5){$A_2$};
        \node at(3,0.5){$A_3$};
        \node at(0,0.4) {$M_2$};
        \node at(2,0.4) {$M_2$};
        \end{tikzpicture}
\end{figure}

Now we give the definition of $E_1$ Morita category, which is originally proposed in \cite{Benabou1967}.
\begin{dfn}[\cite{Benabou1967}]
    Let $\CC$ be a finitely cocomplete $E_{m+1}$-monoidal 1-category such that $\otimes:\CC\times\CC\to\CC$ preserves all finite colimits in each variable.
    The 2-category $\Mrt_{E_1}(\CC)$ consists of the following data:
    \begin{itemize}
        \item objects $A_i$ are objects in $\Alg_{E_1}(\CC)$;
        \item 1-morphisms between two $E_1$-algebras $A_1$ and $A_2$ are objects in $\hind{A_1}{\CC}{A_2}$;
        \item 2-morphisms between two $A_1$-$A_2$-bimodules $M_1$, $M_2$ are $E_0$ $A_1$-$A_2$-bimodule homomorphisms.
        \item the identity 1-morphism is $A$ itself for an object $A\in\Alg_{E_1}(\CC)$;
        \item identity 2-morphism on $M$ is identity morphism $\id_M$ in $\CC$;
        \item the composition of 1-morphisms are the relative tensor product: for $M_1\in \hind{A_1}{\CC}{A_2}$ and $M_2\in \hind{A_2}{\CC}{A_3}$, their composition is $M_1\otc{A_2} M_2\in \hind{A_1}{\CC}{A_3}$
        \item the horizontal composition of 2-morphisms are obvious and the vertical composition of bimodule homomorphisms are composition of morphisms in $\CC$.
    \end{itemize}
\end{dfn}

In 2-category $\Mrt_{E_1}(\CC)$, an invertible bimodule is indeed an invertible morphism. Hence, 
\begin{dfn}\label{dfn: 1-Mrt_equivalnce}
    Two $E_1$-algebras are {\bf 1-Morita equivalent} if they are equivalent objects in $\Mrt_{E_1}(\CC)$.
\end{dfn}

\begin{thm}[\cite{Hau17}]
    Let $\CC$ be a finitely cocomplete $E_{m+1}$-monoidal 1-category.
    The 2-category $\Mrt_{E_1}(\CC)$ is $E_m$-monoidal.
\end{thm}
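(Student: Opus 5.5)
We prove the statement by induction on $m$, the case $m=0$ being the bare existence of $\Mrt_{E_1}(\CC)$ as a 2-category. That base case needs $\CC$ finitely cocomplete with $\otimes$ right exact in each variable: then the reflexive coequalizer
\[
M_1\otc{A_2}M_2=\mathrm{coeq}\bigl(M_1\otimes A_2\otimes M_2\rightrightarrows M_1\otimes M_2\bigr)
\]
exists, carries an $A_1$--$A_3$-bimodule structure, and is associative and unital up to coherent isomorphism. The standard argument works here because tensoring with a fixed object preserves the coequalizer. Reflexive coequalizers are sifted colimits, so the tensor product of two of them is again one.

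For the monoidal structure, we use the second tensor product of $\CC$ to define a tensor product on the 2-category. Write $\CC$ as an $E_m$-monoidal object in the 2-category $\Cat^{rex}$ of finitely cocomplete $E_1$-monoidal categories with right exact monoidal structure. This uses Dunn additivity, $E_{m+1}=E_1\otimes E_m$, so that the remaining $m$ tensor products $\ot'$ are $E_1$-monoidal functors for the first product. The construction then goes as follows.
\begin{itemize}
\item On objects, $A\ot' A'$ is again an $E_1$-algebra, with multiplication obtained through the interchange law. This is the statement $\Alg_{E_1}(\CC)$ is $E_m$-monoidal, which is \cite{Lur17}.
\item On 1-morphisms, $M\ot' M'$ is an $A_1\ot' A_1'$--$A_2\ot' A_2'$-bimodule, again by interchange.
\item On 2-morphisms the definition is the obvious one.
\end{itemize}
More conceptually, we would show that the assignment $\CC\mapsto\Mrt_{E_1}(\CC)$ is a functor from finitely cocomplete $E_1$-monoidal categories with right exact, colimit-compatible $E_1$-monoidal functors to 2-categories. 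We would also show that this functor preserves finite products: $\Mrt_{E_1}(\CC\times\CD)\simeq\Mrt_{E_1}(\CC)\times\Mrt_{E_1}(\CD)$, since algebras, bimodules and relative tensor products are all computed componentwise. A product-preserving functor sends $E_m$-algebras to $E_m$-algebras, which gives the result. This is the strategy of \cite{Hau17}.

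The main obstacle is the functoriality of relative tensor products, namely the interchange isomorphism
\[
(M_1\otc{A_2}M_2)\ot'(M_1'\otc{A_2'}M_2')\simeq (M_1\ot' M_1')\otc{A_2\ot' A_2'}(M_2\ot' M_2').
\]
This is needed so that $\ot'$ is a 2-functor, rather than merely defined objectwise.

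We would derive the interchange isomorphism from two facts. First, $\ot'$ preserves finite colimits in each variable, which follows from the hypothesis on $\CC$. Second, reflexive coequalizers are sifted, so the diagonal functor $\Delta^{op}_{\le 1}\to\Delta^{op}_{\le1}\times\Delta^{op}_{\le1}$ is cofinal for the bar constructions. Together these identify the coequalizer of the tensored diagrams with the tensor of the coequalizers.

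The coherence data (associators and higher $E_m$-cells) are then inherited from those of $\CC$, because every comparison map is induced on colimits by the corresponding structure isomorphisms of $\CC$. In the 1-categorical setting, checking the axioms therefore reduces to the axioms in $\CC$ after precomposition with the epimorphism $M_1\otimes M_2\to M_1\otc{A_2}M_2$.
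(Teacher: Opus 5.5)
Your route is the one in \cite{Hau17} (the paper itself only cites it), but one step fails as you have set it up. You regard $\CC$ as an $E_m$-algebra for the \emph{cartesian} product in a category whose morphisms are right exact $E_1$-monoidal functors, and then push it forward along a product-preserving functor. For that, the second tensor product $\otimes'\colon\CC\times\CC\to\CC$ must itself be a morphism, i.e.\ right exact as a functor out of $\CC\times\CC$. It is not: it is right exact only in each variable separately. For instance, it does not preserve binary coproducts in $\CC\times\CC$, since $(x_1\sqcup x_2)\otimes'(y_1\sqcup y_2)$ has four summands rather than the two in $(x_1\otimes' y_1)\sqcup(x_2\otimes' y_2)$; this already happens for $\CC=\vect$. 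So $\CC$ is not an $E_m$-algebra in the category you describe. Passing to the Kelly product $\boxtimes$ would make $\otimes'$ a morphism, but then $\Mrt_{E_1}(-)$ is not product-preserving with respect to $\boxtimes$. You would instead need a lax monoidal comparison $\Mrt_{E_1}(\CC)\times\Mrt_{E_1}(\CD)\to\Mrt_{E_1}(\CC\boxtimes\CD)$.

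The repair is already in your last paragraphs. The only colimits the construction uses are the reflexive coequalizers defining relative tensor products, and in the 1-categorical setting these are sifted. So take as source category the $E_1$-monoidal categories whose tensor product preserves reflexive coequalizers in each variable, with strong monoidal functors preserving reflexive coequalizers as morphisms. Your diagonal-cofinality argument then says exactly that $\otimes'$ preserves reflexive coequalizers jointly on $\CC\times\CC$. Hence $\otimes'$ is a morphism and $\CC$ is a cartesian $E_m$-algebra there; this is why Haugseng works with $\Delta^{\mathrm{op}}$-indexed colimits rather than all finite colimits. You should also make $\CC\mapsto\Mrt_{E_1}(\CC)$ functorial on monoidal natural isomorphisms, either as a functor of $(2,1)$-categories or of $\infty$-categories as in \cite{Hau17}. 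The associativity and braiding constraints of $\otimes'$ are only isomorphisms, and a functor of 1-categories would not transport them. With these adjustments the argument goes through, and your interchange isomorphism is precisely the required input.
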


\begin{dfn}
    Let $\CC$ be a finitely cocomplete $E_{m+1}$-monoidal 1-category.
    The 2-category $\Mrt^{sep}_{E_1}(\CC)$ is the full sub 2-category of $\Mrt_{E_1}(\CC)$ whose objects are separable $E_1$-algebras, 1-morphisms are finitely generated bimodules.
\end{dfn}

$\Mrt_{E_1}^{sep}(\CC)$ provides a model of condensation completion $\Sigma(\CC)$ of $\CC$ \cite{GJF19}.



\subsection{The Module Realization Functor}\label{subsec:one-morita-module-realization}

\begin{figure}[H]
    \centering
\begin{tikzpicture}
          \path[use as bounding box] (-2,-0.7) rectangle (4,0.8);
          \draw[draw=hmLineGreen,line width=1.1pt] (-2,0) -- (0,0);
          \draw[draw=hmLineMidBlue,line width=1.1pt] (0,0) -- (2,0);
          \draw[draw=hmLineLightBlue,line width=1.1pt] (2,0) -- (4,0);
          \filldraw[fill=white] (-0.05,-0.05) rectangle(0.05,0.05);
          \filldraw[fill=white] (1.95,-0.05) rectangle(2.05,0.05);

          \node at(-1,0.5){$\CC$};
          \node at(1,0.5){$\hind{A_1}{\CC}{A_1}$};
          \node at(3,0.5){$\hind{A_2}{\CC}{A_2}$};
          \node at(0,-0.4) {$\hind{}{\CC}{A_1}$};
          \node at(2,-0.4) {$\hind{A_1}{\CC}{A_2}$};
      \end{tikzpicture}
    \begin{tikzpicture}
        \draw[-latex] (-0.5,1) -- (0.6,1);
        \draw[draw=none] (-0.5,0) -- (0.5,0.5);
    \end{tikzpicture}
\begin{tikzpicture}
          \path[use as bounding box] (-2,-0.7) rectangle (4,0.8);
          \draw[draw=hmLineGreen,line width=1.1pt] (-2,0) -- (0,0);
          \draw[draw=hmLineLightBlue,line width=1.1pt] (0,0) -- (4,0);
          \filldraw[fill=white] (-0.05,-0.05) rectangle(0.05,0.05);

          \node at(-1,0.5){$\CC$};
          \node at(3,0.5){$\hind{A_2}{\CC}{A_2}$};
          \node at(0,-0.4) {$\hind{}{\CC}{A_2}$};
      \end{tikzpicture}
\end{figure}

\begin{dfn}[\cite{DR18}]
    Let $\CC$ be a finitely cocomplete $E_{m+1}$-monoidal 1-category.
    The 2-category $\LMod(\CC)$ consists of 
    \begin{itemize}
        \item objects are left $\CC$-module categories $\CM_i$;
        \item 1-morphisms between two left $\CC$-module categories are left $\CC$-module functors;
        \item 2-morphisms between two left $\CC$-module functors are left $\CC$-module natural transformations.
    \end{itemize}
\end{dfn}

\begin{thm}[\cite{Hau17}]
    Let $\CC$ be a finitely cocomplete $E_{m+1}$-monoidal 1-category.
    The 2-category $\LMod(\CC)$ is $E_m$-monoidal.
\end{thm}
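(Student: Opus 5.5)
We prove that $\LMod(\CC)$ inherits an $E_m$-monoidal structure from the $E_{m+1}$-monoidal structure of $\CC$ by realizing $\LMod(\CC)$ as a category of modules over an algebra in a symmetric monoidal 2-category. The underlying universe is $\Cat^{rex}$ with the Kelly--López tensor product. An $E_{m+1}$-monoidal finitely cocomplete category $\CC$ whose tensor product preserves finite colimits in each variable is an $E_{m+1}$-algebra in $\Cat^{rex}$. A left $\CC$-module category is then a left module over the underlying $E_1$-algebra, so $\LMod(\CC)\simeq \LMod_{\CC}(\Cat^{rex})$.

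First we invoke Dunn--Lurie additivity, $E_{m+1}\simeq E_1\otimes E_m$. It lets us view $\CC$ as an $E_m$-algebra object in $\Alg_{E_1}(\Cat^{rex})$. The theorem of \cite{Hau17} on module categories says that for a symmetric monoidal $(\infty,2)$-category $\CS$ with suitable colimits, $\LMod_{(-)}(\CS)$ is a symmetric monoidal functor from $\Alg_{E_1}(\CS)$ to module 2-categories. Applying this functor to the $E_m$-algebra $\CC$ produces an $E_m$-algebra in (2-)categories, namely an $E_m$-monoidal 2-category $\LMod(\CC)$. Concretely, the product of $\CM$ and $\CN$ is $\CM\boxtimes\CN$, made into a $\CC$-module through the $E_1$-monoidal functor $\CC\to\CC\boxtimes\CC$ given by the extra tensor product. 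Equivalently, for $m\geq 1$ it is the relative product $\CM\btc{\CC}\CN$, using the bimodule structures supplied by the higher commutativity. The case $m=1$ is exactly the braided case of \cite{DR18}, which we use as a sanity check.

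The main obstacle is showing that the relevant relative tensor products of module categories exist in $\Cat^{rex}$ and are compatible with the $E_m$-structure. Concretely, $\btc{\CC}$ is computed as a geometric realization (bar construction), and we need it to commute with $\boxtimes$ in each variable. This is precisely Haugseng's hypothesis that the tensor product preserves geometric realizations. For $\Cat^{rex}$ we plan to verify it using the fact that $\boxtimes$ is closed there: $\Fun^{rex}$ is the internal hom, so $\boxtimes$ preserves colimits in each variable. We then reduce all coherence checks to this statement, and the 2-categorical truncation (1-categories and natural transformations) makes the higher coherences automatic.
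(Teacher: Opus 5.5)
Your main line is the standard way to unpack the paper's citation to \cite{Hau17}, since the paper gives no argument of its own. You use Dunn additivity to regard $\CC$ as an $E_m$-algebra in $\Alg_{E_1}(\Cat^{rex})$, then push it through a symmetric monoidal functor $A\mapsto\LMod_A(\Cat^{rex})$. That part is fine.

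What is wrong is your concrete description of the resulting product, and the ``equivalently'' joining it to the relative product is false. The extra tensor product is an $E_1$-monoidal functor $\otimes\colon\CC\boxtimes\CC\to\CC$ (monoidal because of the braiding), not a functor $\CC\to\CC\boxtimes\CC$. Moreover, a symmetric monoidal $\LMod_{(-)}$ is covariant in algebra maps through extension of scalars, not restriction. The product is therefore
\[
\LMod(\CC)\times\LMod(\CC)\longrightarrow\LMod_{\CC\boxtimes\CC}(\Cat^{rex})\xrightarrow{\ \CC\boxtimes_{\CC\boxtimes\CC}(-)\ }\LMod(\CC),\qquad (\CM,\CN)\longmapsto \CC\boxtimes_{\CC\boxtimes\CC}(\CM\boxtimes\CN)\simeq\CM\boxtimes_{\CC}\CN,
\]
with $\CM$ made a right $\CC$-module via the braiding. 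Its unit is the regular module $\CC$, the image of the unit $1\to\CC$. No structure of the form ``$\CM\boxtimes\CN$ with some induced $\CC$-action'' can agree with this. For $\CM=\CN=\CC$ its underlying category would be $\CC\boxtimes\CC$ rather than $\CC\boxtimes_{\CC}\CC\simeq\CC$; in the linear setting with $\CC=\vect_G$, $G$ finite abelian, this is $\vect_{G\times G}$ versus $\vect_G$. Only your second description is correct, and it is the one the paper relies on later when it writes $(\LMod(\CC),\boxtimes_{\CC},\CC)$. The abstract argument is unaffected, but the explicit formula must be replaced.

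There are two smaller points. First, closedness of $\boxtimes$ (with internal hom $\Fun^{rex}$) shows that $\boxtimes$ preserves, in each variable, the bar constructions (codescent objects) computing $\boxtimes_{\CC}$. It does not show that they exist. Since you list existence as part of the obstacle, it has to be supplied separately, e.g.\ from bicocompleteness of the 2-category of finitely cocomplete categories and right exact functors.

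Second, the standard symmetric monoidality statement for $A\mapsto\LMod_A$ (e.g.\ \cite[Theorem~4.8.5.16]{Lur17}, which the paper itself invokes later) is $(\infty,1)$-categorical. Fed $\Cat^{rex}$ as an $(\infty,1)$-category, it only controls module functors and \emph{invertible} module transformations. To get an $E_m$-structure on the full 2-category $\LMod(\CC)$ you need one of two things. Either use a genuinely $(\infty,2)$-categorical version, and make sure the result you attribute to \cite{Hau17} really is of that form. Or check directly that the constraints are natural with respect to non-invertible module transformations. Truncation does not produce this from $(\infty,1)$-input.
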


\begin{thm}[$E_1$ Eilenberg--Watts Theorem, \cite{EGNO15,FSS20}]
\label{thm:generalized_Eilenberg-Watts}

    Let $\CC$ be a finitely cocomplete $E_1$-monoidal $1$-category.
    For two $E_1$-algebras $A_1,A_2\in\Alg_{E_1}(\CC)$, there is an equivalence of categories
    \begin{align*}
        \hind{A_1}{\CC}{A_2}
        &\longrightarrow
        \Fun_{\CC}(\hind{}{\CC}{A_1},\hind{}{\CC}{A_2}),\\
        M
        &\longmapsto
        -\otc{A_1} M.
    \end{align*}

    In particular, the equivalence
    \begin{align*}
        \hind{A}{\CC}{A}
        \longrightarrow
        \Fun_{\CC}(\hind{}{\CC}{A},\hind{}{\CC}{A})
    \end{align*}
    is monoidal.

\end{thm}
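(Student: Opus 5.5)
We construct the inverse functor explicitly and verify the composites. Define
\[
\Phi:\hind{A_1}{\CC}{A_2}\to \Fun_{\CC}(\hind{}{\CC}{A_1},\hind{}{\CC}{A_2}),\qquad \Phi(M)=-\otc{A_1}M .
\]
This is well defined because $\otimes$ preserves finite colimits in each variable, so the coequalizer $X\otimes A_1\otimes M\rightrightarrows X\otimes M$ defining $X\otc{A_1}M$ exists. The coequalizer inherits the right $A_2$-action from $M$. It also inherits a left $\CC$-module structure $c\otimes(X\otc{A_1}M)\simeq (c\otimes X)\otc{A_1}M$, since $c\otimes-$ preserves coequalizers. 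Finally, $\Phi(M)$ is right exact because colimits commute with colimits.

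In the other direction, define
\[
\Psi(F)=F(A_1).
\]
The right $A_2$-module structure is the one $F(A_1)$ carries as an object of $\hind{}{\CC}{A_2}$. The left $A_1$-action is the composite
\[
A_1\otimes F(A_1)\xrightarrow{\;\sim\;}F(A_1\otimes A_1)\xrightarrow{F(m)}F(A_1),
\]
where the first map is the $\CC$-module structure of $F$. The two actions commute because the module-functor constraint is a morphism in $\hind{}{\CC}{A_2}$. Associativity and unitality of the left action follow from the coherence of the module constraint together with associativity of $m$.

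For the composite $\Psi\Phi$, we have $\Psi\Phi(M)=A_1\otc{A_1}M\simeq M$ as bimodules, by the standard fact that the bar coequalizer $A_1\otimes A_1\otimes M\rightrightarrows A_1\otimes M\to M$ is split.

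The composite $\Phi\Psi\simeq \Id$ is the main step. For $X\in\hind{}{\CC}{A_1}$, take the canonical coequalizer presentation
\[
X\otimes A_1\otimes A_1\rightrightarrows X\otimes A_1\to X
\]
in $\hind{}{\CC}{A_1}$. Since $F$ is right exact, it preserves this coequalizer. The module structure of $F$ gives
\[
F(X\otimes A_1)\simeq X\otimes F(A_1),
\]
and under this identification the two parallel arrows become the two actions of $A_1$ on the pair $(X,F(A_1))$. Hence
\[
F(X)\simeq X\otc{A_1}F(A_1),
\]
naturally in $X$. I expect the main obstacle to be checking that this isomorphism is a $\CC$-module natural isomorphism, which means matching the module constraints on both sides. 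It also needs to be natural in $F$ with respect to module natural transformations. For the latter, the natural transformations are determined by their components at $A_1$, because $A_1$ generates under colimits and $\CC$-action. This same observation gives full faithfulness on morphisms: a module transformation $\alpha:F\Rightarrow G$ corresponds to the bimodule map $\alpha_{A_1}$.

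For the monoidal statement when $A_1=A_2=A$, we exhibit the monoidal structure on $\Phi$. The required isomorphism is
\[
\Phi(M)\circ\Phi(N)=\bigl(-\otc{A}M\bigr)\otc{A}N\;\simeq\;-\otc{A}\bigl(M\otc{A}N\bigr)=\Phi(M\otc{A}N),
\]
which is associativity of relative tensor products; it holds because $\otimes$ preserves coequalizers in each variable, by interchange of colimits. The unit constraint is $-\otc{A}A\simeq\Id$. The coherence axioms then reduce to those of the relative tensor product in $\hind{A}{\CC}{A}$.

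Note the composition-order convention: functors compose as $\Phi(N)\circ\Phi(M)$ when $M$ is applied first, so depending on convention $\Phi$ is monoidal onto $\Fun_{\CC}(\hind{}{\CC}{A},\hind{}{\CC}{A})$ with either composition order or its reverse. The right-module convention makes $M\otc{A}N$ correspond to applying $M$ first.
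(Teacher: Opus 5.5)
Your proposal is correct and takes the same route as the paper: the same pair of functors $M\mapsto -\otc{A_1}M$ and $F\mapsto F(A_1)$, with the left $A_1$-action on $F(A_1)$ built from the $\CC$-module constraint of $F$ followed by $F(m)$. You go further than the paper's proof, which only checks that these two functors are well defined. You also verify $\Psi\Phi\simeq\Id$ via the split bar coequalizer, and $\Phi\Psi\simeq\Id$ via the presentation $X\otimes A_1\otimes A_1\rightrightarrows X\otimes A_1\to X$ together with right exactness of $F$. For the monoidal part, you correctly observe that $\Phi(M\otc{A}N)\simeq\Phi(N)\circ\Phi(M)$; your first display has the composite in the opposite order, which your last paragraph fixes.
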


\begin{proof}

    We define two functors as follows:

    \[
    \begin{tikzcd}
        \hind{A_1}{\CC}{A_2}
        &
        \Fun_{\CC}(\hind{}{\CC}{A_1},\hind{}{\CC}{A_2})
        \\[-2em]
        {M}
        &
        {-\otc{A_1} M}
        \\
        {N}
        &
        {-\otc{A_1} N}
        \arrow[from=1-1, to=1-2]
        \arrow[maps to, from=2-1, to=2-2]
        \arrow[""{name=0, anchor=center, inner sep=0}, "f"', from=2-1, to=3-1]
        \arrow[""{name=0p, anchor=center, inner sep=0}, phantom, from=2-1, to=3-1, start anchor=center, end anchor=center]
        \arrow[""{name=1, anchor=center, inner sep=0}, "{-\otc{A_1} f}", from=2-2, to=3-2]
        \arrow[""{name=1p, anchor=center, inner sep=0}, phantom, from=2-2, to=3-2, start anchor=center, end anchor=center]
        \arrow[maps to, from=3-1, to=3-2]
    \end{tikzcd}
    \]

    and

    \[
    \begin{tikzcd}
        \Fun_{\CC}(\hind{}{\CC}{A_1},\hind{}{\CC}{A_2})
        &
        \hind{A_1}{\CC}{A_2}
        \\[-2em]
        {F_1}
        &
        {F_1(A_1)}
        \\
        {F_2}
        &
        {F_2(A_1)}
        \arrow[from=1-1, to=1-2]
        \arrow[maps to, from=2-1, to=2-2]
        \arrow[""{name=0, anchor=center, inner sep=0}, "\delta"', from=2-1, to=3-1]
        \arrow[""{name=0p, anchor=center, inner sep=0}, phantom, from=2-1, to=3-1, start anchor=center, end anchor=center]
        \arrow[""{name=1, anchor=center, inner sep=0}, "{\delta_{A_1}}", from=2-2, to=3-2]
        \arrow[""{name=1p, anchor=center, inner sep=0}, phantom, from=2-2, to=3-2, start anchor=center, end anchor=center]
        \arrow[maps to, from=3-1, to=3-2]
    \end{tikzcd}
    \]

    We first show that these functors are well-defined.
    We begin by showing that $-\otc{A_1}M$ is a right exact left $\CC$-module functor. Namely, for any $c\in\CC$ and $X\in\hind{}{\CC}{A_1}$, there is a canonical isomorphism
    \[
        c\ot(X\otc{A_1}M)
        \cong
        (c\ot X)\otc{A_1}M.
    \]

    Next, we show that $F_1(A_1)\in\hind{A_1}{\CC}{A_2}$.
    Since $F_1(A_1)\in\hind{}{\CC}{A_2}$, it remains to equip $F_1(A_1)$ with a compatible left $A_1$-module structure. This action is induced by the left $\CC$-module structure of $F_1$:
    \[
        A_1\ot F_1(A_1)
        \cong
        F_1(A_1\ot A_1)
        \xrightarrow{F_1(m)}
        F_1(A_1).
    \]
\end{proof}


\begin{thm}[Module realization functor]\label{thm:E1_module_realization}
    There is a fully faithful 2-functor from $\Mrt_{E_1}(\CC)$ to $\LMod(\CC)$.
\end{thm}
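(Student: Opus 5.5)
I will construct the 2-functor $\mathrm{Mod}_1:\Mrt_{E_1}(\CC)\to\LMod(\CC)$ directly and then reduce full faithfulness to the $E_1$ Eilenberg--Watts theorem (Theorem~\ref{thm:generalized_Eilenberg-Watts}).

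\textbf{Construction of the 2-functor.} On objects, $\mathrm{Mod}_1$ sends an $E_1$-algebra $A$ to the left $\CC$-module category $\hind{}{\CC}{A}$ of right $A$-modules. Here $\CC$ acts by $c\ot X$; the right $A$-action is untouched, so this is a module structure by the associativity constraint of $\CC$. On 1-morphisms, a bimodule $M\in\hind{A_1}{\CC}{A_2}$ goes to the functor $-\otc{A_1}M:\hind{}{\CC}{A_1}\to\hind{}{\CC}{A_2}$. On 2-morphisms, a bimodule map $f$ goes to the natural transformation $-\otc{A_1}f$.

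\textbf{Functoriality data.} The unitor $\hind{}{\CC}{A}\ni X\mapsto X\otc{A}A\cong X$ provides the identity constraint. The compositor is the associator of relative tensor products,
\begin{align*}
    (X\otc{A_1}M_1)\otc{A_2}M_2\;\cong\;X\otc{A_1}(M_1\otc{A_2}M_2).
\end{align*}
This isomorphism exists because $\ot$ preserves finite colimits (in particular coequalizers) in each variable, so both sides are the colimit of the same two-sided bar-type diagram. The step I would check carefully is that these isomorphisms are left $\CC$-module natural and satisfy the pentagon-type and unit coherence axioms of a 2-functor. Each such axiom reduces to the corresponding coherence for relative tensor products in $\CC$. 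Those in turn follow from the universal property of coequalizers together with Mac Lane coherence in $\CC$. Compatibility with horizontal and vertical composition of 2-morphisms is immediate from functoriality of $\otc{A}$.

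\textbf{Full faithfulness.} Here I read ``fully faithful'' for a 2-functor as saying that for every pair $A_1,A_2$ the induced functor on hom-categories
\begin{align*}
    \hind{A_1}{\CC}{A_2}=\hom_{\Mrt_{E_1}(\CC)}(A_1,A_2)\longrightarrow\Fun_{\CC}(\hind{}{\CC}{A_1},\hind{}{\CC}{A_2})
\end{align*}
is an equivalence, with the target taken as right exact $\CC$-module functors. This functor is exactly the one in Theorem~\ref{thm:generalized_Eilenberg-Watts}. The remaining work is to confirm the two inverse natural isomorphisms there:
\begin{itemize}
    \item $F(A_1)\otc{A_1}-$ recovers $F$. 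For $X\in\hind{}{\CC}{A_1}$, present $X$ as the coequalizer of $X\ot A_1\ot A_1\rightrightarrows X\ot A_1$. Since $F$ is right exact and a $\CC$-module functor, $F(X)\cong X\otc{A_1}F(A_1)$.
    \item $M\mapsto A_1\otc{A_1}M\cong M$ as bimodules.
\end{itemize}

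\textbf{Main obstacle.} I expect the first of these to be the hardest step. It requires checking that the left $A_1$-action on $F(A_1)$ built from the module structure of $F$ makes the comparison map a well-defined, module-natural isomorphism. This is also the only place where right exactness of $F$ is used, and it is precisely where the finitely cocomplete hypothesis on $\CC$ enters.
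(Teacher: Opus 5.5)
Your proposal is correct and follows the same route as the paper. The paper defines $\mathrm{Mod}_1$ by $A\mapsto\hind{}{\CC}{A}$, $M\mapsto -\otc{A_1}M$, $f\mapsto -\otc{A_1}f$ and deduces full faithfulness directly from Theorem~\ref{thm:generalized_Eilenberg-Watts}; you additionally spell out the pseudofunctoriality constraints and the inverse $F\mapsto F(A_1)$, though in your first bullet the functor recovering $F$ should be written $-\otc{A_1}F(A_1)$, as your formula $F(X)\cong X\otc{A_1}F(A_1)$ already indicates.
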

\begin{proof}
    Let $\mathrm{Mod}_1:\Mrt_{E_1}(\CC)\to\LMod(\CC)$ be the 2-functor 
    \begin{align*}
        \xymatrix@C=1pc@R=1.5pc{ 
            A_1 
            \ar@/^1.25pc/[rr]_{\quad}^{M}="M" 
            \ar@/_1.25pc/[rr]_{N}="N" 
            && A_2 
            \ar@{}"M";"N"|(0.135){\,}="7" 
            \ar@{}"M";"N"|(0.875){\,}="8" 
            \ar@{=>}|-{f}"7" ;"8"
        }\mapsto
        \xymatrix@C=1pc@R=1.5pc{ 
            \hind{}{\CC}{A_1} 
            \ar@/^1.25pc/[rr]_{\quad}^{- \otc{A_1} M}="M" 
            \ar@/_1.25pc/[rr]_{- \otc{A_1} N}="N" 
            && \hind{}{\CC}{A_2} 
            \ar@{}"M";"N"|(0.135){\,}="7" 
            \ar@{}"M";"N"|(0.875){\,}="8" 
            \ar@{=>}|-{- \otc{A_1} f}"7" ;"8"
        }
    \end{align*}
    By Theorem \ref{thm:generalized_Eilenberg-Watts}, $\mathrm{Mod}_1$ is fully faithful.
\end{proof}


Thus, by the above theorem, we obtain an equivalent definition of 1-Morita equivalence.

\begin{crl}
    Let $\CC$ be an $E_1$-monoidal 1-category.
    Two $E_1$-algebras $A_1,A_2$ are 1-Morita equivalent, i.e. they are equivalent objects in $\Mrt_{E_1}(\CC)$, if and only if they are classically 1-Morita equivalent (see Definition \ref{dfn:1-Morita}).
\end{crl}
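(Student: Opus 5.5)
The plan is to prove the two implications separately, using the fully faithful 2-functor $\mathrm{Mod}_1$ of Theorem~\ref{thm:E1_module_realization} together with the Eilenberg--Watts equivalence of Theorem~\ref{thm:generalized_Eilenberg-Watts}.

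\emph{Forward direction.} Suppose $A_1\simeq A_2$ in $\Mrt_{E_1}(\CC)$. Then there are bimodules $M\in\hind{A_1}{\CC}{A_2}$ and $N\in\hind{A_2}{\CC}{A_1}$ with bimodule isomorphisms
\[
M\otc{A_2}N\cong A_1,\qquad N\otc{A_1}M\cong A_2 .
\]
Any 2-functor preserves equivalences, so I apply $\mathrm{Mod}_1$. This gives left $\CC$-module functors $-\otc{A_1}M$ and $-\otc{A_2}N$. Their composites are $-\otc{A_1}(M\otc{A_2}N)\cong -\otc{A_1}A_1\cong\id$, and similarly in the other order. The only input here is the associativity isomorphism $(X\otc{A_1}M)\otc{A_2}N\cong X\otc{A_1}(M\otc{A_2}N)$ for relative tensor products. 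This holds because $\otimes$ preserves finite colimits (coequalizers) in each variable. Hence $\hind{}{\CC}{A_1}\simeq\hind{}{\CC}{A_2}$ as left $\CC$-module categories.

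\emph{Converse direction.} Let $F:\hind{}{\CC}{A_1}\to\hind{}{\CC}{A_2}$ be a left $\CC$-module equivalence with quasi-inverse $G$. Two preliminary points need checking.
\begin{itemize}
\item The quasi-inverse $G$ inherits a $\CC$-module structure. This is standard: conjugate the module constraint of $F$ by the unit and counit of the adjoint equivalence.
\item $F$ and $G$ are right exact, since equivalences preserve all colimits.
\end{itemize}
So $F$ and $G$ are 1-morphisms in $\LMod(\CC)$, and they fall within the scope of Theorem~\ref{thm:generalized_Eilenberg-Watts}.

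By essential surjectivity in Eilenberg--Watts, $F\cong -\otc{A_1}M$ with $M=F(A_1)\in\hind{A_1}{\CC}{A_2}$, and likewise $G\cong -\otc{A_2}N$ with $N=G(A_2)$. The isomorphisms $GF\cong\id$ and $FG\cong\id$ are module natural isomorphisms. Using associativity of relative tensor products again, they become
\[
\mathrm{Mod}_1(M\otc{A_2}N)\cong\mathrm{Mod}_1(A_1),\qquad \mathrm{Mod}_1(N\otc{A_1}M)\cong\mathrm{Mod}_1(A_2).
\]
Since $\mathrm{Mod}_1$ is fully faithful on hom-categories, these lift to bimodule isomorphisms $M\otc{A_2}N\cong A_1$ and $N\otc{A_1}M\cong A_2$. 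Concretely, evaluating at the free module gives $G(F(A_1))\cong N'$ with $N'\cong M\otc{A_2}N$. Therefore $M$ is an invertible 1-morphism, and $A_1\simeq A_2$ in $\Mrt_{E_1}(\CC)$.

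\emph{Main obstacle.} The delicate step is the compatibility of $\mathrm{Mod}_1$ with composition. We need $\mathrm{Mod}_1(M\otc{A_2}N)\cong\mathrm{Mod}_1(N)\circ\mathrm{Mod}_1(M)$ as module functors, naturally, and the identification of $\mathrm{Mod}_1(A)$ with the identity. Only with this does the image of an isomorphism of functors correspond to an isomorphism of bimodules under Eilenberg--Watts. I would verify it directly: both sides are coequalizers built from the same diagrams, and $\otimes$ preserves them in each variable. I would also check that the bimodule structure induced on $F(A_1)$ by the module constraint, as in the proof of Theorem~\ref{thm:generalized_Eilenberg-Watts}, agrees with the one from $M$.
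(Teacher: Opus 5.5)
Your proposal is correct and takes the same route as the paper, which deduces the corollary directly from Theorem~\ref{thm:E1_module_realization}: a 2-functor $\mathrm{Mod}_1$ that is an equivalence on hom-categories both preserves and reflects equivalences of objects. Your argument spells this out explicitly, and the checks you flag (module structure on the quasi-inverse, right exactness, compatibility of $\mathrm{Mod}_1$ with composition and identities) are exactly what is implicit in calling $\mathrm{Mod}_1$ a fully faithful 2-functor into $\LMod(\CC)$.
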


The notion of 1-Morita equivalence in different contexts can be unified by Definition \ref{dfn: 1-Mrt_equivalnce}; the different cases arise from different choices of the $n$-category $\CC$.
For instance, when $\CC=\Cat$, the 2-category of 1-categories, $E_1$-algebras in $\Cat$ are monoidal categories, and Definition \ref{dfn:1-Morita} characterizes 1-Morita equivalence between monoidal categories.

In particular, let $\CC=\Ab$, the category of Abelian groups.
It is clear that $\Ab$ is an $E_{\infty}$-monoidal 1-category, i.e. a symmetric monoidal 1-category, and that an $E_1$-algebra in $\Ab$ is a ring.
For two rings $R,S$, we say that they are Morita equivalent if their module categories are equivalent, i.e.
\[
    \hind{}{\Ab}{R}\simeq \hind{}{\Ab}{S}
\]
as additive categories.
Equivalently, ordinary Morita equivalence can be described in terms of equivalences in the bicategory $\Mrt_{E_1}(\Ab)$.

\begin{crl}\label{crl:Morita_rings}
    Two rings $R$ and $S$ are 1-Morita equivalent if and only if they are equivalent objects in the bicategory $\Mrt_{E_1}(\Ab)$, i.e. there exist an $R$-$S$-bimodule $M$ and an $S$-$R$-bimodule $N$ such that
    \[
        M\otc{S}N\cong R
        \qquad\text{and}\qquad
        N\otc{R}M\cong S.
    \]
\end{crl}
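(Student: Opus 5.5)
The statement to prove is Corollary \ref{crl:Morita_rings}. I would get it by specializing the preceding corollary to $\CC=\Ab$ and then unpacking what an equivalence of objects in the bicategory $\Mrt_{E_1}(\Ab)$ means.

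First, note that $\Ab$ is symmetric monoidal and finitely cocomplete. Its tensor product preserves finite colimits in each variable, since $\otimes_{\mathbb Z}$ is right exact. So $\Ab$ satisfies the hypotheses on $\CC$ under which $\Mrt_{E_1}(\CC)$, the Eilenberg--Watts Theorem \ref{thm:generalized_Eilenberg-Watts} and Theorem \ref{thm:E1_module_realization} are formulated. An $E_1$-algebra in $\Ab$ is a ring, and $\hind{}{\Ab}{R}$ is the category of right $R$-modules.

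Next comes the point that needs care. The classical Definition \ref{dfn:1-Morita} asks for an equivalence of left $\Ab$-module categories, whereas the statement speaks of an equivalence as additive categories. I would check that these agree. Any additive equivalence $F$ between module categories over rings preserves colimits, because it is an equivalence. It is therefore naturally compatible with the canonical $\Ab$-action: for a finitely presented abelian group $c$, the object $c\otimes X$ is a finite colimit built from copies of $X$, so $F(c\otimes X)\cong c\otimes F(X)$ functorially. For a general $c$ one writes it as a filtered colimit of finitely presented groups and uses that $F$ preserves colimits. I expect this step to be the main (mild) obstacle. The fallback is to say that the $\Ab$-module structure on an additive cocomplete category is unique up to equivalence, so module equivalences and additive equivalences coincide.

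With that settled, the preceding corollary says that $R$ and $S$ are Morita equivalent if and only if they are equivalent objects in $\Mrt_{E_1}(\Ab)$. By definition of the bicategory, this means there exist 1-morphisms $M\in\hind{R}{\Ab}{S}$ and $N\in\hind{S}{\Ab}{R}$ together with invertible 2-morphisms to the identity 1-morphisms. Composition of 1-morphisms is the relative tensor product and the identity 1-morphism on $A$ is $A$ itself, so these invertible 2-morphisms are bimodule isomorphisms
\[
M\otc{S}N\cong R \qquad\text{and}\qquad N\otc{R}M\cong S.
\]
Conversely, such isomorphisms exhibit an equivalence in the bicategory. This unpacking is routine. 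Adjoint-equivalence coherence is not needed, since any equivalence in a bicategory can be promoted to an adjoint equivalence.
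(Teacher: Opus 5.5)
Your proposal is correct and follows the paper's route: the paper gives no separate argument, but obtains the corollary by specializing the preceding corollary (the fully faithful $\mathrm{Mod}_1$ together with Eilenberg--Watts) to $\CC=\Ab$, and then reading off an equivalence in $\Mrt_{E_1}(\Ab)$ as bimodule isomorphisms $M\otc{S}N\cong R$ and $N\otc{R}M\cong S$. Your extra check, that an additive equivalence of module categories is automatically $\Ab$-linear, is a point the paper passes over silently; your fallback is the cleanest way to settle it, since $c\otimes X$ is the copower characterized by $\Hom(c\otimes X,Y)\cong\Hom_{\Ab}(c,\Hom(X,Y))$, so any additive equivalence preserves it naturally and coherently by Yoneda.
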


This is a classical result in ordinary Morita theory, which studies categories of modules over rings.
We also recover the ordinary Eilenberg--Watts theorem.

\begin{crl}[Eilenberg--Watts Theorem]
    For $R,S\in\Alg_{E_1}(\Ab)$, the functor
    \begin{align*}
        \hind{R}{\Ab}{S}
        &\longrightarrow
        \Fun^{rex}_{\Ab}(\hind{}{\Ab}{R},\hind{}{\Ab}{S}),\\
        M
        &\longmapsto
        -\otc{R}M
    \end{align*}
is an equivalence of categories.
\end{crl}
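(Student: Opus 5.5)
The plan is to specialize Theorem~\ref{thm:generalized_Eilenberg-Watts} to $\CC=\Ab$. Here $\Ab$ is finitely cocomplete and $\ot_{\mathbb{Z}}$ is right exact in each variable, so the hypotheses hold. An $E_1$-algebra in $\Ab$ is a ring, and $\hind{}{\Ab}{R}$ is the category of right $R$-modules. The first step is to identify the target $\Fun_{\Ab}(\hind{}{\Ab}{R},\hind{}{\Ab}{S})$, meaning right exact left $\Ab$-module functors in the $\Cat^{rex}$ universe, with $\Fun^{rex}_{\Ab}$ in the statement. For this I would note that a left $\Ab$-module structure on a right exact functor $F$ is determined by, and equivalent to, additivity. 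Every abelian group $c$ has a presentation $\mathbb{Z}^{\oplus I}\to\mathbb{Z}^{\oplus J}\to c\to 0$, so the constraint $c\ot F(X)\cong F(c\ot X)$ is forced by additivity and right exactness once it is fixed on $\mathbb{Z}$, where it is the unit constraint. Coherence follows from naturality.

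One subtlety concerns infinite direct sums, because $\Cat^{rex}$ only demands finite colimits. A plain additive right exact functor need not preserve arbitrary coproducts. The cleanest fix is to read $\Fun^{rex}_{\Ab}$ as functors carrying the $\Ab$-module structure, which by definition commutes with $c\ot-$ for every $c$, including $c=\mathbb{Z}^{\oplus I}$. Under that reading $F$ automatically preserves all direct sums $X^{\oplus I}\cong \mathbb{Z}^{\oplus I}\ot X$. This matches the classical Eilenberg--Watts hypothesis of right exact, coproduct-preserving functors.

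With this identification in place, the equivalence $M\mapsto -\otc{R}M$ is literally the statement of Theorem~\ref{thm:generalized_Eilenberg-Watts}, with the quasi-inverse $F\mapsto F(R)$ constructed there.

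The main obstacle is the second step: checking that the natural map $X\otc{R}F(R)\to F(X)$ is an isomorphism. The route I would take is the classical one. The map is an isomorphism for $X=R$, hence for free modules $R^{\oplus I}$, since both sides preserve direct sums. For a general $X$, I would choose a free presentation $R^{\oplus I}\to R^{\oplus J}\to X\to 0$ and apply right exactness of both sides together with the five lemma.
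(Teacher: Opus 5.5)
Your proposal is correct and follows the paper's route: the corollary is Theorem~\ref{thm:generalized_Eilenberg-Watts} specialized to $\CC=\Ab$, with quasi-inverse $F\mapsto F(R)$. What you add is worthwhile because the paper leaves it implicit. First, reading $\Fun^{rex}_{\Ab}$ as $\Ab$-module functors is exactly what forces preservation of the copowers $X^{\oplus I}\cong\mathbb{Z}^{\oplus I}\ot X$; otherwise a bare right exact functor such as $\prod_{\mathbb{N}}\colon\Ab\to\Ab$ would be a counterexample. Second, your free-presentation argument gives the invertibility of $X\otc{R}F(R)\to F(X)$, which the paper's proof of the general theorem does not spell out.
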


\subsection{1-Morita Theory in Finite Tensor Categories}
\label{subsec:one-morita-finite}

We now restrict to the finite setting. The main result of this subsection is that, for a finite pre-multi-tensor category $\CC$, the module realization functor is not only fully faithful but an equivalence of 2-categories.

\begin{dfn}
    Let $\CM$ be a finite $\CC$-module category.
    An object $P\in\CM$ is called
    \begin{itemize}
        \item $\CC$-{\bf projective} if $[P,-]:\CM\to\CC$ is right exact;
        \item a $\CC$-{\bf generator} if $[P,-]:\CM\to\CC$ is faithful.
    \end{itemize}
\end{dfn}

\begin{lem}[\cite{DS26}, Lemma 2.14]
\label{lem:existence_of_progenerator}
    Any finite left $\CC$-module category admits a $\CC$-projective generator.
    The analogous statement for finite right $\CC$-module categories follows by passing to the monoidal opposite.
\end{lem}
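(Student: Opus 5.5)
We have to prove that every finite left $\CC$-module category $\CM$ contains a $\CC$-projective generator, i.e. an object $P$ for which the internal Hom $[P,-]:\CM\to\CC$ is both right exact and faithful. Here $\CC$ is a finite pre-multi-tensor category; we do not assume rigidity of $\CC$ beyond what is needed for internal Homs to exist. The plan is to take the ordinary projective generator $P$ of the finite abelian category $\CM$ and show that it works. Two facts make this plausible: finiteness gives $\CM\simeq\mathrm{mod}\text{-}R$ for a finite-dimensional algebra $R$, so an ordinary projective generator exists, and the action $\ot:\CC\times\CM\to\CM$ is right exact in each variable.

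\emph{Step 1: internal Homs exist.} For each $P\in\CM$ the functor $\CC\to\CM$, $c\mapsto c\ot P$, is $\bk$-linear and right exact between finite categories. By the finite-category adjoint functor theorem (Eilenberg--Watts for finite linear categories), it therefore has a right adjoint $[P,-]:\CM\to\CC$, characterized by
\[
\Hom_\CM(c\ot P,M)\cong\Hom_\CC(c,[P,M]).
\]

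\emph{Step 2: faithfulness.} Let $f:M\to N$ be nonzero in $\CM$. Because $P$ is an ordinary generator, there is a map $g:P\to M$ with $fg\neq 0$. The unit gives $P\cong\bone\ot P$, so $g$ corresponds under the adjunction to some $\tilde g:\bone\to[P,M]$, and $[P,f]\circ\tilde g$ is the adjunct of $fg$, hence nonzero. So $[P,f]\neq 0$. This step is formal once the adjunction is set up.

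\emph{Step 3: right exactness (the main obstacle).} The functor $[P,-]$ is a right adjoint, so it is automatically left exact; what has to be shown is that it is right exact. I would test this on Hom-spaces. For every projective $Q\in\CC$, the isomorphism $\Hom_\CC(Q,[P,-])\cong\Hom_\CM(Q\ot P,-)$ shows that if $Q\ot P$ is projective in $\CM$, then $\Hom_\CC(Q,[P,-])$ is exact. Since $\CC$ is finite, it has a projective generator, and a left exact functor whose composite with $\Hom(Q,-)$ is exact for a projective generator $Q$ is exact. So the whole step reduces to showing that $\CC$-projectives act on $P$ to give projectives: $Q\ot P$ is projective whenever $Q$ and $P$ are.

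\emph{Step 3, continued: proving the reduction.} In the rigid (multi-tensor) case, $Q\ot -$ has the exact right adjoint $Q^*\ot-$, so $Q\ot-$ preserves projectives and we are done. In the pre-multi-tensor case, where rigidity may fail, I would instead take the statement of \cite{DS26}, Lemma~2.14, in the intended generality, or choose $P$ more cleverly. The alternative choice is $P=G\ot P_0$, with $G$ a projective generator of $\CC$ and $P_0$ a projective generator of $\CM$, and then argue directly that $\Hom_\CM(G\ot P_0,-)$ is exact. I expect this projectivity-preservation to be the genuine difficulty.

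\emph{Right module categories.} The analogous statement for right $\CC$-module categories follows by applying the left case to $\CC^{\mathrm{rev}}$, the monoidal opposite of $\CC$.
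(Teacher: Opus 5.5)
\textbf{Verdict: the proposal has a gap that cannot be closed, because the lemma as used in the paper is false without rigidity.} The paper gives no proof of this lemma; it only cites \cite{DS26}.

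\textbf{The rigid case.} Your argument is complete when $\CC$ is rigid (a finite multi-tensor category). Steps 1 and 2 are correct. Your reduction in Step 3 is also correct: for a projective generator $Q$ of $\CC$, an object $P$ is $\CC$-projective iff $Q\rhd P$ is projective, because $\Hom_\CC(Q,[P,-])\cong\Hom_\CM(Q\rhd P,-)$. Combined with the exact right adjoint of $Q\rhd-$, this shows that every ordinary projective generator of $\CM$ is a $\CC$-projective generator. This is the standard argument.

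\textbf{The gap.} The missing case is the non-rigid one, and that is exactly the generality in which the paper states and uses the lemma. The paper works with finite pre-multi-tensor $\CC$, and in particular applies the lemma to ${}_A\CC_A$ acting on $\CC_A$ for a non-separable $A$. There you leave Step 3 open, and appealing to \cite{DS26} for it is circular.

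\textbf{Why no choice of $P$ works.} No cleverer choice of $P$, including $P=G\rhd P_0$, can close this gap, because without rigidity the statement is false. Here is a counterexample.
\begin{itemize}
\item Let $\CC$ be the semisimple linearization of the monoid $\{\bone,z\}$ with $z\ot z\cong z$. Its tensor product is exact, but $z$ has no dual: $z\ot X$ is always a multiple of $z$, so any coevaluation $\bone\to z\ot z^*$ vanishes.
\item Let $\CM$ be the category of finite-dimensional modules over $S=\bk[x]/(x^2)$. Let $\bone$ act as the identity, and let $z$ act by restriction $e^*$ along the idempotent algebra map $e(a+bx)=a$. Since $e^*e^*=e^*$, this is a strict action, exact in both variables.
\item For $P\in\CM$ one computes
\[
[P,N]\;\cong\;\Hom_S(P,N)\ot\bone\;\oplus\;\Hom_S(e^*P,N)\ot z .
\]
So $[P,-]$ is right exact iff both $P$ and $e^*P$ are projective.
\item But $e^*P$ is $P$ with $x$ acting by $0$. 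It is semisimple, and nonzero whenever $P\neq 0$. Nonzero semisimple $S$-modules are not projective.
\item Hence $0$ is the only $\CC$-projective object, and $\CM$ has no $\CC$-projective generator.
\end{itemize}

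\textbf{What your proof does establish.} Your proof proves exactly the true part of the statement. The lemma needs rigidity of $\CC$, or at least that each functor $c\rhd-$ preserves projectives.

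\textbf{Salvaging the paper's application.} For the paper's actual use, $\CC_A$ as a module over ${}_A\CC_A$, your argument can still be completed.
\begin{itemize}
\item Steps 1 and 2 apply verbatim.
\item Step 3 works if you use rigidity of the ambient $\CC$ instead of rigidity of ${}_A\CC_A$. Every projective of $\CC_A$ is a summand of a free module $Z\ot A$, and $(Z\ot A)\ot_A X\cong Z\ot X$.
\item The right adjoint of $X\mapsto Z\ot X$ is tensoring with a dual of $Z$, followed by the coinduction right adjoint to the forgetful functor ${}_A\CC_A\to\CC_A$. Both are exact by rigidity of $\CC$, so the right adjoint is exact.
\end{itemize}
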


\begin{thm}[\cite{DSPS19}, Theorem 2.24; \cite{DS26}, Theorem 2.13]
\label{thm:reconstruct_of_module_cat}
    Let $\CC$ be a finite pre-multi-tensor category, and let $\CM$ be a finite $\CC$-module category.
    If $P$ is a $\CC$-projective generator in $\CM$, then the functor
    \[
        [P,-]:\CM\to \CC_{[P,P]}
    \]
    is an equivalence of left $\CC$-module categories.
\end{thm}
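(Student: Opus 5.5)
We would prove the statement by the classical reconstruction argument: show that $[P,-]$ is a left $\CC$-module functor landing in right $[P,P]$-modules, then build an explicit quasi-inverse $-\otc{[P,P]}P$. Write $A:=[P,P]$ for the internal endomorphism algebra. It is the object of $\CC$ representing $c\mapsto\Hom_{\CM}(c\ot P,P)$, which exists because $\CM$ is finite and $c\ot-$ is right exact.

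\textbf{Step 1: the functor.} The internal composition $[P,P]\ot[P,P]\to[P,P]$ and the unit $\bone\to[P,P]$ make $A$ an $E_1$-algebra in $\CC$. Dually, the internal composition $[P,X]\ot[P,P]\to[P,X]$ makes each $[P,X]$ a right $A$-module, so $[P,-]$ is a functor $\CM\to\CC_A$. The canonical map $c\ot[P,X]\to[P,c\ot X]$ is adjoint to $c\ot[P,X]\ot P\to c\ot X$. For a left $\CC$-module category over a (pre-multi-)tensor category it is an isomorphism: using rigidity, both sides represent $\Hom(d\ot P,c\ot X)\cong\Hom({}^{\vee}c\ot d\ot P,X)$. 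This is the step where rigidity is used. The isomorphism is compatible with the right $A$-actions, so $[P,-]$ is a left $\CC$-module functor.

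\textbf{Step 2: the candidate inverse.} Define $G:\CC_A\to\CM$ by $G(N)=N\otc{A}P$, the coequalizer of $N\ot A\ot P\rightrightarrows N\ot P$, computed in $\CM$ using the action. It is left adjoint to $[P,-]$, since for $c\in\CC$ we have $\Hom_{\CM}(c\ot P,X)\cong\Hom_{\CC}(c,[P,X])$, and this passes to $A$-modules via the coequalizer presentation. $G$ is again a $\CC$-module functor. We then check that the unit $N\to[P,N\otc{A}P]$ and the counit $[P,X]\otc{A}P\to X$ are isomorphisms.

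\textbf{Step 3: the unit.} For free modules $N=c\ot A$ the unit is $c\ot A\to[P,c\ot P]\cong c\ot[P,P]$, which is the identity. A general $N$ has the presentation $N\ot A\ot A\rightrightarrows N\ot A\to N$. Here $\CC$-projectivity is used: $[P,-]$ is right exact, so $[P,G(-)]$ is right exact and preserves this coequalizer. Hence the unit is an isomorphism for all $N$, and $G$ is fully faithful.

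\textbf{Step 4: the counit (the main obstacle).} Write $\varepsilon_X:[P,X]\otc{A}P\to X$. Applying $[P,-]$ to $\varepsilon_X$ gives an isomorphism by the triangle identity together with Step 3. Since $[P,-]$ is right exact and faithful on a finite abelian category, it is exact on cokernels and detects zero objects. So $[P,\operatorname{coker}\varepsilon_X]=0$ forces $\operatorname{coker}\varepsilon_X=0$, and $\varepsilon_X$ is epi. For injectivity, let $K=\ker\varepsilon_X$. From $0\to K\to G[P,X]\to X\to 0$, left exactness of the representable internal Hom (a right adjoint) shows $[P,K]$ is the kernel of the isomorphism $[P,\varepsilon_X]$, hence zero. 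Faithfulness then gives $K=0$.

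The delicate point is making rigorous that "faithful plus right exact" suffices to detect vanishing. I would handle this via $\Hom_{\CM}(c\ot P,Y)=\Hom_{\CC}(c,[P,Y])$: if $[P,Y]=0$, then $\id_Y$ and $0$ have the same image under $[P,-]$, so faithfulness gives $\id_Y=0$ and $Y=0$. Existence of such $P$ is guaranteed by Lemma~\ref{lem:existence_of_progenerator}.
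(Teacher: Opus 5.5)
There is a genuine gap, and it is in Step~1. Your Steps 2--4 are the standard reconstruction argument behind the cited results: the left adjoint $N\mapsto N\otimes_{[P,P]}P$, the unit invertible on free modules and then everywhere by right exactness, and the counit invertible because an exact faithful functor is conservative. Those steps are fine \emph{once} the canonical map $\phi_{c,X}\colon c\ot[P,X]\to[P,c\ot X]$ is known to be invertible. You obtain that from rigidity of $\CC$, but the theorem is stated for a finite \emph{pre}-multi-tensor category, and these are not rigid in general. The paper calls $\hind{A}{\CC}{A}$ pre-multi-tensor for an arbitrary algebra $A$, and applies exactly this theorem to it in Corollary~\ref{crl:reconstruct_of_CA_in_ACA}. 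For example, take $A=\bk[x]/(x^2)$ in $\vect$: the simple bimodule has no dual. Without duals, $c\ot-$ has no left adjoint on $\CM$, and $[P,-]$ is only lax $\CC$-linear. The map you call ``the identity'' in Step~3, the unit $c\ot[P,P]\to[P,c\ot P]$ at a free module, is just $\phi_{c,P}$, and nothing forces it to be invertible.

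This cannot be patched using only exactness and faithfulness of $[P,-]$; without $\phi$ invertible, the conclusion actually fails. Let $\CC=\vect_{\{0,1\}}$ be vector spaces graded by the multiplicative monoid $\{0,1\}$, i.e.\ modules over the bialgebra $\bk[\{0,1\}]$, which has no antipode. Objects are pairs $(V_1,V_0)$, with $(V\ot W)_1=V_1\ot W_1$ and $(V\ot W)_0=V_1\ot W_0\oplus V_0\ot W_1\oplus V_0\ot W_0$. This category is finite and semisimple with exact tensor product, but it is not rigid. Let it act on $\CM=\vect$ through $V\mapsto V_1\oplus V_0$, and take $P=\bk$. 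Then:
\begin{itemize}
\item $[P,X]=(X,X)$, so $[P,-]$ is exact and faithful.
\item $[P,P]=\bk e_1\oplus\bk e_0$ with $e_ae_b=e_{ab}$.
\item For $S_0=(0,\bk)$ one gets $S_0\ot[P,X]=(0,X\oplus X)\not\cong(X,X)=[P,S_0\ot X]$.
\item The $[P,P]$-module $(\bk,0)$, on which $e_0$ acts by zero, is not in the essential image, because every $[P,X]$ has homogeneous components of equal dimension.
\end{itemize}
So $[P,-]$ is not an equivalence.

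A projection formula $c\ot[P,X]\cong[P,c\ot X]$ therefore has to be supplied as an extra input. It must come either from the precise definition used in the cited source or from a direct check in the case at hand. For the case the paper actually needs, a direct check works. There $\CM=\hind{}{\CC}{A}$ is acted on from the right by $\hind{A}{\CC}{A}$ with $\CC$ rigid. Using a dual $P^{\vee}$ in the ambient $\CC$, one has $[P,X]\cong P^{\vee}\ot X$, and exactness of $P^{\vee}\ot-$ gives $[P,X]\otimes_A B\cong[P,X\otimes_A B]$. With that input, your Steps 2--4 go through unchanged.
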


The reconstruction theorem above implies that every finite $\CC$-module category is realized by modules over an $E_1$-algebra in $\CC$. Together with the fully faithful module realization functor established above, this gives the following equivalence.

\begin{thm}[Finite module realization]
\label{thm:finite_module_realization}
    Let $\CC$ be a finite pre-multi-tensor category.
    There is an equivalence of 2-categories
    \[
        \mathrm{Mod}_1:\Mrt_{E_1}(\CC)\longrightarrow \LMod(\CC)^{fin}_{\bk}.
    \]
\end{thm}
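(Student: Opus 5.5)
The plan is to show that $\mathrm{Mod}_1$ is fully faithful on the nose, that it lands in the finite setting, and that it is essentially surjective on objects; together these give an equivalence of 2-categories.

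\emph{Full faithfulness and well-definedness.} By Theorem~\ref{thm:E1_module_realization}, $\mathrm{Mod}_1$ is already fully faithful. That is, for $A_1,A_2\in\Alg_{E_1}(\CC)$, the functor
\[
\hind{A_1}{\CC}{A_2}\to\Fun_{\CC}(\hind{}{\CC}{A_1},\hind{}{\CC}{A_2})
\]
is an equivalence by Theorem~\ref{thm:generalized_Eilenberg-Watts}. We must still check that $\mathrm{Mod}_1$ lands in $\LMod(\CC)^{fin}_{\bk}$. Since $\CC$ is finite and $\hind{}{\CC}{A}$ is the category of modules over the monad $-\ot A$ on $\CC$, the category $\hind{}{\CC}{A}$ is a finite $\bk$-linear abelian category. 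Concretely, $\hind{}{\CC}{A}$ has finitely many simples, because every module is a quotient of a free module $X\ot A$. It has enough projectives $P\ot A$ with $P$ projective in $\CC$, and its Hom spaces are finite-dimensional. The left $\CC$-action is right exact in each variable, so $\hind{}{\CC}{A}$ is a finite left $\CC$-module category. We also need that the module functors $-\otc{A_1}M$ are $\bk$-linear and right exact, so that the hom-categories on the target side agree with the $\Fun_{\CC}$ used in Theorem~\ref{thm:generalized_Eilenberg-Watts}.

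\emph{Essential surjectivity.} Let $\CM$ be a finite left $\CC$-module category. By Lemma~\ref{lem:existence_of_progenerator}, $\CM$ has a $\CC$-projective generator $P$. Put $A:=[P,P]$, the internal endomorphism algebra, which is an $E_1$-algebra in $\CC$ under internal composition. By Theorem~\ref{thm:reconstruct_of_module_cat}, the functor $[P,-]$ gives an equivalence $\CM\simeq\CC_{A}=\hind{}{\CC}{A}=\mathrm{Mod}_1(A)$ of left $\CC$-module categories. Hence every object of the target is equivalent to one in the image.

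\emph{Conclusion and main obstacle.} A 2-functor that is essentially surjective on objects and induces equivalences on all hom-categories is a biequivalence. The main obstacle is bookkeeping rather than a new idea: we must make sure the hom-categories in $\LMod(\CC)^{fin}_{\bk}$ consist of exactly the right exact $\bk$-linear module functors. If instead all $\bk$-linear module functors were allowed, we would restrict to right exact ones, or argue that module functors between finite module categories of this form are automatically right exact. Matching this convention is the step where the argument could break, so it must be fixed first.
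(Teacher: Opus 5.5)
Your proposal is correct and follows the paper's proof. Full faithfulness is inherited from Theorem~\ref{thm:E1_module_realization}, and essential surjectivity comes from choosing a $\CC$-projective generator $P$ (Lemma~\ref{lem:existence_of_progenerator}) and invoking $\CM\simeq\CC_{[P,P]}$ from Theorem~\ref{thm:reconstruct_of_module_cat}. On the convention you flag, take the target hom-categories to consist of right exact module functors, as the paper implicitly does. The alternative you float is not available: already for $\CC=\vect$ and $A=\bk[x]/(x^2)$, the functor $\Hom_A(\bk,-):\hind{}{\vect}{A}\to\vect$ is a $\bk$-linear, hence module, functor that is left exact but not right exact.
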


\begin{proof}
    Full faithfulness follows from Theorem \ref{thm:E1_module_realization}.
    It remains to prove essential surjectivity.

    Let $\CM$ be a finite left $\CC$-module category.
    By Lemma \ref{lem:existence_of_progenerator}, $\CM$ admits a $\CC$-projective generator $P$.
    Then Theorem \ref{thm:reconstruct_of_module_cat} gives an equivalence
    \[
        \CM\simeq \CC_{[P,P]}
    \]
    of left $\CC$-module categories.
    Since $[P,P]$ is an $E_1$-algebra in $\CC$, $\CM$ lies in the essential image of $\mathrm{Mod}_1$.
\end{proof}

We next record several consequences of the finite module realization and the Eilenberg--Watts equivalence. These results describe relative tensor products of finite module categories and will culminate in a fusion formula for bimodule categories.

\begin{thm}[\cite{KZ18}]
\label{thm:AC_CA_inverse}
    Let $\CC\in\Alg_{E_1}(\Cat^{rex})$, and let $A$ and $A'$ be two $E_1$-algebras in $\CC$.
    Then there are equivalences
    \[
        \hind{A}{\CC}{}\btc{\CC}\hind{}{\CC}{A'}
        \simeq \hind{A}{\CC}{A'}
        \simeq \Fun_{\CC}^{rex}(\hind{}{\CC}{A},\hind{}{\CC}{A'}).
    \]
\end{thm}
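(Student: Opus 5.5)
The second equivalence $\hind{A}{\CC}{A'}\simeq\Fun^{rex}_{\CC}(\hind{}{\CC}{A},\hind{}{\CC}{A'})$ is exactly the $E_1$ Eilenberg--Watts theorem (Theorem \ref{thm:generalized_Eilenberg-Watts}). It is implemented by $M\mapsto -\otc{A}M$, with inverse $F\mapsto F(A)$. So the remaining task is the first equivalence, and I would prove it by checking that $\hind{A}{\CC}{A'}$ satisfies the universal property of the relative tensor product $\hind{A}{\CC}{}\btc{\CC}\hind{}{\CC}{A'}$ in $\Cat^{rex}$.

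\emph{Step 1: the balanced functor.} Define
\[
\Phi:\hind{A}{\CC}{}\times\hind{}{\CC}{A'}\to\hind{A}{\CC}{A'},\qquad (X,Y)\mapsto X\ot Y .
\]
Here $A$ acts on $X$ from the left and $A'$ acts on $Y$ from the right. The map $\Phi$ is right exact in each variable because $\ot$ is. The associator gives a natural isomorphism $(X\ot c)\ot Y\cong X\ot(c\ot Y)$, and this makes $\Phi$ $\CC$-balanced with the pentagon coherence inherited from $\CC$. So $\Phi$ induces a right exact functor
\[
\bar\Phi:\hind{A}{\CC}{}\btc{\CC}\hind{}{\CC}{A'}\to\hind{A}{\CC}{A'} .
\]

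\emph{Step 2: $\bar\Phi$ is an equivalence.} I would realize both sides as categories of modules over the free-module monad and compare them. Concretely, $\hind{A}{\CC}{}$ is generated under reflexive coequalizers by the free modules $A\ot c$, since each module has the bar presentation $A\ot A\ot X\rightrightarrows A\ot X\to X$. Similarly, $\hind{}{\CC}{A'}$ is generated by the free modules $c\ot A'$.

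In the relative product, $(A\ot c)\boxtimes_\CC (d\ot A')\simeq (A\boxtimes_\CC A')\triangleleft (c\ot d)$, so $\bar\Phi$ sends the generators $A\ot c\ot A'$ to the free bimodules. On these generators it is fully faithful: both hom sets reduce to $\Hom_\CC(c,\,A\ot c'\ot A')$ by the free–forgetful adjunctions, using $\CC$-balancing on the source side.

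Since $\bar\Phi$ is right exact, and every $A$-$A'$-bimodule $M$ is the reflexive coequalizer of its two-sided bar resolution $A\ot A\ot M\ot A'\ot A'\rightrightarrows A\ot M\ot A'\to M$, full faithfulness on generators extends to everything. Essential surjectivity follows from the same presentation.

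\emph{Main obstacle.} The delicate point is controlling $\hind{A}{\CC}{}\btc{\CC}\hind{}{\CC}{A'}$ itself. One must know that the relative Kelly--López product is computed by the balanced colimit (bar) construction and that it is generated by pure tensors of free modules.

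To avoid computing it directly, I would instead use duality. The plan is to show that $\hind{A}{\CC}{}$ is dualizable as a right $\CC$-module in $\Cat^{rex}$, with dual $\hind{}{\CC}{A}$. The evaluation would be $\hind{}{\CC}{A}\boxtimes\hind{A}{\CC}{}\to\CC$, given by $(Y,X)\mapsto Y\otc{A}X$. Then the standard consequence of dualizability gives
\[
\hind{A}{\CC}{}\btc{\CC}\CN\simeq\Fun^{rex}_{\CC}(\hind{}{\CC}{A},\CN).
\]
Taking $\CN=\hind{}{\CC}{A'}$ and combining with Theorem \ref{thm:generalized_Eilenberg-Watts} yields both equivalences at once.

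Verifying the zigzag identities for this duality reduces to the isomorphisms $A\otc{A}X\cong X$ and $Y\otc{A}A\cong Y$, which hold because $\ot$ preserves reflexive coequalizers. I expect this to be the main technical step.
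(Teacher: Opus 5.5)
The paper gives no proof of this statement. It is quoted from \cite{KZ18}, and its second equivalence is just Theorem~\ref{thm:generalized_Eilenberg-Watts}. So there is nothing in the paper to match. Your duality argument is a legitimate, self-contained route, but only once you supply the one datum you left unspecified: the coevaluation.

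The coevaluation must be an object of $\hind{A}{\CC}{}\boxtimes_{\CC}\hind{}{\CC}{A}$, which is exactly the category you wanted to avoid computing in. The obvious candidate $A\boxtimes_{\CC}A$ is wrong. With it, the first zigzag sends $X$ to $A\boxtimes_{\CC}(A\otimes_A X)$ and then to the free module $A\otimes X$. The second zigzag sends $Y$ to $Y\otimes A$.

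The correct choice is a coequalizer formed in the finitely cocomplete category $\hind{A}{\CC}{}\boxtimes_{\CC}\hind{}{\CC}{A}$. Take the parallel pair $(A\otimes A)\boxtimes_{\CC}A\rightrightarrows A\boxtimes_{\CC}A$ whose first map is $m\boxtimes_{\CC}\id_A$. The second map is the balancing isomorphism $(A\otimes A)\boxtimes_{\CC}A\cong A\boxtimes_{\CC}(A\otimes A)$ followed by $\id_A\boxtimes_{\CC}m$. This coequalizer is the object that will end up corresponding to the regular bimodule $A$.

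Because $\mathrm{ev}$ and all the Kelly products are right exact, the two zigzags then become the bar coequalizers $A\otimes A\otimes X\rightrightarrows A\otimes X$ and $Y\otimes A\otimes A\rightrightarrows Y\otimes A$. These give $A\otimes_A X\cong X$ and $Y\otimes_A A\cong Y$, as you predicted. Those isomorphisms hold because the bar coequalizers are split. Right exactness of $\otimes$ is what you actually need, namely to make $\mathrm{ev}$ a right exact $\CC$-$\CC$-bimodule functor.

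Your Step~2, by contrast, would not stand. You have no way to compute morphisms between pure tensors in a relative Kelly product. Also, a right exact functor that is fully faithful on generators need not be fully faithful. You are right to discard it.

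With the coevaluation in place, your route reverses the order in which the paper uses this theorem. The paper takes the theorem as input. Then, only for finite $\CN$ over a finite pre-multi-tensor $\CB$, it deduces the tensor--Hom formula $\CN\boxtimes_{\CB}\hind{}{\CB}{A}\simeq\Fun_{1\mid\CB}(\hind{A}{\CB}{},\CN)$ of Corollary~\ref{crl:N_btimes_BA_over_B}. It does so via reconstruction (Proposition~\ref{prp:fusion_of_walls}) and a variant of Eilenberg--Watts stated without proof (Proposition~\ref{prp:variant_Eilenberg_Watts}).

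Your argument proves the tensor--Hom formula first, for every left $\CC$-module $\CN$ in $\Cat^{rex}$. It needs no finiteness, rigidity or reconstruction; its mirror image is exactly Corollary~\ref{crl:N_btimes_BA_over_B}. It then recovers the theorem as the case $\CN=\hind{}{\CC}{A'}$.

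The price is that you must work in the Morita bicategory of $\Cat^{rex}$, using the associativity and unit laws for $\boxtimes_{\CC}$. Also, the equivalence you obtain is not by construction the one induced by your balanced functor $(X,Y)\mapsto X\otimes Y$. A check on pure tensors shows that it is: $X\boxtimes_{\CC}Y$ goes to $-\otimes_A(X\otimes Y)$, which Eilenberg--Watts sends back to $X\otimes Y$.
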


Let $\CB$ be a finite pre-multi-tensor category. The following two descriptions of $\hind{}{\CN}{A}$ will be useful below.

\begin{prp}
\label{prp:variant_Eilenberg_Watts}
    For any finite right $\CB$-module category $\CN$ and any $E_1$-algebra $A\in\Alg_{E_1}(\CB)$, there is an equivalence
    \[
        \Fun_{1\mid\CB}(\hind{A}{\CB}{},\CN)\simeq \hind{}{\CN}{A}.
    \]
\end{prp}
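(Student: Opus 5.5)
The plan is to reduce to the case $\CN=\hind{}{\CB}{B}$ for some $E_1$-algebra $B$, and then apply Theorem~\ref{thm:AC_CA_inverse}.

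First, by the right-module version of Lemma~\ref{lem:existence_of_progenerator} (pass to $\CB^{\mathrm{rev}}$), $\CN$ has a $\CB$-projective generator $Q$. The right-module form of Theorem~\ref{thm:reconstruct_of_module_cat} then gives an equivalence of right $\CB$-module categories
\[
    \CN\simeq \hind{[Q,Q]}{\CB}{},
\]
where $B:=[Q,Q]\in\Alg_{E_1}(\CB)$. Both sides of the claimed equivalence are functorial in $\CN$ under right $\CB$-module equivalences. So it suffices to prove
\[
    \Fun_{1\mid\CB}(\hind{A}{\CB}{},\hind{B}{\CB}{})\simeq \hind{B}{\CB}{A},
\]
and to note that $\hind{}{\CN}{A}$ (right $A$-modules in $\CN$) corresponds to right $A$-modules in $\hind{B}{\CB}{}$, which are exactly $\hind{B}{\CB}{A}$. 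This last identification holds because the right $\CB$-action on $\hind{B}{\CB}{}$ is right tensoring and commutes with the left $B$-action.

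Second, the reduced statement is the right-handed Eilenberg--Watts theorem. Apply Theorem~\ref{thm:AC_CA_inverse} to $\CB^{\mathrm{rev}}$: right $\CB$-module categories become left $\CB^{\mathrm{rev}}$-module categories, and $\hind{A}{\CB}{}$ becomes $\hind{}{\CB^{\mathrm{rev}}}{A^{\mathrm{op}}}$. Alternatively, run the proof of Theorem~\ref{thm:generalized_Eilenberg-Watts} directly with left and right swapped. In that version one sends $M\in\hind{B}{\CB}{A}$ to the functor $M\otc{A}-$, and a right $\CB$-module functor $F$ to $F(A)$. The $B$-action on $F(A)$ is the one already present in the target $\hind{B}{\CB}{}$. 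The right $A$-action comes from the module-functor structure, namely $F(A)\ot A\cong F(A\ot A)\xrightarrow{F(m)}F(A)$.

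The main obstacle is the round trip $F\simeq F(A)\otc{A}-$. This needs every $X\in\hind{A}{\CB}{}$ to be the coequalizer $A\ot A\ot X\rightrightarrows A\ot X\to X$, which is a reflexive coequalizer of free modules. It also needs $F$ to preserve it, so one must require $F$ to be right exact in $\Fun_{1\mid\CB}$; that is the finite/$\Cat^{rex}$ convention of the setting. With right exactness in place, $F(A\ot X)\cong F(A)\ot X$ follows from the $\CB$-module structure, and the result follows.
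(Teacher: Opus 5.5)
Your proof is correct and follows essentially the paper's route. The paper gives this proposition no separate proof, but it proves the companion Proposition~\ref{prp:fusion_of_walls} by the same reduction: realize $\CN\simeq\hind{A'}{\CB}{}$ via a $\CB$-projective generator, then invoke the mirrored form of Theorem~\ref{thm:AC_CA_inverse}. You were right to insist that $\Fun_{1\mid\CB}$ consist of right exact module functors; the paper leaves this convention implicit here, and the round trip $F\simeq F(A)\otimes_A(-)$ genuinely needs it.
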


\begin{prp}
\label{prp:fusion_of_walls}
    Let $\CN$ be a finite right $\CB$-module category and let $A$ be an $E_1$-algebra in $\CB$.
    Then there is an equivalence
    \[
        \CN\btc{\CB}\hind{}{\CB}{A}\simeq \hind{}{\CN}{A}.
    \]
\end{prp}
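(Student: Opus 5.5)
Our plan is to build the equivalence as a composite of two equivalences we already have, using the left $\CB$-module category $\hind{}{\CB}{A}$ as the intermediary. By Theorem~\ref{thm:AC_CA_inverse}, applied in the monoidal opposite or read as a right-module statement, we have $\hind{A}{\CB}{}\btc{\CB}\hind{}{\CB}{A}\simeq \hind{A}{\CB}{A}$. We will use this to identify $\hind{}{\CB}{A}$ as the dual of $\hind{A}{\CB}{}$ over $\CB$. The key steps are as follows.

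\textbf{Step 1 (duality).} We show that $\hind{A}{\CB}{}$ is right dualizable as a $\CB$-module, with dual $\hind{}{\CB}{A}$. The evaluation and coevaluation come from the fact that $A$ is an $A$--$A$-bimodule, namely $A$ regarded in $\hind{A}{\CB}{A}\simeq \hind{A}{\CB}{}\btc{\CB}\hind{}{\CB}{A}$, together with the relative tensor product $\hind{}{\CB}{A}\btc{\hind{A}{\CB}{A}}\hind{A}{\CB}{}\to\CB$, $M\boxtimes N\mapsto M\otc{A}N$. Dualizability then gives
\[
\CN\btc{\CB}\hind{}{\CB}{A}\simeq \Fun_{1\mid\CB}(\hind{A}{\CB}{},\CN)
\]
for every finite right $\CB$-module $\CN$.

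\textbf{Step 2 (comparison).} We combine Step~1 with Proposition~\ref{prp:variant_Eilenberg_Watts}, which identifies $\Fun_{1\mid\CB}(\hind{A}{\CB}{},\CN)\simeq\hind{}{\CN}{A}$ via $F\mapsto F(A)$, where the right $A$-action is induced by $F$ applied to the multiplication $A\ot A\to A$. Composing gives the claimed equivalence. We would check that on generators it sends $n\boxtimes M\mapsto n\ot M$, so that the equivalence is the evident functor.

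\textbf{Alternative for Step 1.} Instead of dualizability, we could verify directly that $\CN\times\hind{}{\CB}{A}\to\hind{}{\CN}{A}$, $(n,M)\mapsto n\ot M$, is universal among $\CB$-balanced right exact functors. The argument would use that every $A$-module is a reflexive coequalizer of free modules $x\ot A\ot A\rightrightarrows x\ot A$. This reduces to the free case, where $\CN\btc{\CB}\CB\simeq\CN$. Right exactness of the action in each variable, available in the finite pre-multi-tensor setting, lets the coequalizer pass through $\btc{\CB}$.

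\textbf{Main obstacle.} We expect the difficulty to lie in justifying the exchange of the relative Deligne--Kelly product with these colimit presentations, equivalently in establishing the duality data of Step~1 rigorously rather than only formally. We would first try to cite the finite-case fact that $\btc{\CB}$ can be computed as a category of $\CB$-balanced right exact functors, or via Theorem~\ref{thm:reconstruct_of_module_cat}. With that in hand, the monadicity argument goes through.
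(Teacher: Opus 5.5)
Your argument is correct in outline, but it takes a different route from the paper's. The paper never uses Proposition~\ref{prp:variant_Eilenberg_Watts} here. It presents the finite right $\CB$-module $\CN$ as $\hind{A'}{\CB}{}$ for some algebra $A'$, using Lemma~\ref{lem:existence_of_progenerator} and Theorem~\ref{thm:reconstruct_of_module_cat} passed to the monoidal opposite. Theorem~\ref{thm:AC_CA_inverse} then gives $\CN\btc{\CB}\hind{}{\CB}{A}\simeq\hind{A'}{\CB}{}\btc{\CB}\hind{}{\CB}{A}\simeq\hind{A'}{\CB}{A}$, which is tautologically the category of right $A$-modules in $\hind{A'}{\CB}{}\simeq\CN$.

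You instead use Theorem~\ref{thm:AC_CA_inverse} only for $A'=A$, to build the coevaluation exhibiting $\hind{}{\CB}{A}$ as the dual of $\hind{A}{\CB}{}$, and then compose with Proposition~\ref{prp:variant_Eilenberg_Watts}. This reverses the paper's logical order: the paper deduces Corollary~\ref{crl:N_btimes_BA_over_B} from the present proposition, whereas your Step~1 proves that corollary first. That is legitimate, because your duality argument does not depend on the proposition.

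The paper's route is shorter, but it needs $\CN$ finite and the reconstruction theorem. Yours never presents $\CN$ via an algebra, so it works for any right $\CB$-module for which the Eilenberg--Watts identification holds, and that identification does not need finiteness. It also yields Corollary~\ref{crl:N_btimes_BA_over_B} along the way. The real work in your route is the pair of zig-zag identities. After identifying functors out of relative products by their values on generators, these reduce to $A\otimes_A N\cong N$ and $M\otimes_A A\cong M$.

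One correction to Step~1: the evaluation should be the $\CB$-$\CB$-bimodule functor $\hind{}{\CB}{A}\boxtimes\hind{A}{\CB}{}\to\CB$, $M\boxtimes N\mapsto M\otimes_A N$, on the plain product. Balancing over $\hind{A}{\CB}{A}$ plays no role. In particular, do not appeal to Proposition~\ref{prp:relative_inverse} to identify that relative product with $\CB$. It requires indecomposability and nonzero $A$, and the paper proves it through Corollary~\ref{crl:N_btimes_BA_over_B}, hence through the very statement you are proving.
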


\begin{proof}
    Any finite right $\CB$-module category can be written as $\hind{A'}{\CB}{}$ for some $E_1$-algebra $A'\in\Alg_{E_1}(\CB)$.
    The result then follows from Theorem \ref{thm:AC_CA_inverse}.
\end{proof}

Combining the preceding two propositions gives a tensor--Hom description of finite module categories.

\begin{crl}
\label{crl:N_btimes_BA_over_B}
    Let $\CN$ be a finite right $\CB$-module category and let $A$ be an $E_1$-algebra in $\CB$.
    Then
    \[
        \CN\btc{\CB}\hind{}{\CB}{A}
        \simeq
        \Fun_{1\mid\CB}(\hind{A}{\CB}{},\CN).
    \]
\end{crl}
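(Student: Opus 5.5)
The corollary follows by chaining Proposition~\ref{prp:fusion_of_walls} with Proposition~\ref{prp:variant_Eilenberg_Watts}, both of which identify their respective side with the same category $\hind{}{\CN}{A}$ of right $A$-modules in $\CN$. Concretely, I would write
\[
    \CN\btc{\CB}\hind{}{\CB}{A}
    \;\simeq\;
    \hind{}{\CN}{A}
    \;\simeq\;
    \Fun_{1\mid\CB}(\hind{A}{\CB}{},\CN),
\]
where the first equivalence is Proposition~\ref{prp:fusion_of_walls} and the second is the inverse of Proposition~\ref{prp:variant_Eilenberg_Watts}. Composing the two gives the statement.

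To make the composite explicit, I would trace what it does on generators. A pure tensor $N\boxtimes X$, with $N\in\CN$ and $X\in\hind{}{\CB}{A}$, is sent by the first equivalence to $N\otimes X$, meaning the $\CB$-action of $X$ on $N$, with the right $A$-action inherited from $X$. Under Proposition~\ref{prp:variant_Eilenberg_Watts}, a module $K\in\hind{}{\CN}{A}$ corresponds to the right $\CB$-module functor $Y\mapsto K\otc{A}Y$ on $\hind{A}{\CB}{}$. So the composite sends $N\boxtimes X$ to the functor $Y\mapsto N\otimes(X\otc{A}Y)$. This is the expected tensor--Hom formula, and it also shows the composite is a canonical, natural equivalence rather than an abstract one.

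The main thing to check is that the two propositions really identify the same category $\hind{}{\CN}{A}$, with compatible conventions for which side $A$ acts on and which side $\CB$ acts on, so that the composite is well defined. If the statement is wanted as an equivalence of module categories over whatever else acts on $\CN$, one also has to check that both equivalences are natural in $\CN$. Both checks reduce, through the presentation $\CN\simeq\hind{A'}{\CB}{}$ used in the proof of Proposition~\ref{prp:fusion_of_walls}, to the two equivalences of Theorem~\ref{thm:AC_CA_inverse}, namely
\[
    \hind{A'}{\CB}{}\btc{\CB}\hind{}{\CB}{A}\simeq\hind{A'}{\CB}{A}\simeq\Fun^{rex}_{\CB}(\cdots).
\]
That theorem already provides both equivalences, so these checks are routine.
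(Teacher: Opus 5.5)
Your proposal is correct and matches the paper's proof: the paper also obtains the corollary by composing Proposition~\ref{prp:fusion_of_walls} with the inverse of Proposition~\ref{prp:variant_Eilenberg_Watts}, through the common category $\hind{}{\CN}{A}$. Your explicit formula for the composite, $N\boxtimes X\mapsto\bigl(Y\mapsto N\otimes(X\otc{A}Y)\bigr)$, is an optional but consistent refinement; the only input it needs is that the $\CB$-action on $\CN$ is right exact, so that $N$ can be moved outside the relative tensor product.
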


\begin{proof}
    This follows from Proposition \ref{prp:variant_Eilenberg_Watts} and Proposition \ref{prp:fusion_of_walls}.
\end{proof}

We now apply these reconstruction results to the canonical module category $\hind{}{\CC}{A}$ over $\hind{A}{\CC}{A}$. This will allow us to identify $\hind{}{\CC}{A}$ and $\hind{A}{\CC}{}$ as relative inverses under an indecomposability assumption.

\begin{lem}[\cite{DS26}, Corollary 2.26]
    Let $A$ be an $E_1$-algebra in a finite multi-tensor category $\CC$.
    Then $\hind{A}{\CC}{A}$ is a pre-multi-tensor category.
\end{lem}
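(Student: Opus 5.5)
The plan is to prove that $\hind{A}{\CC}{A}$ is a finite $\bk$-linear category, that it carries a right exact monoidal structure given by $\otc{A}$, and that this structure satisfies the remaining axioms of a pre-multi-tensor category. This follows the framework of \cite{DS26}, where a pre-multi-tensor category is a finite $\bk$-linear monoidal category whose tensor product is right exact in each variable; rigidity is not required.

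\emph{Finiteness.} Since $\CC$ is a finite multi-tensor category, it is rigid. Hence $\CC\boxtimes\CC^{\mathrm{rev}}$ is again a finite multi-tensor category, and $A\ot A^{\mathrm{op}}$ is an $E_1$-algebra in it. The category $\hind{A}{\CC}{A}$ is then equivalent to the category of modules over an algebra in a finite category. I would show finiteness by exhibiting it as the category of modules for a right exact monad: the free bimodule functor $X\mapsto A\ot X\ot A$ on $\CC$. This monad is exact because $\ot$ is exact in each variable in a rigid category. Its Eilenberg--Moore category is then finite abelian, since it is the category of modules over a finite-dimensional algebra $\mathrm{End}(T(P))^{\mathrm{op}}$, where $P$ is a projective generator of $\CC$. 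Alternatively, one can apply Theorem~\ref{thm:AC_CA_inverse}, which gives $\hind{A}{\CC}{A}\simeq\Fun^{rex}_{\CC}(\hind{}{\CC}{A},\hind{}{\CC}{A})$. Both $\hind{}{\CC}{A}$ and $\hind{A}{\CC}{}$ are finite by the same monadic argument, so this functor category is finite as well.

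\emph{Monoidal structure.} The tensor product is $M\otc{A}N$, defined as the coequalizer of $M\ot A\ot N\rightrightarrows M\ot N$, with unit $A$. Associativity and unit constraints are standard. Right exactness in each variable holds because $\ot$ is exact in $\CC$ and coequalizers commute with colimits.

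\emph{Main obstacle.} The delicate point is showing that $\otc{A}$ is well behaved in the finite setting. In particular, the coequalizer must exist in the finite category, which it does because finite abelian categories have cokernels. The associativity isomorphism also needs right exactness of $\ot$ in order to commute the two coequalizers. I would handle this cleanly by transporting the composition of functors along the Eilenberg--Watts equivalence of Theorem~\ref{thm:generalized_Eilenberg-Watts}. That theorem already states that $\hind{A}{\CC}{A}\to\Fun_{\CC}(\hind{}{\CC}{A},\hind{}{\CC}{A})$ is monoidal, so the monoidal axioms are inherited from composition, which is strictly associative up to canonical isomorphism and right exact in each variable. Finally, $\bk$-linearity of $\otc{A}$ follows because everything is induced from the $\bk$-linear tensor product of $\CC$.
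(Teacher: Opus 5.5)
Your argument is correct. The paper does not prove this lemma itself; it cites \cite{DS26}, Corollary 2.26 verbatim, so there is no internal proof to compare against, and your proposal serves as a self-contained verification.

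You correctly isolate the two points that carry the content. First, $\hind{A}{\CC}{A}$ is finite because it is the Eilenberg--Moore category of the monad $A\ot-\ot A$, which is exact since $\ot$ is exact in the rigid category $\CC$. Kernels and cokernels are then computed in $\CC$, and $A\ot P\ot A$ is a projective generator with finite-dimensional endomorphism algebra. Second, $\otc{A}$ is a cokernel in $\CC$, so it is $\bk$-bilinear and right exact in each variable. You are also right that rigidity is not available: for non-separable $A$, $\otc{A}$ is not left exact, which is why the conclusion is only ``pre-''.

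Two small cautions.

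\emph{Finiteness via Theorem~\ref{thm:AC_CA_inverse}.} This alternative route is not independent. Finiteness of $\Fun^{rex}_{\CC}(\hind{}{\CC}{A},\hind{}{\CC}{A})$ is normally proved by identifying that functor category with a bimodule category, which is essentially your monadic argument. Keep the monadic argument as the primary proof.

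\emph{Monoidal structure.} In your last paragraph, citing the monoidality statement of Theorem~\ref{thm:generalized_Eilenberg-Watts} presupposes a monoidal structure on $\hind{A}{\CC}{A}$. You have two clean options:
\begin{itemize}
\item verify the associator directly: both bracketings are the same quotient of $M\ot N\ot P$, by right exactness of $\ot$ in each variable; or
\item transport composition of functors along the equivalence, and note that the transported product is $\otc{A}$ with the order reversed, since $-\otc{A}(M\otc{A}N)\cong(-\otc{A}M)\otc{A}N$.
\end{itemize}
Neither caution affects the correctness of your main line of proof.
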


\begin{lem}[\cite{DS26}, Example 4.2]
\label{lem:CA_is_finite_right_ACA_module_cat}
    Let $A$ be an $E_1$-algebra in a finite multi-tensor category $\CC$.
    Then $\hind{}{\CC}{A}$ is a finite right $\hind{A}{\CC}{A}$-module category.
\end{lem}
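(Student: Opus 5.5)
We need to show that $\hind{}{\CC}{A}$ is a finite right $\hind{A}{\CC}{A}$-module category. There are two parts to check: the right action exists and is right exact in each variable, and $\hind{}{\CC}{A}$ is finite as a $\bk$-linear category.

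For the action, given $X\in\hind{}{\CC}{A}$ and $M\in\hind{A}{\CC}{A}$, set $X\triangleleft M:=X\otc{A}M$. Because $M$ carries its right $A$-action, this is again a right $A$-module. This is exactly the composition of 1-morphisms in $\Mrt_{E_1}(\CC)$. The associator $(X\otc{A}M)\otc{A}N\cong X\otc{A}(M\otc{A}N)$ and the unitor $X\otc{A}A\cong X$ follow from the associativity and unitality of relative tensor products; these hold because $\ot$ is right exact in each variable in a finite multi-tensor category. An alternative is to invoke Theorem \ref{thm:generalized_Eilenberg-Watts}: the monoidal equivalence $\hind{A}{\CC}{A}\simeq\Fun_{\CC}(\hind{}{\CC}{A},\hind{}{\CC}{A})$, $M\mapsto -\otc{A}M$, together with the tautological action of the endofunctor category on $\hind{}{\CC}{A}$, gives the right action with all coherences for free. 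The action is $\bk$-linear, and it is right exact in each variable because relative tensor products are coequalizers of right exact functors.

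For finiteness, I would present $\hind{}{\CC}{A}$ as the category of modules over the monad $-\ot A$ on the finite category $\CC$. Since $-\ot A$ is right exact, a standard argument shows that $\hind{}{\CC}{A}$ is abelian with finite-dimensional Hom spaces and finite length. It has enough projectives, namely the free modules $P\ot A$ with $P$ projective in $\CC$; these are projective because the free functor is left adjoint to the exact forgetful functor. It has finitely many simples, because every simple is a quotient of some $L\ot A$ with $L$ simple in $\CC$. Hence $\hind{}{\CC}{A}$ is finite. Equivalently, it is the category of modules over the finite-dimensional algebra $\Hom_{\CC}(G,G\ot A)$ for a projective generator $G$, but the monadic argument is cleaner. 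Combined with the preceding lemma, which says that $\hind{A}{\CC}{A}$ is pre-multi-tensor, this completes the proof.

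The main obstacle is the finiteness step: it requires verifying that the free modules generate and that the number of simples is finite, which depends on exactness properties of $\ot$ in $\CC$. The module-structure coherences are routine once we pass through Theorem \ref{thm:generalized_Eilenberg-Watts}.
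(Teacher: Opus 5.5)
Your proof is correct. The paper gives no argument of its own at this point: it imports the statement from [DS26, Example~4.2]. So your write-up is a self-contained verification of what the citation asserts rather than a competing route.

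What your argument makes explicit, and the citation leaves implicit, is the following.
\begin{enumerate}
\item The action is $X\lhd M:=X\otc{A}M$. This is the same action the paper silently relies on later, in Corollary~\ref{crl:reconstruct_of_CA_in_ACA} and Proposition~\ref{prp:relative_inverse}.
\item Only right exactness of the action in each variable is available. This is the appropriate requirement, since $\hind{A}{\CC}{A}$ is merely pre-multi-tensor (not rigid in general) and $-\otc{A}M$ need not be exact.
\item Finiteness of $\hind{}{\CC}{A}$ rests on exactness of $-\ot A$ in the rigid category $\CC$. That exactness is what makes the forgetful functor exact and faithful and lets it create finite colimits. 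Your monadic argument (free modules $P\ot A$ as enough projectives, simples as quotients of $L\ot A$) handles this cleanly.
\end{enumerate}

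One small orientation remark on your Eilenberg--Watts shortcut. Since $-\otc{A}(M\otc{A}N)\cong(-\otc{A}N)\circ(-\otc{A}M)$, the functor $M\mapsto -\otc{A}M$ is monoidal for the reversed composition $F\ot G:=G\circ F$ on $\Fun_{\CC}(\hind{}{\CC}{A},\hind{}{\CC}{A})$. It is for that structure that evaluation $X\lhd F:=F(X)$ is a right action; with ordinary composition, evaluation is a left action of the reversed monoidal category. With this convention understood, your argument is complete.
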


\begin{crl}
\label{crl:reconstruct_of_CA_in_ACA}
    There is an equivalence of right $\hind{A}{\CC}{A}$-module categories
    \[
        \hind{}{\CC}{A}\simeq \hind{[P,P]}{(\hind{A}{\CC}{A})}{}
    \]
    for some $\hind{A}{\CC}{A}$-projective generator $P\in\hind{}{\CC}{A}$.
\end{crl}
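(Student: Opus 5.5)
The corollary follows by applying Theorem \ref{thm:reconstruct_of_module_cat} with the finite pre-multi-tensor category $\hind{A}{\CC}{A}$ in place of $\CC$. The preceding lemmas supply the hypotheses: $\hind{A}{\CC}{A}$ is pre-multi-tensor, and by Lemma \ref{lem:CA_is_finite_right_ACA_module_cat} the category $\hind{}{\CC}{A}$ is a finite right module category over it. Since the reconstruction theorem is stated for left module categories, I will first pass to the monoidal opposite. A right $\hind{A}{\CC}{A}$-module category is the same thing as a left $(\hind{A}{\CC}{A})^{\mathrm{rev}}$-module category, and $(\hind{A}{\CC}{A})^{\mathrm{rev}}$ is again a finite pre-multi-tensor category. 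I will note that reversing the tensor product preserves finiteness, rigidity where it is present, and exactness of $\otimes$.

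Next I choose a projective generator. Lemma \ref{lem:existence_of_progenerator}, in the form the paper states for right module categories via the monoidal opposite, gives an $\hind{A}{\CC}{A}$-projective generator $P\in\hind{}{\CC}{A}$. Here $[P,-]$ denotes the internal Hom taking values in $\hind{A}{\CC}{A}$, which exists because the action is right exact in each variable on finite categories. For a right module category, $[P,P]$ acts on $[P,-]$ from the left, so $[P,P]$ is an $E_1$-algebra in $\hind{A}{\CC}{A}$ and the target is the category of left $[P,P]$-modules.

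Then I apply Theorem \ref{thm:reconstruct_of_module_cat} to the left $(\hind{A}{\CC}{A})^{\mathrm{rev}}$-module category $\hind{}{\CC}{A}$. This gives an equivalence of left $(\hind{A}{\CC}{A})^{\mathrm{rev}}$-module categories from $\hind{}{\CC}{A}$ to right $[P,P]^{\mathrm{rev}}$-modules in $(\hind{A}{\CC}{A})^{\mathrm{rev}}$. Translating back through the opposite, right $[P,P]^{\mathrm{rev}}$-modules in the reversed category are exactly left $[P,P]$-modules in $\hind{A}{\CC}{A}$. So the equivalence reads $\hind{}{\CC}{A}\simeq \hind{[P,P]}{(\hind{A}{\CC}{A})}{}$ as right $\hind{A}{\CC}{A}$-module categories.

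The only delicate step is the bookkeeping of sides. I must check that the internal Hom for the right action gives $[P,P]$ the multiplication whose modules are left modules, and that the module-functor structure on $[P,-]$ transports correctly under the rev operation. Everything else is a direct invocation of the earlier results.
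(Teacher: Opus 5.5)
Your proposal is correct and follows the paper's proof. Lemma~\ref{lem:CA_is_finite_right_ACA_module_cat} and Lemma~\ref{lem:existence_of_progenerator} give an $\hind{A}{\CC}{A}$-projective generator $P\in\hind{}{\CC}{A}$, and Theorem~\ref{thm:reconstruct_of_module_cat} then gives the equivalence. Your explicit passage through $(\hind{A}{\CC}{A})^{\mathrm{rev}}$, which turns the left-module reconstruction into an equivalence with left $[P,P]$-modules, is the side bookkeeping the paper leaves implicit.
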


\begin{proof}
    By Lemma \ref{lem:CA_is_finite_right_ACA_module_cat} and Lemma \ref{lem:existence_of_progenerator}, there exists a $\hind{A}{\CC}{A}$-projective generator $P$ in $\hind{}{\CC}{A}$.
    The result then follows from Theorem \ref{thm:reconstruct_of_module_cat}.
\end{proof}

We will also use the following double-centralizer-type equivalence.

\begin{thm}[\cite{DS26}, Theorem 2.31]
\label{thm:canonical_functor_is_equiv}
    Let $\CC$ be an indecomposable finite multi-tensor category and let $A$ be a nonzero $E_1$-algebra in $\CC$.
    Then the canonical functor
    \[
        \CC\longrightarrow
        \Fun_{1\mid \hind{A}{\CC}{A}}(\hind{}{\CC}{A},\hind{}{\CC}{A})
    \]
    is an equivalence.
\end{thm}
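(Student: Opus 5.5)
The plan is to deduce the equivalence from the Eilenberg--Watts type results above together with a dimension- and faithfulness argument. Write $\CD:=\hind{A}{\CC}{A}$ and $\CM:=\hind{}{\CC}{A}$, so $\CM$ is a $\CC$--$\CD$-bimodule category. By Lemma~\ref{lem:CA_is_finite_right_ACA_module_cat}, $\CM$ is a finite right $\CD$-module. By Corollary~\ref{crl:reconstruct_of_CA_in_ACA}, $\CM\simeq\hind{Q}{\CD}{}$ for $Q=[P,P]$, where $P$ is a $\CD$-projective generator.

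Applying Theorem~\ref{thm:AC_CA_inverse} in the right-module version, inside the finite pre-multi-tensor category $\CD$, we get
\[
\Fun_{1\mid\CD}(\CM,\CM)\simeq\Fun_{1\mid\CD}(\hind{Q}{\CD}{},\hind{Q}{\CD}{})\simeq\hind{Q}{\CD}{Q}.
\]
The first step is to check that under this identification the canonical functor $\CC\to\Fun_{1\mid\CD}(\CM,\CM)$, $c\mapsto c\ot-$, becomes
\[
c\longmapsto [P,c\ot P]_{\CD}\cong c\ot[P,P]_{\CD}.
\]
The isomorphism here uses that the internal Hom $[P,-]_{\CD}$ is a left $\CC$-module functor, because the $\CC$- and $\CD$-actions on $\CM$ commute. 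Thus the canonical functor factors as
\[
\CC\xrightarrow{\;c\mapsto c\ot P\;}\CM\xrightarrow{\;[P,-]_{\CD}\;}\hind{Q}{\CD}{Q}.
\]

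The second step proves full faithfulness by a symmetric argument with the roles of $\CC$ and $\CD$ exchanged. By Theorem~\ref{thm:generalized_Eilenberg-Watts}, $\CD\simeq\Fun_{\CC}(\CM,\CM)$, so $\CD$ is itself the dual of $\CC$ with respect to $\CM$. I would then use the tensor--Hom formulas to write
\[
\Fun_{1\mid\CD}(\CM,\CM)\simeq \CM\btc{\CD}\Fun_{1\mid\CD}(\CM,\CD),
\]
in the spirit of Corollary~\ref{crl:N_btimes_BA_over_B}, and identify $\Fun_{1\mid\CD}(\CM,\CD)\simeq\hind{A}{\CC}{}$ using Proposition~\ref{prp:variant_Eilenberg_Watts}. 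The key intermediate statement is
\[
\hind{}{\CC}{A}\btc{\hind{A}{\CC}{A}}\hind{A}{\CC}{}\;\simeq\;\CC ,
\]
the reverse composite to Theorem~\ref{thm:AC_CA_inverse}. The reverse composite $\hind{A}{\CC}{}\btc{\CC}\hind{}{\CC}{A}\simeq\CD$ already holds by Theorem~\ref{thm:AC_CA_inverse}. So $\CM$ is an invertible $\CC$--$\CD$-bimodule as soon as the comparison functor $\CC\to\CM\btc{\CD}\hind{A}{\CC}{}$, given by $c\mapsto (c\ot A)\boxtimes A$, is an equivalence, and invertibility of $\CM$ gives the theorem directly.

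The main obstacle is exactly this comparison, and it is where the hypotheses enter. For essential surjectivity onto the relevant components, I would use that $A\neq0$ and that $\CC$ is indecomposable. Write $\bone=\oplus_i\bone_i$; indecomposability means $\bone_i\ot\CC\ot\bone_j\neq0$ for all $i,j$. Combined with rigidity, this shows that $c\ot A\neq 0$ for every nonzero $c$, and that the free modules $c\ot A$ generate $\CM$ over all components. Consequently the $\CC$-action on $\CM$ is faithful and $A$ is a $\CC$-generator. Full faithfulness should then follow from the adjunction
\[
\Hom_{\CM}(c\ot A,c'\ot A)\cong\Hom_{\CC}(c,c'\ot A),
\]
compared with the corresponding Hom computed in the relative tensor product by the usual bar/coend formula. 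Here $\CD$-balancing against $A\in\CD$ cuts $c'\ot A$ back down to $c'$. If this direct computation proves too delicate in the non-semisimple case, I would fall back on a Frobenius--Perron/Morita-dual dimension count. Namely, once the canonical functor is shown to be fully faithful and to hit a projective generator, it must be an equivalence of finite categories.
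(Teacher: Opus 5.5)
Your reduction is formally fine, but it moves the entire content of the theorem into the step you leave open, and the argument you sketch for that step fails.

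The paper does not prove this statement itself; it imports it from \cite{DS26}. In the paper's logical order, the facts your chain needs are \emph{consequences} of it:
Proposition~\ref{prp:variant_Eilenberg_Watts} only gives $\Fun_{1\mid\CD}(\CM,\CD)\simeq\CD_Q$. Upgrading this to ${}_A\CC$ is Proposition~\ref{prp:AC_is_equivalent_to_ACAPP}, whose proof invokes Theorem~\ref{thm:canonical_functor_is_equiv}. Likewise, $\CC_A\boxtimes_{\CD}{}_A\CC\simeq\CC$ is Proposition~\ref{prp:relative_inverse}, obtained the same way. Your chain is that proof read backwards, so everything rests on an independent proof that $\CC_A$ is an invertible $\CC$--$\CD$-bimodule.

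The comparison functor you propose, $c\mapsto(c\otimes A)\boxtimes_{\CD}A$, cannot supply this. Composed with the evaluation $M\boxtimes N\mapsto M\otimes_A N$, it is $c\mapsto c\otimes A$, not the identity. Concretely, take $\CC=\vect$ and $A=\bk\times\bk$, with simple right modules $S_1,S_2$ and simple left modules $T_1,T_2$. Then $\CD\simeq\mathrm{Mat}_2(\vect)$. Balancing kills $S_i\boxtimes T_j$ for $i\neq j$ and identifies $S_1\boxtimes T_1\cong S_2\boxtimes T_2$, so $\CC_A\boxtimes_{\CD}{}_A\CC\simeq\vect$. But $A\boxtimes_{\CD}A\cong(S_1\boxtimes T_1)^{\oplus 2}$, so your functor is $V\mapsto V^{\oplus 2}$, which is not full. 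Balancing against $A$ does not cut $c'\otimes A$ back down to $c'$.

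The missing idea is a real proof that the evaluation functor $\CC_A\boxtimes_{\CD}{}_A\CC\to\CC$, $M\boxtimes N\mapsto M\otimes_A N$, is an equivalence. Full faithfulness needs a Morita (trace-ideal) type computation. Essential surjectivity should come from the fact that its image is a two-sided ideal containing $A\otimes_A A=A\neq0$. That ideal must be all of $\CC$: for nonzero $X\in\bone_i\CC\bone_j$ the evaluation $X^{*}\otimes X\to\bone_j$ is an epimorphism onto a simple object, and $\CC$ is indecomposable. This is where the hypotheses enter, not through properties of $c\otimes A$.

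Several auxiliary claims also need repair:
\begin{enumerate}
\item The claim ``$c\otimes A\neq0$ for every nonzero $c$'' is false. In $\CC=\mathrm{Mat}_2(\vect)$ with $A=\bone_1$, every $c\in\bone_1\CC\bone_2$ has $c\otimes A=0$, although the action on $\CC_A$ is still faithful.
\item The isomorphism $[P,c\otimes P]_{\CD}\cong c\otimes[P,P]_{\CD}$ does not typecheck, because $\CD={}_A\CC_A$ carries no left $\CC$-action. Adjunction instead gives $[P,c\otimes M]_{\CD}\cong[c^{*}\otimes P,M]_{\CD}$ for the appropriate dual $c^{*}$.
\item The Frobenius--Perron fallback is unavailable. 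When $\CC_A$ is not exact, $\CD$ is in general only pre-multi-tensor, not rigid, and a dimension count would not give full faithfulness anyway.
\end{enumerate}
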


Combining this equivalence with the reconstruction of $\hind{}{\CC}{A}$ above, we obtain a corresponding description of the opposite bimodule category.

\begin{prp}
\label{prp:AC_is_equivalent_to_ACAPP}
    Let $\CC$ be an indecomposable finite multi-tensor category and let $A$ be a nonzero $E_1$-algebra in $\CC$.
    Suppose
    \[
        \hind{}{\CC}{A}\simeq \hind{[P,P]}{(\hind{A}{\CC}{A})}{}
    \]
    for some $P\in\hind{}{\CC}{A}$.
    Then
    \[
        \hind{A}{\CC}{}\simeq \hind{}{\hind{A}{\CC}{A}}{[P,P]}
    \]
    as $\hind{A}{\CC}{A}$-$\CC$-bimodule categories.
\end{prp}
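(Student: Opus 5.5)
The plan is to realize $\hind{A}{\CC}{}$ as a functor category over $\hind{A}{\CC}{A}$ and then apply the right-module variant of Eilenberg--Watts (Proposition \ref{prp:variant_Eilenberg_Watts}). Write $\CD:=\hind{A}{\CC}{A}$ and $\CM:=\hind{}{\CC}{A}$. The basic observation is that $\CM$ is a $\CC$-$\CD$-bimodule category. Moreover, the inner-hom picture suggests that $\hind{A}{\CC}{}$ should be the "dual" of $\CM$ relative to $\CD$:
\[
\hind{A}{\CC}{}\simeq \Fun_{1\mid\CD}(\CM,\CD).
\]
Here the left $\CD$-action on $\Fun_{1\mid\CD}(\CM,\CD)$ comes from the left action of $\CD$ on itself, and the right $\CC$-action comes from precomposition with the left $\CC$-action on $\CM$.

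To produce this equivalence, I would send $X\in\hind{A}{\CC}{}$ to the functor
\[
X\otc{\CC}(-)\colon \CM\to\CD,\qquad Y\mapsto X\ot Y .
\]
For $Y$ a right $A$-module, $X\ot Y$ is an $A$-$A$-bimodule, and the functor is visibly right $\CD$-linear. To see that this is an equivalence, I would use the double-centralizer theorem. By Theorem \ref{thm:canonical_functor_is_equiv}, $\CC\simeq\Fun_{1\mid\CD}(\CM,\CM)$, and this identifies $\hind{A}{\CC}{}$ with right $A$-modules in that functor category. The remaining task is to check that $A$-modules in $\Fun_{1\mid\CD}(\CM,\CM)$ correspond to $\Fun_{1\mid\CD}(\CM,\CD)$. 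This uses $\CD=\hind{A}{\CM}{}$, which follows from Theorem \ref{thm:AC_CA_inverse}.

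Next I substitute the hypothesis $\CM\simeq\hind{[P,P]}{\CD}{}$. Proposition \ref{prp:variant_Eilenberg_Watts} applies with $\CB=\CD$ and $\CN=\CD$; it requires finiteness, which holds because $\CD$ is pre-multi-tensor (\cite{DS26}, Corollary 2.26). It gives
\[
\Fun_{1\mid\CD}(\hind{[P,P]}{\CD}{},\CD)\simeq \hind{}{\CD}{[P,P]}.
\]
Composing the two equivalences yields $\hind{A}{\CC}{}\simeq\hind{}{\hind{A}{\CC}{A}}{[P,P]}$.

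Finally I would check bimodule compatibility.
\begin{itemize}
\item The left $\CD$-action is post-composition on the functor side and corresponds to left multiplication on $\hind{}{\CD}{[P,P]}$.
\item The right $\CC$-action is precomposition with the $\CC$-action on $\CM$. Under the reconstruction $\CM\simeq\hind{[P,P]}{\CD}{}$, this action is transported through the equivalence of Theorem \ref{thm:canonical_functor_is_equiv}. It therefore becomes the induced right $\CC$-action on $[P,P]$-modules.
\end{itemize}

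I expect the main obstacle to be the first step: identifying $\hind{A}{\CC}{}$ with $\Fun_{1\mid\CD}(\CM,\CD)$ as $\CD$-$\CC$-bimodules. Indecomposability and $A\neq0$ are used exactly there. If the direct module-in-functor-category argument is awkward, I would fall back to a duality argument. Theorem \ref{thm:canonical_functor_is_equiv} says $\CM$ is invertible as a $\CC$-$\CD$-bimodule. Its inverse is therefore both $\Fun_{1\mid\CD}(\CM,\CD)$ and, by Theorem \ref{thm:AC_CA_inverse} together with $\hind{A}{\CC}{}\btc{\CC}\CM\simeq\CD$, also $\hind{A}{\CC}{}$. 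Uniqueness of inverses then gives the identification.
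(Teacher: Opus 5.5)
Your argument is correct, but it takes a different route from the paper. The paper never forms $\Fun_{1\mid\CD}(\CM,\CD)$. With $\CD:=\hind{A}{\CC}{A}$ and $\CM:=\hind{}{\CC}{A}$, it runs a relative-tensor-product computation:
\begin{align*}
\hind{A}{\CC}{}
&\simeq \hind{A}{\CC}{}\btc{\CC}\Fun_{1\mid\CD}(\CM,\CM)
\simeq \hind{A}{\CC}{}\btc{\CC}\bigl(\CM\btc{\CD}\hind{}{\CD}{[P,P]}\bigr)\\
&\simeq \bigl(\hind{A}{\CC}{}\btc{\CC}\CM\bigr)\btc{\CD}\hind{}{\CD}{[P,P]}
\simeq \CD\btc{\CD}\hind{}{\CD}{[P,P]}\simeq \hind{}{\CD}{[P,P]}.
\end{align*}
These steps use Theorem \ref{thm:canonical_functor_is_equiv}, then Corollary \ref{crl:N_btimes_BA_over_B} with the hypothesis $\CM\simeq\hind{[P,P]}{\CD}{}$, then associativity, then $\hind{A}{\CC}{}\btc{\CC}\CM\simeq\CD$ from Theorem \ref{thm:AC_CA_inverse}.

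Both proofs rest on the same two inputs: the double-centralizer equivalence and an Eilenberg--Watts identification over $\CD$. They differ in how the $A$-action is absorbed. You transport left $A$-modules across the monoidal equivalence $c\mapsto c\otimes-$ and lift along the forgetful functor $\hind{A}{\CM}{}=\CD\to\CM$. The paper instead reassociates and cancels $\hind{A}{\CC}{}\btc{\CC}\CM\simeq\CD$.

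What each buys:
\begin{itemize}
\item Your route avoids relative tensor products over $\CC$ and their reassociation.
\item It makes the left $\CD$-action visibly post-composition.
\item It yields the explicit equivalence $X\mapsto X\otimes P$, with right $[P,P]$-action induced by evaluation $P\otimes_A[P,P]\to P$.
\item The paper's route is shorter given its cited results, and it matches the tensor calculus used immediately afterwards in Proposition \ref{prp:relative_inverse} and Theorem \ref{thm:fusion_of_1d_phase}.
\end{itemize}

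The step you flag as the main obstacle does go through. Let $T:=A\otimes-$ act by post-composition, $T\circ F\Rightarrow F$. With the product $F\cdot G=F\circ G$ that makes $c\mapsto c\otimes-$ monoidal, these are left modules, not right ones. Such a module in $\Fun_{1\mid\CD}(\CM,\CM)$ is exactly a lift of $F$ through $\CD\to\CM$. $\CD$-linearity of the action map is $\CD$-linearity of the lift. Right exactness is preserved because this forgetful functor creates finite colimits, $T$ being right exact. Also, $\hind{A}{\CM}{}=\CD$ is tautological and does not need Theorem \ref{thm:AC_CA_inverse}.

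Drop the fallback, however. Theorem \ref{thm:canonical_functor_is_equiv} asserts a double-centralizer equivalence, not invertibility of $\CM$. In the paper, invertibility (Theorem \ref{thm:C_ACA_1-Morita_equivalent}) is deduced from Proposition \ref{prp:relative_inverse}, whose proof uses the present proposition, so that route would be circular here.
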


\begin{proof}
    We have
    \begin{align*}
        \hind{A}{\CC}{}
        &\simeq \hind{A}{\CC}{}\btc{\CC}\CC
        \simeq \hind{A}{\CC}{}\btc{\CC}
        \Fun_{1\mid \hind{A}{\CC}{A}}
        (\hind{}{\CC}{A},\hind{}{\CC}{A})\\
        &\simeq \hind{A}{\CC}{}\btc{\CC}
        \left(
        \hind{}{\CC}{A}\btc{\hind{A}{\CC}{A}}
        \hind{}{\hind{A}{\CC}{A}}{[P,P]}
        \right)\\
        &\simeq
        \left(\hind{A}{\CC}{}\btc{\CC}\hind{}{\CC}{A}\right)
        \btc{\hind{A}{\CC}{A}}
        \hind{}{\hind{A}{\CC}{A}}{[P,P]}\\
        &\simeq
        \hind{A}{\CC}{A}\btc{\hind{A}{\CC}{A}}
        \hind{}{\hind{A}{\CC}{A}}{[P,P]}
        \simeq
        \hind{}{\hind{A}{\CC}{A}}{[P,P]}.
    \end{align*}
    Here the second equivalence follows from
    Theorem \ref{thm:canonical_functor_is_equiv},
    the third from Corollary \ref{crl:N_btimes_BA_over_B},
    and the fifth from Theorem \ref{thm:AC_CA_inverse}.
\end{proof}

As a consequence, the canonical bimodule categories $\hind{}{\CC}{A}$ and $\hind{A}{\CC}{}$ are relative inverses.

\begin{prp}
\label{prp:relative_inverse}
    Let $\CC$ be an indecomposable finite multi-tensor category and let $A$ be a nonzero $E_1$-algebra in $\CC$.
    Then there is an equivalence
    \[
        \hind{}{\CC}{A}\btc{\hind{A}{\CC}{A}}\hind{A}{\CC}{}\simeq \CC.
    \]
\end{prp}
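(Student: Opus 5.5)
The plan is to run the same kind of chain of equivalences as in the proof of Proposition \ref{prp:AC_is_equivalent_to_ACAPP}, now using the descriptions of $\hind{}{\CC}{A}$ and $\hind{A}{\CC}{}$ as module categories over $\hind{A}{\CC}{A}$. Write $\CD:=\hind{A}{\CC}{A}$ and choose, by Corollary \ref{crl:reconstruct_of_CA_in_ACA}, a $\CD$-projective generator $P\in\hind{}{\CC}{A}$ with $\hind{}{\CC}{A}\simeq \hind{[P,P]}{\CD}{}$ as right $\CD$-module categories. Proposition \ref{prp:AC_is_equivalent_to_ACAPP} then gives $\hind{A}{\CC}{}\simeq \hind{}{\CD}{[P,P]}$ as $\CD$-$\CC$-bimodule categories.

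Substituting both descriptions gives
\begin{align*}
    \hind{}{\CC}{A}\btc{\CD}\hind{A}{\CC}{}
    \simeq
    \hind{}{\CC}{A}\btc{\CD}\hind{}{\CD}{[P,P]}.
\end{align*}
We apply Proposition \ref{prp:fusion_of_walls} with $\CB=\CD$, $\CN=\hind{}{\CC}{A}$ and the algebra $[P,P]\in\Alg_{E_1}(\CD)$. This is legitimate because $\CD$ is pre-multi-tensor and $\hind{}{\CC}{A}$ is a finite right $\CD$-module category (Lemma \ref{lem:CA_is_finite_right_ACA_module_cat}). We obtain $\hind{}{\CN}{[P,P]}$, the category of right $[P,P]$-modules inside $\hind{}{\CC}{A}$ for the right $\CD$-action. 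By Corollary \ref{crl:N_btimes_BA_over_B}, this is also $\Fun_{1\mid\CD}(\hind{[P,P]}{\CD}{},\hind{}{\CC}{A})$. Using $\hind{[P,P]}{\CD}{}\simeq\hind{}{\CC}{A}$ again, this becomes $\Fun_{1\mid\CD}(\hind{}{\CC}{A},\hind{}{\CC}{A})$. By Theorem \ref{thm:canonical_functor_is_equiv}, which needs $\CC$ indecomposable and $A\neq0$, this is equivalent to $\CC$.

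Finally, the composite equivalence should be compatible with the left and right $\CC$-actions, so that the statement holds as $\CC$-$\CC$-bimodule categories if desired. The left $\CC$-action on $\hind{}{\CC}{A}$ passes through every step. For the right action, under Proposition \ref{prp:AC_is_equivalent_to_ACAPP} the $\CC$-action on $\hind{}{\CD}{[P,P]}$ is the one transported through $\Fun_{1\mid\CD}(\hind{}{\CC}{A},\hind{}{\CC}{A})\simeq\CC$. On the functor category it is therefore precomposition or postcomposition, which matches the canonical action of $\CC$ on itself.

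I expect the main obstacle to be this bookkeeping. One must check that Corollary \ref{crl:N_btimes_BA_over_B} is applied with variance consistent with the identification $\hind{}{\CC}{A}\simeq\hind{[P,P]}{\CD}{}$, so that the two copies of $\hind{}{\CC}{A}$ in the functor category carry the correct module structures. One must also check that the equivalence produced is the canonical functor of Theorem \ref{thm:canonical_functor_is_equiv} and not some twisted version. If this becomes unwieldy, the fallback is to verify directly that the functor $X\boxtimes Y\mapsto X\otc{A}Y$ from $\hind{}{\CC}{A}\btc{\CD}\hind{A}{\CC}{}$ to $\CC$ agrees with the composite. This would use the description of the relative tensor product as a category of modules, as in Theorem \ref{thm:AC_CA_inverse}.
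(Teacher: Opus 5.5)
Your proposal is correct and is essentially the paper's own proof. The paper uses the same chain, with $\CD=\hind{A}{\CC}{A}$,
\begin{align*}
\hind{}{\CC}{A}\btc{\CD}\hind{A}{\CC}{}
&\simeq\hind{}{\CC}{A}\btc{\CD}\hind{}{\CD}{[P,P]}
\simeq\Fun_{1\mid\CD}\bigl(\hind{[P,P]}{\CD}{},\hind{}{\CC}{A}\bigr)\\
&\simeq\Fun_{1\mid\CD}\bigl(\hind{}{\CC}{A},\hind{}{\CC}{A}\bigr)\simeq\CC,
\end{align*}
justified by Proposition~\ref{prp:AC_is_equivalent_to_ACAPP}, Corollary~\ref{crl:N_btimes_BA_over_B}, Corollary~\ref{crl:reconstruct_of_CA_in_ACA} and Theorem~\ref{thm:canonical_functor_is_equiv}; your intermediate stop at Proposition~\ref{prp:fusion_of_walls} is already built into Corollary~\ref{crl:N_btimes_BA_over_B}, and the $\CC$-$\CC$-bimodule bookkeeping you flag as the main obstacle is not needed, since the statement and the paper's proof only assert an equivalence of categories.
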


\begin{proof}
    Let $P$ be a $\hind{A}{\CC}{A}$-projective generator of $\hind{}{\CC}{A}$ as in Corollary \ref{crl:reconstruct_of_CA_in_ACA}.
    We have
    \begin{align*}
        \hind{}{\CC}{A}\btc{\hind{A}{\CC}{A}}\hind{A}{\CC}{}
        &\simeq
        \hind{}{\CC}{A}\btc{\hind{A}{\CC}{A}}
        \hind{}{\hind{A}{\CC}{A}}{[P,P]}\\
        &\simeq
        \Fun_{1\mid \hind{A}{\CC}{A}}
        \left(
        \hind{[P,P]}{\hind{A}{\CC}{A}}{},
        \hind{}{\CC}{A}
        \right)\\
        &\simeq
        \Fun_{1\mid \hind{A}{\CC}{A}}
        (\hind{}{\CC}{A},\hind{}{\CC}{A})
        \simeq \CC.
    \end{align*}
    The first equivalence follows from Proposition \ref{prp:AC_is_equivalent_to_ACAPP}, the second from Corollary \ref{crl:N_btimes_BA_over_B}, the third from Corollary \ref{crl:reconstruct_of_CA_in_ACA}, and the final equivalence from Theorem \ref{thm:canonical_functor_is_equiv}.
\end{proof}

\begin{rmk}
    The indecomposability assumption is necessary.
    For example, let
    \[
        \CC:=\vect_{\bk}\boxplus\vect_{\bk},
        \qquad A:=(\bk,0).
    \]
    Then
    \[
        \hind{}{\CC}{A}\simeq \vect_{\bk}\simeq
        \hind{A}{\CC}{A}\simeq \hind{A}{\CC}{},
    \]
    and hence
    \[
        \hind{}{\CC}{A}\btd_{\hind{A}{\CC}{A}}\hind{A}{\CC}{}
        \simeq \vect_{\bk}\nsimeq \CC.
    \]
\end{rmk}

The relative inverse property immediately gives the following fusion formula for bimodule categories.

\begin{thm}
\label{thm:fusion_of_1d_phase}
    Let $\CC$ be an indecomposable finite multi-tensor category, and let $A_1,A_2,A_3$ be nonzero $E_1$-algebras in $\CC$.
    Then there is an equivalence
    \[
        \hind{A_1}{\CC}{A_2}
        \btc{\hind{A_2}{\CC}{A_2}}
        \hind{A_2}{\CC}{A_3}
        \simeq
        \hind{A_1}{\CC}{A_3}
    \]
    of categories.
\end{thm}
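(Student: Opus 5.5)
The plan is to factor each bimodule category through the canonical one-sided module categories and then cancel the middle pair using the relative inverse property. By Theorem~\ref{thm:AC_CA_inverse}, applied with $A=A_1$, $A'=A_2$ and with $A=A_2$, $A'=A_3$, there are equivalences of bimodule categories
\[
    \hind{A_1}{\CC}{A_2}\simeq \hind{A_1}{\CC}{}\btc{\CC}\hind{}{\CC}{A_2},
    \qquad
    \hind{A_2}{\CC}{A_3}\simeq \hind{A_2}{\CC}{}\btc{\CC}\hind{}{\CC}{A_3}.
\]
I will first check that these equivalences are compatible with the right and left $\hind{A_2}{\CC}{A_2}$-actions: on the right-hand side, $\hind{A_2}{\CC}{A_2}$ acts on the factors $\hind{}{\CC}{A_2}$ and $\hind{A_2}{\CC}{}$ by relative tensor product over $A_2$. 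This is visible on free objects $A_1\ot c\ot A_2$, since the equivalence is induced by $(A_1\ot x, y\ot A_2)\mapsto A_1\ot x\ot y\ot A_2$, and both actions are given by $-\otc{A_2}N$. The compatibility then extends along colimits, because the relative tensor product preserves them.

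Substituting and using associativity of the relative Deligne--Kelly tensor product of bimodule categories, I obtain
\begin{align*}
    \hind{A_1}{\CC}{A_2}\btc{\hind{A_2}{\CC}{A_2}}\hind{A_2}{\CC}{A_3}
    &\simeq
    \hind{A_1}{\CC}{}\btc{\CC}
    \Bigl(\hind{}{\CC}{A_2}\btc{\hind{A_2}{\CC}{A_2}}\hind{A_2}{\CC}{}\Bigr)
    \btc{\CC}\hind{}{\CC}{A_3}\\
    &\simeq
    \hind{A_1}{\CC}{}\btc{\CC}\CC\btc{\CC}\hind{}{\CC}{A_3}
    \simeq
    \hind{A_1}{\CC}{}\btc{\CC}\hind{}{\CC}{A_3}
    \simeq
    \hind{A_1}{\CC}{A_3}.
\end{align*}
The second step uses Proposition~\ref{prp:relative_inverse} for the nonzero algebra $A_2$; this is where indecomposability of $\CC$ and $A_2\neq 0$ enter. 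The third step is the unit law for $\btc{\CC}$, and the last step is Theorem~\ref{thm:AC_CA_inverse} again. The hypotheses that $A_1$ and $A_3$ are nonzero are not needed beyond ensuring the objects are genuine finite module categories.

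The main obstacle is the second step. Proposition~\ref{prp:relative_inverse} is stated only as an equivalence of categories. To substitute it inside an iterated relative tensor product, I need it to be an equivalence of $\CC$-$\CC$-bimodule categories. I would first try to show that every equivalence in its proof is $\CC$-bilinear:
\begin{itemize}
    \item the step using Corollary~\ref{crl:N_btimes_BA_over_B} is natural in the left $\CC$-action on $\hind{}{\CC}{A}$;
    \item the final step, Theorem~\ref{thm:canonical_functor_is_equiv}, sends $c\in\CC$ to the functor $c\ot-$, which is manifestly a $\CC$-bimodule equivalence onto $\Fun_{1\mid\hind{A}{\CC}{A}}(\hind{}{\CC}{A},\hind{}{\CC}{A})$ with its pre- and post-composition actions.
\end{itemize}
If tracking the actions through Proposition~\ref{prp:AC_is_equivalent_to_ACAPP} proves awkward, I would instead identify the composite functor directly. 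It is the canonical functor induced by the balanced functor $(X,Y)\mapsto X\otc{A_2}Y$ from $\hind{}{\CC}{A_2}\times\hind{A_2}{\CC}{}$ to $\CC$, which is evidently $\CC$-bilinear. I would then argue that the equivalence produced by the proof agrees with this canonical functor.

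Associativity of $\btc{}$ over different finite multi-tensor bases is standard for finite bimodule categories, so I would cite it without further argument.
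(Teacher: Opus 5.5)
Your proposal is correct and is exactly the paper's argument. Like the paper, you factor both bimodule categories through Theorem~\ref{thm:AC_CA_inverse}, cancel the middle pair $\hind{}{\CC}{A_2}\btc{\hind{A_2}{\CC}{A_2}}\hind{A_2}{\CC}{}\simeq\CC$ by Proposition~\ref{prp:relative_inverse}, and reassemble with Theorem~\ref{thm:AC_CA_inverse}.

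Your additional checks address points the paper leaves implicit: that the first equivalences intertwine the $\hind{A_2}{\CC}{A_2}$-actions, and that the cancellation must be a $\CC$-$\CC$-bimodule equivalence. One small correction: $\hind{A_2}{\CC}{A_2}$ is in general only pre-multi-tensor, not rigid, so the associativity of relative tensor products you invoke should be taken in the $\Cat^{rex}$ (Kelly--L\'opez) setting rather than over finite multi-tensor bases.
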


\begin{proof}
    We have
    \begin{align*}
        \hind{A_1}{\CC}{A_2}
        \btc{\hind{A_2}{\CC}{A_2}}
        \hind{A_2}{\CC}{A_3}
        &\simeq
        \left(\hind{A_1}{\CC}{}\btc{\CC}\hind{}{\CC}{A_2}\right)
        \btc{\hind{A_2}{\CC}{A_2}}
        \left(\hind{A_2}{\CC}{}\btc{\CC}\hind{}{\CC}{A_3}\right)\\
        &\simeq
        \hind{A_1}{\CC}{}\btc{\CC}\CC\btc{\CC}\hind{}{\CC}{A_3}
        \simeq
        \hind{A_1}{\CC}{}\btc{\CC}\hind{}{\CC}{A_3}
        \simeq
        \hind{A_1}{\CC}{A_3}.
    \end{align*}
    Here the first and last equivalences follow from Theorem \ref{thm:AC_CA_inverse}, and the second follows from Proposition \ref{prp:relative_inverse}.
\end{proof}

\subsection{Local 1-Morita Equivalence and Condensation}
\label{subsec:local-one-morita}

Theorem \ref{thm:generalized_Eilenberg-Watts} also leads to another characterization of 1-Morita equivalence.
First, we note that
\[
    \Omega_{A}(\Mrt_{E_1}(\CC))=\hind{A}{\CC}{A},
\]
where $\Omega_x(\CC):=\hom_{\CC}(x,x)$ for any $x\in\CC$.

\begin{prp}\label{prp:1-Morita_implies_locally_1-Morita}
    Let $A_1$ and $A_2$ be two $E_1$-algebras in $\CC$.
    If $A_1\widesim[3]{1-Morita}{} A_2$, then
    $\hind{A_1}{\CC}{A_1}$ and $\hind{A_2}{\CC}{A_2}$ are equivalent as $E_1$-monoidal categories.
\end{prp}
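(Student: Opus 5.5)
The plan is to use the observation just made, that $\Omega_{A}(\Mrt_{E_1}(\CC))=\hind{A}{\CC}{A}$ as $E_1$-monoidal categories, where the monoidal structure on the endomorphism category is composition of 1-morphisms, namely $\otc{A}$. This is a purely 2-categorical statement. In any 2-category, an equivalence $x\simeq y$ induces a monoidal equivalence $\Omega_x\simeq\Omega_y$ by conjugation. So the proof reduces to writing out this conjugation explicitly in $\Mrt_{E_1}(\CC)$.

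First, by Definition \ref{dfn: 1-Mrt_equivalnce}, the hypothesis $A_1\widesim[3]{1-Morita}{}A_2$ gives bimodules $M\in\hind{A_1}{\CC}{A_2}$ and $N\in\hind{A_2}{\CC}{A_1}$ with bimodule isomorphisms
\[
\alpha: M\otc{A_2}N\cong A_1,\qquad \beta: N\otc{A_1}M\cong A_2 .
\]
If one starts from Definition \ref{dfn:1-Morita} instead, Theorem \ref{thm:E1_module_realization} and its corollary produce such $M,N$.

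Next, define
\[
\Phi:\hind{A_1}{\CC}{A_1}\to\hind{A_2}{\CC}{A_2},\qquad X\mapsto N\otc{A_1}X\otc{A_1}M ,
\]
and symmetrically $\Psi(Y)=M\otc{A_2}Y\otc{A_2}N$. Using associativity of relative tensor products, $\alpha$, $\beta$, and the unit isomorphisms $A\otc{A}X\cong X$, we get $\Psi\Phi\simeq\id$ and $\Phi\Psi\simeq\id$. So $\Phi$ is an equivalence of categories.

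For the monoidal structure, set
\[
\Phi(X)\otc{A_2}\Phi(X')=N\otc{A_1}X\otc{A_1}(M\otc{A_2}N)\otc{A_1}X'\otc{A_1}M\xrightarrow{\ \alpha\ }\Phi(X\otc{A_1}X'),
\]
and use $\beta:\Phi(A_1)=N\otc{A_1}M\cong A_2$ as the unit constraint.

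The main obstacle is the coherence of these structure maps, that is, the associativity hexagon and the unit triangles. The associativity hexagon reduces to naturality of the associators, since $\alpha$ is inserted at two different positions and these insertions commute. The unit axioms, however, require compatibility between $\alpha$ and $\beta$: the two ways of simplifying $M\otc{A_2}N\otc{A_1}M$ to $M$ must agree. I would first replace the given equivalence by an adjoint equivalence. This is the standard fact that any equivalence in a 2-category can be promoted to an adjoint equivalence by modifying $\beta$. Once $\alpha,\beta$ satisfy the triangle identities, the unit axioms follow.

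Alternatively, one can transport the problem through the monoidal equivalence of Theorem \ref{thm:generalized_Eilenberg-Watts}, $\hind{A}{\CC}{A}\simeq\Fun_{\CC}(\hind{}{\CC}{A},\hind{}{\CC}{A})$. There, conjugation $F\mapsto G^{-1}\circ F\circ G$ by a $\CC$-module equivalence $G:\hind{}{\CC}{A_2}\to\hind{}{\CC}{A_1}$ is visibly a monoidal equivalence with respect to composition. This second route is the one I would use to cross-check the coherence.
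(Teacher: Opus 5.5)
Your argument is correct, but your main route differs from the paper's proof; your ``cross-check'' route is exactly what the paper does. The paper reads the hypothesis as Definition~\ref{dfn:1-Morita}, i.e.\ an equivalence $\hind{}{\CC}{A_1}\simeq\hind{}{\CC}{A_2}$ of left $\CC$-module categories. It then identifies each $\hind{A_i}{\CC}{A_i}$ with $\Fun_{\CC}(\hind{}{\CC}{A_i},\hind{}{\CC}{A_i})$ using the monoidal Eilenberg--Watts theorem (Theorem~\ref{thm:generalized_Eilenberg-Watts}), and conjugates by the module equivalence, all in a three-step chain. You instead stay inside $\Mrt_{E_1}(\CC)$ and prove the general bicategorical fact that equivalent objects have monoidally equivalent endomorphism categories, via the explicit conjugation $X\mapsto N\otimes_{A_1}X\otimes_{A_1}M$. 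This buys you three things. First, if you start from the bicategorical definition you need no Eilenberg--Watts at all; starting from Definition~\ref{dfn:1-Morita}, you only need its non-monoidal part to produce $M,N$. Second, you get an explicit formula for the equivalence. Third, you treat the coherence honestly. Your observation that the unit axioms require $(\alpha\otimes\id_M)\circ(\id_M\otimes\beta^{-1})=\id_M$ (and the analogous identity on $N$), secured by upgrading to an adjoint equivalence, is exactly right. The paper's route is shorter because it reuses the module realization, but it silently needs the same upgrade. The unit axiom for $F\mapsto G^{-1}FG$ also requires the triangle identity for $G\dashv G^{-1}$, so monoidality there is less ``visible'' than you or the paper suggest. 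The Eilenberg--Watts identification is monoidal only for reversed composition, which is harmless here.
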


\begin{proof}
    By Definition \ref{dfn:1-Morita}, we have
    $\hind{}{\CC}{A_1}\simeq \hind{}{\CC}{A_2}$.
    By Theorem \ref{thm:generalized_Eilenberg-Watts},
    \[
        \hind{A_1}{\CC}{A_1}
        \simeq \Fun_{\CC}(\hind{}{\CC}{A_1},\hind{}{\CC}{A_1})
        \simeq \Fun_{\CC}(\hind{}{\CC}{A_2},\hind{}{\CC}{A_2})
        \simeq \hind{A_2}{\CC}{A_2}.
    \]
\end{proof}

A $E_1$-local $E_0$-module over an $E_1$-algebra $A$ is an $A$-$A$-bimodule, i.e. an object of $\hind{A}{\CC}{A}$.

\begin{dfn}
    Two $E_1$-algebras $A_1$ and $A_2$ are {\bf locally 1-Morita equivalent} if their $E_0$-module categories
    $\hind{A_1}{\CC}{A_1}$ and $\hind{A_2}{\CC}{A_2}$ are equivalent as $E_1$-monoidal categories.
\end{dfn}

\begin{thm}\label{thm:C_ACA_1-Morita_equivalent}
    Let $\CC$ be an indecomposable finite multi-tensor category, and let $A$ be a nonzero $E_1$-algebra in $\CC$.
    Then $\CC$ is 1-Morita equivalent to $\hind{A}{\CC}{A}$; equivalently,
    $\hind{}{\CC}{A}$ is an invertible
    $\CC$-$\hind{A}{\CC}{A}$-bimodule category.
\end{thm}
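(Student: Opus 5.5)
To show that $\hind{}{\CC}{A}$ is an invertible $\CC$-$\hind{A}{\CC}{A}$-bimodule category, I will take $\hind{A}{\CC}{}$, viewed as an $\hind{A}{\CC}{A}$-$\CC$-bimodule category, as the candidate inverse. Invertibility then amounts to two equivalences of bimodule categories:
\[
    \hind{}{\CC}{A}\btc{\hind{A}{\CC}{A}}\hind{A}{\CC}{}\simeq \CC
    \qquad\text{and}\qquad
    \hind{A}{\CC}{}\btc{\CC}\hind{}{\CC}{A}\simeq \hind{A}{\CC}{A}.
\]
Proposition \ref{prp:relative_inverse} gives the first, as a $\CC$-$\CC$-bimodule equivalence. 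Theorem \ref{thm:AC_CA_inverse} gives the second, as an $\hind{A}{\CC}{A}$-$\hind{A}{\CC}{A}$-bimodule equivalence.

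Once both are in place, $\CC$ and $\hind{A}{\CC}{A}$ are equivalent objects in the Morita 2-category of finite multi-tensor categories, with $\hind{}{\CC}{A}$ as the invertible 1-morphism. That is the notion of 1-Morita equivalence in Definition \ref{dfn: 1-Mrt_equivalnce} with the ambient category taken to be $\Cat^{fin}_{\bk}$. I also need $\hind{A}{\CC}{A}$ to be a legitimate object here. The lemma citing \cite{DS26} (Corollary 2.26) says it is pre-multi-tensor, and Lemma \ref{lem:CA_is_finite_right_ACA_module_cat} says $\hind{}{\CC}{A}$ is a finite right module over it, so the relative tensor products above exist.

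I expect the main obstacle to be that the cited equivalences are stated only as equivalences of categories. The chain of equivalences in the proof of Proposition \ref{prp:relative_inverse} must be checked to respect the $\CC$-$\CC$-bimodule structure. The plan is to note that each step there is bimodule-compatible:
\begin{itemize}
    \item the Eilenberg--Watts type identifications (Corollary \ref{crl:N_btimes_BA_over_B}) are natural in the outer module actions;
    \item the canonical functor of Theorem \ref{thm:canonical_functor_is_equiv} is $c\mapsto c\otimes-$, which is manifestly a $\CC$-$\CC$-bimodule functor;
    \item the reconstruction equivalences are right $\hind{A}{\CC}{A}$-module equivalences that commute with the left $\CC$-action on $\hind{}{\CC}{A}$.
\end{itemize}

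I will also verify directly that the second equivalence is bimodule-compatible, since it is given by $M\boxtimes N\mapsto M\otimes N$, which is exactly the balanced functor. Indecomposability of $\CC$ and $A\neq 0$ are used precisely through Theorem \ref{thm:canonical_functor_is_equiv}.
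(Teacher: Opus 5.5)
Your proposal is correct and matches the paper's proof, which cites Theorem \ref{thm:AC_CA_inverse} and Proposition \ref{prp:relative_inverse} with $\hind{A}{\CC}{}$ as the inverse of $\hind{}{\CC}{A}$. Your added checks that these equivalences respect the relevant bimodule structures are a reasonable refinement; the paper leaves them implicit.
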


\begin{proof}
    This follows from Theorem \ref{thm:AC_CA_inverse} and Proposition \ref{prp:relative_inverse}.
\end{proof}

\begin{dfn}
    An $E_1$-monoidal category $\CC$ is called {\bf non-degenerate} if
    $\fZ_1(\CC)\simeq 1$.
\end{dfn}

\begin{thm}
    Let $\CC$ be a non-degenerate indecomposable finite multi-tensor category over an algebraically closed field $\bk$.
    Then two $E_1$-algebras $A_1$ and $A_2$ in $\CC$ are 1-Morita equivalent if and only if they are locally 1-Morita equivalent.
\end{thm}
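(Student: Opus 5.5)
The forward implication is already available: by Proposition \ref{prp:1-Morita_implies_locally_1-Morita}, if $A_1$ and $A_2$ are 1-Morita equivalent, then $\hind{A_1}{\CC}{A_1}\simeq\hind{A_2}{\CC}{A_2}$ as $E_1$-monoidal categories. This direction needs neither non-degeneracy nor indecomposability. All the work is in the converse. I would first reduce to nonzero algebras: the zero algebra has zero bimodule category, so it is locally Morita equivalent only to itself.

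Assume $\phi:\hind{A_1}{\CC}{A_1}\simeq\hind{A_2}{\CC}{A_2}$ as $E_1$-monoidal categories. By Theorem \ref{thm:C_ACA_1-Morita_equivalent}, $\hind{}{\CC}{A_i}$ is an invertible $\CC$-$\hind{A_i}{\CC}{A_i}$-bimodule category. Pulling back the right action of $\hind{A_2}{\CC}{A_2}$ along $\phi$, the category $\hind{}{\CC}{A_2}$ becomes an invertible $\CC$-$\hind{A_1}{\CC}{A_1}$-bimodule. Set
\[
\CP:=\hind{}{\CC}{A_1}\btc{\hind{A_1}{\CC}{A_1}}\hind{A_2}{\CC}{}.
\]
This is an invertible $\CC$-$\CC$-bimodule category, because it is a composite of invertible bimodules; here I use Proposition \ref{prp:relative_inverse} to identify $\hind{A_2}{\CC}{}$ as the inverse of $\hind{}{\CC}{A_2}$. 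Moreover,
\[
\CP\btc{\CC}\hind{}{\CC}{A_2}\simeq \hind{}{\CC}{A_1}
\]
as left $\CC$-modules.

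The key input is non-degeneracy. Invertible $\CC$-$\CC$-bimodules form the Brauer--Picard group $\mathrm{BrPic}(\CC)$, and by the Etingof--Nikshych--Ostrik theorem $\mathrm{BrPic}(\CC)\simeq\mathrm{Aut}^{br}(\fZ_1(\CC))$. Since $\fZ_1(\CC)\simeq 1$, this group is trivial, so $\CP\simeq\CC$ as a bimodule. Therefore $\hind{}{\CC}{A_1}\simeq\hind{}{\CC}{A_2}$ as left $\CC$-module categories. This is 1-Morita equivalence in the sense of Definition \ref{dfn:1-Morita}, and by Theorem \ref{thm:E1_module_realization} it is also equivalence in $\Mrt_{E_1}(\CC)$.

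I expect the main obstacle to be this triviality step. The paper has not yet stated the Brauer--Picard/braided-autoequivalence correspondence, and that correspondence is usually stated for finite tensor categories over $\bk$; the multi-tensor indecomposable case needs care. If citing ENO is unsatisfactory, I would argue directly instead. An invertible $\CC$-$\CC$-bimodule $\CP$ gives an equivalence $\CC\boxtimes\CC^{\mathrm{rev}}\to\Fun_{\CC|\CC}$-type dual; non-degeneracy means $\CC\boxtimes\CC^{\mathrm{rev}}\simeq\Fun(\CC,\CC)$, so $\CC$ is Morita-trivial as an $E_1$-algebra in $\Cat^{fin}_{\bk}$. Its bimodules are then just finite categories, $\CP\simeq\CC\boxtimes_{\CC\boxtimes\CC^{\mathrm{rev}}}(\text{a finite category }V)$, and invertibility forces $V\simeq\vect$. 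Lesser checks are that $\phi$ is compatible with the module structures used in transporting the actions, and that the relative tensor products exist in $\Cat^{fin}_{\bk}$.
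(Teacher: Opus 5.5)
Your proposal is correct and follows the paper's proof essentially step for step: transport the right action along the monoidal equivalence $\hind{A_1}{\CC}{A_1}\simeq\hind{A_2}{\CC}{A_2}$, form the composite invertible $\CC$-$\CC$-bimodule $\CP$, trivialize it using the Brauer--Picard/center correspondence together with $\fZ_1(\CC)\simeq 1$, and cancel to obtain $\hind{}{\CC}{A_1}\simeq\hind{}{\CC}{A_2}$. The differences are only cosmetic: the paper transports the action onto $\hind{}{\CC}{A_1}$ and sets $\CP=\hind{}{\CC}{A_2}\btd_{\hind{A_2}{\CC}{A_2}}\hind{A_1}{\CC}{}$. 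The caveat you raise about invoking \cite{ENO10} outside the fusion setting applies equally to the paper, which cites it in exactly the same way.
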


\begin{proof}
    If one of the two algebras $A_1$ and $A_2$ is zero, then the statement holds trivially.
    Now suppose that both $A_1$ and $A_2$ are nonzero.

    $\Rightarrow$: This follows from Proposition \ref{prp:1-Morita_implies_locally_1-Morita}.

    $\Leftarrow$: By Theorem \ref{thm:C_ACA_1-Morita_equivalent},
    $\hind{}{\CC}{A_1}$ and $\hind{}{\CC}{A_2}$ are invertible bimodule categories.
    Since
    $\hind{A_1}{\CC}{A_1}\simeq \hind{A_2}{\CC}{A_2}$,
    we can transfer the right $\hind{A_2}{\CC}{A_2}$-action on
    $\hind{}{\CC}{A_2}$ to $\hind{}{\CC}{A_1}$.
    Thus $\hind{}{\CC}{A_1}$ becomes an invertible
    $\CC$-$\hind{A_2}{\CC}{A_2}$-bimodule category.
    Hence
    \[
        \hind{}{\CC}{A_1}^{op}:=\hind{A_1}{\CC}{}
    \]
    is a $\hind{A_2}{\CC}{A_2}$-$\CC$-bimodule category.

    Define the $\CC$-$\CC$-bimodule category
    \[
        \CP:=
        \hind{}{\CC}{A_2}
        \btd_{\hind{A_2}{\CC}{A_2}}
        \hind{}{\CC}{A_1}^{op}.
    \]
    Since $\hind{}{\CC}{A_2}$ and $\hind{}{\CC}{A_1}^{op}$ are both invertible bimodule categories, so is $\CP$.

    By \cite{ENO10}, the Brauer--Picard group satisfies
    \[
        \mathrm{BrPic}(\CC)\cong \Aut_{E_2}(\fZ_1(\CC)),
    \]
    i.e. invertible $\CC$-$\CC$-bimodule categories are classified by
    $E_2$-autoequivalences of the center $\fZ_1(\CC)$.
    Since $\CC$ is non-degenerate, $\fZ_1(\CC)\simeq 1$, and hence
    \[
        \Aut_{E_2}(\fZ_1(\CC))=\{*\}.
    \]
    Therefore the only invertible $\CC$-$\CC$-bimodule category is $\CC$ itself, so
    $\CP\simeq\CC$ as $\CC$-$\CC$-bimodule categories.

    Consequently,
    \begin{align*}
        \hind{}{\CC}{A_1}
        &\simeq \CP\btc{\CC}\hind{}{\CC}{A_1}
        \simeq
        \left(
        \hind{}{\CC}{A_2}
        \btc{\hind{A_2}{\CC}{A_2}}
        \hind{}{\CC}{A_1}^{op}
        \right)
        \btc{\CC}\hind{}{\CC}{A_1}\\
        &\simeq
        \hind{}{\CC}{A_2}
        \btc{\hind{A_1}{\CC}{A_1}}
        \left(
        \hind{A_1}{\CC}{}
        \btc{\CC}
        \hind{}{\CC}{A_1}
        \right)
        \simeq
        \hind{}{\CC}{A_2}
        \btc{\hind{A_1}{\CC}{A_1}}
        \hind{A_1}{\CC}{A_1}
        \simeq
        \hind{}{\CC}{A_2}
    \end{align*}
    as left $\CC$-module categories.
\end{proof}

\vspace{3em}
 
\section{The 2-Morita Category and Module Realization}\label{section:2-Morita}
This section develops a bi-bimodule model for 2-Morita theory. We introduce the model in \cref{subsec:two-morita-bibimodule-model}, relate it to the iterated-bimodule description in \cref{subsec:two-morita-iterated-bimodules}, construct the module realization functor in \cref{sec:functor_to_module_cat}, apply the framework to the fusion of condensation defects in \cref{subsec:condensation-defects-fusion}, and finally discuss local 2-Morita equivalence in \cref{subsec:local-two-morita}.

\subsection{A Bi-Bimodule Model of the 2-Morita Category}
\label{subsec:two-morita-bibimodule-model}

Let $\CC$ be an $E_2$-monoidal $n$-category.
when $n=1$, we use $\beta$ to denote the braiding in $\CC$.
Let $B_1$ and $B_2$ be two $E_2$-algebras in $\CC$.
We begin by describing the 1- and 2-morphism data that will enter our bi-bimodule model of the 2-Morita category.

Consider the category $\hind{B_1}{\CC}{B_2}$ of $B_1$-$B_2$-bimodules in $\CC$.
The object $B_1\ot B_2\in\hind{B_1}{\CC}{B_2}$ naturally carries an algebra structure in $\CC$.
Its multiplication is given by
\begin{align*}
    B_1\ot B_2\ot B_1\ot B_2
    \xrightarrow{\id_{B_1}\ot\beta_{B_2,B_1}\ot\id_{B_2}}
    B_1\ot B_1\ot B_2\ot B_2
    \xrightarrow{m_1\ot m_2}
    B_1\ot B_2.
\end{align*}

\begin{figure}[H]
    \centering
    \subcaptionbox{}{
        \begin{minipage}[t]{0.35\linewidth}
            \centering
            \begin{tikzpicture}[scale=1.3,line cap=round,line join=round]
                \filldraw[fill=gray!40,draw=white] (0,1) rectangle (3,-1);
                \draw[thick] (0,0)--(3,0);
                \node at (1.5,0.5){$B_1$};
                \node at (1.5,-0.5){$B_2$};
                \node at (0.25,0.22){$A$};
                \fill[black] (1,0.75) circle (0.054);
                \node [scale=0.6,left] at (1,0.75){$b'$};
                \fill[black] (2,0.75) circle (0.054);
                \node [scale=0.6,left] at (2,0.75){$b$};
                \fill[black] (1,0) circle (0.054);
                \node [scale=0.6,below] at (1,0){$a'$};
                \fill[black] (2,0) circle (0.054);
                \node [scale=0.6,below] at (2,0){$a$};
            \end{tikzpicture}
        \end{minipage}
    }
    \begin{tikzpicture}[line cap=round,line join=round]
        \node at(0,1.5){$u\ot u$};
        \draw[thick,-latex] (-0.5,1.3) -- (0.6,1.3);
        \draw[draw=none] (-0.5,0) -- (0.5,0.5);
    \end{tikzpicture}
    \subcaptionbox{}{
        \begin{minipage}[t]{0.35\linewidth}
            \centering
            \begin{tikzpicture}[scale=1.3,line cap=round,line join=round]
                \filldraw[fill=gray!40,draw=white] (0,1) rectangle (3,-1);
                \draw[thick] (0,0)--(3,0);
                \node at (1.5,0.5){$B_1$};
                \node at (1.5,-0.5){$B_2$};
                \node at (0.25,0.22){$A$};
                \fill[black] (1,0) circle (0.054);
                \node [scale=0.6,below] at (1,0){$b' \rhd a'$};
                \fill[black] (2,0) circle (0.054);
                \node [scale=0.6,below] at (2,0){$b \rhd a$};
            \end{tikzpicture}
        \end{minipage}
    }\\
    \begin{minipage}[t]{0.35\linewidth}
        \centering
        \begin{tikzpicture}
            \path[use as bounding box] (-1.4,-0.5) rectangle (1.4,0.5);
            \node[anchor=east] at(-0.24,0){$m_{B_1\ot A}$};
            \draw[thick,-latex,line cap=round] (-0.16,0.5) -- (-0.16,-0.5);
        \end{tikzpicture}
    \end{minipage}
    \begin{tikzpicture}
        \path[use as bounding box] (-0.5,-0.5) rectangle (0.6,0.5);
    \end{tikzpicture}
    \begin{minipage}[t]{0.35\linewidth}
        \centering
        \begin{tikzpicture}
            \path[use as bounding box] (-1.4,-0.5) rectangle (1.4,0.5);
            \draw[thick,-latex,line cap=round] (0.03,0.5) -- (0.03,-0.5);
            \node[anchor=west] at(0.11,0){$m_{A}$};
        \end{tikzpicture}
    \end{minipage}\\
    \subcaptionbox{}{
        \begin{minipage}[t]{0.35\linewidth}
            \centering
            \begin{tikzpicture}[scale=1.3,line cap=round,line join=round]
                \filldraw[fill=gray!40,draw=white] (0,1) rectangle (3,-1);
                \draw[thick] (0,0)--(3,0);
                \node at (1.5,0.5){$B_1$};
                \node at (1.5,-0.5){$B_2$};
                \node at (0.25,0.22){$A$};
                \fill[black] (1.5,0.75) circle (0.054);
                \node [scale=0.6,left] at (1.5,0.75){$bb'$};
                \fill[black] (1.5,0) circle (0.054);
                \node [scale=0.6,below] at (1.5,0){$aa'$};
            \end{tikzpicture}
        \end{minipage}
    }
    \begin{tikzpicture}[line cap=round,line join=round]
        \node at(0,1.5){$u$};
        \draw[thick,-latex] (-0.5,1.3) -- (0.6,1.3);
        \draw[draw=none] (-0.5,0) -- (0.5,0.5);
    \end{tikzpicture}
    \subcaptionbox{}{
        \begin{minipage}[t]{0.35\linewidth}
            \centering
            \begin{tikzpicture}[scale=1.3,line cap=round,line join=round]
                \filldraw[fill=gray!40,draw=white] (0,1) rectangle (3,-1);
                \draw[thick] (0,0)--(3,0);
                \node at (1.5,0.5){$B_1$};
                \node at (1.5,-0.5){$B_2$};
                \node at (0.25,0.22){$A$};
                \fill[black] (1.5,0) circle (0.054);
                \node [scale=0.6,above] at (1.5,0){$bb' \rhd aa'$};
                \node [scale=0.6,below] at (1.5,0){$(b \rhd a)(b' \rhd a')$};
            \end{tikzpicture}
        \end{minipage}
    }
    \caption{Elements in the algebras and their fusion behavior.}
\end{figure}

\begin{dfn}
    Let $B_1$ and $B_2$ be two $E_2$-algebras in $\CC$.
    An {\bf $E_1$ $B_1$-$B_2$-bimodule} is an $E_1$-algebra
    $(A,m_A,h_A)$ equipped with a left $B_1$-action
    $u:B_1\ot A\to A$ and a right $B_2$-action
    $d:A\ot B_2\to A$ satisfying
    \begin{align*}
        \xymatrix{
            B_1\ot A \ot B_1\ot A\ar[r]^{u\ot u}\ar[d]_{\id\ot\beta_{A,B_1}\ot\id}
            &A\ot A \ar[dd]^{m_A}
            &A\ot B_2\ot A\ot B_2\ar[l]_{d\ot d} \ar[d]^{\id\ot\beta_{B_2,A}\ot\id} \\
            B_1\ot B_1 \ot A\ot A\ar[d]_{m_{B_1}\ot m_A}
            &&
            A\ot A\ot B_2\ot B_2\ar[d]^{m_A\ot m_{B_2}} \\
            B_1\ot A\ar[r]_{u}
            &A
            &A\ot B_2\ar[l]^{d}
        }
    \end{align*}
\end{dfn}

The compatibility condition implies that $u$ and $d$ are $E_1$-algebra homomorphisms.
Since an $E_2$-algebra in $\CC$ can equivalently be viewed as an $E_1$-algebra in $\Alg_{E_1}(\CC)$, an $E_1$ $B_1$-$B_2$-bimodule is equivalently an object of
\[
    \hind{B_1}{\Alg_{E_1}(\CC)}{B_2}.
\]
This gives the 1-morphism data in the bi-bimodule model.
In particular, objects in $\hind{B}{\Alg_{E_1}(\CC)}{B}$ are called {\bf $E_1$-local $E_1$ $B$-modules}.

We next describe the corresponding 2-morphisms.
\begin{dfn}\label{dfn:bibimod}
    Let $(A_1,m_1,h_1,u_1,d_1)$ and $(A_2,m_2,h_2,u_2,d_2)$ be two
    $E_1$ $B_1$-$B_2$-bimodules.
    A {\bf $B_1$-$B_2$-bi-$A_1$-$A_2$-bimodule $E_0$-algebra}
    is an object $(M,l,r)\in\hind{A_1}{\CC}{A_2}$ such that the following diagram commutes:
    \begin{align}\label{eq:coherence_of_bibimod}
        \xymatrix@C=5.4em@R=2.8em{
            B_1\ot A_1\ot M \ar[dd]_{u_1\ot\id_M}
            &B_1\ot M \ar[l]_{\id_{B_1}\ot h_1\ot\id_M}
            \ar[r]^{\id_{B_1}\ot h_2\ot\id_M}
            &B_1\ot A_2\ot M \ar[d]^{u_2\ot\id_M}\\
            &&A_2\ot M\ar[d]^{\beta^{-1}_{M,A_2}}\\
            A_1\ot M \ar[r]_{l}
            &M
            &M\ot A_2\ar[l]^{r}\\
            M\ot A_1\ar[u]^{\beta^{-1}_{A_1,M}}
            &&\\
            M\ot A_1\ot B_1\ar[u]^{\id_M\ot d_1}
            &M\ot B_2 \ar[l]_{\id_M\ot h_1\ot\id_{B_2}}
            \ar[r]^{\id_M\ot h_2\ot\id_{B_2}}
            &M\ot A_2\ot B_2\ar[uu]_{\id_M\ot d_2}
        }
    \end{align}
\end{dfn}

Thus $M$ can be viewed as an $A_1$-$A_2$-bimodule whose induced
$B_1$- and $B_2$-actions are required to agree with the
$B_1$-$B_2$-actions carried by $A_1$ and $A_2$.
Equivalently, the different ways in which $B_1$ and $B_2$ act on $M$
through $A_1$ and $A_2$ must coincide.

To make these induced actions explicit, we first record the following observation.

\begin{prp}
    Let $A=(A,m_A,h_A,u,d)\in
    \hind{B_1}{\Alg_{E_1}(\CC)}{B_2}$.
    Then
    \[
        i_1:=u\circ(\id_{B_1}\ot h_A):B_1\to A,
        \qquad
        i_2:=d\circ(h_A\ot\id_{B_2}):B_2\to A
    \]
    are $E_1$-algebra homomorphisms.
\end{prp}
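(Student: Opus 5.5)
We need to show that $i_1 = u\circ(\id_{B_1}\ot h_A)$ is an $E_1$-algebra homomorphism $B_1\to A$; the case of $i_2$ is symmetric. Two things must be checked: unitality, $i_1\circ h_{B_1}=h_A$, and multiplicativity, $m_A\circ(i_1\ot i_1)=i_1\circ m_{B_1}$. Throughout I use the axioms of $(A,m_A,h_A,u,d)$: $(A,m_A,h_A)$ is an $E_1$-algebra, $u$ is a unital left $B_1$-action, and the compatibility square displayed in the definition of an $E_1$ $B_1$-$B_2$-bimodule holds. That square says $m_A\circ(u\ot u)=u\circ(m_{B_1}\ot m_A)\circ(\id\ot\beta_{A,B_1}\ot\id)$.

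Unitality is immediate. We have $i_1\circ h_{B_1}=u\circ(h_{B_1}\ot h_A)$, and unitality of the action $u$ gives $u\circ(h_{B_1}\ot\id_A)=\id_A$. Hence $i_1\circ h_{B_1}=h_A$.

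For multiplicativity, I will precompose the compatibility square with $\id_{B_1}\ot h_A\ot\id_{B_1}\ot h_A:B_1\ot B_1\to B_1\ot A\ot B_1\ot A$.
\begin{itemize}
\item The top-right path becomes $m_A\circ(u\ot u)\circ(\id\ot h_A\ot\id\ot h_A)=m_A\circ(i_1\ot i_1)$.
\item For the left path, I use naturality of the braiding to commute $h_A$ past $B_1$. This gives $\beta_{A,B_1}\circ(h_A\ot\id_{B_1})=\id_{B_1}\ot h_A$, via $\beta_{\bone,B_1}=\id$, which holds because the braiding is compatible with the unit constraints.
\item So the left path becomes $u\circ(m_{B_1}\ot m_A)\circ(\id_{B_1}\ot\id_{B_1}\ot h_A\ot h_A)$. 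Unitality of $m_A$ gives $m_A\circ(h_A\ot h_A)=h_A$, and the path collapses to $u\circ(m_{B_1}\ot h_A)=i_1\circ m_{B_1}$.
\end{itemize}
This is exactly the desired identity.

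For $i_2$, I run the mirror argument using the right-hand square with $d$. I precompose with $h_A\ot\id_{B_2}\ot h_A\ot\id_{B_2}$, use $\beta_{B_2,A}\circ(\id_{B_2}\ot h_A)=h_A\ot\id_{B_2}$, and use unitality of the right action, $d\circ(\id_A\ot h_{B_2})=\id_A$.

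The only step needing care is the braiding-with-unit identity. In the $n=1$ setting this is a standard coherence fact for braided monoidal categories. For higher $n$ I would instead argue through the paper's identification of $E_1$ $B_1$-$B_2$-bimodules with objects of $\hind{B_1}{\Alg_{E_1}(\CC)}{B_2}$. In that picture $u$ is itself an algebra map $B_1\ot A\to A$, as the paper notes right after the definition. Then $i_1$ is the composite of two algebra maps: the unit inclusion $B_1\to B_1\ot A$, $b\mapsto b\ot 1$, which is multiplicative for the braided tensor-product algebra structure described above, and $u$. That gives the result directly.
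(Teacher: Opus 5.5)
Your proof is correct and is essentially the paper's argument. The paper observes that $i_1$ is the composite of the algebra maps $\id_{B_1}\ot h_A$ and $u$, the latter being multiplicative by the compatibility square; this is exactly your final paragraph, and your diagram chase (braiding naturality against $h_A$, plus the unit axioms of $u$ and $m_A$) simply makes both ingredients of that composite explicit.
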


\begin{proof}
    Since $\Alg_{E_1}(\CC)$ is $E_1$-monoidal, the tensor product of two morphisms in
    $\Alg_{E_1}(\CC)$ is again a morphism in $\Alg_{E_1}(\CC)$.
    Hence $\id_{B_1}\ot h_A$ and $h_A\ot\id_{B_2}$ are $E_1$-algebra homomorphisms.
    The action maps $u$ and $d$ are also $E_1$-algebra homomorphisms, and therefore so are their compositions.
\end{proof}

\begin{crl}
    The algebra homomorphisms $i_1$ and $i_2$ induce restriction functors
    \begin{align*}
        i_1^{*,l}: \hind{A}{\CC}{}&\to\hind{B_1}{\CC}{},&
        (M,l)&\mapsto (M,l\circ(i_1\ot\id_M)),
    \end{align*}
    together with
    \[
        i_1^{*,r}:\hind{}{\CC}{A}\to\hind{}{\CC}{B_1},
        \qquad
        i_2^{*,l}:\hind{A}{\CC}{}\to\hind{B_2}{\CC}{},
        \qquad
        i_2^{*,r}:\hind{}{\CC}{A}\to\hind{}{\CC}{B_2}.
    \]
\end{crl}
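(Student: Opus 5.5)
The statement to prove is the last corollary: the four assignments are well-defined functors. I will deduce this from the preceding proposition, which makes $i_1:B_1\to A$ and $i_2:B_2\to A$ homomorphisms of $E_1$-algebras. The general principle is restriction of scalars along an algebra map. I will check it once for $i_1^{*,l}$ and obtain the other three by symmetry.

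\textbf{Objects.} Take $(M,l)\in\hind{A}{\CC}{}$ and set $l':=l\circ(i_1\ot\id_M):B_1\ot M\to M$. The unit axiom follows from the unit axiom for $l$ together with $i_1\circ h_{B_1}=h_A$, which is part of $i_1$ being an $E_1$-algebra homomorphism. For associativity, I compare $l'\circ(\id_{B_1}\ot l')$ with $l'\circ(m_{B_1}\ot\id_M)$. By naturality of the tensor product, the first equals
\[
l\circ(\id_A\ot l)\circ(i_1\ot i_1\ot\id_M).
\]
Associativity of $l$ rewrites this as
\[
l\circ(m_A\ot\id_M)\circ(i_1\ot i_1\ot\id_M),
\]
and multiplicativity $m_A\circ(i_1\ot i_1)=i_1\circ m_{B_1}$ turns it into $l'\circ(m_{B_1}\ot\id_M)$. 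All associators are suppressed by coherence.

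\textbf{Morphisms.} A morphism $f:(M,l)\to(N,l_N)$ of left $A$-modules satisfies $f\circ l=l_N\circ(\id_A\ot f)$. Precomposing with $i_1\ot\id_M$ and using naturality of $\ot$ gives $f\circ l'=l'_N\circ(\id_{B_1}\ot f)$. So $f$ is also a $B_1$-module map, and $i_1^{*,l}$ sends $f$ to $f$ itself. Preservation of identities and composition is then immediate, so $i_1^{*,l}$ is a functor; it commutes strictly with the forgetful functors to $\CC$.

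\textbf{The other three cases.} The right-module versions $i_1^{*,r}$ and $i_2^{*,r}$ use $r\circ(\id_M\ot i_j)$, and the argument is the mirror image, for instance by passing to the monoidal opposite $\CC^{\mathrm{rev}}$. The functor $i_2^{*,l}$ is the identical argument with $i_2$ in place of $i_1$. Note that the braiding is never needed here, because the modules are only one-sided.

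\textbf{Where care is needed.} There is no real obstacle; the only substantive input is that $i_1$ and $i_2$ are unital and multiplicative, which the proposition already supplies. The one point to handle carefully is associativity when $\CC$ is a genuinely higher $n$-category. There the module axioms hold only up to coherent higher cells, and restriction must be taken as pullback along the induced map of operadic module structures. In the $1$-categorical setting (e.g.\ $\Cat^{rex}$) used in this section, the computation above suffices.
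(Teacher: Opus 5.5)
Your proof is correct and takes the same approach as the paper. The paper gives no separate argument: it treats the corollary as immediate restriction of scalars along the $E_1$-algebra homomorphisms $i_1,i_2$ supplied by the preceding proposition. Your checks of the unit axiom (via $i_1\circ h_{B_1}=h_A$), of associativity (via multiplicativity of $i_1$), and of functoriality on morphisms simply spell that out.
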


\begin{notation}
    We write
    \begin{align*}
        u^l_M&:=l\circ(i_1\ot\id_M),&
        u^{r,-}_M&:=r\circ(\id_M\ot i_1)\circ\beta^{-1}_{M,B_1},\\
        d^r_M&:=r\circ(\id_M\ot i_2),&
        d^{l,-}_M&:=l\circ(i_2\ot\id_M)\circ\beta^{-1}_{B_2,M}.
    \end{align*}
\end{notation}

\begin{rmk}
    For instance, $(M,u^{r,-}_M)$ can be obtained from $(M,r)$ by the following sequence of restriction and induction functors \cite[Definition~2.21]{FFRS06}:
    \begin{align*}
        \xymatrix{
            \hind{}{\CC}{A}
                \ar[r]^{\alpha^-}
                \ar[d]_{i_1^*}
            &\hind{A}{\CC}{}
                \ar[d]^{i_1^*}
            \\
            \hind{}{\CC}{B_1}
                \ar[r]_{\alpha^-}
            &\hind{B_1}{\CC}{}
        }
    \end{align*}
\end{rmk}

It is routine to check that
\begin{prp}\label{prp:u^l_d^r_are_compatible}
    The pairs $(u^l_M,d^r_M)$ and $(u^{r,-}_M,d^{l,-}_M)$ define compatible
    $B_1$-$B_2$-module actions.
\end{prp}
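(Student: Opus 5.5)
The plan is to check two things for each pair: that each map is a genuine module action, and that the left and right actions commute. Throughout we use that $i_1:B_1\to A_1$ and $i_2:B_2\to A_1$ are $E_1$-algebra homomorphisms (the preceding proposition). For definiteness we take $M\in\hind{A_1}{\CC}{A_2}$ and use the maps $i_1,i_2$ into the relevant algebra; the diagram \eqref{eq:coherence_of_bibimod} is only needed later, to identify the two pairs.

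\textbf{The pair $(u^l_M,d^r_M)$.} Here $u^l_M$ is the restriction of the left $A_1$-action $l$ along $i_1$, so it is a left $B_1$-action; this is the restriction functor $i_1^{*,l}$ of the corollary above. Likewise $d^r_M$ is the restriction of $r$ along $i_2$, hence a right $B_2$-action. Compatibility amounts to the identity
\[
    d^r_M\circ(u^l_M\ot\id_{B_2}) \;=\; u^l_M\circ(\id_{B_1}\ot d^r_M).
\]
This follows directly from the bimodule axiom $r\circ(l\ot\id)=l\circ(\id\ot r)$ for $M$, precomposed with $i_1\ot\id_M\ot i_2$. So this pair is a formal consequence of restriction of scalars.

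\textbf{The pair $(u^{r,-}_M,d^{l,-}_M)$.} By the remark above, $(M,u^{r,-}_M)$ is obtained from $(M,r)$ by first applying $\alpha^-$, which turns a right module into a left module using $\beta^{-1}$, and then restricting along $i_1$.

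For the action axiom we use naturality of $\beta^{-1}$ together with the hexagon identity
\[
    \beta^{-1}_{M,B_1\ot B_1}=(\id_{B_1}\ot\beta^{-1}_{M,B_1})\circ(\beta^{-1}_{M,B_1}\ot\id_{B_1}).
\]
This rewrites $u^{r,-}_M\circ(m_{B_1}\ot\id)$ as $r\circ(\id_M\ot m_{A}\circ(i_1\ot i_1))$ composed with the braiding. Applying associativity of $r$ then gives $u^{r,-}_M\circ(\id\ot u^{r,-}_M)$. Note that the ordering of the factors in $m$ is reversed relative to $r$; this is exactly why the inverse braiding, rather than $\beta$, is needed so that $u^{r,-}_M$ is a \emph{left} action. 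The argument for $d^{l,-}_M$ is symmetric. Units are handled by naturality of $\beta$ with respect to the unit maps.

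\textbf{Commutation of $u^{r,-}_M$ and $d^{l,-}_M$.} This is the main obstacle. We expand both composites into string diagrams involving $l$, $r$, $i_1$, $i_2$ and two inverse braidings, and compare them.
\begin{itemize}
    \item On one side, the $B_1$-strand passes under $M$ to reach its right end, and the $B_2$-strand passes under $M$ to reach its left end.
    \item The two composites therefore differ by how the $B_1$- and $B_2$-strands cross each other.
    \item Using the bimodule axiom for $(l,r)$, we slide $l$ past $r$.
    \item Using naturality of $\beta^{-1}$ with respect to $l$ and $r$ (the strand passes under the whole action), we reduce to comparing $\beta^{-1}_{B_2,B_1}$ with $\beta_{B_1,B_2}$.
\end{itemize}
This is the point to check carefully, since the two braidings are the same precisely because both strands cross under. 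If the naive sliding instead produces a double braiding $\beta\circ\beta$, that twist has to be absorbed by the $E_1$-$B_1$-$B_2$-bimodule compatibility of $A_1$ in the preceding definition. That compatibility is what makes $i_1$ and $i_2$ commute up to the braiding inside $A_1$, i.e.
\[
    m_{A}\circ(i_1\ot i_2)=m_{A}\circ(i_2\ot i_1)\circ\beta_{B_1,B_2}.
\]
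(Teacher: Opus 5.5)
The first pair is fine: it is restriction along algebra maps plus the bimodule axiom for $(l,r)$. The gap is in the second pair, at exactly the step you flag as ``the point to check carefully''. You never settle it, and both mechanisms you propose are wrong.

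Here is the computation that settles it. Naturality of $\beta^{-1}_{B_2,-}$ (resp.\ $\beta^{-1}_{-,B_1}$) moves the inner action across the crossing. Then $l\circ(\id\ot r)=r\circ(l\ot\id)$ gives, for $k=1,2$,
\begin{align*}
d^{l,-}_M\circ(u^{r,-}_M\ot\id_{B_2})&=r\circ(l\ot\id)\circ(i_2\ot\id_M\ot i_1)\circ\tau_1,\\
u^{r,-}_M\circ(\id_{B_1}\ot d^{l,-}_M)&=r\circ(l\ot\id)\circ(i_2\ot\id_M\ot i_1)\circ\tau_2,\\
\tau_1&=(\id_{B_2}\ot\beta^{-1}_{M,B_1})\circ(\beta^{-1}_{B_2,B_1}\ot\id_M)\circ(\id_{B_1}\ot\beta^{-1}_{B_2,M}),\\
\tau_2&=(\beta^{-1}_{B_2,M}\ot\id_{B_1})\circ(\id_M\ot\beta^{-1}_{B_2,B_1})\circ(\beta^{-1}_{M,B_1}\ot\id_{B_2}).
\end{align*}
The braids $\tau_1,\tau_2\colon B_1\ot M\ot B_2\to B_2\ot M\ot B_1$ are the inverses of the two sides of the Yang--Baxter equation for $(B_2,M,B_1)$, so $\tau_1=\tau_2$.

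Consequences of this:
\begin{itemize}
\item The $B_1$--$B_2$ crossing is the same map $\beta^{-1}_{B_2,B_1}$ on both sides. No comparison of $\beta^{-1}_{B_2,B_1}$ with $\beta_{B_1,B_2}$ ever arises; those two maps differ by a double braiding.
\item No relation between $i_1$ and $i_2$ is used. This is as it must be: in the bi-bimodule setting $i_1$ here lands in $A_2$ and $i_2$ in $A_1$.
\item Your fallback would not have rescued the argument. Multiplicativity of $d$ gives $m_A\circ(i_2\ot i_1)=m_A\circ(i_1\ot i_2)\circ\beta_{B_2,B_1}$, not the identity with $\beta_{B_1,B_2}$ that you wrote. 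A leftover double braiding cannot be absorbed by it; that is precisely the mutual-locality condition which the paper's remark on $u^{r,+}_M$ says is not automatic.
\end{itemize}

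There is a second, smaller gap in the action axiom for $u^{r,-}_M$. Naturality and the hexagon identity give the two expressions below. Note that the factors in your hexagon identity are composed in the wrong order; it should read $\beta^{-1}_{M,B_1\ot B_1}=(\beta^{-1}_{M,B_1}\ot\id_{B_1})\circ(\id_{B_1}\ot\beta^{-1}_{M,B_1})$.
\begin{align*}
u^{r,-}_M\circ(m_{B_1}\ot\id_M)&=r\circ\bigl(\id_M\ot m_A\circ(i_1\ot i_1)\bigr)\circ\beta^{-1}_{M,B_1\ot B_1},\\
u^{r,-}_M\circ(\id_{B_1}\ot u^{r,-}_M)&=r\circ\bigl(\id_M\ot m_A\circ(i_1\ot i_1)\circ\beta^{-1}_{B_1,B_1}\bigr)\circ\beta^{-1}_{M,B_1\ot B_1}.
\end{align*}
So associativity of $r$ alone does not finish the argument. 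You also need that $i_1$ is an algebra map, together with the $E_2$-commutativity $m_{B_1}\circ\beta^{-1}_{B_1,B_1}=m_{B_1}$.

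The order reversal you noticed is corrected by this commutativity, not by choosing $\beta^{-1}$ over $\beta$. With $\beta$, the same argument (using $m_{B_1}\circ\beta_{B_1,B_1}=m_{B_1}$) also yields a left action. The sign of the crossing matters only for the mutual compatibility of the two actions, which is why it is chosen uniformly in this pair. The same remarks apply to $d^{l,-}_M$ with $B_2$.
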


\begin{prp}\label{prp:u^r-_d^r_are_compatible}
    The pairs $(u^{r,-}_M,d^r_M)$ and $(u^l_M,d^{l,-}_M)$ define compatible
    $B_1$-$B_2$-module actions.
\end{prp}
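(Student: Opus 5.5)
We need to show that $(u^{r,-}_M,d^r_M)$ is a compatible pair of $B_1$-$B_2$-actions, and likewise $(u^l_M,d^{l,-}_M)$. Here \emph{compatible} means four things. First, $u^{r,-}_M$ is a left $B_1$-action. Second, $d^r_M$ is a right $B_2$-action. Third, the two actions commute, i.e.
\[
d^r_M\circ(u^{r,-}_M\ot\id_{B_2})=u^{r,-}_M\circ(\id_{B_1}\ot d^r_M).
\]
Fourth, both actions are unital. The second pair is mirror-symmetric to the first: exchange left and right, use $\beta^{-1}_{B_2,M}$ in place of $\beta^{-1}_{M,B_1}$, and exchange $i_1$ with $i_2$. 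So I will treat only the first pair and then invoke the symmetry.

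\textbf{Single actions.} $d^r_M$ is the restriction of the right $A_2$-action $r$ along the algebra map $i_2$, so it is a right $B_2$-action by the restriction functor $i_2^{*,r}$. Strictly, the right-hand $A$ here is $A_2$ and the relevant map is $i_2$ for $A_2$.

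$u^{r,-}_M$ is the right $A_2$-action restricted along $i_1:B_1\to A_2$ and then converted into a left action via $\beta^{-1}$. It is a left action because the functor $\alpha^-$ of the Remark, which sends a right module $(M,\rho)$ to $(M,\rho\circ\beta^{-1}_{M,-})$, is well defined. Concretely, associativity follows from naturality of $\beta^{-1}$ together with the hexagon identity
\[
\beta^{-1}_{M,B_1\ot B_1}=(\id\ot\beta^{-1}_{M,B_1})(\beta^{-1}_{M,B_1}\ot\id).
\]
A braided monoidal $1$-category suffices for this. Unitality is immediate because $i_1$ and $i_2$ are unital.

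\textbf{Commutation.} This is the real content and the main obstacle. Expanding both sides of the required identity, each is a composite of
\[
r\circ(r\ot\id)\colon M\ot A_2\ot A_2\to M
\]
with maps $B_1\ot M\ot B_2\to M\ot A_2\ot A_2$. These maps place $i_1(B_1)$ and $i_2(B_2)$ in the two $A_2$ slots, in opposite orders, and their braidings of $B_1$ past $M$ differ. Both sides reduce to $r\circ(\id_M\ot m_{A_2})$, so I must check two things:
\begin{enumerate}
\item The braiding of $B_1$ past $M$ is the same on both sides. This holds by naturality of $\beta^{-1}$ and the hexagon axiom.
\item The two products agree:
\[
m_{A_2}(i_1\ot i_2)=m_{A_2}(i_2\ot i_1)\circ\beta^{\pm}_{B_1,B_2},
\]
with the specific braiding dictated by the path.
\end{enumerate}

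The second point is the key identity. To prove it, I would specialize the compatibility square in the definition of an $E_1$ $B_1$-$B_2$-bimodule. Insert units $h_{A}$ on $A$, take $u\ot u$ applied to $B_1\ot 1\ot 1\ot A$, and compare with $d\ot d$ applied to $A\ot 1\ot 1\ot B_2$. Together with the left/right module commutativity of $A\in\hind{B_1}{\CC}{B_2}$, this shows that the images of $B_1$ and $B_2$ commute in $A$ up to $\beta$. I expect this to be the delicate step, because of sign conventions: the braiding obtained must match $\beta^{-1}$ rather than $\beta$. The inverse braiding in the definition of $u^{r,-}_M$ was chosen exactly so that it cancels the braiding arising from the $E_1$-bimodule axiom.

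If I find that the braiding comes out as $\beta$ instead, I would instead use the $B_1$-$B_2$ interchange law
\[
u\circ(\id\ot d)=d\circ(u\ot\id)
\]
applied to units, which gives
\[
m_A(i_1\ot i_2)=u\circ(\id\ot i_2)
\]
directly, without any braiding, and then track the leftover braiding of $M$ alone. In either case I conclude by drawing the string diagram and verifying that it is isotopic. For the second pair I invoke the mirror symmetry described above.
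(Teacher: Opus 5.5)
Your reduction is the same as the paper's. After $r$-associativity, both sides factor through $r\circ(\id_M\otimes m_A)$ (here $A=A_2$). Naturality of $\beta^{-1}_{-,B_1}$ and the hexagon axiom then give both sides the same factor $\beta^{-1}_{M,B_1}\otimes\id_{B_2}$. So everything hinges on
\[
m_A\circ(i_1\otimes i_2)=m_A\circ(i_2\otimes i_1)\circ\beta^{-1}_{B_2,B_1}\colon B_1\otimes B_2\longrightarrow A ,
\]
and the crossing here is forced; it is not a $\pm$ to be settled later. This is the only step with real content, and your proposal does not prove it.

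The unit insertions you describe, $u\otimes u$ on $B_1\otimes 1\otimes 1\otimes A$ and $d\otimes d$ on $A\otimes 1\otimes 1\otimes B_2$, only return $u=m_A\circ(i_1\otimes\id_A)$ and $d=m_A\circ(\id_A\otimes i_2)$. The braiding in the compatibility square then acts on two units and disappears. The interchange law $u\circ(\id\otimes d)=d\circ(u\otimes\id)$ evaluated at units is just associativity of $m_A$, i.e.\ the tautology $m_A\circ(i_1\otimes i_2)=m_A\circ(i_1\otimes i_2)$. It can never exchange the order of $i_1$ and $i_2$, and neither can an isotopy of string diagrams. Your fallback is also not viable in principle. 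Both sides already carry the same $\beta^{-1}_{M,B_1}$, so there is no leftover braiding of $M$ to absorb a wrong crossing. With a positive crossing the requirement would become the mutual-locality condition of the Remark on $u^{r,+}_M$, which fails in general.

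The missing idea is to insert the units in the crossed position, so that a $B_2$-strand is forced past an $A$-strand. Precompose $d\circ m_{A\otimes B_2}=m_A\circ(d\otimes d)$ with $h_A\otimes\id_{B_2}\otimes\id_A\otimes h_{B_2}\colon B_2\otimes A\to A\otimes B_2\otimes A\otimes B_2$. The left side becomes $d\circ\beta_{B_2,A}$ and the right side becomes $m_A\circ(i_2\otimes\id_A)$. Using $d=m_A\circ(\id_A\otimes i_2)$, this gives
\[
m_A\circ(i_2\otimes\id_A)=m_A\circ(\id_A\otimes i_2)\circ\beta_{B_2,A}.
\]
Precomposing with $\id_{B_2}\otimes i_1$ and using naturality gives $m_A\circ(i_2\otimes i_1)=m_A\circ(i_1\otimes i_2)\circ\beta_{B_2,B_1}$, which is exactly the displayed identity. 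This is the paper's proof. The crossed specialization of the $u$-square, via $h_{B_1}\otimes\id_A\otimes\id_{B_1}\otimes h_A$, yields the same relation. The same identity for $A_1$ settles the mirror pair $(u^l_M,d^{l,-}_M)$.

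A smaller point: associativity of $u^{r,-}_M$ does not follow from naturality and the hexagon alone. It also uses $m_{B_1}\circ\beta^{-1}_{B_1,B_1}=m_{B_1}$, i.e.\ commutativity of $B_1$, which is what makes the functor $\alpha^-$ you cite well defined.
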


\begin{proof}
    We prove the statement for the first pair $(u^{r,-}_M,d^r_M)$.
    The multiplicativity of $d:A\ot B_2\to A$, after inserting the unit morphisms, gives
    \begin{align*}
        m_A\circ(i_2\ot\id_A)
        =
        m_A\circ(\id_A\ot i_2)\circ\beta_{B_2,A}.
    \end{align*}
    Precomposing this equality with $\id_{B_2}\ot i_1$ gives
    \begin{align*}
        m_A\circ(i_2\ot i_1)
        =
        m_A\circ(i_1\ot i_2)\circ\beta_{B_2,B_1},
    \end{align*}
    or equivalently
    \begin{align*}
        m_A\circ(i_1\ot i_2)
        =
        m_A\circ(i_2\ot i_1)\circ\beta^{-1}_{B_2,B_1}.
    \end{align*}
    We can therefore compute, as morphisms $B_1\ot M\ot B_2\to M$,
    \begin{align*}
        d^r_M\circ(u_M^{r,-}\ot\id_{B_2})
        &=
        r\circ(r\ot\id_A)
        \circ(\id_M\ot i_1\ot i_2)
        \circ(\beta^{-1}_{M,B_1}\ot\id_{B_2})
        &&\text{by definition}\\
        &=
        r\circ(\id_M\ot m_A)
        \circ(\id_M\ot i_1\ot i_2)
        \circ(\beta^{-1}_{M,B_1}\ot\id_{B_2})
        &&\text{by $r$-associativity}\\
        &=
        r\circ(\id_M\ot m_A)
        \circ(\id_M\ot i_2\ot i_1)
        \circ(\id_M\ot\beta^{-1}_{B_1,B_2})
        \circ(\beta^{-1}_{M,B_1}\ot\id_{B_2})
        &&\text{by negative commutation}\\
        &=
        r\circ(\id_M\ot m_A)
        \circ(\id_M\ot i_2\ot i_1)
        \circ\beta^{-1}_{M\ot B_2,B_1}
        &&\text{by the hexagon identity}\\
        &=
        r\circ(\id_M\ot i_1)
        \circ(d^r_M\ot\id_{B_1})
        \circ\beta^{-1}_{M\ot B_2,B_1}
        &&\text{by $r$-associativity}\\
        &=
        r\circ(\id_M\ot i_1)
        \circ\beta^{-1}_{M,B_1}
        \circ(\id_{B_1}\ot d^r_M)
        &&\text{by naturality}\\
        &=
        u_M^{r,-}\circ(\id_{B_1}\ot d^r_M).
    \end{align*}
    Thus $(M,u_M^{r,-},d^r_M)$ is a $B_1$-$B_2$-bimodule.
\end{proof}

\begin{rmk}
    For comparison, consider
    \[
        u^{r,+}_M:=r\circ(\id_M\ot i_1)\circ\beta_{B_1,M}.
    \]
    The analogous positive-crossing computation shows that
    $u^{r,+}_M$ and $d^r_M$ are compatible if and only if
    \begin{align*}
        m_A\circ(i_1\ot i_2)
        =
        m_A\circ(i_1\ot i_2)
        \circ\beta_{B_2,B_1}\circ\beta_{B_1,B_2}.
    \end{align*}
    This is precisely the mutual-locality condition for the two images.
    It is automatic in a symmetric monoidal category, but not in a general braided category.
    Consequently, over a general braided $\CC$, the $B_1$-action induced from a right
    $A$-module in the defining bi-bimodule coherence diagram must use the negative crossing
    $\beta^{-1}_{M,A}$.
    If one instead uses the positive crossing $\beta_{A,M}$, the mutual-locality condition above must be imposed as an additional hypothesis.
\end{rmk}

We now package these objects and morphisms into a category.

\begin{dfn}
    Let $M_1$ and $M_2$ be two $E_0$
    $B_1$-$B_2$-bi-$A_1$-$A_2$-bimodules.
    An {\bf $E_0$ $B_1$-$B_2$-bi-$A_1$-$A_2$-bimodule homomorphism}
    from $M_1$ to $M_2$ is an $A_1$-$A_2$-bimodule homomorphism.
\end{dfn}

\begin{dfn}\label{dfn:cat_of_bibimod}
    The category
    \[
        \hind{A_1}{\vind{B_1}{\Alg_{E_0}(\CC)}{B_2}}{A_2},
    \]
    or simply $\hind{A_1}{\vind{B_1}{\CC}{B_2}}{A_2}$, is defined as follows:
    \begin{itemize}
        \item its objects are $E_0$ $B_1$-$B_2$-bi-$A_1$-$A_2$-bimodules;
        \item its morphisms are $B_1$-$B_2$-bi-$A_1$-$A_2$-bimodule homomorphisms.
    \end{itemize}
\end{dfn}

The category $\hind{A_1}{\vind{B_1}{\CC}{B_2}}{A_2}$ is therefore a full subcategory of
$\hind{A_1}{\CC}{A_2}$.
The underlying $A_1$-$A_2$-bimodule is obtained canonically by restricting along the unit maps
$h_{B_1}:1\to B_1$ and $h_{B_2}:1\to B_2$.
More generally, 
\begin{prp}
    for any $E_2$-algebra homomorphisms $f_1:B_1\to B_3$ and $f_2:B_2\to B_4$, there is a fully faithful functor
    \begin{align*}
        (f_1,f_2)^*:
        \hind{A_1}{\vind{B_3}{\CC}{B_4}}{A_2}
        \longrightarrow
        \hind{A_1^*}{\vind{B_1}{\CC}{B_2}}{A_2^*}
    \end{align*}
    where $A_i^*$ is the image of $A_i$ under the pull-back functor $(f_1,f_2)^*:\hind{B_3}{\Alg_{E_1}(\CC)}{B_4}\to \hind{B_1}{\Alg_{E_1}(\CC)}{B_2}$.
\end{prp}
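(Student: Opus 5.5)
The plan is to build the pull-back on $E_1$-algebras first, then define $(f_1,f_2)^*$ on bi-bimodules as the identity on underlying $A_1$-$A_2$-bimodules, and finally deduce full faithfulness from the full-subcategory description of Definition~\ref{dfn:cat_of_bibimod}.

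First, for $A=(A,m_A,h_A,u,d)\in\hind{B_3}{\Alg_{E_1}(\CC)}{B_4}$, set
\[
A^*:=(A,m_A,h_A,\;u\circ(f_1\ot\id_A),\;d\circ(\id_A\ot f_2)).
\]
To check the compatibility diagram for an $E_1$ $B_1$-$B_2$-bimodule, precompose the diagram for $A$ with $f_1\ot\id\ot f_1\ot\id$ (respectively $\id\ot f_2\ot\id\ot f_2$). The check uses three facts:
\begin{itemize}
\item naturality of the braiding, $\beta_{A,B_3}\circ(\id_A\ot f_1)=(f_1\ot\id_A)\circ\beta_{A,B_1}$;
\item multiplicativity of $f_i$, $m_{B_3}\circ(f_1\ot f_1)=f_1\circ m_{B_1}$;
\item unitality of $f_i$, so that units are preserved.
\end{itemize}
Functoriality on algebra bimodule maps is clear, since the underlying morphism is unchanged. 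The induced maps satisfy $i_1^{A^*}=i_1^{A}\circ f_1$ and $i_2^{A^*}=i_2^{A}\circ f_2$.

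Next, define $(f_1,f_2)^*(M,l,r):=(M,l,r)$, regarded as an $A_1^*$-$A_2^*$-bimodule. This makes sense because $A_i^*$ has the same underlying algebra as $A_i$, so $\hind{A_1^*}{\CC}{A_2^*}=\hind{A_1}{\CC}{A_2}$. It remains to verify the coherence diagram \eqref{eq:coherence_of_bibimod} for $B_1,B_2$.

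Each route in that diagram from $B_1\ot M$ to $M$ factors through the maps $u^l_M$, $u^{r,-}_M$ (and their $A_2$-analogues), built from $i_1^{A_j^*}=i_1^{A_j}\circ f_1$. So the $B_1$-diagram for $M$ over $A_j^*$ is exactly the $B_3$-diagram for $M$ over $A_j$ precomposed with $f_1\ot\id_M$. The only nontrivial point is to move $f_1$ past $\beta^{-1}_{M,A_2}$ and $\beta^{-1}_{A_1,M}$, which naturality of the braiding handles. The lower half of the diagram is handled the same way, precomposing with $\id_M\ot f_2$. Hence equality of the $B_3$-$B_4$ composites implies equality of the $B_1$-$B_2$ composites.

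For morphisms, a bi-bimodule homomorphism is by definition just an $A_1$-$A_2$-bimodule homomorphism, and these are literally the same maps before and after pulling back. So $(f_1,f_2)^*$ acts as the identity on hom-sets. Both source and target are full subcategories of the same category $\hind{A_1}{\CC}{A_2}$, and the functor is the resulting inclusion, hence fully faithful.

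The main obstacle is bookkeeping in the coherence check: one must confirm that the inverse braidings in \eqref{eq:coherence_of_bibimod} are compatible with precomposition by $f_1$ and $f_2$ on the correct tensor factor. I would organise this by rewriting \eqref{eq:coherence_of_bibimod} as the identities
\[
u^{l}_M=u^{r,-}_M \quad\text{and}\quad d^{r}_M=d^{l,-}_M
\]
(each computed over $A_1$ and $A_2$ as appropriate). Then $u^{l}_M$ and $u^{r,-}_M$ over $A_j^*$ equal the corresponding maps over $A_j$ precomposed with $f_1\ot\id_M$, and likewise for $d$ with $\id_M\ot f_2$. Both observations follow in one line from $i_1^{A^*}=i_1^{A}\circ f_1$ together with naturality of $\beta^{-1}$.
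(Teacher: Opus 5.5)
Your proposal is correct and matches the reasoning the paper relies on. The paper gives no written proof: it only notes that bi-bimodule categories are full subcategories of $\hind{A_1}{\CC}{A_2}$ cut out by the coherence condition, so restriction is the identity on underlying bimodules and on morphisms. Your check supplies the missing verification: the identities $i_1^{A^*}=i_1^{A}\circ f_1$ and $i_2^{A^*}=i_2^{A}\circ f_2$, together with naturality of the braiding, turn the $B_3$--$B_4$ coherence identities into the $B_1$--$B_2$ ones after precomposing with $f_1\otimes\id_M$ and $\id_M\otimes f_2$.
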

Equivalently, $\hind{A_1}{\vind{B_3}{\CC}{B_4}}{A_2}$ is a full subcategory of $\hind{A_1^*}{\vind{B_1}{\CC}{B_2}}{A_2^*}$.

A useful special case recovers the usual notion of a local module.
Suppose $B_1=B_2=B$ and choose $A_1=A_2=B$.
Then the coherence diagram \ref{eq:coherence_of_bibimod} reduces to
\begin{align*}
    \xymatrix{
         &B\ot M \ar[dl]_{\id} \ar[dr]^{\beta^{-1}_{M,B}} & \\
        B\ot M \ar[r]_l &M &M\ot B \ar[l]_{r}\\
         &M\ot B \ar[ul]^{\beta^{-1}_{B,M}} \ar[ur]_{\id} &
    }
\end{align*}
and an $E_0$ $B$-$B$-bi-$B$-$B$-bimodule is precisely a local $B$-module
in the usual sense \cite[Definition~3.15 and Proposition~3.17]{FFRS06}.

\begin{crl}
    There is an equivalence of categories
    \[
        \hind{B}{\vind{B}{\CC}{B}}{B}
        \simeq
        \hind{}{\CC}{B}^{loc}.
    \]
\end{crl}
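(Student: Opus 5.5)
The plan is to write down explicit mutually inverse functors between $\hind{B}{\vind{B}{\CC}{B}}{B}$ and $\hind{}{\CC}{B}^{loc}$. Both functors will be the identity on underlying objects and morphisms, so the work consists of matching the structure maps. Recall that $\hind{}{\CC}{B}^{loc}$ is the full subcategory of right $B$-modules $(M,r)$ satisfying the locality condition
\begin{align*}
r\circ\beta_{B,M}\circ\beta_{M,B}=r,
\end{align*}
as in \cite[Definition~3.15]{FFRS06}. Since $\hind{B}{\vind{B}{\CC}{B}}{B}$ is by definition a full subcategory of $\hind{B}{\CC}{B}$, it suffices to exhibit a bijection on objects compatible with morphisms.

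\textbf{Step 1 (forgetful direction).} Given $(M,l,r)$ in the bi-bimodule category, I will apply the reduced coherence diagram displayed just before the statement. Its upper triangle says $l=r\circ\beta^{-1}_{M,B}$, so the left action is determined by the right action. Its lower triangle says $r=l\circ\beta^{-1}_{B,M}$. Substituting the first identity into the second gives
\begin{align*}
r=r\circ\beta^{-1}_{M,B}\circ\beta^{-1}_{B,M},
\end{align*}
which is equivalent to $r\circ\beta_{B,M}\circ\beta_{M,B}=r$. Hence $(M,r)$ is local. A bimodule homomorphism is in particular a right module homomorphism, so this assignment is a functor $U$.

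\textbf{Step 2 (inverse direction).} Given a local module $(M,r)$, I will set $l:=r\circ\beta^{-1}_{M,B}$. I then need three facts.
\begin{enumerate}
\item $l$ is an associative and unital left action. This uses the hexagon identity to rewrite $\beta^{-1}_{M,B\ot B}$, commutativity of $B$ (that is, $m\circ\beta^{\pm1}=m$, which lets us reorder the two $B$ factors), and associativity of $r$.
\item $l$ commutes with $r$, so that $(M,l,r)$ is a bimodule. This is essentially the computation in Proposition~\ref{prp:u^r-_d^r_are_compatible} with $A=B$ and $i_1=i_2=\id$, where mutual commutativity holds because $B$ is $E_2$.
\item Both triangles of the reduced coherence diagram hold. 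The upper triangle holds by construction, and the lower triangle is exactly the locality identity rearranged.
\end{enumerate}
A right module map $f$ commutes with $\beta^{-1}$ by naturality, so it is automatically a left module map for $l$. This makes the assignment a functor $V$.

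\textbf{Step 3 (inverse check).} By construction $UV=\id$. For $VU=\id$, note that the left action of an object of the bi-bimodule category is forced by the upper triangle, so $VU(M,l,r)=(M,l,r)$. Thus $U$ and $V$ are inverse isomorphisms of categories, and in particular an equivalence.

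The main obstacle I expect is bookkeeping of conventions rather than any conceptual difficulty. First, one must confirm that the reduction of diagram \eqref{eq:coherence_of_bibimod} at $A_1=A_2=B$ with $u=d=m_B$ really yields the stated triangles. Here $i_1=i_2=\id_B$ because $h_A=h_B$ is the unit. Second, one must make sure the composite of the two inverse braidings matches the FFRS locality convention, which may be stated with $\beta_{B,M}\circ\beta_{M,B}$ or with the inverse double braiding. These two forms are equivalent because the condition is invertible, but I would check this carefully.
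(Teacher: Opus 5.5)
Your proposal is correct and follows essentially the paper's route. The paper likewise specializes \eqref{eq:coherence_of_bibimod} to $A_1=A_2=B$ with $u=d=m_B$, obtains the same two triangles $l=r\circ\beta^{-1}_{M,B}$ and $r=l\circ\beta^{-1}_{B,M}$, and identifies the result with local modules; the only difference is that it delegates that identification to \cite[Definition~3.15 and Proposition~3.17]{FFRS06}. You instead verify directly that $r\circ\beta^{-1}_{M,B}$ is a compatible left action and that the lower triangle is exactly the locality condition, which gives an isomorphism (not merely an equivalence) of categories.
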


\begin{dfn}
    An $E_0$ $B$-$B$-bi-$B$-$B$-bimodule is called an {\bf $E_2$-local $E_0$ $B$-module}.
\end{dfn}

\vspace{1.5em}

We are now ready to assemble these data into the 2-Morita 3-category.

\begin{dfn}\label{dfn: 2-Morita_cat}
    Let $\CC$ be an $E_2$-monoidal $1$-category.
    The 3 category $\Mrt_{E_2}(\CC)$ consists of the following data:
    \begin{itemize}
        \item the objects are $E_2$-algebras in $\CC$, i.e. objects of
        $\Alg_{E_2}(\CC)$;

        \item a $1$-morphism from $B_1$ to $B_2$ is an $E_1$
        $B_1$-$B_2$-bimodule, i.e. an object of
        $\hind{B_1}{\Alg_{E_1}(\CC)}{B_2}$;

        \item a $2$-morphism between two $E_1$
        $B_1$-$B_2$-bimodules $A_1$ and $A_2$ is an $E_0$
        $B_1$-$B_2$-bi-$A_1$-$A_2$-bimodule, i.e. an object of
        $\hind{A_1}{\vind{B_1}{\CC}{B_2}}{A_2}$;

        \item a $3$-morphism between two such $E_0$ bi-bimodules is an
        $A_1$-$A_2$-bimodule homomorphism;

        \item the identity $1$-morphism on an $E_2$-algebra $B$ is $B$
        itself;

        \item the identity $2$-morphism on an $E_1$
        $B_1$-$B_2$-bimodule $A$ is $A$ itself;

        \item the identity $3$-morphisms are the ordinary identity
        morphisms;

        \item the composition of $1$-morphisms is given by the relative
        tensor product:
        \begin{align*}
            \hind{B_1}{\Alg_{E_1}(\CC)}{B_2}
            \times
            \hind{B_2}{\Alg_{E_1}(\CC)}{B_3}
            &\longrightarrow
            \hind{B_1}{\Alg_{E_1}(\CC)}{B_3},\\
            (A,A')
            &\longmapsto
            \vfus{A}{\otc{B_2}}{A'};
        \end{align*}

        \item the composition of $2$-morphisms along $B_2$ is likewise
        given by the relative tensor product. For an $E_0$
        $B_1$-$B_2$-bi-$A_1$-$A_2$-bimodule $M$ and an $E_0$
        $B_2$-$B_3$-bi-$A_3$-$A_4$-bimodule $M'$,
        \begin{align*}
            \vfus{\hind{A_1}{\vind{B_1}{\CC}{B_2}}{A_2}}
            {\times}
            {\hind{A_3}{\vind{B_2}{\CC}{B_3}}{A_4}}
            &\longrightarrow
            \hind{\vfus{A_1}{\otc{B_2}}{A_3}}
            {\vind{B_1}{\CC}{B_3}}
            {\vfus{A_2}{\otc{B_2}}{A_4}},\\
            (M,M')
            &\longmapsto
            \vfus{M}{\otc{B_2}}{M'};
        \end{align*}

        \item the composition of $2$-morphisms along $A_2$ is also given
        by the relative tensor product. For an $E_0$
        $B_1$-$B_2$-bi-$A_1$-$A_2$-bimodule $M_1$ and an $E_0$
        $B_1$-$B_2$-bi-$A_2$-$A_3$-bimodule $M_2$,
        \begin{align*}
            \hind{A_1}{\vind{B_1}{\CC}{B_2}}{A_2}
            \times
            \hind{A_2}{\vind{B_1}{\CC}{B_2}}{A_3}
            &\longrightarrow
            \hind{A_1}{\vind{B_1}{\CC}{B_2}}{A_3},\\
            (M_1,M_2)
            &\longmapsto
            M_1\otc{A_2}M_2;
        \end{align*}

        \item the corresponding compositions of $3$-morphisms are induced
        by the relative tensor products above, while the ordinary
        composition of $3$-morphisms is given by composition in $\CC$.
    \end{itemize}
\end{dfn}

The two compositions of 2-morphisms satisfy the expected interchange compatibility, illustrated in Figure \ref{fig:Morita_E2}.

\begin{figure}[H]
    \centering
    \newcommand{\diagramotc}[1]{%
        \mathbin{\vcenter{\hbox{%
            \tikz[baseline=(diagramotc.base)]{%
                \node[
                    draw,
                    circle,
                    line width=0.35pt,
                    inner sep=0.15pt,
                    minimum size=1.15em,
                    font=\tiny
                ] (diagramotc) {$#1$};%
            }%
        }}}%
    }
    \subcaptionbox{}{\resizebox{0.265\textwidth}{!}{%
        \begin{tikzpicture}[
            line cap=round,
            line join=round,
            every node/.style={font=\scriptsize,xshift=0.08cm}]
            \path[use as bounding box] (-2.30,-1.65) rectangle (2.30,1.65);
            \filldraw[fill=gray!25,draw=none] (-2,-1.5) rectangle (2,1.5);
            \draw[line width=0.6pt] (-2,0.5) -- (2,0.5);
            \draw[line width=0.6pt] (-2,-0.5) -- (2,-0.5);
            \fill[black] (-1,0.5) circle (0.055);
            \fill[black] (1,0.5) circle (0.055);
            \fill[black] (-1,-0.5) circle (0.055);
            \fill[black] (1,-0.5) circle (0.055);

            \node at (0,1) {$B_1$};
            \node at (0,0) {$B_2$};
            \node at (0,-1) {$B_3$};

            \node at (-1.5,0.75) {$A_1$};
            \node at (0,0.75) {$A_2$};
            \node at (1.5,0.75) {$A_3$};
            \node at (-1.5,-0.25) {$A_4$};
            \node at (0,-0.25) {$A_5$};
            \node at (1.5,-0.25) {$A_6$};

            \node at (-1,0.2) {$M_1$};
            \node at (1,0.2) {$M_2$};
            \node at (-1,-0.8) {$M_3$};
            \node at (1,-0.8) {$M_4$};
        \end{tikzpicture}}}
    \hspace{0.07\textwidth}
    \subcaptionbox{}{\resizebox{0.265\textwidth}{!}{%
        \begin{tikzpicture}[
            line cap=round,
            line join=round,
            every node/.style={font=\scriptsize,xshift=0.08cm}]
            \path[use as bounding box] (-2.30,-1.65) rectangle (2.30,1.65);
            \filldraw[fill=gray!25,draw=none] (-2,-1.5) rectangle (2,1.5);
            \draw[line width=0.6pt] (-2,0) -- (2,0);
            \fill[black] (-1,0) circle (0.055);
            \fill[black] (1,0) circle (0.055);

            \node at (0,0.76) {$B_1$};
            \node at (0,-0.90) {$B_3$};
            \node at (-2.25,0)
                {$\vfus{A_1}{\diagramotc{B_2}}{A_4}$};
            \node at (0,0.20)
                {$\vfus{A_2}{\diagramotc{B_2}}{A_5}$};
            \node at (2.08,0)
                {$\vfus{A_3}{\diagramotc{B_2}}{A_6}$};

            \node at (-1,-0.52)
                {$\vfus{M_1}{\diagramotc{B_2}}{M_3}$};
            \node at (1,-0.52)
                {$\vfus{M_2}{\diagramotc{B_2}}{M_4}$};
        \end{tikzpicture}}}\\[0.45em]
    \subcaptionbox{}{\resizebox{0.265\textwidth}{!}{%
        \begin{tikzpicture}[
            line cap=round,
            line join=round,
            every node/.style={font=\scriptsize,xshift=0.08cm}]
            \path[use as bounding box] (-2.30,-1.65) rectangle (2.30,1.65);
            \filldraw[fill=gray!25,draw=none] (-2,-1.5) rectangle (2,1.5);
            \draw[line width=0.6pt] (-2,0.5) -- (2,0.5);
            \draw[line width=0.6pt] (-2,-0.5) -- (2,-0.5);
            \fill[black] (0,0.5) circle (0.055);
            \fill[black] (0,-0.5) circle (0.055);

            \node at (0,1) {$B_1$};
            \node at (1.32,0) {$B_2$};
            \node at (0,-1.18) {$B_3$};

            \node at (-1.5,0.75) {$A_1$};
            \node at (1.5,0.75) {$A_3$};
            \node at (-1.5,-0.25) {$A_4$};
            \node at (1.5,-0.25) {$A_6$};

            \node[font=\small] at (0,0.2)
                {$M_1\diagramotc{A_2}M_2$};
            \node[font=\small] at (0,-0.8)
                {$M_3\diagramotc{A_5}M_4$};
        \end{tikzpicture}}}
    \hspace{0.07\textwidth}
    \subcaptionbox{}{\resizebox{0.265\textwidth}{!}{%
        \begin{tikzpicture}[
            line cap=round,
            line join=round,
            every node/.style={font=\scriptsize,xshift=0.08cm}]
            \path[use as bounding box] (-2.30,-1.65) rectangle (2.30,1.65);
            \filldraw[fill=gray!25,draw=none] (-2,-1.5) rectangle (2,1.5);
            \draw[line width=0.6pt] (-2,0) -- (2,0);
            \fill[black] (0,0) circle (0.055);

            \node at (0,0.70) {$B_1$};
            \node at (1.48,-0.78) {$B_3$};
            \node at (-2.25,0)
                {$\vfus{A_1}{\diagramotc{B_2}}{A_4}$};
            \node at (2.08,0)
                {$\vfus{A_3}{\diagramotc{B_2}}{A_6}$};
            \node[font=\small] at (0,-0.7)
                {$\vfus{M_1\diagramotc{A_2}M_2}{\diagramotc{B_2}}
                    {M_3\diagramotc{A_5}M_4}$};
        \end{tikzpicture}}}
    \caption[Compatibility between the two compositions of 2-morphisms]{%
        Compatibility between the two compositions of 2-morphisms requires
        $\displaystyle
        \vfus{M_1\otc{A_2}M_2}{\otc{B_2}}{M_3\otc{A_5}M_4}
        \cong
        \vfus{M_1}{\otc{B_2}}{M_3}
        \otd\limits_{\vfus{A_2}{\otc{B_2}}{A_5}}
        \vfus{M_2}{\otc{B_2}}{M_4}$.
    }
    \label{fig:Morita_E2}
\end{figure}

It remains to verify that the relative tensor products used above preserve the required algebra and bi-bimodule structures.

\begin{prp}
    The 1-composition functor
    \begin{align*}
        \hind{B_1}{\Alg_{E_1}(\CC)}{B_2}
        \times
        \hind{B_2}{\Alg_{E_1}(\CC)}{B_3}
        &\to
        \hind{B_1}{\Alg_{E_1}(\CC)}{B_3},\\
        ((A_i,m_{A_i},h_{A_i}),(A_j,m_{A_j},h_{A_j}))
        &\mapsto
        (\vfus{A_i}{\otc{B_2}}{A_j},,)
    \end{align*}
    is well-defined.
\end{prp}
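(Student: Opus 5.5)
The plan is to realize the relative tensor product $\vfus{A_i}{\otc{B_2}}{A_j}$ as a coequalizer in $\CC$ and transport all structure along it. Here $A_i$ carries a right $B_2$-action $d_i$ and $A_j$ a left $B_2$-action $u_j$. We set
\[
    \vfus{A_i}{\otc{B_2}}{A_j}:=\mathrm{coeq}\bigl(A_i\ot B_2\ot A_j \rightrightarrows A_i\ot A_j\bigr),
\]
with the two maps $d_i\ot\id$ and $\id\ot u_j$, and write $\pi$ for the quotient map.

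First, $A_i\ot A_j$ is an $E_1$-algebra in $\CC$. Its multiplication is $(m_{A_i}\ot m_{A_j})\circ(\id\ot\beta_{A_j,A_i}\ot\id)$, exactly as for $B_1\ot B_2$ above, and its unit is $h_{A_i}\ot h_{A_j}$. The left $B_1$-action on the first factor and the right $B_3$-action on the second factor are algebra maps, because $u_i$ and $d_j$ are $E_1$-algebra homomorphisms. This is the remark after the definition of $E_1$ bimodules, equivalently the fact that $\Alg_{E_1}(\CC)$ is monoidal.

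The main step is to show that the multiplication descends through $\pi$. For this we use that $\ot$ preserves finite colimits in each variable, the standing hypothesis of $\Mrt_{E_1}(\CC)$. It follows that $\pi\ot\pi$ is the coequalizer presenting $(A_i\otc{B_2}A_j)^{\ot 2}$; more precisely it is a reflexive coequalizer, so the product of two coequalizers is again a coequalizer. It then suffices to check that $\pi\circ m$ coequalizes the two parallel pairs, one in each factor. Concretely, we need
\[
    m\circ\bigl((d_i\ot\id)\ot\id_{A_i\ot A_j}\bigr)
    \quad\text{and}\quad
    m\circ\bigl((\id\ot u_j)\ot\id_{A_i\ot A_j}\bigr)
\]
to agree after $\pi$. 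To see this, we rewrite $d_i(a\ot b)$ as $a\cdot i_2(b)$ and $u_j(b\ot a')$ as $i_1'(b)\cdot a'$, where $i_1'$ is the unit map $B_2\to A_j$ from the preceding proposition. We then move $i_2(b)$ across the braiding. The key identity is
\[
    m_{A}\circ(i_2\ot\id_A)=m_A\circ(\id_A\ot i_2)\circ\beta_{B_2,A},
\]
proved in Proposition~\ref{prp:u^r-_d^r_are_compatible}, together with its analogue for the left action on $A_j$. These let the $B_2$-insertion slide past the $A_j$ factor with exactly the crossing used in the multiplication of $A_i\ot A_j$. The hexagon identity and naturality of $\beta$ then close the diagram. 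The same argument with $A_i\ot A_j$ on the other side handles the second parallel pair.

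I expect this step to be the main obstacle. It is the only place where the orientation of the crossing matters: choosing $\beta$ versus $\beta^{-1}$ in the multiplication must match the compatibility condition in the definition of an $E_1$ $B_1$-$B_2$-bimodule. If the crossings do not line up, a mutual-locality condition like the one in the remark after Proposition~\ref{prp:u^r-_d^r_are_compatible} would have to be imposed. My first attempt is to verify the identity in the string-diagram picture of the figure above, where $b$ slides along the wall.

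The remaining checks follow by the same mechanism. Associativity and unitality of the induced multiplication hold because $\pi$ and $\pi\ot\pi\ot\pi$ are epimorphisms, so the axioms can be checked upstairs on $A_i\ot A_j$. The $B_1$- and $B_3$-actions on $A_i\ot A_j$ commute with both parallel maps, since the $B_1$-action touches only $A_i$ by an action commuting with $d_i$, and symmetrically for $B_3$. Hence the actions descend, and their compatibility diagrams with the multiplication descend as well. Finally, functoriality in $(A_i,A_j)$ comes from the universal property of the coequalizer applied to $f\ot g$ for algebra bimodule maps $f,g$.
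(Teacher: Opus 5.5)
Your proposal is correct. The paper's own proof environment for this proposition is empty, so there is no argument to compare against, and your coequalizer construction supplies what is missing. Rerouting through Theorem~\ref{thm:algebras_in_bimodule_categories} would not avoid the work: it would only repackage the same verification as lax monoidality of $-\otimes_{B_2}-$ on the bimodule categories.

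The step you flag as the main risk does close with the paper's conventions, and no mutual-locality hypothesis is needed. Let $M=(m_{A_i}\otimes m_{A_j})\circ(\id\otimes\beta_{A_j,A_i}\otimes\id)$, and let $i_2\colon B_2\to A_i$ and $i_1'\colon B_2\to A_j$ be the unit maps. The bimodule-algebra axioms give
\begin{align*}
    m_{A_i}\circ(i_2\otimes\id_{A_i})&=m_{A_i}\circ(\id_{A_i}\otimes i_2)\circ\beta_{B_2,A_i},\\
    m_{A_j}\circ(\id_{A_j}\otimes i_1')&=m_{A_j}\circ(i_1'\otimes\id_{A_j})\circ\beta_{A_j,B_2}.
\end{align*}

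For the parallel pair in the left argument of $M$, proceed as follows.
\begin{enumerate}
    \item Naturality $\beta_{A_j,A_i}\circ(i_1'\otimes\id)=(\id\otimes i_1')\circ\beta_{B_2,A_i}$ and the hexagon axiom rewrite the $u_j$-side.
    \item One use of the defining relation of $\pi$ moves the $B_2$-factor back into $A_i$.
    \item The result now differs from the $d_i$-side only by a crossing $\beta_{B_2,A_i}$ of $b$ past the $A_i$-factor of the right argument.
    \item The first identity removes this crossing.
\end{enumerate}

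For the parallel pair in the right argument, naturality $\beta_{A_j,A_i}\circ(\id\otimes i_2)=(i_2\otimes\id)\circ\beta_{A_j,B_2}$ produces a crossing $\beta_{A_j,B_2}$ inside $A_j$. The second identity removes it, again after one use of the relation of $\pi$.

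Each balancing therefore uses only one half-centrality relation, and both relations have exactly the orientation of $\beta_{A_j,A_i}$. This is why your choice of crossing in the multiplication, matching the algebra $B_1\otimes B_2$, is the right one.

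One remark on hypotheses: the right-exactness of $\otimes$ in each variable that you invoke is not stated in the definition of $\Mrt_{E_2}(\CC)$, which only assumes $E_2$-monoidality. It should be imposed there, as in Theorem~\ref{thm:bimodule-category-monoidal}.
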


\begin{proof}
\end{proof}

\begin{prp}
    The composition functor along $B_2$,
    \begin{align*}
        \vfus{\hind{A_1}{\vind{B_1}{\CC}{B_2}}{A_2}}
        {\times}
        {\hind{A_3}{\vind{B_2}{\CC}{B_3}}{A_4}}
        &\to
        \hind{\vfus{A_1}{\otc{B_2}}{A_3}}
        {\vind{B_1}{\CC}{B_3}}
        {\vfus{A_2}{\otc{B_2}}{A_4}},\\
        ((M,l,r),(M',l',r'))
        &\mapsto
        (\vfus{M}{\otc{B_2}}{M'},,)
    \end{align*}
    is well-defined.
\end{prp}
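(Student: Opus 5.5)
We need to show that $\vfus{M}{\otc{B_2}}{M'}$ carries a $\vfus{A_1}{\otc{B_2}}{A_3}$--$\vfus{A_2}{\otc{B_2}}{A_4}$-bimodule structure and satisfies the coherence diagram \eqref{eq:coherence_of_bibimod} for $B_1$ and $B_3$. The plan has four steps.

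\textbf{Step 1: the coequalizer.} First I fix the right $B_2$-action on $M$ and the left $B_2$-action on $M'$ with respect to which the vertical relative tensor product is formed. Following the convention that $B_2$ sits below $M$ and above $M'$, I take $d^r_M$ on $M$ and $u^l_{M'}$ on $M'$; by Proposition \ref{prp:u^l_d^r_are_compatible} these are compatible with the outer $B_1$- and $B_3$-actions. Then I write
\[
\vfus{M}{\otc{B_2}}{M'}=\mathrm{coeq}\bigl(M\ot B_2\ot M'\rightrightarrows M\ot M'\bigr).
\]
The coherence diagram \eqref{eq:coherence_of_bibimod} for $M$ says that $d^r_M=d^{l,-}_M$. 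Hence the same coequalizer is obtained if one uses the left-$A_2$-induced action, and symmetrically on $M'$. This identification is what will let the horizontal actions descend.

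\textbf{Step 2: the horizontal actions.} I define the left action of $A_1\ot A_3$ on $M\ot M'$ by
\[
A_1\ot A_3\ot M\ot M'\xrightarrow{\id\ot\beta^{-1}_{M,A_3}\ot\id}A_1\ot M\ot A_3\ot M'\xrightarrow{l\ot l'}M\ot M',
\]
matching the multiplication on $\vfus{A_1}{\otc{B_2}}{A_3}$. The inverse braiding is chosen because that algebra is built with the crossing appropriate to the vertical stacking, and the choice must agree with the remark on negative crossings. The right action of $A_2\ot A_4$ is defined analogously.

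I then check that this action is $B_2$-balanced in both variables. On the algebra side, the balancing $A_1\ot B_2\ot A_3$ is given by $d_1$ and $u_3$. On the module side, it uses $d^r_M$ and $u^l_{M'}$. The identity that links the two is exactly the lower half of \eqref{eq:coherence_of_bibimod} for $M$ (relating $d_1$ to $r$ after a braiding) together with the upper half for $M'$. Since $\ot$ preserves colimits in each variable, $(-)\ot(-)$ commutes with the coequalizers, so the action descends to
\[
(\vfus{A_1}{\otc{B_2}}{A_3})\ot(\vfus{M}{\otc{B_2}}{M'})\to \vfus{M}{\otc{B_2}}{M'},
\]
and likewise on the right. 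Associativity, unitality and the bimodule condition are checked on $M\ot M'$ before passing to the quotient. They follow from those of $l,r,l',r'$ plus the hexagon identities; this part is routine.

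\textbf{Step 3: the bi-bimodule coherence (main obstacle).} The remaining task is \eqref{eq:coherence_of_bibimod} for the new object with respect to $B_1$ and $B_3$. In the composite algebra, the $B_1$-action is $u_1\ot\id$ on $A_1\ot A_3$ and $u_2\ot\id$ on $A_2\ot A_4$. So the upper half of the diagram reduces to the upper half for $M$, tensored with $\id_{M'}$, provided the inverse braiding of $M\ot M'$ past $A_2\ot A_4$ can be simplified. The $A_4$-factor must braid past $M'$ and then act by $r'$, while $B_1$ only enters through $A_2$.

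The difficulty is that the braiding $\beta^{-1}_{M\ot M',A_2\ot A_4}$, restricted along $h_{A_4}$, leaves a residual crossing of $A_2$ with $M'$. I expect this crossing to cancel against the inverse braiding used in the definition of the right action. I will verify this by a string-diagram computation using naturality and the hexagon axioms, and I expect this bookkeeping of crossings to be the main difficulty. The lower half, with $B_3$, is handled symmetrically using $M'$'s coherence.

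\textbf{Step 4: morphisms.} Functoriality in morphisms is immediate, since bimodule homomorphisms tensor and descend through the coequalizers.
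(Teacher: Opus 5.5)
The paper states this proposition without proof, so there is nothing to compare against. Judged on its own, your outline has the right shape, but the crossing you choose in Step~2 is the wrong one. With it, both the balancing asserted in Step~2 and the cancellation you hope for in Step~3 fail.

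\textbf{Step 2.} In the paper's conventions, tensor-product algebras use the positive braiding ($B_1\ot A$ via $\beta_{A,B_1}$, $A\ot B_2$ via $\beta_{B_2,A}$, $B_1\ot B_2$ via $\beta_{B_2,B_1}$). So $A_1\ot A_3$, and hence $A_1\otimes_{B_2}A_3$, multiplies by $(m_{A_1}\ot m_{A_3})\circ(\id\ot\beta_{A_3,A_1}\ot\id)$. For an action $(l\ot l')\circ(\id_{A_1}\ot c\ot\id_{M'})$, associativity compares $\beta_{A_3,A_1}$ with the crossing of $A_3$ past $A_1$ that $c$ produces via naturality and the hexagon. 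This forces $c=\beta_{A_3,M}$. Your $c=\beta^{-1}_{M,A_3}$ instead gives a module over the reverse-braided product.

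The balancing forces the same choice independently of any algebra convention. Write $d_M:=d^r_M$ and $u_{M'}:=u^l_{M'}$ for the $B_2$-actions in the coequalizer. The lower half of \eqref{eq:coherence_of_bibimod} for $M$ says $l\circ(\iota^{A_1}_{B_2}\ot\id_M)=d_M\circ\beta_{B_2,M}$. So on $A_1\ot B_2\ot A_3\ot M\ot M'$, the copy of $B_2$ entering through $d_1$ passes $M$ by $\beta_{B_2,M}$, while the copy entering through $u_3$ passes $M$ by $c$. With your $c$ the two sides differ by a monodromy. Concretely, take $B_1=B_3=1$, all $A_i=B_2=B$, and $M'=B$. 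Descent along $B\ot B\to B\otimes_B B\cong B$ then requires $d_M\circ\beta_{B,M}=d_M\circ\beta^{-1}_{M,B}$, i.e.\ that the right $B$-module $M$ be local, which fails in general. Similarly, compatibility with the relation defining $M\otimes_{B_2}M'$ uses the sliding identity $m_{A_3}\circ(\id\ot\iota^{A_3}_{B_2})=m_{A_3}\circ(\iota^{A_3}_{B_2}\ot\id)\circ\beta_{A_3,B_2}$ (the analogue of (C$_1$)), and that again needs $c=\beta_{A_3,M}$.

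\textbf{Step 3.} This step, which you flag as the main obstacle, breaks for the same reason. Restricting $\beta^{-1}_{M\ot M',A_2\ot A_4}$ along the unit of $A_4$ leaves the residual crossing $\beta^{-1}_{M',A_2}$, which cancels only against $\beta_{M',A_2}$. A right action defined ``analogously'' to yours uses $\beta^{-1}_{A_2,M'}$, and $\beta^{-1}_{A_2,M'}\circ\beta^{-1}_{M',A_2}$ is the inverse double braiding, not the identity. Likewise, in the lower half $\beta^{-1}_{B_3,M\ot M'}$ leaves $\beta^{-1}_{B_3,M}$, which must be undone by the crossing of $A_3$ past $M$. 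Since this crossing bookkeeping is essentially the whole content of the proposition, the proposal as written does not prove it.

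\textbf{The fix.} Define the actions on $M\ot M'$ by $(l\ot l')\circ(\id_{A_1}\ot\beta_{A_3,M}\ot\id_{M'})$ and $(r\ot r')\circ(\id_M\ot\beta_{M',A_2}\ot\id_{A_4})$. The balancings then follow from $l\circ(\iota^{A_1}_{B_2}\ot\id_M)=d_M\circ\beta_{B_2,M}$, $l'\circ(\iota^{A_3}_{B_2}\ot\id_{M'})=u_{M'}$, the sliding identity, and their mirror images for the right actions. For $N=M\otimes_{B_2}M'$ the crossings cancel on the nose, giving $u^{r,-}_N=u^{r,-}_M\ot\id_{M'}=u^l_M\ot\id_{M'}=u^l_N$ and $d^{l,-}_N=\id_M\ot d^{l,-}_{M'}=\id_M\ot d^r_{M'}=d^r_N$.
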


\begin{prp}
    The composition functor along $A_2$,
    \begin{align*}
        \hind{A_1}{\vind{B_1}{\CC}{B_2}}{A_2}
        \times
        \hind{A_2}{\vind{B_1}{\CC}{B_2}}{A_3}
        &\to
        \hind{A_1}{\vind{B_1}{\CC}{B_2}}{A_3},\\
        ((M_1,l_1,r_1),(M_2,l_2,r_2))
        &\mapsto
        (M_1\otc{A_2}M_2,
        l_{M_1\otc{A_2}M_2},
        r_{M_1\otc{A_2}M_2})
    \end{align*}
    is well-defined.
\end{prp}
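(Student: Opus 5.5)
We must show that the horizontal composition along $A_2$ lands in $\hind{A_1}{\vind{B_1}{\CC}{B_2}}{A_3}$. The object $M_1\otc{A_2}M_2$ is the coequalizer of the two maps $M_1\ot A_2\ot M_2\rightrightarrows M_1\ot M_2$, namely $r_1\ot\id$ and $\id\ot l_2$. Because $\ot$ preserves finite colimits in each variable, $M_1\ot A_1\ot\cdots$ is again a coequalizer. The left $A_1$-action $l_{M_1\otc{A_2}M_2}$ is therefore induced by $l_1\ot\id_{M_2}$, and the right $A_3$-action by $\id_{M_1}\ot r_2$. This is the standard construction showing that $M_1\otc{A_2}M_2\in\hind{A_1}{\CC}{A_3}$, as in $\Mrt_{E_1}(\CC)$, and functoriality on morphisms is likewise standard. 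Since $\hind{A_1}{\vind{B_1}{\CC}{B_2}}{A_3}$ is a full subcategory of $\hind{A_1}{\CC}{A_3}$, the only thing left to check is the coherence diagram \eqref{eq:coherence_of_bibimod}.

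The coherence diagram asks for two equalities of morphisms into $M:=M_1\otc{A_2}M_2$:
\begin{align*}
u^l_M=u^{r,-}_M:B_1\ot M\to M,
\qquad
d^r_M=d^{l,-}_M:M\ot B_2\to M.
\end{align*}
The unit insertions in the diagram reduce the maps $u_i$ and $d_i$ to $i_1$ and $i_2$.

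Take the $B_1$ identity first. Write $\pi:M_1\ot M_2\to M$ for the quotient map. Since $B_1\ot-$ preserves coequalizers, $\id_{B_1}\ot\pi$ is an epimorphism, so it suffices to prove the equality after precomposing with $\id_{B_1}\ot\pi$. We then move from the left of $M_1$ to the right of $M_2$ in three steps.
\begin{itemize}
\item[(i)] By the bi-bimodule condition on $M_1$, $l_1\circ(i_1^{A_1}\ot\id)$ equals $r_1\circ(\id\ot i_1^{A_2})\circ\beta^{-1}_{M_1,B_1}$.
\item[(ii)] After applying $\pi$, the term $r_1(-\ot a)\ot m_2$ equals $m_1\ot l_2(a\ot-)$. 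This turns the $A_2$-image of $B_1$ into a left action on $M_2$, at the cost of passing $B_1$ over $M_1$ by $\beta^{-1}$.
\item[(iii)] By the bi-bimodule condition on $M_2$, $l_2\circ(i_1^{A_2}\ot\id)$ equals $r_2\circ(\id\ot i_1^{A_3})\circ\beta^{-1}_{M_2,B_1}$.
\end{itemize}
Combining the two inverse braidings through the hexagon identity, $(\id_{M_1}\ot\beta^{-1}_{M_2,B_1})\circ(\beta^{-1}_{M_1,B_1}\ot\id_{M_2})=\beta^{-1}_{M_1\ot M_2,B_1}$. Naturality of $\beta^{-1}$ with respect to $\pi$ then identifies the result with $u^{r,-}_M\circ(\id_{B_1}\ot\pi)$. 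The $B_2$ identity is proved symmetrically: start from $d^r$ on $M_2$, use the coherence of $M_2$, move $A_2$ across the relative tensor product, then use the coherence of $M_1$. The unit maps $h_{A_i}$ cause no difficulty, because $i_1^{A_2}$ is the same map from either side.

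The main obstacle is step (ii) combined with the braiding bookkeeping. One must check that transporting the $A_2$-element across the coequalizer is compatible with the inverse braiding already applied to $M_1$. In particular, $\beta^{-1}_{M_1,B_1}$ followed by $r_1$ has to match the balancing relation of $\otc{A_2}$ after $M_2$ is also crossed. I would handle this with a string-diagram computation in the style of the proof of Proposition \ref{prp:u^r-_d^r_are_compatible}, writing every crossing as a negative crossing. The negative-crossing convention of Definition \ref{dfn:bibimod} is what makes the two crossings concatenate without requiring any mutual-locality condition.
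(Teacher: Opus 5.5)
Your proposal is correct. For the bimodule part it is the paper's proof: the paper only observes that $l_1$ and $r_2$ descend along the coequalizer to an $A_1$-$A_3$-bimodule structure on $M_1\otc{A_2}M_2$.

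The paper stops there and never checks the coherence \eqref{eq:coherence_of_bibimod} for the composite. Your steps (i)--(iii) supply exactly that missing verification. Both chains close up as you describe:
\begin{itemize}
\item the $B_1$ chain;
\item the symmetric $B_2$ chain: coherence of $M_2$, then balancing, then coherence of $M_1$, then $\beta^{-1}_{B_2,M_1\ot M_2}=(\beta^{-1}_{B_2,M_1}\ot\id_{M_2})\circ(\id_{M_1}\ot\beta^{-1}_{B_2,M_2})$ and naturality with respect to $\pi$.
\end{itemize}

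Two small remarks.

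First, the ``main obstacle'' you flag is not one. Step (ii) is just the balancing identity $\pi\circ(r_1\ot\id_{M_2})=\pi\circ(\id_{M_1}\ot l_2)$, precomposed with $(\id_{M_1}\ot i_1^{A_2}\ot\id_{M_2})\circ(\beta^{-1}_{M_1,B_1}\ot\id_{M_2})$. No further compatibility with the braiding needs checking, so your displayed chain is already a complete proof.

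Second, the two crossings concatenate via the hexagon because $M_1$ and $M_2$ use the same crossing convention. Positive crossings on both would concatenate just as well. The specifically negative choice is what is needed for $u^{r,-}_M$ to commute with $d^r_M$, not for this proposition.
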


\begin{proof}
    The relative tensor product
    $M_1\otc{A_2}M_2$ naturally carries an $A_1$-$A_3$-bimodule structure:
    the left action $l_{M_1\otc{A_2}M_2}$ is induced by $l_1$, while the right action
    $r_{M_1\otc{A_2}M_2}$ is induced by $r_2$.
\end{proof}

\begin{thm}
    The tricategory structure defined above is well-defined.
\end{thm}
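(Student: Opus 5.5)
The strategy is to avoid checking the tricategory axioms by hand and instead realize $\Mrt_{E_2}(\CC)$ as an iterated Morita construction whose coherence is already known. Hom-wise, a 1-morphism $B_1\to B_2$ is an object of $\hind{B_1}{\Alg_{E_1}(\CC)}{B_2}$. For fixed $B_1,B_2$, the data of 2- and 3-morphisms together with composition along $A_2$ is exactly the 2-category $\Mrt_{E_1}$ built on the category of $E_0$ $B_1$-$B_2$-bimodules.

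More precisely, the bi-bimodule coherence diagram (\ref{eq:coherence_of_bibimod}) says that $\hind{A_1}{\vind{B_1}{\CC}{B_2}}{A_2}$ is the full subcategory of $\hind{A_1}{\CC}{A_2}$ on which the induced actions agree. By Proposition \ref{prp:u^l_d^r_are_compatible} and Proposition \ref{prp:u^r-_d^r_are_compatible}, this is the same as $A_1$-$A_2$-bimodules internal to $\hind{B_1}{\CC}{B_2}$, where $A_i$ is viewed as an algebra there. So the first step is to establish the equivalence
\[
\hind{A_1}{\vind{B_1}{\CC}{B_2}}{A_2}\simeq \hind{A_1}{\bigl(\hind{B_1}{\CC}{B_2}\bigr)}{A_2}.
\]
This identifies each hom 2-category with $\Mrt_{E_1}(\hind{B_1}{\CC}{B_2})$. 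The 2-category axioms for these homs then follow from the 1-Morita case, which is Bénabou's bicategory: associativity and unit constraints of relative tensor products over $A_2$, which are well-defined because $\otimes$ preserves finite colimits.

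Next comes horizontal composition along $B_2$. I would show that relative tensor product over $B_2$ defines a functor of 2-categories
\[
\Mrt_{E_1}(\hind{B_1}{\CC}{B_2})\times\Mrt_{E_1}(\hind{B_2}{\CC}{B_3})\to\Mrt_{E_1}(\hind{B_1}{\CC}{B_3}).
\]
On objects, I would use the earlier proposition that $\vfus{A}{\otc{B_2}}{A'}$ is again an $E_1$-algebra. The multiplication is built from $m_A\otimes m_{A'}$ and the braiding, and it descends to the coequalizer because $u,d$ are algebra maps. On 2-morphisms, I would use that $\vfus{M}{\otc{B_2}}{M'}$ carries $\vfus{A_1}{\otc{B_2}}{A_3}$ and $\vfus{A_2}{\otc{B_2}}{A_4}$ actions. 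Functoriality of this assignment, meaning the interchange isomorphism of Figure \ref{fig:Morita_E2}, follows by commuting two coequalizers (colimits commute with colimits) and using that $\otimes$ preserves them in each variable.

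Associativity and unitality of composition along $B_2$, with units $B$ and $A$, are inherited from the associativity of relative tensor products in $\CC$ and the canonical isomorphisms $B\otc{B}X\cong X$. The remaining higher coherence, the pentagonator-type 3-cells, are all canonical isomorphisms between iterated colimits. By the universal property they satisfy the required equations, since they agree after precomposing with the canonical epimorphisms from $M_1\otimes M_2\otimes\cdots$.

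The main obstacle is the braided-sign bookkeeping. I must check that $\vfus{M}{\otc{B_2}}{M'}$ again satisfies diagram (\ref{eq:coherence_of_bibimod}) with the negative crossings $\beta^{-1}$, and that the induced algebra structure on $\vfus{A}{\otc{B_2}}{A'}$ is compatible with those same crossings. I would first verify this on $M\otimes M'$ before passing to the quotient, using the hexagon identity and naturality exactly as in the proof of Proposition \ref{prp:u^r-_d^r_are_compatible}. Then I would argue that all maps involved are compatible with the coequalizer relations.
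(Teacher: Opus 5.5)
Your proposal is correct at the paper's level of rigor. Its final step is exactly the paper's two-sentence proof: every associator, unitor and interchange map is a canonical isomorphism of iterated coequalizers, and each coherence equation is checked after cancelling the quotient epimorphisms out of $M_1\otimes M_2\otimes\cdots$.

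The difference is the decomposition. The paper handles both directions of composition uniformly inside $\CC$. You instead first identify each hom 2-category with $\Mrt_{E_1}(\hind{B_1}{\CC}{B_2})$, using Theorems~\ref{thm:bimodule-category-monoidal}, \ref{thm:algebras_in_bimodule_categories} and \ref{thm:bibimod_equivalent_to_bimod_in_bimod}. (The two propositions you cite only give compatible $B_1$-$B_2$-actions; the descent of $l,r$ to the internal tensor product is Lemma~\ref{lem:bibimod_to_bimod_in_bimod}.) The bicategory axioms for composition along $A_2$ then come for free from B\'enabou, and only composition along $B_2$ and the genuinely tricategorical data need the coequalizer argument. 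This separates the levels more cleanly and exhibits the homs as Haugseng's recursive $\Mrt_{E_1}$ of Section~\ref{sec:n-morita}.

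The cost is a check you do not state. The paper's composite is $M_1\otimes_{A_2}M_2$ formed in $\CC$, whereas B\'enabou's uses the coequalizer of $M_1\otimes_K A_2\otimes_K M_2\rightrightarrows M_1\otimes_K M_2$ in $\hind{B_1}{\CC}{B_2}$, with $K=B_1\otimes B_2$. These agree. Let $h_2'\colon K\to A_2$ be the unit of $A_2$ in $\hind{B_1}{\CC}{B_2}$. Then $r^{12}_{M_1}=r_{M_1}\circ(\id\otimes h_2')$ and $l^{12}_{M_2}=l_{M_2}\circ(h_2'\otimes\id)$, so the $K$-balancing relations are already among the $A_2$-balancing ones, and right exactness lets you commute the two quotients.

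Your descent sketch for the algebra $A\otimes_{B_2}A'$ fills in a proof the paper leaves empty. Write $d$ for the right $B_2$-action on $A$, $u'$ for the left $B_2$-action on $A'$, and $p\colon A\otimes A'\to A\otimes_{B_2}A'$ for the quotient. The argument goes through because $d\otimes\id$ and $\id\otimes u'$ are algebra maps with common section $\id\otimes h_{B_2}\otimes\id$. This makes $p\circ m_{A\otimes A'}$ balanced in each variable, and $p\otimes p$ is a composite of two coequalizers.

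Finally, both your route and the paper's leave one point implicit. The 1-morphism associators and unitors are algebra isomorphisms. They must first be turned into invertible bi-bimodules (the target regarded as a bimodule through the isomorphism) to count as 2-cells of $\Mrt_{E_2}(\CC)$.
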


\begin{proof}
    The associators for the relative tensor-product compositions are the canonical associativity isomorphisms of relative tensor products, and the interchange isomorphism is induced in the same way.
    The required coherence conditions follow from the universal properties of the corresponding coequalizers.
\end{proof}

\subsubsection{Four-fold Junctions and Quadrimodules}

Ordinary bi-bimodules describe junctions between two parallel
$E_1$-interfaces separating the same pair of $E_2$-algebras.
More general defect networks may contain a junction at which four
interfaces meet and four bulk regions are incident.  The corresponding
algebraic datum is captured by the following notion.

\begin{figure}[H]\label{fig:quadrimodule}
    \centering
    \subcaptionbox{}{%
        \begin{tikzpicture}[line cap=round,line join=round]
            \path[use as bounding box] (-2.4,-2) rectangle (2.4,2);
            \filldraw[fill=gray!40,draw=white]
                (-2,1.33) rectangle (2,-1.33);

            \coordinate (right-junction) at (0.6,0);
            \draw[thick]
                (-2,0.70) .. controls (-1.15,0.70) and (0.05,0.18) .. (right-junction);
            \draw[thick] (-2,0) -- (right-junction);
            \draw[thick]
                (-2,-0.70) .. controls (-1.15,-0.70) and (0.05,-0.18) .. (right-junction);
            \draw[thick] (right-junction) -- (2,0);
            \fill[black] (right-junction) circle (0.07);

            \node at (-0.20,0.95) {$B_2$};
            \node at (-1.00,0.35) {$B_1$};
            \node at (-1.00,-0.35) {$B_4$};
            \node at (-0.20,-0.95) {$B_3$};
            \node[left] at (-1.95,0)
                {$\vfus{\vfus{A_2}{\otc{B_1}}{A_1}}{\otc{B_4}}{A_4^{\mathrm{op}}}$};
            \node at (1.40,0.25) {$A_3$};
        \end{tikzpicture}%
    }
    \hfill
    \subcaptionbox{}{%
        \begin{tikzpicture}[line cap=round,line join=round]
            \path[use as bounding box] (-2.4,-2) rectangle (2.4,2);
            \filldraw[fill=gray!40,draw=white]
                (-2,1.33) rectangle (2,-1.33);

            \draw[thick] (-2,0) -- (2,0);
            \draw[thick] (0,-1.33) -- (0,1.33);
            \fill[black] (0,0) circle (0.07);

            \node at (-1,0.8) {$B_1$};
            \node at (1,0.8) {$B_2$};
            \node at (1,-0.8) {$B_3$};
            \node at (-1,-0.8) {$B_4$};

            \node at (-1.4,0.25) {$A_1$};
            \node at (1.4,0.25) {$A_3$};
            \node at (0.3,1) {$A_2$};
            \node at (0.3,-1) {$A_4$};
        \end{tikzpicture}%
    }
    \hfill
    \subcaptionbox{}{%
        \begin{tikzpicture}[line cap=round,line join=round]
            \path[use as bounding box] (-2.4,-2) rectangle (2.4,2);
            \filldraw[fill=gray!40,draw=white]
                (-2,1.33) rectangle (2,-1.33);

            \coordinate (left-junction) at (-0.6,0);
            \draw[thick] (-2,0) -- (left-junction);
            \draw[thick]
                (left-junction) .. controls (-0.05,0.18) and (1.15,0.70) .. (2,0.70);
            \draw[thick] (left-junction) -- (2,0);
            \draw[thick]
                (left-junction) .. controls (-0.05,-0.18) and (1.15,-0.70) .. (2,-0.70);
            \fill[black] (left-junction) circle (0.07);

            \node at (0.20,0.95) {$B_1$};
            \node at (1.00,0.35) {$B_2$};
            \node at (1.00,-0.35) {$B_3$};
            \node at (0.20,-0.95) {$B_4$};
            \node at (-1.40,0.25) {$A_1$};
            \node[right] at (1.95,0)
                {$\vfus{\vfus{A_2^{\mathrm{op}}}{\otc{B_2}}{A_3}}{\otc{B_3}}{A_4}$};
        \end{tikzpicture}%
    }
\end{figure}

\begin{dfn}[Quadrimodule]\label{dfn:quadrimodule}
    Let $\CC$ be an $E_2$-monoidal category, and let
    $B_1,B_2,B_3,B_4$ be $E_2$-algebras in $\CC$.
    Let
    \[
        A_1\in\hind{B_1}{\Alg_{E_1}(\CC)}{B_4},
        \qquad
        A_2\in\hind{B_1}{\Alg_{E_1}(\CC)}{B_2},
    \]
    \[
        A_3\in\hind{B_2}{\Alg_{E_1}(\CC)}{B_3},
        \qquad
        A_4\in\hind{B_4}{\Alg_{E_1}(\CC)}{B_3},
    \]
    cyclically arranged as in \cref{fig:quadrimodule} (b).

    For every incident pair $(B_i,A_j)$, denote by
    \[
        \iota_{B_i}^{A_j}:B_i\longrightarrow A_j
    \]
    the map induced by the corresponding unital $B_i$-action on $A_j$;
    explicitly, it is obtained by inserting the unit
    $\bfone\to A_j$ into that action.

    An
    $A_1$-$A_2$-$A_3$-$A_4$-\emph{quadrimodule}
    is a tuple
    \[
        (M,l,u,r,d),
    \]
    consisting of an object $M\in\CC$ together with actions
    \[
        l:A_1\ot M\longrightarrow M,
        \qquad
        u:A_2\ot M\longrightarrow M,
    \]
    \[
        r:M\ot A_3\longrightarrow M,
        \qquad
        d:M\ot A_4\longrightarrow M,
    \]
    satisfying the following conditions.

    \begin{enumerate}
        \item
        The four adjacent pairs of $A_i$-actions define compatible
        bimodule structures:
        \[
            \bigl(l,\,
            u\circ\beta^{-1}_{A_2,M}\bigr),
            \qquad
            (u,r),
        \]
        \[
            \bigl(
            d\circ\beta^{-1}_{M,A_4},\,r
            \bigr),
            \qquad
            (l,d).
        \]
        More precisely, these give respectively an
        $A_1$-$A_2$,
        $A_2$-$A_3$,
        $A_4$-$A_3$, and
        $A_1$-$A_4$ bimodule structure on $M$.

        \item
        At each corner $B_i$, the two $B_i$-actions on $M$
        induced through the two incident $A_j$-actions agree.
        Explicitly,
        \begin{align}
            l\circ
            \bigl(\iota_{B_1}^{A_1}\ot\id_M\bigr)
            &=
            u\circ
            \bigl(\iota_{B_1}^{A_2}\ot\id_M\bigr),
            \label{eq:quadrimodule-B1}
            \\[0.5em]
            u\circ
            \bigl(\iota_{B_2}^{A_2}\ot\id_M\bigr)
            &=
            r\circ\beta^{-1}_{M,A_3}
            \circ
            \bigl(\iota_{B_2}^{A_3}\ot\id_M\bigr),
            \label{eq:quadrimodule-B2}
            \\[0.5em]
            r\circ
            \bigl(\id_M\ot\iota_{B_3}^{A_3}\bigr)
            &=
            d\circ
            \bigl(\id_M\ot\iota_{B_3}^{A_4}\bigr),
            \label{eq:quadrimodule-B3}
            \\[0.5em]
            l\circ
            \bigl(\iota_{B_4}^{A_1}\ot\id_M\bigr)
            &=
            d\circ\beta^{-1}_{M,A_4}
            \circ
            \bigl(\iota_{B_4}^{A_4}\ot\id_M\bigr).
            \label{eq:quadrimodule-B4}
        \end{align}
    \end{enumerate}
\end{dfn}

\begin{rmk}
    A $B_1$-$B_3$-bi-$A_1$-$A_3$-bimodule is recovered as a special case of a quadrimodule by setting
$B_1=A_2=B_2$ and $B_4=A_4=B_3$.
\end{rmk}

\subsection{Bi-bimodules and Iterated Bimodules}\label{subsec:two-morita-iterated-bimodules}

This subsection compares the explicit bi-bimodule description introduced above
with the iterated-bimodule description introduced by Haugseng in Definition \ref{dfn:higher_morita_haugseng}.
The main idea is to regard the category
$\hind{B_1}{\CC}{B_2}$ itself as a monoidal category and then internalize
the next layers of Morita data inside it.

We proceed in three steps.
First, Theorem \ref{thm:B_1CB_2_to_B_1B_2CB_1B_2_is_fully_faithful}
embeds $B_1$-$B_2$-bimodules into bimodules over $B_1\ot B_2$.
Using this embedding, Theorem \ref{thm:bimodule-category-monoidal} equips
$\hind{B_1}{\CC}{B_2}$ with the monoidal structure needed for the
iterated construction.
Second, we show that an $E_1$ $B_1$-$B_2$-bimodule in our sense is
equivalently an $E_1$-algebra internal to this monoidal bimodule category:
\[
    \hind{B_1}{\Alg_{E_1}(\CC)}{B_2}
    \simeq
    \Alg_{E_1}\!\left(\hind{B_1}{\CC}{B_2}\right),
\]
as proved in Theorem \ref{thm:algebras_in_bimodule_categories}.
Finally, applying the same idea one categorical level higher, we prove that
our explicit $B_1$-$B_2$-bi-$A_1$-$A_2$-bimodules are precisely
$A_1$-$A_2$-bimodules internal to $\hind{B_1}{\CC}{B_2}$:
\[
    \hind{A_1}{\vind{B_1}{\CC}{B_2}}{A_2}
    \simeq
    \hind{A_1}{\left(\hind{B_1}{\CC}{B_2}\right)}{A_2},
\]
see Theorem \ref{thm:bibimod_equivalent_to_bimod_in_bimod}.

\begin{thm}\label{thm:B_1CB_2_to_B_1B_2CB_1B_2_is_fully_faithful}
    Let $B_1$ and $B_2$ be two $E_2$-algebras in a finitely cocomplete $E_{m}$-monoidal category $\CC$ ($m\geq 2$) and tensor functor is right exact at each variables.
    Then there is a fully faithful functor
    \begin{align*}
        \hind{B_1}{\CC}{B_2}&\to \hind{B_1\ot B_2}{\CC}{B_1\ot B_2}\\
        (M,u_M,d_M)&\mapsto (M,l^{12}_M,r^{12}_M)\\
        f&\mapsto f
    \end{align*} 
    where 
    \begin{align*}
        l^{12}_M:B_1\ot B_2\ot M
        \xrightarrow{\id_{B_1}\ot\beta_{B_2,M}}
        B_1\ot M\ot B_2
        \xrightarrow{u_M\ot\id_{B_2}}
        M\ot B_2
        \xrightarrow{d_M}
        M
    \end{align*}
    and 
    \begin{align*}
        r^{12}_M:M\ot B_1\ot B_2
        \xrightarrow{\beta_{M,B_1}\ot\id_{B_2}}
        B_1\ot M\ot B_2
        \xrightarrow{u_M\ot\id_{B_2}}
        M\ot B_2
        \xrightarrow{d_M}
        M .
    \end{align*}
\end{thm}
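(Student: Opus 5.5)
The proof splits into three checks. First, $(M,l^{12}_M)$ is a left $B_1\ot B_2$-module. Second, $(M,r^{12}_M)$ is a right $B_1\ot B_2$-module. Third, the two actions commute. Full faithfulness is then immediate: morphisms of $B_1\ot B_2$-bimodules are morphisms in $\CC$ commuting with $l^{12}$ and $r^{12}$. Restricting along the unit maps $B_1\cong B_1\ot\bfone\to B_1\ot B_2$ and $B_2\cong\bfone\ot B_2\to B_1\ot B_2$ recovers $u_M$ and $d_M$. So a map commuting with $l^{12},r^{12}$ commutes with $u_M,d_M$. Conversely, a $B_1$-$B_2$-bimodule map commutes with $l^{12},r^{12}$ by naturality of $\beta$. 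Faithfulness is trivial since $f\mapsto f$.

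Before checking associativity, I would simplify the actions. Write $\rho:B_1\ot M\ot B_2\to M$ for the two-sided action $d_M\circ(u_M\ot\id_{B_2})=u_M\circ(\id_{B_1}\ot d_M)$. Then
\[
l^{12}_M=\rho\circ(\id_{B_1}\ot\beta_{B_2,M}),\qquad r^{12}_M=\rho\circ(\beta_{M,B_1}\ot\id_{B_2}).
\]
The multiplication on $B_1\ot B_2$ is $(m_1\ot m_2)\circ(\id\ot\beta_{B_2,B_1}\ot\id)$.

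\textbf{Left associativity.} We need
\[
l^{12}\circ(\id_{B_1\ot B_2}\ot l^{12})=l^{12}\circ(m_{B_1\ot B_2}\ot\id_M).
\]
For the left-hand side I would use naturality of $\beta$ to slide the inner $\rho$ past the outer $B_2$ strand. I would then use the hexagon identity to write $\beta_{B_2,B_1\ot M\ot B_2}$ as a composite of elementary crossings. After that, associativity of $u_M$ and $d_M$ collects the two $B_1$'s and the two $B_2$'s. The result has the shape $\rho\circ(m_1\ot\id_M\ot m_2)$ precomposed with a braid $\sigma_L$ on $B_1\ot B_2\ot B_1\ot B_2\ot M$. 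The right-hand side gives the same shape with the braid $\sigma_R=(\id\ot\id\ot\beta_{B_2\ot B_2,M})\circ(\id\ot\beta_{B_2,B_1}\ot\id\ot\id)$. It therefore suffices to show that $\sigma_L$ and $\sigma_R$ are equal as braids on five strands, or at least become equal after composing with $m_1$ and $m_2$. Both move the outer $B_2$ over the inner $B_1$ and over $M$, and both move the inner $B_2$ over $M$. I expect the crossing signs to match, so the identity would follow from the braid relations (hexagon and Yang--Baxter). Right associativity is the mirror computation with $\beta_{M,B_1}$.

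\textbf{Commutation.} The remaining identity is
\[
l^{12}\circ(\id\ot r^{12})=r^{12}\circ(l^{12}\ot\id).
\]
Both sides reduce to $\rho\circ(m_1\ot\id_M\ot m_2)$ precomposed with braids on $B_1\ot B_2\ot M\ot B_1\ot B_2$. On the left-hand side, the left $B_2$ passes over $M$ and the right $B_1$ passes under $M$ (via $\beta_{M,B_1}$). On the right-hand side the same happens, but now the relative position of the left $B_2$ and the right $B_1$ differs: on one side the left $B_2$ must cross the right $B_1$ to reach the $B_2$ cluster. Equality is then not a pure braid identity. I expect it to follow from the $E_2$-commutativity $m_2\circ\beta=m_2$ (and likewise for $m_1$), used to absorb the extra crossing.

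\textbf{Main obstacle.} The hardest part is this crossing bookkeeping, in particular checking that the chosen mix of $\beta_{B_2,M}$ and $\beta_{M,B_1}$ gives matching over- and under-crossings. If a sign mismatch appears, producing a double braiding $\beta\circ\beta$ that the commutativity of $B_i$ cannot cancel, I would first try to remove it using the commutativity of both $B_1$ and $B_2$ together with naturality. Failing that, I would conclude that the correct statement needs $\beta^{-1}$ in one of the formulas, in line with the negative-crossing phenomenon in the remark after Proposition \ref{prp:u^r-_d^r_are_compatible}. I would verify the computation with string diagrams.
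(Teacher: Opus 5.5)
Your decomposition is the same as the paper's: left action, right action, commutation of the two actions, then fullness by restricting along the unit maps. The fullness argument is fine. Left and right associativity also go through, with one correction: $\sigma_L$ and $\sigma_R$ induce different orderings of the two $B_2$-strands, so they can never be equal as braids and only agree after using $m_2\circ\beta_{B_2,B_2}=m_2$.

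The gap is the commutation identity $l^{12}\circ(\id\otimes r^{12})=r^{12}\circ(l^{12}\otimes\id)$, which is the real content of the theorem. You leave it as an expectation, and the mechanism you sketch would not work. You claim that on one side only the left $B_2$ crosses the right $B_1$, and that $E_2$-commutativity absorbs this crossing. It cannot. The relation $m_i\circ\beta_{B_i,B_i}=m_i$ only cancels a crossing between two strands of the same $B_i$ that are afterwards multiplied, and $\rho$ never multiplies a $B_1$-strand with a $B_2$-strand. An unmatched $B_2$--$B_1$ crossing would be a genuine monodromy obstruction, namely the mutual-locality condition from the remark on $u^{r,+}_M$. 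Your fallback of replacing $\beta$ by $\beta^{-1}$ would then alter a statement that is true as written.

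Here is what actually happens. Label the strands $B_1^a,B_2^a,M,B_1^b,B_2^b$.
\begin{itemize}
\item In $l^{12}\circ(\id\otimes r^{12})$, use naturality to slide the outer $\beta_{B_2,-}$ below the inner action. By the hexagon, $\beta_{B_2,B_1\otimes M\otimes B_2}$ makes $B_2^a$ cross over $B_1^b$, then $M$, then $B_2^b$.
\item In $r^{12}\circ(l^{12}\otimes\id)$, slide the outer $\beta_{-,B_1}$ instead. Then $\beta_{B_1\otimes M\otimes B_2,B_1}$ makes $B_2^a$, $M$ and $B_1^a$ cross over $B_1^b$.
\end{itemize}
So both sides contain the same three mixed crossings $\beta_{B_2,M}$, $\beta_{M,B_1}$, $\beta_{B_2,B_1}$. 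They occur in the two orders of the Yang--Baxter equation on $B_2\otimes M\otimes B_1$:
\[
(\beta_{M,B_1}\otimes\id_{B_2})\circ(\id_M\otimes\beta_{B_2,B_1})\circ(\beta_{B_2,M}\otimes\id_{B_1})
=(\id_{B_1}\otimes\beta_{B_2,M})\circ(\beta_{B_2,B_1}\otimes\id_M)\circ(\id_{B_2}\otimes\beta_{M,B_1}).
\]
The left composite comes from $r^{12}\circ(l^{12}\otimes\id)$ and the right one from $l^{12}\circ(\id\otimes r^{12})$.

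The only leftover crossings are same-coloured: $\beta_{B_2,B_2}$ on the first side and $\beta_{B_1,B_1}$ on the second. Associativity of $u_M$ and $d_M$ turns the stacked attachments into $\rho\circ(m_1\otimes\id_M\otimes m_2)$. After that, these crossings are removed by $m_2\circ\beta_{B_2,B_2}=m_2$ and $m_1\circ\beta_{B_1,B_1}=m_1$ respectively; they are the two loops in the paper's compatibility diagram.

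The signs match because $\beta_{B_2,M}$ (moving $B_2$ rightwards over $M$) and $\beta_{M,B_1}$ (moving $B_1$ leftwards under $M$) are both positive crossings. Had $r^{12}$ used $\beta^{-1}_{B_1,M}$ instead, the $B_2^a$--$B_1^b$ crossing would appear with opposite signs on the two sides. That is exactly the monodromy problem you were worried about, and it does not arise for the formulas as stated.
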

\begin{proof}
    On object level:
    Let $M\in\hind{B_1}{\CC}{B_2}$ with left and right actions
    $u_M:B_1\ot M\to M$ and $d_M:M\ot B_2\to M$.  
    The object $M$ has a left $B_1\ot B_2$-action
    \begin{align*}
        l^{12}_M:B_1\ot B_2\ot M
        \xrightarrow{\id_{B_1}\ot\beta_{B_2,M}}
        B_1\ot M\ot B_2
        \xrightarrow{u_M\ot\id_{B_2}}
        M\ot B_2
        \xrightarrow{d_M}
        M
    \end{align*}
    We check the associativity:
    \begin{align*}
        \resizebox{0.98\linewidth}{!}{$
        \xymatrix@C=2.2em@R=4.8em{
            B_1\ot B_2\ot B_1\ot B_2\ot M
                \ar[rr]^{\id_{B_1\ot B_2\ot B_1}\ot\beta_{B_2,M}}
                \ar[d]_{\id_{B_1}\ot\beta_{B_2,B_1}\ot\id_{B_2\ot M}}
            &{}
            &B_1\ot B_2\ot B_1\ot M\ot B_2
                \ar[r]^{\id_{B_1\ot B_2\ot B_1}\ot d_M}
                \ar[d]_{\id_{B_1}\ot\beta_{B_2,B_1\ot M\ot B_2}}
            &B_1\ot B_2\ot B_1\ot M
                \ar[r]^{\id_{B_1\ot B_2}\ot u_M}
                \ar[d]_{\id_{B_1}\ot\beta_{B_2,B_1\ot M}}
            &B_1\ot B_2\ot M
                \ar[d]^{\id_{B_1}\ot\beta_{B_2,M}}
            \\
            B_1\ot B_1\ot B_2\ot B_2\ot M
                \ar[r]^{\id_{B_1\ot B_1}\ot\beta_{B_2\ot B_2,M}}
                \ar[d]_{\id_{B_1\ot B_1}\ot m_2\ot\id_M}
            &B_1\ot B_1\ot M\ot B_2\ot B_2
                \ar[r]^{\id_{B_1\ot B_1\ot M}\ot\beta_{B_2,B_2}}
            &B_1\ot B_1\ot M\ot B_2\ot B_2
                \ar[r]^{\id_{B_1\ot B_1}\ot d_M\ot\id_{B_2}}
                \ar[d]_{\id_{B_1\ot B_1\ot M}\ot m_2}
            &B_1\ot B_1\ot M\ot B_2
                \ar[r]^{\id_{B_1}\ot u_M\ot\id_{B_2}}
                \ar[d]_{\id_{B_1\ot B_1}\ot d_M}
            &B_1\ot M\ot B_2
                \ar[d]^{\id_{B_1}\ot d_M}
            \\
            B_1\ot B_1\ot B_2\ot M
                \ar[rr]^{\id_{B_1\ot B_1}\ot\beta_{B_2,M}}
                \ar[d]_{m_1\ot\id_{B_2\ot M}}
            &{}
            &B_1\ot B_1\ot M\ot B_2
                \ar[r]^{\id_{B_1\ot B_1}\ot d_M}
                \ar[d]_{m_1\ot\id_{M\ot B_2}}
            &B_1\ot B_1\ot M
                \ar[r]^{\id_{B_1}\ot u_M}
                \ar[d]_{m_1\ot\id_M}
            &B_1\ot M
                \ar[d]^{u_M}
            \\
            B_1\ot B_2\ot M
                \ar[rr]_{\id_{B_1}\ot\beta_{B_2,M}}
            &{}
            &B_1\ot M\ot B_2
                \ar[r]_{\id_{B_1}\ot d_M}
            &B_1\ot M
                \ar[r]_{u_M}
            &M .
        }
        $}
    \end{align*}
    where the left upper corner commutes as we expand it as follows:
    \begin{align*}
        \xymatrix@C=5.0em@R=5.4em{
            {}
            &B_2\ot B_1\ot B_2\ot M
                \ar[r]^{\id_{B_2\ot B_1}\ot\beta_{B_2,M}}
                \ar[dl]_{\beta_{B_2,B_1}\ot\id_{B_2\ot M}}
            &B_2\ot B_1\ot M\ot B_2
                \ar[dl]_{\beta_{B_2,B_1}\ot\id_{M\ot B_2}}
                \ar[ddl]^{\beta_{B_2,B_1\ot M}\ot\id_{B_2}}
                \ar[dd]^{\beta_{B_2,B_1\ot M\ot B_2}}
            \\
            B_1\ot B_2\ot B_2\ot M
                \ar[r]^{\id_{B_1\ot B_2}\ot\beta_{B_2,M}}
                \ar[dr]_{\id_{B_1}\ot\beta_{B_2\ot B_2,M}}
            &B_1\ot B_2\ot M\ot B_2
                \ar[d]_{\id_{B_1}\ot\beta_{B_2,M}\ot\id_{B_2}}
            \\
            {}
            &B_1\ot M\ot B_2\ot B_2
                \ar[r]_{\id_{B_1\ot M}\ot\beta_{B_2,B_2}}
            &B_1\ot M\ot B_2\ot B_2 .
        }
    \end{align*}
    The unitality:
    \begin{align*}
        \xymatrix@C=4.0em@R=3.6em{
            1\ot M
                \ar[r]^{h_2\ot\id_M}
                \ar[d]_{\beta_{1,M}}
            &B_2\ot M
                \ar[r]^{h_1\ot\id_{B_2\ot M}}
                \ar[d]_{\beta_{B_2,M}}
            &B_1\ot B_2\ot M
                \ar[d]^{\id_{B_1}\ot\beta_{B_2,M}}
            \\
            M\ot 1
                \ar[r]_{\id_M\ot h_2}
                \ar[ddrr]_{\sim}
            &M\ot B_2
                \ar[r]_{h_1\ot\id_{M\ot B_2}}
                \ar[dr]^{\sim}
            &B_1\ot M\ot B_2
                \ar[d]^{u_M\ot\id_{B_2}}
            \\
            &&M\ot B_2
                \ar[d]^{d_M}
            \\
            &&M .
        }
    \end{align*}
    we can also check the validity of the right $B_1\ot B_2$-action similarly
    \begin{align*}
        r^{12}_M:M\ot B_1\ot B_2
        \xrightarrow{\beta_{M,B_1}\ot\id_{B_2}}
        B_1\ot M\ot B_2
        \xrightarrow{u_M\ot\id_{B_2}}
        M\ot B_2
        \xrightarrow{d_M}
        M .
    \end{align*}
    The $E_2$-algebra structures on $B_1$ and $B_2$ imply that these are associative and unital $B_1\ot B_2$-actions and that they commute.  Indeed, the only extra point beyond the ordinary bimodule axioms is that two copies of $B_i$ may have to pass each other; this is exactly the commutativity relation
    $m_i\circ\beta_{B_i,B_i}=m_i$.

    Now, we need to check that $l^{12}_M$ and $r^{12}_M$ are compatible bimodule actions.
    \begin{align*}
        \resizebox{0.98\linewidth}{!}{$
        \xymatrix@C=7.0em@R=6.4em{
            B_1\ot B_2\ot M\ot B_1\ot B_2
                \ar[r]^{\id\ot\beta_{B_2,M}\ot\id}
                \ar[d]_{\id\ot\beta_{M,B_1}\ot\id}
            &B_1\ot M\ot B_2\ot B_1\ot B_2
                \ar[r]^{\id\ot d_M\ot\id}
                \ar@/_1.2pc/[d]_{\scriptstyle\id\ot\beta_{M\ot B_2,B_1}\ot\id}
                \ar[d]^{\scriptstyle\beta_{B_1\ot M\ot B_2,B_1}\ot\id}
            &B_1\ot M\ot B_1\ot B_2
                \ar[r]^{u_M\ot\id}
                \ar[d]_{\beta_{B_1\ot M,B_1}\ot\id}
            &M\ot B_1\ot B_2
                \ar[d]^{\beta_{M,B_1}\ot\id}
            \\
            B_1\ot B_2\ot B_1\ot M\ot B_2
                \ar@/^1.2pc/[r]^{\scriptstyle\id\ot\beta_{B_2,B_1\ot M}\ot\id}
                \ar[r]_{\scriptstyle\id\ot\beta_{B_2,B_1\ot M\ot B_2}}
                \ar[d]_{\id\ot u_M\ot\id}
            &\hphantom{\qquad}
                B_1\ot B_1\ot
                \begin{tikzcd}[
                    baseline=(morita-loop-node.base),
                    cells={nodes={inner sep=0pt,outer sep=0pt}}
                ]
                    |[alias=morita-loop-node]|M
                    \arrow[
                        overlay,shift left=5,loop,
                        in=190,out=260,distance=10mm,
                        "{\scriptscriptstyle\id\ot\beta_{B_2,B_2}}"'
                            {xshift=-14mm,yshift=-5mm}
                    ]
                    \arrow[
                        overlay,shift left=5,loop,
                        in=10,out=80,distance=10mm,
                        "{\scriptscriptstyle\beta_{B_1,B_1}\ot\id}"
                            {xshift=5mm,yshift=-2mm}
                    ]
                \end{tikzcd}
                \ot B_2\ot B_2
                \hphantom{\qquad}
                \ar[r]^{\scriptstyle\id\ot d_M\ot\id}
                \ar@/_1.2pc/[r]_{\scriptstyle\id\ot m_2}
                \ar[d]_{\id\ot u_M\ot\id}
                \ar@/^1.2pc/[d]^{m_1\ot\id}
            &B_1\ot B_1\ot M\ot B_2
                \ar[r]^{\id\ot u_M\ot\id}
                \ar[d]_{m_1\ot\id}
            &B_1\ot M\ot B_2
                \ar[d]^{u_M\ot\id}
            \\
            B_1\ot B_2\ot M\ot B_2
                \ar[r]^{\id\ot\beta_{B_2,M\ot B_2}}
                \ar[d]_{\id\ot d_M}
            &B_1\ot M\ot B_2\ot B_2
                \ar[r]^{\id\ot m_2}
                \ar[d]_{\id\ot d_M\ot\id}
            &B_1\ot M\ot B_2
                \ar[d]_{\id\ot d_M}
                \ar[r]^{u_M\ot\id}
            &M\ot B_2
                \ar[d]^{d_M}
            \\
            B_1\ot B_2\ot M
                \ar[r]_{\id\ot\beta_{B_2,M}}
            &B_1\ot M\ot B_2
                \ar[r]_{\id\ot d_M}
            &B_1\ot M
                \ar[r]_{u_M}
            &M .
        }
        $}
    \end{align*}
    
    On morphism level:
    Let $f:M\to N$ be a $B_1$-$B_2$-bimodule homomorphism.
    Then it is obvious that $f$ is also a left $B_1\ot B_2$-module homomorphism, i.e the following diagram (without squiggly part) commutes
    \[\begin{tikzcd}
        {B_1\ot M} &&&& {B_1\ot N} && \\
        \\
        {B_1\ot B_2\ot M} &&&& {B_1\ot B_2\ot N} \\
        && {B_1\ot M} &&&& B_1\ot N \\
        {B_1\ot M\ot B_2} &&&& {B_1\ot N\ot B_2} \\
        \\
        M &&&& N
        \arrow["{\id\ot f}"{description}, from=1-1, to=1-5]
        \arrow["{\id\ot h_2\ot \id}"', squiggly, from=1-1, to=3-1]
        \arrow["{\id\ot\beta_{1,M}}"{description}, from=1-1, to=4-3]
        \arrow["{\id\ot h_2\ot \id}"{description, pos=0.7}, squiggly, from=1-5, to=3-5]
        \arrow["{\id\ot\beta_{1,N}}"{description}, from=1-5, to=4-7]
        \arrow["{\id\ot f}"{description}, from=3-1, to=3-5]
        \arrow["{\id\ot\beta_{B_2,M}}"', from=3-1, to=5-1]
        \arrow["{\id\ot\beta_{B_2,N} }"{description, pos=0.7}, from=3-5, to=5-5]
        \arrow["{\id\ot f}"{description, pos=0.2}, from=4-3, to=4-7]
        \arrow["{\id\ot h_2}"', squiggly, from=4-3, to=5-1]
        \arrow["{u_M}"{description}, squiggly, from=4-3, to=7-1]
        \arrow["{\id\ot h_2}"{description}, squiggly, from=4-7, to=5-5]
        \arrow["{u_N}"{description}, squiggly, from=4-7, to=7-5]
        \arrow["{\id\ot f\ot \id}"{description}, from=5-1, to=5-5]
        \arrow["{u_Md_M}"', from=5-1, to=7-1]
        \arrow["{u_Nd_N}", from=5-5, to=7-5]
        \arrow["f"{description}, from=7-1, to=7-5]
    \end{tikzcd}\]
    Similarly, one can also show that $f$ is a right $B_1\ot B_2$-module homomorphism.

    Now suppose $f$ is a $B_1\ot B_2$-$B_1\ot B_2$-bimodule homomorphism, consider the full part of above diagram, it is clear that $f$ is a left $B_1$-module homomorphism. 
    Similarly, we can demonstrate that $f$ is also a right $B_2$-module homomorphism.
    So the functor is well-defined now.
    And it is manifestly fully faithful.
\end{proof}

\begin{crl}
    Let $B$ be an $E_2$-algebra in a finitely cocomplete $E_{m}$-monoidal category $\CC$ ($m\geq 2$) and tensor functor is right exact at each variables.
    Then there is a fully faithful functor
    \begin{align*}
        \alpha^+: \hind{}{\CC}{B}&\to \hind{B}{\CC}{B}\\
        (M,d_M)&\mapsto (M,l_M,d_M)\\
        f&\mapsto f
    \end{align*} 
    where 
    \begin{align*}
        l_M:B\ot M
        \xrightarrow{\beta_{B,M}}
        M\ot B_2
        \xrightarrow{d_M}
        M
    \end{align*}
\end{crl}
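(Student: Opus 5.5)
The plan is to derive the corollary from Theorem \ref{thm:B_1CB_2_to_B_1B_2CB_1B_2_is_fully_faithful}, but it is cleaner to verify the few axioms directly, since the statement concerns a single $E_2$-algebra $B$ and a one-sided module. (The target of $\beta_{B,M}$ in the statement should be read as $M\ot B$.)

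First, $(M,l_M)$ with $l_M=d_M\circ\beta_{B,M}$ is a left $B$-module. For associativity we must compare $l_M\circ(\id_B\ot l_M)$ with $l_M\circ(m_B\ot\id_M)$. Using naturality of $\beta$ and the hexagon identity $\beta_{B\ot B,M}=(\beta_{B,M}\ot\id_B)\circ(\id_B\ot\beta_{B,M})$, the first composite becomes
\[
d_M\circ(d_M\ot\id_B)\circ(\id_M\ot\beta_{B,B})\circ\beta_{B\ot B,M}.
\]
Right associativity of $d_M$ turns this into $d_M\circ(\id_M\ot m_B\beta_{B,B})\circ\beta_{B\ot B,M}$. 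Commutativity of the $E_2$-algebra ($m_B\circ\beta_{B,B}=m_B$) and naturality of $\beta$ in the first slot then give $d_M\circ\beta_{B,M}\circ(m_B\ot\id_M)$. Unitality follows from $\beta_{1,M}=\id$ together with naturality along $h_B$.

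Second, we check that $l_M$ and $d_M$ commute, so that $(M,l_M,d_M)\in\hind{B}{\CC}{B}$. The composite $d_M\circ(l_M\ot\id_B)$ equals $d_M\circ(d_M\ot\id)\circ(\beta_{B,M}\ot\id)=d_M\circ(\id_M\ot m_B)\circ(\beta_{B,M}\ot\id_B)$. Meanwhile $l_M\circ(\id_B\ot d_M)=d_M\circ(d_M\ot\id_B)\circ\beta_{B,M\ot B}$, and by the hexagon this is $d_M\circ(\id_M\ot m_B)\circ(\id_M\ot\beta_{B,B})\circ(\beta_{B,M}\ot\id_B)$. Commutativity of $m_B$ removes $\beta_{B,B}$, so the two composites agree. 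This is the only step that uses commutativity of $B$ in an essential way, and it is the main point to get right. In particular the braiding must be chosen consistently: the positive crossing $\beta_{B,M}$ works, and $\beta^{-1}_{M,B}$ would give $\alpha^-$ by the mirror argument.

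Finally, for morphisms: if $f$ is a right $B$-module map then $f\circ l_M=l_N\circ(\id_B\ot f)$ by naturality of $\beta$, so $\alpha^+$ is a functor. It is faithful because it is the identity on morphisms. It is full because any bimodule map $\alpha^+M\to\alpha^+N$ is in particular a right $B$-module map. Composition and identities are preserved trivially. Alternatively, one can obtain the corollary by setting $B_1=B$ and $B_2$ trivial in the theorem and restricting the left $B\ot B$-action along $h_B\ot\id_B$; the direct argument above avoids having to track that identification.
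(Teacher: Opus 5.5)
Your direct verification is correct, and it takes a different route from the paper. The associativity chase for $l_M=d_M\circ\beta_{B,M}$ goes through: naturality, the two hexagons, right associativity of $d_M$, and then $m_B\circ\beta_{B,B}=m_B$. So do the mixed compatibility, the unit axiom, and the observation that every right $B$-module map is automatically left $B$-linear. You were also right to read $M\ot B_2$ as $M\ot B$.

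The paper states this corollary without proof, as an immediate specialization of Theorem \ref{thm:B_1CB_2_to_B_1B_2CB_1B_2_is_fully_faithful} to $B_1=\bfone$, $B_2=B$. There $\hind{\bfone}{\CC}{B}=\hind{}{\CC}{B}$ and $\bfone\ot B\cong B$. Up to unitors, $l^{12}_M=d_M\circ\beta_{B,M}$ and $r^{12}_M=d_M$. So the theorem's functor is literally $\alpha^+$, and full faithfulness is inherited.

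Your argument redoes, in the one-sided case, the diagram chases that the theorem performs for general $B_1,B_2$. What it buys is a self-contained proof that visibly needs neither finite cocompleteness nor right exactness. The paper's route is a one-line reduction once the theorem is in hand.

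Two small corrections, neither affecting your main argument:
\begin{enumerate}
\item Your closing ``alternatively'' sentence has the specialization backwards. Taking $B_1=B$ with $B_2$ trivial produces left $B$-modules, and there is then no $B\ot B$ to restrict along $h_B\ot\id_B$. The correct choice is $B_1=\bfone$, $B_2=B$, and it needs no restriction at all.
\item Commutativity of $B$ is not used only in the mixed-compatibility step. Your own associativity computation for $l_M$ also needs $m_B\circ\beta_{B,B}=m_B$. Without it, $l_M$ would only be an action of $B$ equipped with the braided-opposite multiplication $m_B\circ\beta_{B,B}$.
\end{enumerate}
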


\begin{thm}\label{thm:bimodule-category-monoidal}
    Let $B_1$ and $B_2$ be two $E_2$-algebras in a finitely cocomplete $E_{2}$-monoidal category $\CC$ and tensor functor is right exact at each variables.
    Then $(\hind{B_1}{\CC}{B_2},\otd\limits_{B_1\ot B_2},B_1\ot B_2)$ is an $E_{1}$-monoidal category.
\end{thm}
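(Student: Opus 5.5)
The plan is to transport the monoidal structure from $\hind{B_1\ot B_2}{\CC}{B_1\ot B_2}$ along the fully faithful embedding of Theorem \ref{thm:B_1CB_2_to_B_1B_2CB_1B_2_is_fully_faithful}. Write $B:=B_1\ot B_2$, which is an $E_1$-algebra in $\CC$ with multiplication $(m_1\ot m_2)\circ(\id_{B_1}\ot\beta_{B_2,B_1}\ot\id_{B_2})$. Since $\CC$ is finitely cocomplete and $\ot$ is right exact in each variable, the standard argument (relative tensor product as a reflexive coequalizer) makes $(\hind{B}{\CC}{B},\otd_B,B)$ a monoidal category. The associator and unitors are the canonical isomorphisms induced by universal properties of coequalizers; this is the classical case, and I would take it as known. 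It therefore suffices to show three things: the essential image of the embedding $\hind{B_1}{\CC}{B_2}\hookrightarrow\hind{B}{\CC}{B}$ contains the unit $B$, and it is closed under $\otd_B$ up to isomorphism. Then the full subcategory inherits the monoidal structure, with the same associator and unitors, and all coherence axioms hold automatically.

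For the unit, I will check that $B_1\ot B_2$, with left $B_1$-action $m_1\ot\id$ and right $B_2$-action $\id\ot m_2$, is a $B_1$-$B_2$-bimodule. I then verify that the induced actions $l^{12},r^{12}$ recover the regular $B$-bimodule structure. This comes down to unwinding the braidings: the $\beta_{B_2,B_1\ot B_2}$ that appears in $l^{12}$ must reduce to $\id\ot\beta_{B_2,B_1}\ot\id$ followed by $m_2$, which uses $m_2\circ\beta_{B_2,B_2}=m_2$. On the right side, we similarly need $m_1\circ\beta_{B_1,B_1}=m_1$.

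For closure, I characterize the image intrinsically. A $B$-$B$-bimodule $(N,l,r)$ lies in the image precisely when its actions satisfy
\[
    l\circ(\id_{B_1}\ot h_2\ot\id_N)=r\circ(\id_N\ot\id_{B_1}\ot h_2)\circ\beta_{B_1,N}
\]
together with the analogous identity for $B_2$, with the braiding on the appropriate side. In words, the left and right actions of each factor differ by a prescribed crossing. Given $M,M'$ in the image, $M\otd_B M'$ is a quotient of $M\ot M'$. Its left $B_1$-action comes from $M$ and its right $B_2$-action from $M'$. The remaining conditions must be checked on $M\ot M'$ and then descended to the coequalizer. To descend them, I use right exactness of $\ot$, so that $X\ot(M\otd_B M')$ is again a coequalizer, together with naturality of $\beta$. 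Once this intrinsic characterization is in place, membership in the image is exactly the statement that the bimodule over $B$ arises from $(u,d)$ via the formulas for $l^{12},r^{12}$.

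The main obstacle is the compatibility check in the closure step. The key case is showing that the right $B_1$-action on $M\otd_B M'$, which is inherited from $M'$, agrees after braiding with the left $B_1$-action inherited from $M$. This requires moving $B_1$ past $M'$ and then identifying it across the relative tensor product, where $b$ acting on the right of $M$ equals $b$ acting on the left of $M'$. Braiding conventions must be tracked carefully here: hexagon identities are used to split $\beta_{B_1,M\ot M'}$, and the descent of naturality through the coequalizer must be justified. I would first verify this on $M\ot M'$ using string diagrams and then descend.
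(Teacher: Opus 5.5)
Your route is essentially the paper's. Both pass through the embedding of Theorem~\ref{thm:B_1CB_2_to_B_1B_2CB_1B_2_is_fully_faithful}, use $\otd_B$ with $B:=B_1\otimes B_2$, inherit the associator and unitors from relative tensor products over $B$, and check the unit via $m_i\circ\beta_{B_i,B_i}=m_i$. The difference lies in the closure step.

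The paper puts on $M\otd_B N$ a left $B_1$-action coming from $M$ and a right $B_2$-action coming from $N$. It checks that these descend through the coequalizer and then asserts that the constraints are inherited. You instead characterize the image of the embedding by restriction conditions and show that the $B$-bimodule $M\otd_B M'$ satisfies them, so Lemma~\ref{lem:inheritance_monoidal_of_full_sub_cat} applies literally.

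Your key check is that the right $B_1$-action inherited from $M'$ agrees, after a crossing, with the left $B_1$-action inherited from $M$. This is exactly what guarantees that re-embedding the paper's $B_1$-$B_2$-structure on $M\otd_B N$ gives back its $B$-bimodule structure. That is needed before iterated products can be formed and the associator of $\hind{B}{\CC}{B}$ applied, and the paper leaves it implicit. So your organization is the more complete one.

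You do need to fix the crossing in your characterization. With the conventions of that theorem, $r^{12}\circ(\id_M\otimes\id_{B_1}\otimes h_2)=u_M\circ\beta_{M,B_1}$ and $l^{12}\circ(h_1\otimes\id_{B_2}\otimes\id_M)=d_M\circ\beta_{B_2,M}$. Hence the image consists of the $(N,l,r)$ with
\[
l\circ(\id_{B_1}\otimes h_2\otimes\id_N)=r\circ(\id_N\otimes\id_{B_1}\otimes h_2)\circ\beta^{-1}_{N,B_1},
\qquad
l\circ(h_1\otimes\id_{B_2}\otimes\id_N)=r\circ(\id_N\otimes h_1\otimes\id_{B_2})\circ\beta_{B_2,N}.
\]
You wrote $\beta_{B_1,N}$ where $\beta^{-1}_{N,B_1}$ is needed. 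These differ by precomposition with the double braiding $\beta_{N,B_1}\circ\beta_{B_1,N}$, so for non-symmetric $\CC$ your condition fails on genuine objects of the image. In effect, $B_1$ must pass $N$ on the opposite side from $B_2$; this is the same negative-crossing convention as $u^{r,-}_M$ in the paper.

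With this correction the characterization is right. Put $u:=l\circ(\id_{B_1}\otimes h_2\otimes\id_N)$ and $d:=r\circ(\id_N\otimes h_1\otimes\id_{B_2})$; then $l=l^{12}$ and $r=r^{12}$, because $b_1\otimes b_2=(b_1\otimes 1)(1\otimes b_2)$ in $B$ and the left and right actions commute. Your closure argument then goes through:
\begin{enumerate}
\item split $\beta^{-1}_{M\otimes M',B_1}$ by the hexagon axiom;
\item apply the condition on $M'$;
\item use the balancing relation for $p$;
\item apply the condition on $M$;
\item cancel the epimorphism $\id_{B_1}\otimes p$.
\end{enumerate}
The $B_2$-condition is handled identically.
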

\begin{proof}
    Now consider $(B_1\ot B_2,m_1\ot\id,\id\ot m_2)\in\hind{B_1}{\CC}{B_2}$, 
    it is clear that $l^{12}_{B_1\ot B_2},r^{12}_{B_1\ot B_2}$ are regular action of the $E_1$-algebra $B_1\ot B_2$, so $(B_1\ot B_2,l^{12}_{B_1\ot B_2},r^{12}_{B_1\ot B_2})$ is the tensor unit in $\hind{B_1}{\CC}{B_2}$. 

    For $M,N\in\hind{B_1}{\CC}{B_2}$, define $M\otd\limits_{B_1\ot B_2}N$ by the coequalizer
    \begin{align*}
        \xymatrix{
            M\ot B_1\ot B_2\ot N
            \ar@<0.5ex>[r]^{r^{12}_M\ot\id_N}
            \ar@<-0.5ex>[r]_{\id_M\ot l^{12}_N}
            &M\ot N
            \ar[r]^-p
            &M\otd\limits_{B_1\ot B_2}N .
        }
    \end{align*}
    We claim that $M\otd\limits_{B_1\ot B_2}N$ is again a $B_1$-$B_2$-bimodule.  The left $B_1$-action is induced from the left action on the first factor,
    \begin{align*}
        \xymatrix{
            B_1\ot (M\ot B_1\ot B_2\ot N) \ar@<0.5ex>[r]^{r^{12}_M\ot\id_N} \ar@<-0.5ex>[r]_{\id_M\ot l^{12}_N}\ar[d]_{u_M\ot\id} &B_1\ot (M\ot N) \ar[r]^-p \ar[d]_{u_M\ot\id} &B_1\ot (M\otd\limits_{B_1\ot B_2}N) \ar@{-->}[d]^{\exists ! u_{M\otd\limits_{B_1\ot B_2}N}} \\
            M\ot B_1\ot B_2\ot N \ar@<0.5ex>[r]^{r^{12}_M\ot\id_N} \ar@<-0.5ex>[r]_{\id_M\ot l^{12}_N} &M\ot N \ar[r]^-p &M\otd\limits_{B_1\ot B_2}N 
        }
    \end{align*}
    the left subdiagram of above diagram commutes, one case is obvious, another case is demonstrated as follows:
    \begin{align*}
        \resizebox{0.98\linewidth}{!}{$
        \xymatrix@C=5.4em@R=5.6em{
            B_1\ot M\ot B_1\ot B_2
                \ar@<0.55ex>[r]^{\id_{B_1}\ot\beta_{M,B_1}\ot\id_{B_2}}
                \ar@<-0.55ex>[r]_{\beta_{B_1\ot M,B_1}\ot\id_{B_2}}
                \ar[d]_{u_M\ot\id_{B_1\ot B_2}}
            &B_1\ot B_1\ot M\ot B_2
                \ar@(ul,ur)[]^{\beta_{B_1,B_1}\ot\id_{M\ot B_2}}
                \ar[r]^{\id_{B_1}\ot u_M\ot\id_{B_2}}
                \ar@<-0.55ex>[d]_{\id_{B_1}\ot u_M\ot\id_{B_2}}
                \ar@<0.55ex>[d]^{m_1\ot\id_{M\ot B_2}}
            &B_1\ot M\ot B_2
                \ar[r]^{\id_{B_1}\ot d_M}
                \ar[d]_{u_M\ot\id_{B_2}}
            &B_1\ot M
                \ar[d]^{u_M}
            \\
            M\ot B_1\ot B_2
                \ar[r]_{\beta_{M,B_1}\ot\id_{B_2}}
            &B_1\ot M\ot B_2
                \ar[r]_{u_M\ot\id_{B_2}}
            &M\ot B_2
                \ar[r]_{d_M}
            &M .
        }
        $}
    \end{align*}
    So by universal property, we obtain the left $B_1$-action $u_{M\otd_{B_1\ot B_2}N}$.
    and the right $B_2$-action is induced from the right action on the second factor,
    \begin{align*}
        M\ot N\ot B_2\xrightarrow{\id_M\ot d_N}M\ot N\xrightarrow{p}M\otd_{B_1\ot B_2}N.
    \end{align*}
    The verification for the right $B_2$-action is the same, using the $E_2$-commutativity of $B_2$.  The two induced actions commute because the original $B_1$- and $B_2$-actions on $M$ and $N$ commute.

    Associativity of $\otd_{B_1\ot B_2}$ and the unit constraints are inherited from the usual associativity and unit constraints for relative tensor products over the $E_1$-algebra $B_1\ot B_2$.  The unit object is $B_1\ot B_2$, equipped with the evident left $B_1$-action and right $B_2$-action.  Thus, $\hind{B_1}{\CC}{B_2}$ is an $E_1$-monoidal category.
\end{proof}

\begin{crl}
    Let $B$ be an $E_2$-algebra in a finitely cocomplete $E_{m}$-monoidal category $\CC$ ($m\geq 2$) and tensor functor is right exact at each variables.
    Then $(\hind{}{\CC}{B},\otc{B},B)$ is an $E_{1}$-monoidal category.
\end{crl}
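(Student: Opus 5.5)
The plan is to obtain the corollary as a specialization of Theorem \ref{thm:bimodule-category-monoidal}, combined with the fully faithful embedding $\alpha^+$ from the preceding corollary. Take $B_1=1$ (the trivial $E_2$-algebra, i.e. the tensor unit with its canonical structure) and $B_2=B$. Then $\hind{1}{\CC}{B}\simeq\hind{}{\CC}{B}$, since a left action of $1$ is forced to be the unit constraint. Theorem \ref{thm:bimodule-category-monoidal} then gives an $E_1$-monoidal structure on $\hind{}{\CC}{B}$, with tensor product $\otd_{1\ot B}\simeq\otd_{B}$ and unit $1\ot B\simeq B$.

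The main point to check is that the tensor product obtained this way really is the relative tensor product $\otc{B}$ of right $B$-modules. Under the identification above, the left $1\ot B$-action $l^{12}_N$ of Theorem \ref{thm:B_1CB_2_to_B_1B_2CB_1B_2_is_fully_faithful} on a right $B$-module $(N,d_N)$ becomes
\[
d_N\circ\beta_{B,N}:B\ot N\to N,
\]
which is exactly the action $l_N$ defining $\alpha^+(N)$. The right action $r^{12}_M$ reduces to $d_M$, up to the unit isomorphisms $\beta_{M,1}\cong\id$. So the coequalizer in the proof of Theorem \ref{thm:bimodule-category-monoidal} is the coequalizer of
\[
d_M\ot\id_N,\quad \id_M\ot\bigl(d_N\circ\beta_{B,N}\bigr):M\ot B\ot N\rightrightarrows M\ot N.
\]
This is precisely $M\otc{B}N$, where $N$ is regarded as a $B$-$B$-bimodule via $\alpha^+$.

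The resulting right $B$-action on $M\otc{B}N$ is induced by $d_N$. Well-definedness of this action is the step I expect to be the only real obstacle: one has to push $d_N$ through the coequalizer, which requires commutativity $m_B\circ\beta_{B,B}=m_B$. This is exactly the $E_2$-commutativity verification already carried out in the proof of Theorem \ref{thm:bimodule-category-monoidal}, so here I would just cite it rather than redo the diagram chase.

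The remaining data transfer directly from the general theorem:
\begin{itemize}
\item the unit constraints $B\otc{B}M\cong M\cong M\otc{B}B$;
\item the associator, coming from the canonical associativity of relative tensor products over the $E_1$-algebra $B$;
\item the pentagon and triangle axioms, which follow from the universal property of coequalizers, using that $\ot$ is right exact in each variable so that iterated coequalizers commute with tensoring.
\end{itemize}
Together these give that $(\hind{}{\CC}{B},\otc{B},B)$ is $E_1$-monoidal.
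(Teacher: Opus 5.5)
Your proposal is correct and follows the route the paper intends: the corollary is stated without a separate proof, as the specialization $B_1=1$, $B_2=B$ of Theorem~\ref{thm:bimodule-category-monoidal}. Your identification of $l^{12}_N$ with $d_N\circ\beta_{B,N}$ (the action defining $\alpha^+$) and of $r^{12}_M$ with $d_M$ is exactly what makes $\otd_{1\ot B}$ coincide with $\otc{B}$, and the well-definedness of the induced right $B$-action is indeed where $m_B\circ\beta_{B,B}=m_B$ enters, as in the proof of that theorem.
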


\begin{dfn}[Haugseng]
    Let $\CC$ be an $E_{m+2}$-monoidal $1$-category.
    The tricategory $\Mrt_{E_2}(\CC)$ consists of 
    \begin{itemize}
        \item objects are $E_2$-algebras in $\CC$, i.e. objects in $\Alg_{E_2}(\CC)$;
        \item a 1-morphism is a $B_1$-$B_2$-bimodule $E_1$-algebra for two $E_2$-algebras $B_1$ and $B_2$, i.e. an object in $\Alg_{E_1}(\hind{B_1}{\CC}{B_2})$;
        \item  a 2-morphisms between two $E_1$ $B_1$-$B_2$-bimodules algebra $A_1$, $A_2$ is an $A_1$-$A_2$-bimodule internalized to $B_1$-$B_2$-bimodule ($E_0$-algebra), i.e. an object in $\Alg_{E_{0}}(\hind{A_1}{(\hind{B_1}{\CC}{B_2})}{A_2})$ or simply $\hind{A_1}{(\hind{B_1}{\CC}{B_2})}{A_2}$.
        \item a 3-morphism is a morphism in $\hind{A_1}{(\hind{B_1}{\CC}{B_2})}{A_2}$.
\end{itemize}
\end{dfn}

\begin{lem}[\cite{Lur17}]\label{lem:lax_monoidal_functor_preserves_modules}
    Let $F:\CD\to \CC$ be a lax monoidal functor.
    For any $A_1,A_2\in \Alg_{E_1}(\CD)$, there is a canonical functor 
    \begin{align*}
        \hind{A_1}{\CD}{A_2}\to \hind{F(A_1)}{\CC}{F(A_2)}
    \end{align*}
\end{lem}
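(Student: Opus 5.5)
The plan is to construct the functor explicitly. We use the lax monoidal structure $\phi_{X,Y}:F(X)\ot F(Y)\to F(X\ot Y)$ and $\phi_0:1\to F(1)$, and transport each piece of structure along $F$. For $M\in\hind{A_1}{\CD}{A_2}$ with actions $l:A_1\ot M\to M$ and $r:M\ot A_2\to M$, we first note that $F(A_i)$ is an $E_1$-algebra in $\CC$. Its multiplication is
\[
F(A_i)\ot F(A_i)\xrightarrow{\phi_{A_i,A_i}}F(A_i\ot A_i)\xrightarrow{F(m_i)}F(A_i),
\]
and its unit is $F(h_i)\circ\phi_0$. This is the standard fact that lax monoidal functors preserve algebras; it is proved by the same diagram chase as the module case below. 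We then define the actions on $F(M)$ by
\[
F(l)\circ\phi_{A_1,M}:F(A_1)\ot F(M)\to F(M),
\qquad
F(r)\circ\phi_{M,A_2}:F(M)\ot F(A_2)\to F(M).
\]
On morphisms, a bimodule map $f$ goes to $F(f)$.

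The key steps, in order, are as follows.

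(i) Associativity of the left action. In the square for $F(A_1)\ot F(A_1)\ot F(M)$, we use naturality of $\phi$ to move $F(m_1)$ and $F(l)$ past $\phi$. We then use the associativity hexagon for $\phi$ to identify the two ways of building $F(A_1)\ot F(A_1)\ot F(M)\to F(A_1\ot A_1\ot M)$. Finally we apply $F$ to the associativity square of $l$ in $\CD$.

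(ii) Unitality of the left action. This follows from the unit axiom of the lax structure, $\phi_{1,M}\circ(\phi_0\ot\id)=F(\lambda_M)\circ\lambda_{F(M)}$ (up to the unitor), combined with $F$ applied to the unit axiom of $l$.

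(iii) The right action. This is checked symmetrically.

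(iv) Compatibility of the two actions. The maps $F(A_1)\ot F(M)\ot F(A_2)\to F(M)$ agree because both factor through $\phi^{(3)}:F(A_1)\ot F(M)\ot F(A_2)\to F(A_1\ot M\ot A_2)$, which is well defined by the associativity coherence of $\phi$, followed by $F$ applied to the two equal composites in $\CD$.

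(v) Morphisms and functoriality. $F(f)$ is a bimodule map by naturality of $\phi$. Functoriality is inherited from $F$, and canonicity holds because only the structure maps of $F$ are used.

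I expect no serious obstacle. The only genuinely nontrivial point is bookkeeping the associativity coherence of $\phi$ in steps (i) and (iv), namely that the two iterated comparison maps into $F(X\ot Y\ot Z)$ coincide modulo associators. The plan there is to write both sides as $F(\text{action composite})\circ\phi^{(3)}$ and invoke the lax monoidal hexagon once. In the $\infty$-categorical setting of \cite{Lur17} this step is replaced by the observation that a lax monoidal functor induces a map of $\infty$-operads, and hence a map of the corresponding bimodule operads. For the 1-categorical use in this paper, however, the explicit diagram chase suffices.
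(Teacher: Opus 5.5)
Your argument is correct, but it does not follow the paper, which gives no proof of this lemma and simply imports it from \cite{Lur17}. There the statement is a formal consequence of the operadic viewpoint. A lax monoidal functor is a map of ($\infty$-)operads $\CD^{\otimes}\to\CC^{\otimes}$, and an $A_1$-$A_2$-bimodule is an algebra over the bimodule operad. Composing with $F$ then yields the functor, including the induced algebra structures on $F(A_i)$, which the paper cites separately as Lemma \ref{lem:lax_monoidal_functor_preserves_algebras}.

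Your route is the elementary 1-categorical diagram chase. You transport the actions as $F(l)\circ\phi_{A_1,M}$ and $F(r)\circ\phi_{M,A_2}$, and check the axioms using naturality of $\phi$ together with its associativity and unit coherences.

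\emph{What each route buys.} The citation gives generality: it works verbatim in the $\infty$-categorical and higher-$E_n$ settings the paper gestures at, and canonicity comes for free. Your chase gives explicit structure maps, which is what the paper actually relies on downstream. Apply it to the lax monoidal forgetful functor $U:\hind{B_1}{\CC}{B_2}\to\CC$ of Lemma \ref{lem:lax_monoidal_structure_on_forget_of_B_1CB_2}, with $\nabla_{M,N}=p_{M,N}$. Your formulas then produce exactly $l=l'\circ p_{A_1,M}$ and $m=m'\circ p_{A,A}$, the identities used implicitly in Lemmas \ref{lem:bimod_in_bimod_to_bibimod} and \ref{lem:bimod_alg_is_alg_bimod}.

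One small slip: as written, your lax unit axiom does not typecheck, since $F(\lambda_M)\circ\lambda_{F(M)}$ is not a composable pair. It should read $F(\lambda_M)\circ\phi_{\bfone,M}\circ(\phi_0\ot\id_{F(M)})=\lambda_{F(M)}$. With this form, your unitality step goes through exactly as you describe.
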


\begin{prp}\label{prp:forget_is_monoidal}
    The forgetful functor 
    \begin{align*}
        U_{E_1}:\Alg_{E_1}(\CC)\to \CC 
    \end{align*}
    is strong monoidal.
\end{prp}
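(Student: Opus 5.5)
The plan is to exhibit a monoidal structure on $U_{E_1}$ whose structure maps are identities, and then check the monoidal-functor axioms by comparison with $\CC$. The first step is to recall the monoidal structure on $\Alg_{E_1}(\CC)$ that makes this possible. Since $\CC$ is at least $E_2$-monoidal (in the setting of this section it is $E_2$, with braiding $\beta$), the tensor product of two $E_1$-algebras $(A,m_A,h_A)$ and $(A',m_{A'},h_{A'})$ is the object $A\ot A'$ with multiplication
\begin{align*}
    A\ot A'\ot A\ot A'
    \xrightarrow{\id_A\ot\beta_{A',A}\ot\id_{A'}}
    A\ot A\ot A'\ot A'
    \xrightarrow{m_A\ot m_{A'}}
    A\ot A',
\end{align*}
exactly as was done for $B_1\ot B_2$ at the beginning of Section~\ref{subsec:two-morita-bibimodule-model}, with unit $h_A\ot h_{A'}:1\cong 1\ot 1\to A\ot A'$. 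The monoidal unit of $\Alg_{E_1}(\CC)$ is the trivial algebra $1$, and the associativity and unit constraints are those of $\CC$. Each of these constraints is an algebra homomorphism, by naturality of $\beta$ and the hexagon identities.

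With this description in hand, I will take the structure maps of $U_{E_1}$ to be identities:
\begin{align*}
    J_{A,A'}:=\id_{A\ot A'}:U_{E_1}(A)\ot U_{E_1}(A')\longrightarrow U_{E_1}(A\ot A'),
    \qquad
    J_0:=\id_1:1\longrightarrow U_{E_1}(1).
\end{align*}
These are well-defined because, on underlying objects, the tensor product and the unit of $\Alg_{E_1}(\CC)$ are by construction those of $\CC$. Naturality of $J$ in $A$ and $A'$ holds because $U_{E_1}$ sends a pair of algebra homomorphisms $(f,f')$ to $f\ot f'$. The associativity and unitality coherence diagrams for $(U_{E_1},J,J_0)$ reduce to the coherence diagrams in $\CC$ itself, since $U_{E_1}$ carries the associator and unitors of $\Alg_{E_1}(\CC)$ to those of $\CC$. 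As $J$ and $J_0$ are invertible, $U_{E_1}$ is strong (indeed strict) monoidal.

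The only nontrivial point is the first step: one must know that $\Alg_{E_1}(\CC)$ genuinely carries the monoidal structure just described. This amounts to checking that $A\ot A'$ with the braided multiplication is associative and unital, and that the associator of $\CC$ is an algebra map. I would verify associativity of the multiplication on $A\ot A'$ with a single string-diagram computation, using the hexagon axiom to split $\beta_{A',A\ot A}$ and naturality of $\beta$ to slide $m_A$ past the crossing. Unitality uses $\beta_{1,-}=\id$. This is the same verification already done diagrammatically in the proof of Theorem~\ref{thm:B_1CB_2_to_B_1B_2CB_1B_2_is_fully_faithful}, without the commutativity input, so I expect no genuine obstacle.
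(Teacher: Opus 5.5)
Your argument is correct, and it is the justification the paper implicitly relies on: the proposition is stated there without proof, with $\Alg_{E_1}(\CC)$ carrying the braided tensor product already written down for $B_1\ot B_2$ at the start of Section~\ref{subsec:two-morita-bibimodule-model}, so that $U_{E_1}$ is strict monoidal with identity structure maps, exactly as you say. One minor correction to your last remark: the proof of Theorem~\ref{thm:B_1CB_2_to_B_1B_2CB_1B_2_is_fully_faithful} verifies associativity of the action $l^{12}_M$, and there it does use $m_2\circ\beta_{B_2,B_2}=m_2$; it does not verify associativity of the multiplication on $B_1\ot B_2$, so your own hexagon-plus-naturality computation is what actually carries that step.
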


\begin{lem}\label{lem:alg_bimod_is_bimod_alg}
    Let $A\in \hind{B_1}{\Alg_{E_1}(\CC)}{B_2}$, then we have $A\in \Alg_{E_1}(\hind{B_1}{\CC}{B_2})$.
\end{lem}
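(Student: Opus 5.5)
The plan is to exhibit the underlying object $A$ of $(A,m_A,h_A,u,d)$ as an object of $\hind{B_1}{\CC}{B_2}$, and then to produce the algebra data $(m_A,h_A)$ internally to the monoidal category $(\hind{B_1}{\CC}{B_2},\otd\limits_{B_1\ot B_2},B_1\ot B_2)$ of Theorem \ref{thm:bimodule-category-monoidal}.

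\emph{Bimodule structure.} First, the actions $u$ and $d$ make $A$ a $B_1$-$B_2$-bimodule. The compatibility square defining an $E_1$ $B_1$-$B_2$-bimodule says that $u$ and $d$ are $E_1$-algebra maps. Since $\hind{B_1}{\Alg_{E_1}(\CC)}{B_2}$ consists of bimodules in the monoidal category $\Alg_{E_1}(\CC)$, associativity, unitality and commutation of $u$ and $d$ hold there. Applying the strong monoidal forgetful functor $U_{E_1}$ of Proposition \ref{prp:forget_is_monoidal} together with Lemma \ref{lem:lax_monoidal_functor_preserves_modules}, we get $U_{E_1}(A)\in\hind{B_1}{\CC}{B_2}$.

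\emph{Multiplication.} We must show that $m_A:A\ot A\to A$ coequalizes the two maps
\[
A\ot B_1\ot B_2\ot A\rightrightarrows A\ot A,
\]
namely $r^{12}_A\ot\id_A$ and $\id_A\ot l^{12}_A$. It then descends to $\bar m:A\otd\limits_{B_1\ot B_2}A\to A$.
\begin{itemize}
\item Insert units into the multiplicativity of $u$ and $d$, exactly as in the proof of Proposition \ref{prp:u^r-_d^r_are_compatible}. This yields the centrality relations
\[
m_A\circ(i_1\ot\id_A)=m_A\circ(\id_A\ot i_1)\circ\beta_{B_1,A},
\]
together with the analogous relation for $i_2$. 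We also use $u=m_A\circ(i_1\ot\id)$ and $d=m_A\circ(\id\ot i_2)$, which follow from unitality of the algebra-map property.
\item Expand $r^{12}_A$ and $l^{12}_A$ using their defining formulas in Theorem \ref{thm:B_1CB_2_to_B_1B_2CB_1B_2_is_fully_faithful}. Both composites then become $m_A$ applied to products of $i_1$, $i_2$ and copies of $A$, differing only by braidings.
\item Use associativity of $m_A$ and the centrality relations to move the images of $B_1$ and $B_2$ past $A$, and conclude that the two composites agree.
\end{itemize}
Next, $\bar m$ must be a $B_1$-$B_2$-bimodule map. By the construction in Theorem \ref{thm:bimodule-category-monoidal}, the left action on $A\otd\limits_{B_1\ot B_2}A$ comes from the first factor and the right action from the second. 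Hence this reduces to $u\circ(\id\ot m_A)=m_A\circ(u\ot\id)$ and its mirror for $d$, which is associativity after writing $u$ and $d$ through $m_A$.

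\emph{Unit, axioms, and the main obstacle.} The unit is $\bar h:B_1\ot B_2\to A$, $\bar h=m_A\circ(i_1\ot i_2)$. It is a bimodule map because $i_1$ and $i_2$ are algebra maps and the images commute up to braiding. Associativity and unitality of $(A,\bar m,\bar h)$ follow from those of $m_A$ and $h_A$, since $p\ot p$ is an epimorphism: tensoring is right exact in each variable, so products of coequalizers stay epi. The main obstacle is the coequalizer check in the second step. There the braiding direction matters: $l^{12}$ uses $\beta_{B_2,M}$ and $r^{12}$ uses $\beta_{M,B_1}$, and these must match the crossings produced by the centrality relations. I would first verify the $B_1$ and $B_2$ parts separately by setting the other algebra's unit, and then combine them using the commutation of $i_1$ and $i_2$ established in Proposition \ref{prp:u^r-_d^r_are_compatible}. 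If a sign mismatch in the crossing appears, I would switch to $\beta^{-1}$ there, consistent with the remark on negative crossings.
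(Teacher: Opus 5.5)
Your plan is essentially the paper's proof and goes through, with two corrections. The shared skeleton is this. The bimodule structure comes from $U_{E_1}$ and Lemma~\ref{lem:lax_monoidal_functor_preserves_modules}. The multiplication $m_A$ is descended through the coequalizer defining $\otd_{B_1\ot B_2}$. The unit is $d\circ(u\ot\id)\circ(\id\ot h_A\ot\id)=m_A\circ(i_1\ot i_2)$, so your $\bar h$ is the paper's $h'$. Linearity, associativity and unitality are checked by cancelling epimorphic quotient maps. Where the paper verifies the balancing by a diagram chase using multiplicativity of $u$ and $d$ directly, you first extract centrality relations and rewrite $u=m_A\circ(i_1\ot\id_A)$ and $d=m_A\circ(\id_A\ot i_2)$. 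That is a legitimate and cleaner route.

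First correction: the relation you wrote for $i_1$ has the wrong crossing and is false for non-symmetric $\CC$. The multiplication on $B_1\ot A$ uses $\beta_{A,B_1}$, so inserting units into the multiplicativity of $u$ forces
\[
m_A\circ(\id_A\ot i_1)=m_A\circ(i_1\ot\id_A)\circ\beta_{A,B_1},
\qquad\text{i.e.}\qquad
m_A\circ(i_1\ot\id_A)=m_A\circ(\id_A\ot i_1)\circ\beta^{-1}_{A,B_1}.
\]
Your $i_2$-relation $m_A\circ(i_2\ot\id_A)=m_A\circ(\id_A\ot i_2)\circ\beta_{B_2,A}$ is correct. These are (C$_1$) and (C$_2$) in the proof of Theorem~\ref{thm:modules_in_bimodule_categories}. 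With them, the crossings $\beta_{A,B_1}$ in $r^{12}_A$ and $\beta_{B_2,A}$ in $l^{12}_A$ cancel exactly:
\[
r^{12}_A=m_A\circ(\id_A\ot\bar h),\qquad l^{12}_A=m_A\circ(\bar h\ot\id_A).
\]
The balancing is then just associativity of $m_A$, and no commutation of $i_1$ with $i_2$ is needed. With your version, a monodromy $\beta_{B_1,A}\circ\beta_{A,B_1}$ survives in $r^{12}_A$ and the check fails. So the ``switch to $\beta^{-1}$'' is not an optional repair; it is what the derivation forces.

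Second correction: unitality is not simply inherited from $m_A$ and $h_A$. The unitor $(B_1\ot B_2)\otd_{B_1\ot B_2}A\simeq A$ is induced by $l^{12}_A$. After precomposing with the quotient map, you therefore need $m_A\circ(\bar h\ot\id_A)=l^{12}_A$, and symmetrically $m_A\circ(\id_A\ot\bar h)=r^{12}_A$. The displayed identities supply both. This is the step the paper handles with its large unitality diagram.
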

\begin{proof}
    Let $((A,m:A\ot A\to A, h:1\to A),u:B_1\ot A\to A,d:A\ot B_2\to A)$ be an object in $\hind{B_1}{\Alg_{E_1}(\CC)}{B_2}$.
    Note that $B_1$ and $B_2$ are $E_1$-algebras in $\Alg_{E_1}(\CC)$, then by Lemma \ref{lem:lax_monoidal_functor_preserves_modules} and Proposition \ref{prp:forget_is_monoidal}, $(A,u,d)$ is an object in $\hind{B_1}{\CC}{B_2}$.
     
    Now we want to induce algebra structure on $(A,u,d)$ by $m$ and $h$.
    \begin{itemize}
        \item We define $h':B_1\ot B_2\to A$ to be 
        \begin{align*}
            h':B_1\ot B_2\xrightarrow{\id\ot h\ot\id}B_1\ot A\ot B_2\xrightarrow{u\ot\id}A\ot B_2\xrightarrow{d} A
        \end{align*}
        By compatibility of bimodule actions, $u$ and $d$ can commute in above definition.
        We also need to check $h'$ is a morphism in $\hind{B_1}{\CC}{B_2}$.
        \begin{enumerate}
            \item Here we show $h'$ is a left $B_1$-module homomorphism
            \begin{align*}
                \xymatrix{
                    B_1 B_1 B_2 \ar[r]^{\id\ot h\ot\id}\ar[d]_{m_1\ot\id} &B_1B_1AB_2\ar[r]^{\id\ot d}\ar[d]_{m_1\ot\id} &B_1B_1 A\ar[d]_{m_1\ot\id}\ar[r]^{u\ot\id} &B_1A\ar[d]^u\\
                    B_1 B_2\ar[r]_{\id\ot h\ot\id} &B_1A B_2\ar[r]_{\id\ot d} &B_1 A\ar[r]_{u} &A
                }
            \end{align*}

            \item we can prove $h'$ is a right $B_2$-module homomorphism similarly.
        \end{enumerate}
        
        \item Consider 
        \begin{align*}
            \xymatrix{
                A\ot (B_1\ot B_2)\ot A\ar@<0.5ex>[r]^{(ud\ot\id)\circ(\beta_{A,B_1}\ot\id)}\ar@<-0.5ex>[r]_{(\id\ot ud)\circ(\id\ot\beta_{B_2,A})} &A\ot A\ar[r]^{p_{A,A}}\ar[dr]_m &A\otd\limits_{B_1\ot B_2}A\ar@{-->}[d]^{m'}\\
                    & &A 
            }
        \end{align*}
        we demonstrate that $m$ coequalize the above two morphisms $(ud\ot\id)\circ(\beta_{A,B_1}\ot\id)$ and $(\id\ot ud)\circ(\id\ot\beta_{B_2,A})$, or equivalently, the following diagram commutes 
        \begin{align*}
            \xymatrix{
                    &B_1AB_2A\ar[r]^{ud\ot\id_{A}}\ar[d] &AA\ar[dr]^m\\
                AB_1B_2A\ar[ur]^{\beta_{A,B_1}\ot\id}\ar[dr]_{\id\ot \beta_{B_2,A}} &B_1AAB_2\ar[r]_{\id\ot m \ot\id} &B_1AB_2\ar[r]_{ud} &A\\
                    &AB_1AB_2\ar[r]_{\id_{A}\ot ud}\ar[u] &AA\ar[ur]_m
            }
        \end{align*}
        where left subdiagram commutes obviously, and we expand right bottom subdiagram as follows
        \begin{align*}
            \xymatrix{
                B_1AAB_2\ar[d]_{h_1\ot\id} &AB_1AB_2\ar[d]_{h_1\ot\id}\ar[l]_{\beta_{}\ot\id}\ar[r]^{\id\ot d} &AB_1A\ar[dr]^{\id\ot u}\ar[d]_{h_1\ot\id }\\
                B_1B_1AAB_2\ar[d]_{m_1\ot\id} & B_1AB_1AB_2\ar[l]^{\id\ot\beta_{}\ot\id}\ar[d]_{m_{B_1\ot A}\ot\id}\ar[r]_{\id\ot d} &B_1AB_1A\ar[r]^{u\ot u}\ar[d]_{m_{B_1\ot A}} &AA\ar[d]^m\\
                B_1AAB_2\ar[r]_{\id\ot m\ot\id} &B_1AB_2\ar[r]_{\id\ot d} &B_1A\ar[r]_u &A
            }
        \end{align*}
        The right top subdiagram commutes similarly.
        So by universal property, there is a unique morphism $m':A\otd\limits_{B_1\ot B_2}A\to A$ in $\CC$ such that $m'\circ p_{A,A}=m$.

        \begin{enumerate}
            \item We need to check $m'$ is a morphism in $\hind{B_1}{\CC}{B_2}$, here we show $m'$ is a left $B_1$-module homomorphism which is the right subdiagram of the follow commutative diagram
            \begin{align*}
                \xymatrix{
                    (B_1 A) (B_1 B_2) A\ar@<0.5ex>[r]\ar@<-0.5ex>[r]\ar[d]_{u\ot\id} &(B_1 A) A\ar[r]^{p_{}}\ar[d]_{u\ot\id}\ar@/^2pc/[rr]^{\id\ot m} &B_1A\otd\limits_{B_1\ot B_2} A\ar@{-->}[d]^{u'}\ar[r]^{\id\ot m'} &B_1A\ar[d]^u\\
                    A (B_1 B_2) A\ar@<0.5ex>[r]\ar@<-0.5ex>[r] &A A\ar[r]_{p_{A,A}}\ar@/_2pc/[rr]_{m} &A\otd\limits_{B_1\ot B_2} A\ar[r]_{m'} &A
                }
            \end{align*}
            It is not hard to obtain $m'\circ u'\circ p=u\circ(\id\ot m')\circ p$ by reducing diagram.
            Since $p$ is an epi, we can also reduce it at both sides, which provide what we want.
            
            \item The proof that $m'$ is a right $B_2$-module homomorphism is similar.
            
            \item It is not hard to check $m'$ is associative, provided by the associativity of $m$.
            
            \item  For unitality, we consider the following diagram
            \begin{align*}
                \xymatrix{
                    (B_1 B_2) (B_1 B_2) A\ar@<0.5ex>[rr]\ar@<-0.5ex>[rr]\ar[dd]_{h'\ot\id} & &(B_1 B_2) A\ar[rr]^{p_{B_1B_2,A}}\ar[dd]_{h'\ot\id}\ar[dr]^{l^{12}_{A}} & &(B_1 B_2)\otd\limits_{B_1\ot B_2} A\ar@{-->}[dd]^{h'\otd\limits_{B_1\ot B_2}\id}\ar@{-->}[dl]^{\lambda}
                    \\
                     & & &A & 
                    \\
                    A (B_1 B_2) A\ar@<0.5ex>[rr]\ar@<-0.5ex>[rr] & &A A\ar[rr]_{p_{A,A}}\ar[ur]_m & &A\otd\limits_{B_1\ot B_2} A\ar@{-->}[ul]_{m'}
                }
            \end{align*}
            Since $h'$ now is $B_1$-$B_2$-bimodule homomorphism, by Theorem \ref{thm:B_1CB_2_to_B_1B_2CB_1B_2_is_fully_faithful}, $h'$ is a $B_1\ot B_2$-$B_1\ot B_2$-bimodule homomorphism.
            By definition of $h'\otd\limits_{B_1\ot B_2} \id_{A}$, we have $(h'\otd\limits_{B_1\ot B_2} \id_{A}) \circ p_{B_1B_2,A}= p_{A,A} \circ (h'\ot\id_{A})$, which means the right square commutes.
            Now bottom triangle commutes by above results, and hence $m'\circ (h'\otd\limits_{B_1\ot B_2} \id_{A}) \circ p_{B_1B_2,A}=m\circ (h'\ot\id)$.
            We claim that the left triangle commutes, as a consequence $m'\circ (h'\otd\limits_{B_1\ot B_2} \id_{A}) \circ p_{B_1B_2,A}=l^{12}_{A}$.
            Top triangle commutes obviously (by universal property), so we have $m'\circ (h'\otd\limits_{B_1\ot B_2} \id_{A}) \circ p_{B_1B_2,A}=\lambda \circ p_{B_1B_2,A}$.
            Since $p_{B_1B_2,A}$ is an epi, we now reach to $m'\circ (h'\otd\limits_{B_1\ot B_2} \id_{A})=\lambda$.

            To show the left triangle commutes, we expand it as follows:
            \begin{align*}
                \xymatrix{
                    B_1 B_2 A\ar[rr]^{\id\ot \beta_{B_2,A}}\ar[dd]_{\id\ot h\ot\id}\ar[ddr]^{\id\ot h\ot\id\ot h_2} & &B_1 A B_2\ar[d]^{\id\ot h\ot \id\ot h_2}\ar[rr]^{u\ot \id } & &A B_2\ar[dd]_{\id}\ar[ddl]_{h\ot \id\ot h_2}\\
                     & &B_1 A A B_2 B_2\ar[dr]^{u\ot \id} && \\
                    B_1 A B_2 A\ar[dd]_{u\ot \id}\ar[r]^{\id\ot h_2} &B_1 A B_2 A B_2\ar[r]^{u\ot\id}\ar[ur]^{\id\ot\beta_{B_2, A}\ot\id} & A B_2 A B_2\ar[r]_{\id\ot\beta_{B_2,A}\ot \id}\ar[dd]_{d\ot d} &A A B_2 B_2\ar[r]_{m\ot m_2} &A B_2\ar[dd]_d\\
                     &&&&\\
                    A B_2 A\ar[rr]_{d\ot\id}\ar[rruu]^{\id\ot h_2} & &A A\ar[rr]_m & &A
                }
            \end{align*}
            where the commutativity of the constituent regions follows from
            naturality of the braiding, the module axioms, and the defining
            universal properties of the relative tensor products.
        \end{enumerate}
    \end{itemize}
\end{proof}

\begin{lem}[\cite{Lur17}]\label{lem:lax_monoidal_functor_preserves_algebras}
    A lax monoidal functor $F:\CD\to \CC$ preserves $E_1$-algebras, i.e. it induces a functor
    \begin{align*}
        \Alg_{E_1}(U):\Alg_{E_1}(\CD)\to \Alg_{E_1}(\CC)
    \end{align*}
\end{lem}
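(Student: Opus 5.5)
The statement to prove is Lemma \ref{lem:lax_monoidal_functor_preserves_algebras}: a lax monoidal functor $F:\CD\to\CC$ induces a functor $\Alg_{E_1}(F):\Alg_{E_1}(\CD)\to\Alg_{E_1}(\CC)$. The proof is a direct transport of structure along the lax structure maps
\[
    \mu_{X,Y}:F(X)\ot F(Y)\to F(X\ot Y),\qquad \epsilon:1\to F(1).
\]

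\emph{Objects.} Given $(A,m,h)\in\Alg_{E_1}(\CD)$, I equip $F(A)$ with multiplication and unit
\[
    m_{F(A)}:=F(m)\circ\mu_{A,A},\qquad h_{F(A)}:=F(h)\circ\epsilon .
\]
Associativity is the main check. I compare the two composites
\[
    F(A)\ot F(A)\ot F(A)\to F(A).
\]
First, naturality of $\mu$ rewrites $\mu_{A,A}\circ(F(m)\ot\id)$ as $F(m\ot\id)\circ\mu_{A\ot A,A}$, and similarly on the other side. Next, the associativity hexagon of the lax structure identifies
\[
    \mu_{A\ot A,A}\circ(\mu_{A,A}\ot\id)
    \quad\text{with}\quad
    \mu_{A,A\ot A}\circ(\id\ot\mu_{A,A}),
\]
both as maps into $F(A\ot A\ot A)$, modulo associators. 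Finally, applying $F$ to the associativity square of $m$ in $\CD$ finishes the check. Unitality is analogous: naturality of $\mu$ together with the lax unit axiom reduces it to $F$ applied to the unit axiom $m\circ(h\ot\id)=\id$, and its mirror.

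\emph{Morphisms.} For an algebra map $f:A\to A'$, the map $F(f)$ is an algebra map. Multiplicativity follows from naturality of $\mu$ in both variables together with $F(f)\circ F(m)=F(m')\circ F(f\ot f)$. Unitality is $F(f)\circ F(h)=F(h')$. Functoriality of $\Alg_{E_1}(F)$ is then inherited from that of $F$.

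The only potentially delicate step is the associativity diagram, since the coherence isomorphisms of $\CC$ and $\CD$ must be tracked. I would handle this by working in the strictified setting, which is harmless by coherence. In the intended application, $F=U_{E_1}$ is strong monoidal by Proposition \ref{prp:forget_is_monoidal}, so the $\mu$ are even invertible. In the $\infty$-categorical setting one can instead cite \cite{Lur17}: a lax monoidal functor is a map of $\infty$-operads, and composing with an algebra $\mathrm{Ass}^\otimes\to\CD^\otimes$ gives an algebra in $\CC$.
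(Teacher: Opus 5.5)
Your proof is correct. The paper gives no argument of its own, only the citation to \cite{Lur17}. There the statement is essentially formal: a lax monoidal functor is a map of ($\infty$-)operads $\CD^\otimes\to\CC^\otimes$, an $E_1$-algebra is an operad map out of the associative operad, and one simply composes the two.

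Your route is the 1-categorical unwinding of that composition. You transport structure via $m_{F(A)}=F(m)\circ\mu_{A,A}$ and $h_{F(A)}=F(h)\circ\epsilon$. Associativity then follows from naturality of $\mu$ together with the lax associativity axiom, and unitality from the lax unit axiom. This explicit version is actually the more useful one for this paper, which works with ordinary 1-categories and uses exactly these formulas. In the proof of Lemma~\ref{lem:bimod_alg_is_alg_bimod}, the lemma is applied to the forgetful functor $U:\hind{B_1}{\CC}{B_2}\to\CC$. Its lax structure comes from Lemma~\ref{lem:lax_monoidal_structure_on_forget_of_B_1CB_2}, with $\mu_{M,N}=p_{M,N}$ and $\epsilon=h_1\ot h_2$. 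Your formulas then reproduce exactly the algebra structure $m=m'\circ p_{A,A}$ and $h=h'\circ(h_1\ot h_2)$ written there. This presumably also explains the $U$ in $\Alg_{E_1}(U)$ in the statement.

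One correction to your closing remark: the intended application is not $U_{E_1}$. Proposition~\ref{prp:forget_is_monoidal} is paired with Lemma~\ref{lem:lax_monoidal_functor_preserves_modules}, not with this lemma. In the actual application, the structure maps $p_{M,N}:M\ot N\to M\otd_{B_1\ot B_2}N$ are coequalizer quotient maps, not isomorphisms. This does not affect your proof, since you never invert $\mu$. It is, however, exactly why the lemma has to hold for genuinely lax, non-strong functors.
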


\begin{lem}\label{lem:bimod_alg_is_alg_bimod}
    Let $A\in \Alg_{E_1}(\hind{B_1}{\CC}{B_2})$, then we have $A\in \hind{B_1}{\Alg_{E_1}(\CC)}{B_2}$.
\end{lem}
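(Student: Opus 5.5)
The plan is to reverse the construction in Lemma \ref{lem:alg_bimod_is_bimod_alg}. Let $(A,u,d)\in\hind{B_1}{\CC}{B_2}$ be equipped with a multiplication $m':A\otd\limits_{B_1\ot B_2}A\to A$ and a unit $h':B_1\ot B_2\to A$, both morphisms in $\hind{B_1}{\CC}{B_2}$ and satisfying associativity and unitality for the monoidal structure of Theorem \ref{thm:bimodule-category-monoidal}. We define an $E_1$-algebra structure on the underlying object of $\CC$ by
\[
    m:=m'\circ p_{A,A}:A\ot A\to A,
    \qquad
    h:=h'\circ(h_1\ot h_2):1\to A.
\]
Associativity of $m$ follows from associativity of $m'$. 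To see this, note that $p_{A,A}\ot\id_A$ and $\id_A\ot p_{A,A}$, composed with the next projection, both give the canonical epimorphism $A\ot A\ot A\to A\otd\limits_{B_1\ot B_2}A\otd\limits_{B_1\ot B_2}A$; the tensor product is right exact, so these are coequalizers. Unitality of $m$ is obtained from the unit axiom $m'\circ(h'\otd\limits_{B_1\ot B_2}\id_A)=\lambda$ by precomposing with $p_{B_1B_2,A}\circ(h_1\ot h_2\ot\id_A)$. This yields $l^{12}_A\circ(h_1\ot h_2\ot\id_A)=\id_A$, which holds by unitality of $u$ and $d$. The right unit is handled symmetrically.

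It then remains to check that $u$ and $d$ satisfy the compatibility square in the definition of an $E_1$ $B_1$-$B_2$-bimodule, i.e.\ that $u$ is multiplicative: $u\circ(m_1\ot m)\circ(\id\ot\beta_{A,B_1}\ot\id)=m\circ(u\ot u)$. The plan has two parts.

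\emph{Step one.} Because $m'$ is a left $B_1$-module map and the $B_1$-action on $A\otd\limits_{B_1\ot B_2}A$ is induced from the first factor, we have $u\circ(\id\ot m)=m\circ(u\ot\id_A)$.

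\emph{Step two.} We need to move a $B_1$ past an $A$ onto the second factor. Here we use that $p_{A,A}$ coequalizes $r^{12}_A\ot\id$ and $\id\ot l^{12}_A$. Restricting along $\id_A\ot\id_{B_1}\ot h_2\ot\id_A$ gives
\[
    m\circ\bigl((u\circ\beta_{A,B_1})\ot\id_A\bigr)=m\circ(\id_A\ot u).
\]
So a $B_1$ acting on the second factor equals a $B_1$ braided past the first factor and acting there. Combining the two steps with multiplicativity of $m_1$ and associativity of $u$, the expression $m\circ(u\ot u)$ rewrites to $u\circ(m_1\ot m)\circ(\id\ot\beta_{A,B_1}\ot\id)$. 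The $d$-square follows by the mirror argument, using right $B_2$-linearity of $m'$. The unit compatibility, i.e.\ that $u$ and $d$ preserve units, follows since $h'$ is a bimodule map, giving $u\circ(\id\ot h)=h'\circ(\id\ot h_2)$, which together with unitality of $m$ yields the standard unital form.

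The main obstacle is the braiding bookkeeping in Step two. The coequalizer relation involves $\beta_{A,B_1}$ in $r^{12}_A$, whereas the compatibility square uses $\beta_{A,B_1}$ placed between $B_1$ and $A$ in $B_1\ot A\ot B_1\ot A$, and the two crossings must be matched correctly. I would first verify this on the string-diagram level with $B_2$ set to its unit, then use the hexagon identity and naturality to confirm the crossing direction agrees (positive crossing, consistent with the definition of $l^{12}$ and $r^{12}$ in Theorem \ref{thm:B_1CB_2_to_B_1B_2CB_1B_2_is_fully_faithful}). Finally, morphisms of algebras in $\hind{B_1}{\CC}{B_2}$ are sent to morphisms in $\hind{B_1}{\Alg_{E_1}(\CC)}{B_2}$, since $p_{A,A}$ is natural; this makes the assignment functorial and inverse to Lemma \ref{lem:alg_bimod_is_bimod_alg}.
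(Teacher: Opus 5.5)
Your proposal is correct and follows essentially the same route as the paper. The paper obtains the $E_1$-algebra structure $m=m'\circ p_{A,A}$, $h=h'\circ(h_1\ot h_2)$ by applying Lemma~\ref{lem:lax_monoidal_functor_preserves_algebras} to the lax monoidal forgetful functor of Lemma~\ref{lem:lax_monoidal_structure_on_forget_of_B_1CB_2}, which packages your hand checks of associativity and unitality. It then proves multiplicativity of $u$ with exactly your two ingredients: left $B_1$-linearity of $m'$ together with the universal property of $p_{A,A}$, and the coequalizer relation restricted along the unit $h_2$, using naturality of the braiding and unitality of $d$.

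Your crossing bookkeeping closes up with the positive braiding $\beta_{A,B_1}$, as you expected. Unit preservation is simpler than your route through $h'$: $u\circ(h_1\ot h)=h$ follows directly from unitality of the action $u$.
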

\begin{proof}
    Let $((A,u:B_1\ot A\to A, d:A\ot B_2\to A), m':A\otd\limits_{B_1\ot B_2} A\to A, h':B_1\ot B_2\to A)$ be an object in $\Alg_{E_1}(\hind{B_1}{\CC}{B_2})$.
    By Lemma \ref{lem:lax_monoidal_structure_on_forget_of_B_1CB_2} and Lemma \ref{lem:lax_monoidal_functor_preserves_algebras}, we have $(A,h:1\xrightarrow{h_1\ot h_2}B_1\ot B_2\xrightarrow{h'}A,m:A\ot A\xrightarrow{p_{A,A}}  A\otd\limits_{B_1\ot B_2} A\xrightarrow{m'}A)$ is an $E_1$-algebra in $\CC$.

    \begin{itemize}

\item We claim $u$ is the left $B_1$-action on $A$ in $\Alg_{E_1}(\CC)$:
        \begin{enumerate}
            \item $u$ is an $E_1$-algebra homomorphism:
            \begin{align*}
                \xymatrix{
                    B_1AB_1A
                        \ar[d]_{\id\ot\beta_{A,B_1}\ot\id}
                        \ar[dr]^{\id\ot u}
                    \\
                    B_1B_1AA
                        \ar[r]_{\id\ot u\ot \id}
                        \ar[d]_{\id\ot p_{A,A}}
                    &BAA
                        \ar[dr]^{u\ot\id}
                        \ar[d]_{\id\ot p_{A,A}}
                    \\
                    B_1B_1A\otd\limits_{B_1\ot B_2}A
                        \ar[d]_{\id\ot m'}
                        \ar[r]_{\id\ot u_{A\otd\limits_{B_1\ot B_2}A}}
                    &BA\otd\limits_{B_1\ot B_2}A
                        \ar[d]_{\id\ot m'}
                        \ar[dr]^{ u_{A\otd\limits_{B_1\ot B_2}A}}
                    &AA
                        \ar[d]^{p_{A,A}}
                    \\
                    B_1B_1A
                        \ar[d]_{m_1\ot\id}
                        \ar[r]_{\id\ot u}
                    &B_1A
                        \ar[dr]_{u}
                    &A\otd\limits_{B_1\ot B_2}A
                        \ar[d]^{m'}
                    \\
                    B_1A
                        \ar[rr]_u
                    &
                    &A
                }
            \end{align*}
            where the lower left square commutes by associativity of $u$; the middle two left and middle two right squares commutes by $m'$ is a left $B_1$-module homomorphism and universal property of relative tensor products;
            the upper left square commutes as we expand it as follows:
            \[\begin{tikzcd}
                &&& {B_1A 1A} &&&&&& \\
                \\
                {AB_1 1A} \\
                &&& {B_1AB_2A} && {B_1AA} \\
                &&& {AB_1A 1} \\
                {AB_1B_2 A} &&&&&&& {AA} && {A\otd\limits_{B_1\ot B_2}A} \\
                \\
                &&& {AB_1AB_2} && {AB_1A}
                \arrow["{\id\ot h_2\ot \id}"', from=1-4, to=4-4]
                \arrow["\cong"{description}, from=1-4, to=4-6]
                \arrow["{\beta_{A,B_1}\ot\id}", from=3-1, to=1-4]
                \arrow["{\id\ot \beta_{1,A}}"{description}, from=3-1, to=5-4]
                \arrow["{\id\ot h_2\ot\id}"', from=3-1, to=6-1]
                \arrow["{\id\ot d\ot \id}"', from=4-4, to=4-6]
                \arrow["{u\ot \id}", from=4-6, to=6-8]
                \arrow["{\id\ot h_2}"', from=5-4, to=8-4]
                \arrow["\cong"{description}, from=5-4, to=8-6]
                \arrow["{\beta_{A,B_1}\ot\id }"{description}, from=6-1, to=4-4]
                \arrow["{\id\ot \beta_{B_2,A}}"', from=6-1, to=8-4]
                \arrow["{p_{A,A}}"{description}, from=6-8, to=6-10]
                \arrow["{\id\ot d}"', from=8-4, to=8-6]
                \arrow["{\id\ot u}"', from=8-6, to=6-8]
            \end{tikzcd}\]
            here the vertical squares are naturality of braidings; vertical triangles are unitality of $d$; the bottom is coequalized by $p_{A,A}$.

            $u$ preserves unit is ensured by the unitality of $u$.
            \item $u$ satisfies the associativity of module action in $\Alg_{E_1}(\CC)$.
            Indeed, now $m_1:B_1\ot B_1\to B_1$ and $u$ are both morphism in $\Alg_{E_1}(\CC)$, so the associative diagram of $u$ holds in $\Alg_{E_1}(\CC)$
            \item Similarly, $u$ satisfies the unitality of module action in $\Alg_{E_1}(\CC)$, since $h_1\in\Alg_{E_1}(\CC)$.
        \end{enumerate}
        \item Similarly, we can show that $d$ is the right $B_2$-action on $A$ in $\Alg_{E_1}(\CC)$.
        And the compatibility of $u$ and $d$ in $\Alg_{E_1}(\CC)$ is automatically lifted from $\CC$.
    \end{itemize}
\end{proof}

\begin{thm}\label{thm:algebras_in_bimodule_categories}
    Let $B_1$ and $B_2$ be two $E_2$-algebras in an $E_{m}$-monoidal category $\CC$ ($m\geq 2$). 
    We have $\hind{B_1}{\Alg_{E_1}(\CC)}{B_2}\simeq \Alg_{E_1}(\hind{B_1}{\CC}{B_2})$ as categories.
\end{thm}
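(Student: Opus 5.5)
The plan is to promote the object-level constructions of Lemma \ref{lem:alg_bimod_is_bimod_alg} and Lemma \ref{lem:bimod_alg_is_alg_bimod} to a pair of functors and show they are mutually inverse. Define
\[
\Phi:\hind{B_1}{\Alg_{E_1}(\CC)}{B_2}\to\Alg_{E_1}(\hind{B_1}{\CC}{B_2})
\]
on objects by $(A,m,h,u,d)\mapsto((A,u,d),m',h')$, where $m'$ is the factorization of $m$ through $p_{A,A}$ and $h'=d\circ(u\ot\id)\circ(\id\ot h\ot\id)$. Define
\[
\Psi:\Alg_{E_1}(\hind{B_1}{\CC}{B_2})\to\hind{B_1}{\Alg_{E_1}(\CC)}{B_2}
\]
by $((A,u,d),m',h')\mapsto(A,\,m'\circ p_{A,A},\,h'\circ(h_1\ot h_2),\,u,\,d)$. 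Both functors act as the identity on underlying morphisms of $\CC$. It therefore suffices to show that they send morphisms to morphisms, and that $\Psi\Phi$ and $\Phi\Psi$ are the identity on objects.

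For morphisms, let $f:A\to A'$ be an $E_1$-algebra map in $\CC$ that is also $B_1$-$B_2$-equivariant. Equivariance means $f$ is a morphism in $\hind{B_1}{\CC}{B_2}$, so by Theorem \ref{thm:B_1CB_2_to_B_1B_2CB_1B_2_is_fully_faithful} it is $B_1\ot B_2$-bilinear. Hence $f\otd_{B_1\ot B_2}f$ is defined and satisfies $(f\otd f)\circ p=p\circ(f\ot f)$. Since $p$ is epi, the identity $m'_{A'}\circ(f\otd f)=f\circ m'_A$ follows from $m_{A'}\circ(f\ot f)=f\circ m_A$. Compatibility with units, $f\circ h'=h'_{A'}$, follows because $f$ commutes with $u$, $d$ and $h$. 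Conversely, a morphism in $\Alg_{E_1}(\hind{B_1}{\CC}{B_2})$ is a bimodule map compatible with $m'$ and $h'$. Precomposing with $p$ and with $h_1\ot h_2$ shows it is an algebra map in $\CC$, and bimodule maps are exactly equivariant maps. Functoriality is automatic, since composition is composition in $\CC$.

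For the round trip $\Psi\Phi$, the identity $m'\circ p=m$ holds by construction. For the unit we compute
\[
h'\circ(h_1\ot h_2)=d\circ(u\ot\id)\circ(h_1\ot h\ot h_2)=h,
\]
using the unitality of $u$ and $d$. Thus $\Psi\Phi=\id$ on the nose.

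The other round trip $\Phi\Psi$ is where the work lies. The multiplication is recovered by uniqueness of factorization through the coequalizer, since $(m'\circ p)'=m'$. The unit is the main obstacle: we must show $d\circ(u\ot\id)\circ(\id\ot\tilde h\ot\id)=h'$, where $\tilde h=h'\circ(h_1\ot h_2)$. The first step is to use that $h'$ is a $B_1$-$B_2$-bimodule map. This rewrites the left-hand side as $h'\circ(m_1\ot m_2)\circ(\id\ot h_1\ot h_2\ot\id)$, up to a braiding $\beta_{B_2,B_1}$ involving a unit; naturality kills that braiding. The unit axioms of $B_1$ and $B_2$ then reduce the expression to $h'$.

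One subtlety is that $h'$ need not be determined by $\tilde h$ alone a priori. It is, however, determined by its bimodule-map property, since the unit object $B_1\ot B_2$ is generated by $h_1\ot h_2$ under the actions. This is exactly the computation sketched above, and it is the step to write carefully. With both composites equal to the identity and both functors faithful on morphisms, we obtain an isomorphism, hence an equivalence, of categories.
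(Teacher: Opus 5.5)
Your proposal is correct and follows the paper's proof. Both proofs handle objects via Lemma \ref{lem:alg_bimod_is_bimod_alg} and Lemma \ref{lem:bimod_alg_is_alg_bimod}, handle morphisms by cancelling the epimorphism $p$, and conclude that the two constructions are mutually inverse. Your version also supplies the round-trip check for the unit $h'$, using left $B_1$- and then right $B_2$-linearity of $h'$ (no braiding is actually needed there); the paper only asserts this step as obvious.
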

\begin{proof}
    ~
    \begin{itemize}
        \item On object level: $\Rightarrow$:
            By Lemma \ref{lem:alg_bimod_is_bimod_alg}. 
            $\Leftarrow$: By Lemma \ref{lem:bimod_alg_is_alg_bimod}

        \item On morphism level:
        for $A_1$ and $A_2\in \hind{B_1}{\Alg_{E_1}(\CC)}{B_2}$ and let $f:A_1\to A_2$ be a morphism in $\hind{B_1}{\Alg_{E_1}(\CC)}{B_2}$, i.e. $f$ is an $E_1$-algebra homomorphism and a left $B_1$-module homomorphism and a right $B_2$-module homomorphism.
        To show $f\in \Alg_{E_1}(\hind{B_1}{\CC}{B_2})$, we only need to show $f$ is an $E_1$-algebra homomorphism between $(A_1,m_1',h_1')$ and $ (A_2,m_2',h_2')$.
        \begin{align*}
                \xymatrix{
                    A_1 (B_1 B_2) A_1
                        \ar@<0.5ex>[r]
                        \ar@<-0.5ex>[r]
                        \ar[d]_{f\ot\id\ot f} 
                    &A_1 \ot A_1
                        \ar[r]^{p_{A_1,A_1}}
                        \ar[d]_{f\ot f}
                        \ar@/^2pc/[rr]^{m_1} 
                    &A_1 \otd\limits_{B_1\ot B_2}A_1
                        \ar@{-->}[d]^{f\otd\limits_{B_1\ot B_2}f}
                        \ar[r]^{m_1'} 
                    &A_1
                        \ar[d]^f
                    \\
                    A_2 (B_1 B_2) A_2
                        \ar@<0.5ex>[r]
                        \ar@<-0.5ex>[r] 
                    &A_2\ot A_2
                        \ar[r]_{p_{A_2,A_2}}
                        \ar@/_2pc/[rr]_{m_2} 
                    &A_2\otd\limits_{B_1\ot B_2} A_2
                        \ar[r]_{m_2'} 
                    &A_2
                }
            \end{align*}
            By reducing the diagram, we have $f\circ m_1'\circ p_{A_1,A_1}=m_2'\circ f\otd\limits_{B_1\ot B_2}f\circ p_{A_1,A_1}$. Since $p_{A_1,A_1}$ is epic, we can cancel $p_{A_1,A_1}$ at both sides and obtain what we want.

            Conversely, if $f\in\Alg_{E_1}(\hind{B_1}{\CC}{B_2})$, to show $f\in\hind{B_1}{\Alg_{E_1}(\CC)}{B_1}$, we only need to show $f$ is an $E_1$-algebra homomorphism between $(A_1,m_1,h_1)$ and $(A_2,m_2,h_2)$.
            Also consider above diagram, which holds obviously.
    \end{itemize}
    The two constructions above are inverse on objects and morphisms, and therefore define the claimed equivalence of categories.

\end{proof}

By above theorem, an $E_1$ $B_1$-$B_2$-bimodule $A\in \hind{B_1}{\Alg_{E_1}(\CC)}{B_2}$ can be viewed as an $E_1$-algebra in the monoidal category $\hind{B_1}{\CC}{B_2}\simeq \hind{B_1}{\Alg_{E_0}(\CC)}{B_2}$. 
Hence we can consider the $A_1$-$A_2$-bimodules in $\hind{B_1}{\Alg_{E_0}(\CC)}{B_2}$ which leads to the category $\hind{A_1}{\hind{B_1}{\Alg_{E_{0}}(\CC)}{B_2}}{A_2}$.

\begin{lem}\label{lem:bibimod_to_bimod_in_bimod}
    Let $M$ be an $E_0$ $B_1$-$B_2$-bi-$A_1$-$A_2$-bimodule, then we have $M\in \hind{A_1}{(\hind{B_1}{\Alg_{E_{0}}(\CC)}{B_2})}{A_2}$.
\end{lem}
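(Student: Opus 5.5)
We need to show: given $(M,l,r)\in\hind{A_1}{\CC}{A_2}$ satisfying the coherence \eqref{eq:coherence_of_bibimod}, $M$ becomes an object of $\hind{B_1}{\CC}{B_2}$, and $l,r$ descend to actions $A_1\otd_{B_1\ot B_2}M\to M$ and $M\otd_{B_1\ot B_2}A_2\to M$ that are morphisms in $\hind{B_1}{\CC}{B_2}$. These descended actions must be associative and unital for the monoidal structure of Theorem \ref{thm:bimodule-category-monoidal}, with $A_1,A_2$ viewed as algebras there via Theorem \ref{thm:algebras_in_bimodule_categories}.

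\emph{Step 1 (the underlying $B_1$-$B_2$-bimodule).} Equip $M$ with $u^l_M=l\circ(i_1\ot\id_M)$ and $d^r_M=r\circ(\id_M\ot i_2)$. By Proposition \ref{prp:u^l_d^r_are_compatible} these are compatible, so $(M,u^l_M,d^r_M)\in\hind{B_1}{\CC}{B_2}$. The coherence \eqref{eq:coherence_of_bibimod} says this is the same as the structures obtained from $A_2$ (resp. $A_1$) through the negative crossings, namely $u^l_M=u^{r,-}_M$ and $d^r_M=d^{l,-}_M$. This choice-independence is what the next steps use.

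\emph{Step 2 (descent of the actions).} For the left action, I will show that $l:A_1\ot M\to M$ coequalizes the two maps
\[
A_1\ot B_1\ot B_2\ot M\rightrightarrows A_1\ot M,
\]
given by $r^{12}_{A_1}\ot\id$ and $\id\ot l^{12}_M$. Writing $l^{12}_M$ through $u^l_M$ and $d^r_M$, I will split this into a $B_1$-part and a $B_2$-part. For $B_1$, the $B_1$ reaching $M$ acts through $i_1$ and $l$; associativity of $l$ together with the $E_1$-homomorphism property of $u$ (so that $a\,i_1(b)$ matches $u$ applied after the braiding) identifies both composites. For $B_2$, the $B_2$ reaching $M$ acts by $d^r_M$. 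Here I replace $d^r_M$ by $d^{l,-}_M$ using Step 1, which turns it into a left action of $A_1$ through $i_2$ with an inverse braiding. That inverse braiding cancels against the braiding $\beta_{B_2,M}$ in $l^{12}_M$, and associativity of $l$ then finishes the argument. By the universal property of the coequalizer we obtain $\bar l:A_1\otd_{B_1\ot B_2}M\to M$, and $\bar r$ is obtained symmetrically.

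\emph{Step 3 (module morphisms and axioms).} To see that $\bar l$ is a $B_1$-$B_2$-bimodule map, I argue as in the proof of Lemma \ref{lem:alg_bimod_is_bimod_alg}: check the identity after precomposing with the epimorphism $p$, where it reduces to associativity of $l$ and the compatibility of $u,d$ on $A_1$. Associativity of $\bar l$ with respect to $m_1'$, and unitality with respect to $h_1'$, follow the same way from associativity and unitality of $l$, since $m_1'\circ p=m_1$. The commutation of $\bar l$ and $\bar r$ descends from that of $l$ and $r$, using that $\otd_{B_1\ot B_2}$ preserves the relevant coequalizers (right exactness).

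The main obstacle is the $B_2$-part of Step 2: matching the braidings exactly. In $l^{12}_M$ the crossing is $\beta_{B_2,M}$, while the coherence uses $\beta^{-1}_{B_2,M}$, and the two must compose to the identity rather than to a double braiding. I will check this first with a string-diagram computation modeled on the proof of Proposition \ref{prp:u^r-_d^r_are_compatible}. If the orientation fails, I would switch to the alternative pair $(u^{r,-}_M,d^r_M)$ of that proposition as the bimodule structure.
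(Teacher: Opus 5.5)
Your proposal is correct, and Steps 1 and 3 follow the paper. The difference is in how Step 2 establishes the $B_1\ot B_2$-balancing of $l$.

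The paper does it in one stroke. It shows $l\circ(h_1'\ot\id_M)=l^{12}_M$ for the unit $h_1'$ of $A_1$ in $\hind{B_1}{\CC}{B_2}$; the $B_2$-restriction of the left side is $d^{l,-}_M$, which equals $d^r_M$ by coherence. It then records $m_1\circ(\id_{A_1}\ot h_1')=r^{12}_{A_1}$, and balancing follows in three lines from associativity of $l$.

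You instead factor the two actions as $r^{12}_{A_1}=d_1\circ\bigl((u_1\circ\beta_{A_1,B_1})\ot\id_{B_2}\bigr)$ and $l^{12}_M=u^l_M\circ\bigl(\id_{B_1}\ot(d^r_M\circ\beta_{B_2,M})\bigr)$. This reduces $B_1\ot B_2$-balancing to separate $B_1$- and $B_2$-balancing. The $B_1$-part follows from (C$_1$) in the proof of Theorem~\ref{thm:modules_in_bimodule_categories} and associativity of $l$. The $B_2$-part follows from $d_1=m_1\circ(\id_{A_1}\ot i_2)$ and $d^r_M\circ\beta_{B_2,M}=d^{l,-}_M\circ\beta_{B_2,M}=l\circ(i_2\ot\id_M)$.

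So your orientation worry resolves favorably: once you split, the crossing cancels exactly and no fallback is needed. The fallback pair $(u^{r,-}_M,d^r_M)$ equals $(u^l_M,d^r_M)$ by coherence anyway. Without the split, a residual $\beta^{-1}_{B_2,B_1}$ would remain. It would have to be absorbed by $m_1\circ(i_1\ot i_2)=m_1\circ(i_2\ot i_1)\circ\beta^{-1}_{B_2,B_1}$ from the proof of Proposition~\ref{prp:u^r-_d^r_are_compatible}.

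Your route is more hands-on and never needs to identify a $B_1\ot B_2$-action from its two restrictions. The paper's packaging has the advantage that the single identity $l\circ(h_1'\ot\id_M)=l^{12}_M$ is reused for the unit axiom.

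That last point matters for your Step 3: unitality of $\bar l$ does not follow from unitality of $l$ alone. After precomposing with $p_{B_1\ot B_2,M}$, the unit axiom is exactly $l\circ(h_1'\ot\id_M)=l^{12}_M$, which uses the coherence $d^{l,-}_M=d^r_M$. You already have this from Step 2. Write $h_1'=m_1\circ(i_1\ot i_2)$ and combine associativity of $l$ with $l\circ(i_1\ot\id_M)=u^l_M$ and $l\circ(i_2\ot\id_M)=d^r_M\circ\beta_{B_2,M}$. So this is a misattribution rather than a gap.
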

\begin{proof}
    Let $(M,l,r)$ be an $E_0$ $A_1$-$A_2$-bimodule $M$ with respect to $B_1$, $B_2$.

    \begin{itemize}
        \item First we construct an object $(M,u^l_M,d^r_M)$ in $\hind{B_1}{\CC}{B_2}$ where 
        \begin{align*}
            u^l_M:=B_1\ot M\xrightarrow{\id_{B_1}\ot h_1\ot\id_M}B_1\ot A_1\ot M\xrightarrow{u_1\ot\id_M}A_1\ot M\xrightarrow{l}M\\
            d^r_M:=M\ot B_2\xrightarrow{\id_M\ot h_2\otimes\id_{B_2}}M\ot A_2\ot B_2\xrightarrow{\id_M\ot d_2}M\ot A_2\xrightarrow{r} M
        \end{align*}
        By Proposition \ref{prp:u^l_d^r_are_compatible}, $u^l_M$ and $d^r_M$ are compatible $B_1$-$B_2$-bimodule actions.

        \item Now we need a morphism $l':A_1\otd\limits_{B_1\ot B_2}M\to M$ in $\hind{B_1}{\CC}{B_2}$ induced by $l:A_1\ot M\to M$
        \begin{align*}
            \xymatrix{
                A_1\ot (B_1\ot B_2)\ot M
                    \ar@<0.5ex>[r]^{r^{12}_{A_1}\ot\id_M}
                    \ar@<-0.5ex>[r]_{\id\ot l^{12}_M} 
                &A_1\ot M\ar[r]^{p_{A_1,M}}
                    \ar[dr]_l 
                &A_1\otd\limits_{B_1\ot B_2}M
                    \ar@{-->}[d]^{l'}
                \\
                & 
                &M
            }
        \end{align*}
        We prove that $l$ coequalizes the two displayed morphisms.  Let
        $h_1':B_1\ot B_2\to A_1$ be the unit of $A_1$ as an algebra in
        $\hind{B_1}{\CC}{B_2}$.  Restricting the left $A_1$-action along
        $h_1'$ gives a left $B_1\ot B_2$-action
        \begin{align*}
            l\circ(h_1'\ot\id_M):B_1\ot B_2\ot M\longrightarrow M.
        \end{align*}
        Its $B_1$- and $B_2$-restrictions are $u_M^l$ and $d_M^{l,-}$,
        respectively.  By the defining bi-bimodule coherence,
        $d_M^{l,-}=d_M^r$.  Hence Theorem
        \ref{thm:B_1CB_2_to_B_1B_2CB_1B_2_is_fully_faithful}, applied to
        the identity of the underlying object $M$, identifies this restricted
        action with the action induced by $(u_M^l,d_M^r)$:
        \begin{align*}
            l\circ(h_1'\ot\id_M)=l_M^{12}.
        \end{align*}
        Moreover, the right-unit computation for $A_1$ in Lemma
        \ref{lem:alg_bimod_is_bimod_alg} gives
        \begin{align*}
            m_1\circ(\id_{A_1}\ot h_1')=r_{A_1}^{12}.
        \end{align*}
        We now calculate, as morphisms $A_1\ot B_1\ot B_2\ot M\to M$,
        \begin{align*}
            l\circ(r_{A_1}^{12}\ot\id_M)
            &=l\circ(m_1\ot\id_M)
            \circ(\id_{A_1}\ot h_1'\ot\id_M)
            &&\text{by the preceding right-unit identity}\\
            &=l\circ(\id_{A_1}\ot l)
            \circ(\id_{A_1}\ot h_1'\ot\id_M)
            &&\text{by associativity of $l$}\\
            &=l\circ\bigl(\id_{A_1}\ot
            [l\circ(h_1'\ot\id_M)]\bigr)\\
            &=l\circ(\id_{A_1}\ot l_M^{12})
            &&\text{by the restriction identity above.}
        \end{align*}
        Thus $l$ is $B_1\ot B_2$-balanced.  By the universal property of the
        relative tensor product, there exists a unique morphism $l':A_1\otd_{B_1\ot B_2} M\longrightarrow M$
        such that $l'\circ p_{A_1,M}=l$.

        We next check that $l'$ is a morphism in
        $\hind{B_1}{\CC}{B_2}$.  Precomposing the left $B_1$-linearity
        diagram for $l'$ with $\id_{B_1}\ot p_{A_1,M}$ reduces it to the
        corresponding diagram for $l$, which commutes by the definition of
        $u_M^l$ and associativity of the $A_1$-action.  Since tensoring
        preserves the defining coequalizer, $\id_{B_1}\ot p_{A_1,M}$ is
        epic, so $l'$ is left $B_1$-linear.  The same argument, using
        $d_M^{l,-}=d_M^r$, proves right $B_2$-linearity.

        The associativity diagram for $l'$ is verified by precomposing it
        with the canonical quotient map from $A_1\ot A_1\ot M$.  After using
        $m_1'\circ p_{A_1,A_1}=m_1$ and
        $l'\circ p_{A_1,M}=l$, its two sides become
        \begin{align*}
            l\circ(m_1\ot\id_M),
            \qquad
            l\circ(\id_{A_1}\ot l),
        \end{align*}
        which are equal by associativity of $l$.  For the unit axiom,
        precomposition with $p_{B_1\ot B_2,M}:B_1\ot B_2\ot M\to B_1\ot B_2\otd_R M$ gives
        \begin{align*}
            [l'\circ(h_1'\otd_{B_1\ot B_2}\id_M)]\circ p_{B_1\ot B_2,M}
            =l\circ(h_1'\ot\id_M)
            =l_M^{12}
            =\lambda_M^{B_1\ot B_2}\circ p_{B_1\ot B_2,M}.
        \end{align*}
        Since $p_{B_1\ot B_2,M}$ is epic, we obtain
        $l'\circ(h_1'\otd_{B_1\ot B_2}\id_M)=\lambda_M^{B_1\ot B_2}$.  Therefore $l'$ is a
        left $A_1$-module action in $\hind{B_1}{\CC}{B_2}$.

        \item Similarly, the right action $r:M\ot A_2\to M$ satisfies
        \begin{align*}
            r\circ(r_M^{12}\ot\id_{A_2})
            =r\circ(\id_M\ot l_{A_2}^{12})
            :M\ot B_1\ot B_2\ot A_2\longrightarrow M.
        \end{align*}
        Indeed, precompose the associativity identity
        $r\circ(r\ot\id_{A_2})=r\circ(\id_M\ot m_2)$ with
        $\id_M\ot h_2'\ot\id_{A_2}$ and use the two unit-specialization
        identities
        \begin{align*}
            r\circ(\id_M\ot h_2')=r_M^{12},
            \qquad
            m_2\circ(h_2'\ot\id_{A_2})=l_{A_2}^{12}.
        \end{align*}
        Consequently, there is a unique
        \begin{align*}
            r':M\otd_{\sB_1\ot\sB_2} A_2\longrightarrow M
        \end{align*}
        such that $r'\circ p_{M,A_2}=r$.  The same quotient-map arguments
        as above show that $r'$ is a morphism in
        $\hind{B_1}{\CC}{B_2}$ and satisfies the associative and unital right
        $A_2$-module axioms.

        Finally, identify the two bracketings by the canonical relative
        tensor associator.  Precomposing the compatibility diagram for
        $l'$ and $r'$ with the canonical quotient map from
        $A_1\ot M\ot A_2$ reduces its two boundary composites to
        \begin{align*}
            r\circ(l\ot\id_{A_2}),
            \qquad
            l\circ(\id_{A_1}\ot r).
        \end{align*}
        These are equal because $(M,l,r)$ is an ordinary
        $A_1$-$A_2$-bimodule.  The quotient map is epic, so $l'$ and $r'$ are
        compatible.  Hence
        \begin{align*}
            (M,u_M^l,d_M^r,l',r')
            \in\hind{A_1}{\hind{B_1}{\Alg_{E_0}(\CC)}{B_2}}{A_2}.
        \end{align*}
    \end{itemize}
\end{proof}

\begin{lem}\label{lem:bimod_in_bimod_to_bibimod}
    There is a functor
    \begin{align}
        F:\hind{A_1}{(\hind{B_1}{\CC}{B_2})}{A_2}&\to \hind{A_1}{\vind{B_1}{\CC}{B_2}}{A_2}\\
        (M,u_M,d_M,l',r')&\mapsto (M,l,r)\\
        f&\mapsto f
    \end{align}
\end{lem}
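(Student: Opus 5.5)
The approach is to reverse the construction of Lemma \ref{lem:bibimod_to_bimod_in_bimod}, precomposing the internal actions with the canonical quotient maps. Given $(M,u_M,d_M,l',r')$ with $l':A_1\otd\limits_{B_1\ot B_2}M\to M$ and $r':M\otd\limits_{B_1\ot B_2}A_2\to M$ morphisms in $\hind{B_1}{\CC}{B_2}$, I will set
\[
l:=l'\circ p_{A_1,M}:A_1\ot M\to M,\qquad r:=r'\circ p_{M,A_2}:M\ot A_2\to M.
\]
On the algebra side, the underlying $E_1$-algebra structures of $A_1$ and $A_2$ in $\CC$ are $m_i=m_i'\circ p_{A_i,A_i}$ and $h_i=h_i'\circ(h_{B_1}\ot h_{B_2})$, as in Lemma \ref{lem:bimod_alg_is_alg_bimod}. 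On morphisms, $F$ is the identity.

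First I check that $(M,l,r)\in\hind{A_1}{\CC}{A_2}$. For associativity of $l$, the two composites $A_1\ot A_1\ot M\to M$ factor through the canonical map $A_1\ot A_1\ot M\to A_1\otd\limits_{B_1\ot B_2}A_1\otd\limits_{B_1\ot B_2}M$ and the relative associator. Associativity of $l'$ in $\hind{B_1}{\CC}{B_2}$ then gives the identity.

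For unitality, $l\circ(h_1\ot\id_M)=l'\circ p\circ(h_1'\ot\id)\circ(h_{B_1}\ot h_{B_2}\ot\id_M)$. By the internal unit axiom this equals $l^{12}_M\circ(h_{B_1}\ot h_{B_2}\ot\id_M)=\id_M$, since the unit of $B_1\ot B_2$ acts trivially in Theorem \ref{thm:B_1CB_2_to_B_1B_2CB_1B_2_is_fully_faithful}. The right action and the $l$--$r$ compatibility are handled in the same way through the triple quotient of $A_1\ot M\ot A_2$.

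The key step, and what I expect to be the main obstacle, is the bi-bimodule coherence \eqref{eq:coherence_of_bibimod}. It asks for two identities: $l\circ(i_1\ot\id_M)=u_M$, together with the same for $r\circ\beta^{-1}_{M,A_2}\circ(i_1'\ot\id)$, and the corresponding identities for $B_2$. I will first prove the internal unit identity $l\circ(h_1'\ot\id_M)=l'\circ(h_1'\otd\id)\circ p_{B_1\ot B_2,M}=l^{12}_M$. Precomposing with $\id_{B_1}\ot h_{B_2}\ot\id_M$ then gives $l\circ(i_1\ot\id_M)=u_M$, because $i_1=h_1'\circ(\id\ot h_{B_2})$, which follows from the construction of $h'$ in Lemma \ref{lem:alg_bimod_is_bimod_alg}. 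Precomposing instead with $h_{B_1}\ot\id_{B_2}\ot\id_M$ gives $l\circ(i_2\ot\id_M)=d_M\circ\beta_{B_2,M}$, i.e. $d^{l,-}_M=d_M$ after inserting $\beta^{-1}$.

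The right-hand analogues use $r\circ(\id_M\ot h_2')=r^{12}_M$, whose definition involves $\beta_{M,B_1}$. Precomposing with $\beta^{-1}_{M,B_1}$ yields $u^{r,-}_M=u_M$ and $d^r_M=d_M$. Thus all induced $B_i$-actions agree with $(u_M,d_M)$, and hence with each other, which is exactly the coherence. The delicate point is matching the crossing conventions ($\beta$ versus $\beta^{-1}$) in $l^{12}$ and $r^{12}$ with the negative crossings of Definition \ref{dfn:bibimod}. I will track these carefully with the hexagon and naturality identities, as in Proposition \ref{prp:u^r-_d^r_are_compatible}.

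Finally, on morphisms, an $f$ commuting with $l'$ and $r'$ commutes with $l=l'p$ and $r=r'p$, by naturality of $p$ in both variables, $(\id\otd f)\circ p=p\circ(\id\ot f)$. Functoriality is then immediate, since $F$ is the identity on underlying maps.
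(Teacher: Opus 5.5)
Your proposal is correct and follows essentially the same route as the paper. The paper obtains $(M,l,r)$ with $l=l'\circ p_{A_1,M}$ and $r=r'\circ p_{M,A_2}$ by pushing the internal module structure along the lax monoidal forgetful functor (Lemma \ref{lem:lax_monoidal_functor_preserves_modules}); your quotient-map verification of the bimodule axioms just unpacks that step. The paper then derives the coherence exactly as you do: it uses the internal unit identities $l\circ(h_1'\ot\id_M)=l^{12}_M$ and $r\circ(\id_M\ot h_2')=r^{12}_M$, restricts them along the units of $B_2$ and $B_1$, and concludes that all four induced $B_i$-actions coincide with $u_M$ and $d_M$.
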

\begin{proof}
    Let $(M,u_M,d_M,l',r')\in \hind{A_1}{(\hind{B_1}{\CC}{B_2})}{A_2}$, by Lemma \ref{lem:lax_monoidal_functor_preserves_modules}, we have $(M,l,r)\in \hind{A_1}{\Alg_{E_0}(\CC)}{A_2}$.
    Now we need to show that $(M,l,r)$ satisfies the bi-bimodule coherence conditions.

    We have the following commutative diagrams
    \begin{align*}
        \xymatrix{
            B_1M 
                \ar[r]^{\id\ot h_{B_2}\ot \id}
                \ar[dr]_{i_{11}\ot \id}
            &B_1B_2M
                \ar[r]^{p_{B_1B_2,M}}
                \ar[d]_{h_1'\ot\id}
                \ar@/^4pc/[drr]^{l^{12}_M}
            &B_1B_2\otd\limits_{B_1B_2}M
                \ar[dr]^{\lambda_M^{B_1B_2}}
                \ar[d]_{h_1'\otd\limits_{B_1B_2}\id}
            \\
            &A_1M
                \ar[r]_{p_{A_1,M}}
                \ar@/_2pc/[rr]_l
            &A_1\otd\limits_{B_1B_2}M\
                \ar[r]_{l'}
            &M
        }
    \end{align*}
    where the right triangle is the unitality of $l'$ in $\hind{B_1}{\CC}{B_2}$; the square commutes bt definition of $h_1'\otd\limits_{B_1B_2}\id$; others are by definition and universal properties.
    Now we have $d_M=l^{12}_M\circ (\id\ot h_{B_2}\ot \id)=l\circ (i_{11}\ot\id)=d_M^{l,-}$.
    By replacing the upper left node with $B_2\ot M$, we obtain $u_M^l=u_M$.
    
    Similarly, we can consider the commutative diagram for $h_2'$, and will obtain $u_M=u_M^{r,-}$ and $d_M=d_M^r$.
    As a consequence, $u_M^l=u_M^{r,-}$ and $d_M^{l,-}=d_M^r$,so we are done.
\end{proof}

\begin{thm}\label{thm:bibimod_equivalent_to_bimod_in_bimod}
    There is an equivalence 
    \begin{align*}
        \hind{A_1}{\vind{B_1}{\CC}{B_2}}{A_2}\simeq \hind{A_1}{(\hind{B_1}{\CC}{B_2})}{A_2}
    \end{align*}
    of categories.
\end{thm}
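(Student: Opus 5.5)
The plan is to assemble the equivalence from the two constructions already in hand: Lemma \ref{lem:bibimod_to_bimod_in_bimod} gives a functor $G:\hind{A_1}{\vind{B_1}{\CC}{B_2}}{A_2}\to \hind{A_1}{(\hind{B_1}{\CC}{B_2})}{A_2}$ on objects, $(M,l,r)\mapsto (M,u^l_M,d^r_M,l',r')$, and Lemma \ref{lem:bimod_in_bimod_to_bibimod} gives $F$ in the other direction, $(M,u_M,d_M,l',r')\mapsto (M,l'\circ p_{A_1,M},r'\circ p_{M,A_2})$. Both are the identity on underlying objects of $\CC$. So I will show that $G$ is well defined on morphisms, that $F\circ G$ and $G\circ F$ are the identity on objects, and that both functors are the identity on morphisms. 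This will in fact give an isomorphism of categories.

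\textbf{Step 1 (morphisms under $G$).} Let $f:M\to N$ be an $A_1$-$A_2$-bimodule map between bi-bimodules. It is automatically $B_1$-$B_2$-linear for $(u^l,d^r)$, because these actions are restrictions of $l$ and $r$ along $i_1$ and $i_2$. To see that $f$ intertwines $l'$, I will compare $f\circ l'$ and $l'_N\circ(\id\otd_{B_1\ot B_2}f)$ after precomposing with the epimorphism $p_{A_1,M}$. Both composites then reduce to $f\circ l=l_N\circ(\id\ot f)$. The same argument handles $r'$. Conversely, a morphism in the internal bimodule category is sent by $F$ to an $A_1$-$A_2$-bimodule map, again by composing with $p$. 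Since both categories sit as subcategories of $\hind{A_1}{\CC}{A_2}$ with the same hom-sets, full faithfulness is immediate.

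\textbf{Step 2 ($F\circ G=\id$).} Starting from $(M,l,r)$, we have $l'\circ p_{A_1,M}=l$ and $r'\circ p_{M,A_2}=r$ by construction, so $F(G(M))=(M,l,r)$.

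\textbf{Step 3 ($G\circ F=\id$).} This is the main obstacle. Starting from $(M,u_M,d_M,l',r')$, I must show that the recovered actions agree with the given ones:
\[
u^l_M=u_M,\qquad d^r_M=d_M .
\]
This is exactly what the unit diagrams in the proof of Lemma \ref{lem:bimod_in_bimod_to_bibimod} establish, via $h_1'$ and $h_2'$. The first thing I would do is re-run that diagram with the inputs $B_1\ot M$ and $M\ot B_2$, checking that the $B_1\ot B_2$-action $l^{12}_M$ built from $(u_M,d_M)$ restricts to $u_M$ and $d_M$. This uses the unitality of $d_M$ and $u_M$ together with the braiding identity $\beta_{1,M}=\id$.

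Once the actions agree, $l'$ and $r'$ are forced: they are the unique factorizations of $l$ and $r$ through the epimorphisms $p$. Hence $G(F(M))=M$. One subtlety to watch is the sign of the braiding in the cross-actions $u^{r,-}$ and $d^{l,-}$. The coherence in Definition \ref{dfn:bibimod} uses $\beta^{-1}$, and I need this to match the conventions in Theorem \ref{thm:B_1CB_2_to_B_1B_2CB_1B_2_is_fully_faithful}. If the signs do not match, I would instead use the remark comparing the crossings $u^{r,+}$ and $u^{r,-}$ to reconcile them.
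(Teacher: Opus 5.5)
Your proposal is correct and follows the paper's proof. The paper likewise combines Lemmas \ref{lem:bibimod_to_bimod_in_bimod} and \ref{lem:bimod_in_bimod_to_bibimod} and declares the two constructions mutually inverse; you simply spell out the morphism-level check and the identities $u^l_M=u_M$, $d^r_M=d_M$ that the paper leaves as ``obvious.''

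Your worry about braiding signs can be dropped. Restricting $l^{12}_M$ and $r^{12}_M$ along the units leaves no crossing in $u^l_M$ or $d^r_M$. The crossings $\beta_{B_2,M}$ and $\beta_{M,B_1}$ built into $l^{12}_M$ and $r^{12}_M$ give exactly $d^{l,-}_M=d_M$ and $u^{r,-}_M=u_M$, which matches the $\beta^{-1}$ convention of Definition \ref{dfn:bibimod}. So the fallback to the $u^{r,+}$ remark is never needed, and it would not help anyway, since that remark imposes an extra mutual-locality hypothesis.
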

\begin{proof}
    By Lemma \ref{lem:bibimod_to_bimod_in_bimod} and Lemma \ref{lem:bimod_in_bimod_to_bibimod}.
    It is obvious that the two constructions are inverse to each other.
\end{proof}

\begin{dfn}
    Two $E_2$-algebras are {\bf 2-Morita equivalent} if they are equivalent objects in $\Mrt_{E_2}(\CC)$. 
\end{dfn}

\begin{figure}[H]
    \centering

    \tikzset{
        bulkBOne/.style={fill=gray!42},
        bulkBTwo/.style={fill=gray!18},
        wall/.style={draw=black, thick},
        defectline/.style={
            draw=black,
            thick,
            dash pattern=on 4pt off 3pt
        },
        junction/.style={
            fill=white,
            draw=black,
            line width=0.7pt
        }
    }

    $\vfus{A}{\otc{B_2}}{A'}\xrightarrow{\ M\ }B_1,
    \qquad
    B_1\xrightarrow{\ M'\ }\vfus{A}{\otc{B_2}}{A'}$

    \vspace{0.35cm}

    \begin{minipage}[c]{0.41\linewidth}
        \centering
        \begin{tikzpicture}[x=0.78cm,y=0.78cm]
            \path[bulkBOne] (-3,-1.55) rectangle (3,1.55);

            \path[bulkBTwo]
                (-3,0.58) -- (-1.55,0.58)
                .. controls (-1.25,0.58) and (-1.10,0.32) .. (-1.00,0.09)
                -- (-1.00,-0.09)
                .. controls (-1.10,-0.32) and (-1.25,-0.58) .. (-1.55,-0.58)
                -- (-3,-0.58) -- cycle;
            \path[bulkBTwo]
                (3,0.58) -- (1.55,0.58)
                .. controls (1.25,0.58) and (1.10,0.32) .. (1.00,0.09)
                -- (1.00,-0.09)
                .. controls (1.10,-0.32) and (1.25,-0.58) .. (1.55,-0.58)
                -- (3,-0.58) -- cycle;

            \draw[wall] (-3,0.58) -- (-1.55,0.58)
                .. controls (-1.25,0.58) and (-1.10,0.32) .. (-1.00,0.09);
            \draw[wall] (-3,-0.58) -- (-1.55,-0.58)
                .. controls (-1.25,-0.58) and (-1.10,-0.32) .. (-1.00,-0.09);
            \draw[wall] (3,0.58) -- (1.55,0.58)
                .. controls (1.25,0.58) and (1.10,0.32) .. (1.00,0.09);
            \draw[wall] (3,-0.58) -- (1.55,-0.58)
                .. controls (1.25,-0.58) and (1.10,-0.32) .. (1.00,-0.09);

            \draw[defectline] (-0.91,0) -- (0.91,0);
            \fill[black] (-1,0) circle (0.10);
            \fill[black] (1,0) circle (0.10);

            \node at (0,1.12) {$B_1$};
            \node at (0,-1.12) {$B_1$};
            \node at (-2.15,0) {$B_2$};
            \node at (2.15,0) {$B_2$};
            \node at (-1.72,0.78) {$A$};
            \node at (-1.72,-0.78) {$A'$};
            \node at (1.72,0.78) {$A$};
            \node at (1.72,-0.78) {$A'$};
            \node at (-1,-0.48) {$M$};
            \node at (0.9,-0.48) {$M'$};
        \end{tikzpicture}
    \end{minipage}
    \hfill
    \begin{minipage}[c]{0.12\linewidth}
        \centering
        $\Longrightarrow$
    \end{minipage}
    \hfill
    \begin{minipage}[c]{0.41\linewidth}
        \centering
        \begin{tikzpicture}[x=0.78cm,y=0.78cm]
            \path[bulkBOne] (-3,-1.55) rectangle (3,1.55);
            \path[bulkBTwo] (-3,-0.58) rectangle (3,0.58);
            \draw[wall] (-3,0.58) -- (3,0.58);
            \draw[wall] (-3,-0.58) -- (3,-0.58);
            \node at (0,1.08) {$B_1$};
            \node at (0,0) {$B_2$};
            \node at (0,-1.08) {$B_1$};
            \node[above] at (1.35,0.58) {$A$};
            \node[below] at (1.35,-0.58) {$A'$};
        \end{tikzpicture}
    \end{minipage}

    \vspace{0.12cm}

    $M\otc{B_1}M'\cong_3 \vfus{A}{\otc{B_2}}{A'}$

    \vspace{0.35cm}

    \begin{minipage}[c]{0.41\linewidth}
        \centering
        \begin{tikzpicture}[x=0.78cm,y=0.78cm]
            \path[bulkBOne] (-3,-1.55) rectangle (3,1.55);
            \path[bulkBTwo]
                (-1.00,0.09)
                .. controls (-0.75,0.75) and (0.75,0.75) .. (1.00,0.09)
                -- (1.00,-0.09)
                .. controls (0.75,-0.75) and (-0.75,-0.75) .. (-1.00,-0.09)
                -- cycle;

            \draw[wall] (-1.00,0.09)
                .. controls (-0.75,0.75) and (0.75,0.75) .. (1.00,0.09);
            \draw[wall] (-1.00,-0.09)
                .. controls (-0.75,-0.75) and (0.75,-0.75) .. (1.00,-0.09);
            \draw[defectline] (-3,0) -- (-1.09,0);
            \draw[defectline] (1.09,0) -- (3,0);
            \fill[black] (-1,0) circle (0.10);
            \fill[black] (1,0) circle (0.10);

            \node at (-2.20,1.12) {$B_1$};
            \node at (-2.20,-1.12) {$B_1$};
            \node at (0,0) {$B_2$};
            \node at (0.42,0.72) {$A$};
            \node at (0.42,-0.72) {$A'$};
            \node at (-1.18,0.42) {$M'$};
            \node at (1.18,0.42) {$M$};
        \end{tikzpicture}
    \end{minipage}
    \hfill
    \begin{minipage}[c]{0.12\linewidth}
        \centering
        $\Longrightarrow$
    \end{minipage}
    \hfill
    \begin{minipage}[c]{0.41\linewidth}
        \centering
        \begin{tikzpicture}[x=0.78cm,y=0.78cm]
            \path[bulkBOne] (-3,-1.55) rectangle (3,1.55);
            \node at (0,0) {$B_1$};
        \end{tikzpicture}
    \end{minipage}

    \vspace{0.12cm}

    $M'\otd\limits_{\vfus{A}{\otc{B_2}}{A'}}M\cong_3 B_1$

    \caption{This figure shows 2-Morita equivalent data: the inverse 2-morphisms $M$ and $M'$, together with the
    3-isomorphisms identifying their two composites with the identity
    2-morphisms.}
    \label{fig:four-local-configurations}
\end{figure}

\subsection{The Module Realization Functor}\label{sec:functor_to_module_cat}
We now construct the module realization of the bi-bimodule model and identify its categories of \(2\)-morphisms via an \(E_2\) Eilenberg--Watts theorem.
  \begin{center}
    \begin{minipage}[c]{0.42\textwidth}
      \centering
      \resizebox{0.68\linewidth}{!}{%
      \begin{tikzpicture}
            \filldraw[fill=gray!40,draw=white] (-1.5,1) rectangle(1.5,-1);
            \draw[thick] (-1.5,0) -- (1.5,0);
            \fill[black] (0,0) circle (0.05);

            \node at(0,0.6){\tiny $B_1$};
            \node at(0,-0.6){\tiny $B_2$};

            \node at (0.75,0.2) {\tiny $A_2$};
            \node at (-0.75,0.2) {\tiny $A_1$};

            \node at (0,-0.15) {\tiny $M$};
  \end{tikzpicture}}
    \end{minipage}
    \hfill
    \begin{minipage}[c]{0.54\textwidth}
      \centering
      \resizebox{0.68\linewidth}{!}{%
            \begin{tikzpicture}
        \path[fill=hmDeepCyan] (-2,-1.5) rectangle(0,1.5);
        \path[fill=hmMidCyan] (0,-1.5) rectangle(4,1.5);
        \draw[draw=hmLineGreen,line width=1.1pt] (0,-1.5) -- (0,1.5);
        \draw[draw=hmLineGreen, line width=1.1pt] (0,0) -- (2,0);
         \filldraw[draw= hmLineLightBlue,line width=1.1pt] (2,0) -- (4,0);
        \filldraw[fill=white,draw=hmLineGreen,line width=0.7pt] (-0.05,-0.05) rectangle(0.05,0.05);
        \filldraw[fill=white,draw=hmLineGreen,line width=0.7pt] (1.95,-0.05) rectangle(2.05,0.05);

        \node at(-1.5,0){$\CC$};
        \node at(-0.3,1.1){$\hind{}{\vind{B_1}{\CC}{B_1}}{B_1}$};
        \node at(-0.3,-1.1){$\hind{}{\vind{B_2}{\CC}{B_2}}{B_2}$};
        \node at(-0.3,0){$\hind{}{\CC}{A_1}$};
        \node at(2,1){$\hind{B_1}{\vind{B_1}{\CC}{B_1}}{B_1}$};
        \node at(2,-1){$\hind{B_2}{\vind{B_2}{\CC}{B_2}}{B_2}$};
        \node at(1,-0.3){\tiny$\hind{A_1}{\vind{B_1}{\CC}{B_2}}{A_1}$};
        \node at(2,0.35){\tiny$\hind{A_1}{\vind{B_1}{\CC}{B_2}}{A_2}$};
        \node at(3,-0.3){\tiny$\hind{A_2}{\vind{B_1}{\CC}{B_2}}{A_2}$};
      \end{tikzpicture}}
    \end{minipage}
  \end{center}

We now review a tricategory as follows:
\begin{dfn}[\cite{DR18}]
    Let $\CC$ be an $E_m$-monoidal $n$-category with $m\geq 2$ .
    The tricategory $\Mrt_{E_1}(\LMod(\CC))$ consists of 
    \begin{itemize}
        \item objects are $E_1$-monoidal left $\CC$-module categories;
        \item a 1-morphism between two $E_1$-monoidal $\CC$-modules $\CA_1$ and $\CA_2$ is a $\CA_1$-$\CA_2$-bimodule category in $\LMod(\CC)$; 
        \item 2-morphisms between two $\CA_1$-$\CA_2$-bimodules $\CM$, $\CN$ are $\CA_1$-$\CA_2$-bimodule functors in $\LMod(\CC)$.
        \item a 3-morphism between two $\CA_1$-$\CA_2$-bimodule functors $F$ and $G$, is a $\CA_1$-$\CA_2$-bimodule natural transformation in $\LMod(\CC)$. 
    \end{itemize}
\end{dfn}

We want to construct a 3-functor between two tricategories $\Mrt_{E_2}(\CC)$ and $\Mrt_{E_1}(\LMod(\CC))$.
To construct the module realization on Morita data, we first recall the monoidal structure on the ambient category of \(\CC\)-module categories.

\begin{prp}[\cite{DR18}]
    Let $\CC$ be $E_2$-monoidal, then $\LMod(\CC)$ admits a monoidal structure $(\LMod(\CC),\btc{\CC},\CC)$ where $\btc{\CC}$ is the relative tensor product over $\CC$.
\end{prp}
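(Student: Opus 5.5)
To equip $\LMod(\CC)$ with a monoidal structure, I would define $\btc{\CC}$ through the $E_2$-structure of $\CC$. First, using the braiding $\beta$, turn every left $\CC$-module category $\CM$ into a $\CC$-$\CC$-bimodule category. The right action is $m\odot c:=c\odot m$, and the middle associativity constraint $(c\odot m)\odot c'\cong c\odot(m\odot c')$ is built from $\beta_{c,c'}$ and the module associator. This is the categorified analogue of the corollary giving $\alpha^+:\hind{}{\CC}{B}\to\hind{B}{\CC}{B}$, with $\CC$ playing the role of the $E_2$-algebra $B$ inside $\Cat^{rex}$ (or $\Cat^{fin}_{\bk}$). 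The pentagon and triangle coherences for this bimodule structure follow from the hexagon identities for $\beta$.

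With this in place, for $\CM,\CN\in\LMod(\CC)$ I define $\CM\btc{\CC}\CN$ as the relative Kelly--L\'opez (resp.\ Deligne--Kelly) tensor product, i.e.\ the universal $\CC$-balanced right exact target. Its left $\CC$-action is the one induced from $\CM$. This is precisely the construction of Theorem \ref{thm:bimodule-category-monoidal}, now run one categorical level up. The balanced-functor universal property replaces the coequalizer, and its existence is guaranteed because the tensor product preserves finite colimits in each variable. Functoriality in left $\CC$-module functors and module natural transformations then comes from the universal property.

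For the unit, $\CC\btc{\CC}\CN\simeq\CN$ and $\CM\btc{\CC}\CC\simeq\CM$; the action functor is universal among balanced functors, exactly as for the regular bimodule. The associator $(\CM\btc{\CC}\CN)\btc{\CC}\CP\simeq\CM\btc{\CC}(\CN\btc{\CC}\CP)$ follows because both sides corepresent $\CC$-balanced functors in three variables. The pentagon and triangle axioms hold by uniqueness in the universal property, mirroring the final step of the proof of Theorem \ref{thm:bimodule-category-monoidal}.

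The main obstacle is coherence. One must check that the induced left $\CC$-action on $\CM\btc{\CC}\CN$ is balanced-compatible, the analogue of the long diagram in Theorem \ref{thm:bimodule-category-monoidal}. This requires $c\odot(m\odot c')$ and $c\odot(c'\odot n)$ to be identified coherently using the balancing and the braiding $\beta$. The key step is showing that the action $c\odot-$ on the first factor yields a balanced functor; it needs the hexagon for $\beta_{c,c'\otimes c''}$. I would first try to reduce all 2-cell checks to string-diagram identities in the braided $\CC$. Failing that, I would cite \cite{DR18}, where this monoidal structure is constructed in general.
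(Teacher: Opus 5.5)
Your proposal is correct and follows essentially the paper's route: $\CM\btc{\CC}\CN$ is the relative tensor product, with left $\CC$-action $c\btd(m\btc{\CC}n)\mapsto(c\odot m)\btc{\CC}n$ induced from the first factor. The paper writes it as the coequalizer of the parallel maps $\CM\btd\CC\btd\CN\to\CM\btd\CN$, which is the same universal $\CC$-balanced right exact target you use. Your version is more detailed than the paper's sketch: you make explicit the braiding-induced right $\CC$-action on $\CM$, which that coequalizer tacitly requires, and you give the unit, associator and coherence arguments via the universal property, which the paper leaves implicit.
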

\begin{proof}
    Let $(\CM,\odotd_{\CM})$, $(\CN,\odotd_{\CN})$ and $(\CP,\odot_{\CP})$ be objects in $\LMod(\CC)$, it is not hard to see that each $\odot$ is cocontinuous on each variables, so we have $\odotd_{\CM}:\CC\btd \CM\to \CM$ respectively. (We will omit the underline notation.)
    
    then the following coequalizer exists:
    \begin{align}
        \xymatrix{
            \CM\btd \CC\btd\CC \btd \CN\ar@<0.8ex>[r]\ar[r]\ar@<-0.8ex>[r] &\CM\btd \CC\btd \CN\ar@<0.5ex>[r]\ar@<-0.5ex>[r] &\CM\btd \CN\ar[r]\ar[dr] &\CM\btc{\CC}\CN\ar@{-->}[d]^{\exists !}\\
             & & &\CP
        }
    \end{align}
    where the left $\CC$-module action is defined by:
    \begin{align*}
        \odot_{\CM\btc{\CC}\CN}:\CC\btd(\CM\btc{\CC}\CN)&\to \CM\btc{\CC}\CN\\
        c\btd m\btc{\CC}n&\mapsto (c\odot_{\CM}m)\btc{\CC}n
    \end{align*}
\end{proof}

\begin{thm}[\cite{Lur17}]
    The functor
    \begin{align*}
        \underline{\mathrm{Mod}_1}:\Alg_{E_1}(\CC)&\to \LMod(\CC)\\
        A&\mapsto \hind{}{\CC}{A}
    \end{align*}
    is monoidal.
\end{thm}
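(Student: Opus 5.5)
We need to show that $A\mapsto \hind{}{\CC}{A}$ is monoidal from $(\Alg_{E_1}(\CC),\otimes,1)$ to $(\LMod(\CC),\btc{\CC},\CC)$. The plan is to supply two pieces of data:
\begin{itemize}
\item a unit equivalence $\hind{}{\CC}{1}\simeq \CC$;
\item a natural equivalence of left $\CC$-module categories
\[
\mu_{A,A'}:\hind{}{\CC}{A}\btc{\CC}\hind{}{\CC}{A'}\xrightarrow{\ \sim\ }\hind{}{\CC}{A\otimes A'}.
\]
\end{itemize}
We then check the associativity and unit coherences.

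The unit is immediate. Every object of $\CC$ is uniquely a right $1$-module, so $\hind{}{\CC}{1}\simeq\CC$, compatibly with the left $\CC$-action.

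For the multiplication, first build a balanced functor $\hind{}{\CC}{A}\btd\hind{}{\CC}{A'}\to\hind{}{\CC}{A\otimes A'}$ by $(M,N)\mapsto M\otimes N$. The right $A\otimes A'$-action is
\[
M\otimes N\otimes A\otimes A'\xrightarrow{\id\otimes\beta^{-1}_{A,N}\otimes\id} M\otimes A\otimes N\otimes A'\xrightarrow{r_M\otimes r_N} M\otimes N,
\]
where the multiplication on $A\otimes A'$ uses $\beta_{A',A}$ as in the $B_1\otimes B_2$ construction at the start of \cref{subsec:two-morita-bibimodule-model}. The inverse braiding is what makes this action associative; this is the same bookkeeping as in Theorem~\ref{thm:B_1CB_2_to_B_1B_2CB_1B_2_is_fully_faithful}. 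The functor is balanced for the $\CC$-action: the isomorphism $(M\triangleleft c)\otimes N\cong M\otimes(c\otimes N)$ is built from the braiding, where the right $\CC$-action on $\hind{}{\CC}{A}$ is the left action transported by the braiding. It is also right exact in each variable, since $\otimes$ is. The universal property of $\btc{\CC}$ therefore gives $\mu_{A,A'}$ as a left $\CC$-module functor.

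To prove that $\mu_{A,A'}$ is an equivalence, the cleanest route is Theorem~\ref{thm:AC_CA_inverse}. Write $\hind{}{\CC}{A}\simeq \CC\btc{\CC}\hind{}{\CC}{A}$ and use the fact that, for $\CC$ braided, $\hind{}{\CC}{A}\btc{\CC}\hind{}{\CC}{A'}$ can be computed as $\hind{}{\CC}{A}$-valued right $A'$-modules, i.e.\ as $(\hind{}{\CC}{A})_{A'}$, where $A'$ acts through the braiding. Then identify $(\CC_A)_{A'}$ with $\CC_{A\otimes A'}$: a compatible pair of commuting right $A$- and $A'$-actions is the same as an $A\otimes A'$-action. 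This identification is the same kind of argument as Theorem~\ref{thm:B_1CB_2_to_B_1B_2CB_1B_2_is_fully_faithful}, now with two different algebras. In the finite setting one may instead use Proposition~\ref{prp:fusion_of_walls} with $\CN=\hind{}{\CC}{A}$ to get $\hind{}{\CC}{A}\btc{\CC}\hind{}{\CC}{A'}\simeq \hind{}{(\hind{}{\CC}{A})}{A'}$ directly.

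Coherence should then follow formally. Associativity reduces, on generators $M\otimes N\otimes P$, to the hexagon identities for $\beta$ together with the universal property of the coequalizers defining $\btc{\CC}$, which are epimorphisms.

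The main obstacle is the equivalence step. The hard part is showing rigorously that the relative tensor product is computed by modules (the $\btc{\CC}$ / module-over-$A'$ identification) outside the finite case, where Theorem~\ref{thm:AC_CA_inverse} is stated only for $\Cat^{rex}$. I would first try to reduce everything to that theorem, treating $\hind{}{\CC}{A'}$ as a $\CC$-$\CC$-bimodule via the braiding. If that fails, I would fall back to the Barr--Beck-style description $\hind{}{\CC}{A'}\simeq$ algebras for the monad $-\otimes A'$, which commutes with $\hind{}{\CC}{A}\btc{\CC}-$ by right exactness.
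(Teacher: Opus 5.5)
Your explicit action has the wrong crossing. The multiplication on $A\otimes A'$ is built from $\beta_{A',A}$, and with that choice the composite $M\otimes N\otimes A\otimes A'\to M\otimes A\otimes N\otimes A'\to M\otimes N$ is associative only when the middle crossing is $\beta_{N,A}$, not $\beta^{-1}_{A,N}$. To see the failure, expand $\rho\circ(\rho\otimes\id)$ using naturality and the hexagon identity for $\beta^{-1}_{A,N\otimes A'}$. The second copy of $A$ then passes the first copy of $A'$ through $\beta^{-1}_{A,A'}$. In $\rho\circ(\id\otimes m_{A\otimes A'})$ it passes through $\beta_{A',A}$ instead. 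The two composites therefore differ by a double braiding, the same kind of obstruction as in the remark after Proposition~\ref{prp:u^r-_d^r_are_compatible}.

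A concrete counterexample: take $\mathbb{Z}$-graded vector spaces with braiding $u\otimes v\mapsto q^{|u||v|}\,v\otimes u$ and $q^2\neq 1$, and set $A=M=\bk[x]$, $A'=N=\bk[y]$ with $|x|=|y|=1$. Your formula gives $\bigl((1\otimes 1)\cdot(1\otimes y)\bigr)\cdot(x\otimes 1)=q^{-1}\,x\otimes y$, but $(1\otimes 1)\cdot\bigl((1\otimes y)(x\otimes 1)\bigr)=q\,x\otimes y$.

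Note that Theorem~\ref{thm:B_1CB_2_to_B_1B_2CB_1B_2_is_fully_faithful}, which you cite for the bookkeeping, actually uses the positive crossing $\beta_{M,B_1}$ in $r^{12}_M$. The repair is easy: either use $\beta_{N,A}$, or keep $\beta^{-1}_{A,N}$ and build $A\otimes A'$ from $\beta^{-1}_{A,A'}$. Whichever you choose, the balancing isomorphism, the braided right $\CC$-action on $\hind{}{\CC}{A}$, and the identification of iterated modules with $\hind{}{\CC}{A\otimes A'}$ must all follow the same convention, since different choices land on different tensor products of algebras.

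With that fixed, your route is sound but differs from the paper's. The paper gives no argument for this statement. It cites \cite{Lur17} (Theorem~4.8.5.16 and Remark~4.8.5.17, for presentable $\infty$-categories). As indicated in the proof of Proposition~\ref{prp:module_construction_preserves_relative_tensor_products}, it obtains the finite case by passing to $\Ind(\CC)$ and restricting to compact objects.

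You instead stay in $\Cat^{rex}$, build $\mu_{A,A'}$ from an explicit balanced functor, and prove it is an equivalence by an Eilenberg--Watts-type identification. This yields explicit constraint isomorphisms, such as the braided interchange map the paper later uses without writing down. The cost is that you must check coherence by hand. The citation supplies all coherence at once, but only after a change of setting that the paper merely sketches.

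Three smaller points:
\begin{itemize}
\item The step you flag as the main obstacle is not a finiteness issue. Theorem~\ref{thm:AC_CA_inverse} is already stated for $\CC\in\Alg_{E_1}(\Cat^{rex})$, which is the paper's general setting. What is actually needed is to rewrite $\hind{}{\CC}{A}$, with its braided right $\CC$-action, as $\hind{A^{\mathrm{op}}}{\CC}{}$ for the braided opposite algebra. You can then apply the theorem, identify $\hind{A^{\mathrm{op}}}{\CC}{A'}$ with $\hind{}{\CC}{A\otimes A'}$, and check that the composite agrees with your $\mu_{A,A'}$ on free modules.
\item A monoidal functor also needs $\underline{\mathrm{Mod}_1}$ on algebra maps (extension of scalars) and naturality of $\mu$ in $A$ and $A'$. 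You do not address either.
\item For coherence, the property you actually use is that restriction along the universal balanced functor is an equivalence onto balanced right exact functors, so 2-cells are detected on pairs $(M,N)$. The universal balanced functor being an epimorphism is not the right formulation.
\end{itemize}
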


his monoidality is precisely what allows the ordinary module construction to be applied one Morita level higher.
\begin{lem}\label{lem:Mod1_induces_functors}
    The monoidal functor $\underline{\mathrm{Mod}_1}$ induces functor
    \begin{align}
        \Alg_{E_1}(\underline{\mathrm{Mod}_1}):\Alg_{E_2}(\CC)\to \Alg_{E_1}(\LMod(\CC))
    \end{align}
    and functors
    \begin{align}
        \underline{\mathrm{Mod}_1}_{B_1,B_2}: \hind{B_1}{\Alg_{E_1}(\CC)}{B_2}\to \hind{\hind{}{\CC}{B_1}}{\LMod(\CC)}{\hind{}{\CC}{B_2}}
    \end{align}
\end{lem}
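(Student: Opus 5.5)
The plan is to apply the general principle that a monoidal functor $F:\CD\to\CE$ induces functors on $E_1$-algebras and on bimodules over them (Lemmas \ref{lem:lax_monoidal_functor_preserves_algebras} and \ref{lem:lax_monoidal_functor_preserves_modules}) to $F=\underline{\mathrm{Mod}_1}$, which is monoidal by the preceding theorem of \cite{Lur17}. Here $\CD=\Alg_{E_1}(\CC)$ carries the pointwise tensor product and $\CE=\LMod(\CC)$ carries $\btc{\CC}$.

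\textbf{Step 1: the functor on $E_2$-algebras.} By the working definition, $\Alg_{E_2}(\CC)=\Alg_{E_1}(\Alg_{E_1}(\CC))$. Applying Lemma \ref{lem:lax_monoidal_functor_preserves_algebras} to $\underline{\mathrm{Mod}_1}$ gives $\Alg_{E_1}(\underline{\mathrm{Mod}_1}):\Alg_{E_2}(\CC)\to\Alg_{E_1}(\LMod(\CC))$. Concretely, an $E_2$-algebra $B$ with multiplication $m_B:B\ot B\to B$ in $\Alg_{E_1}(\CC)$ is sent to $\hind{}{\CC}{B}$. The multiplication on $\hind{}{\CC}{B}$ is the composite
\[
\hind{}{\CC}{B}\btc{\CC}\hind{}{\CC}{B}\simeq\hind{}{\CC}{B\ot B}\xrightarrow{(m_B)_!}\hind{}{\CC}{B}.
\]
Here the first equivalence is the monoidal structure constraint of $\underline{\mathrm{Mod}_1}$, and $(m_B)_!=-\otc{B\ot B}B$ is induction along $m_B$. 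The unit is induced by $h_B$. Morphisms of $E_2$-algebras go to the induced functors. This agrees with the monoidal structure $\otc{B}$ on $\hind{}{\CC}{B}$ from the corollary to Theorem \ref{thm:bimodule-category-monoidal}.

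\textbf{Step 2: the functor on bimodules.} An object of $\hind{B_1}{\Alg_{E_1}(\CC)}{B_2}$ is an $E_1$-algebra $A$ with actions $u:B_1\ot A\to A$ and $d:A\ot B_2\to A$, which are morphisms in $\Alg_{E_1}(\CC)$ satisfying the module axioms there. Applying $\underline{\mathrm{Mod}_1}$ to these and composing with the structure constraints gives
\[
\hind{}{\CC}{B_1}\btc{\CC}\hind{}{\CC}{A}\to\hind{}{\CC}{A},\qquad \hind{}{\CC}{A}\btc{\CC}\hind{}{\CC}{B_2}\to\hind{}{\CC}{A}.
\]
The associativity and unit axioms, and the compatibility of the two actions, are images under the monoidal functor of the corresponding diagrams in $\Alg_{E_1}(\CC)$. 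They therefore hold up to the coherent isomorphisms supplied by the monoidal structure. This is exactly the content of Lemma \ref{lem:lax_monoidal_functor_preserves_modules} applied to $F=\underline{\mathrm{Mod}_1}$. Bimodule morphisms $f:A\to A'$ (algebra maps commuting with $u,d$) go to the induction functors $f_!$, which are bimodule functors via the same constraints.

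\textbf{Main obstacle.} The delicate point is that $\LMod(\CC)$ is a $2$-category. Lemmas \ref{lem:lax_monoidal_functor_preserves_algebras} and \ref{lem:lax_monoidal_functor_preserves_modules} are stated for $1$-categories, so strictly commuting diagrams become $2$-isomorphisms. I would have to check that these $2$-isomorphisms satisfy the pseudo-algebra and pseudo-module coherence axioms (pentagon and triangle type).

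I would handle this by working at the level of underlying bimodules. Every functor in sight is a relative tensor product with a bimodule, by Theorem \ref{thm:generalized_Eilenberg-Watts} and Theorem \ref{thm:AC_CA_inverse}. Each $2$-isomorphism is then the canonical associativity isomorphism of iterated relative tensor products (coequalizers). The coherence conditions then follow from the universal property of coequalizers, using that $\ot$ is right exact in each variable, just as in the proof that the tricategory is well-defined.
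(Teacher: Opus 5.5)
Your proposal is correct and is essentially the paper's argument: the paper gives no separate proof and treats the lemma as an immediate consequence of the monoidality of $\underline{\mathrm{Mod}_1}$. Like you, it applies Lemmas \ref{lem:lax_monoidal_functor_preserves_algebras} and \ref{lem:lax_monoidal_functor_preserves_modules} with $\CD=\Alg_{E_1}(\CC)$, so that $E_1$-algebras in $\CD$ are $E_2$-algebras and $\hind{B_1}{\CD}{B_2}$ is the stated source.

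The paper does not address the 2-categorical coherence you flag. It is automatic if you read the cited lemmas in Lurie's $\infty$-categorical form, or use the standard fact that monoidal pseudofunctors send monoids and modules to pseudomonoids and pseudomodules, so your Eilenberg--Watts bookkeeping is a valid but optional concrete check.
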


\begin{prp}[\cite{DR18}]
    Let $B$ be an $E_2$-algebra in $\CC$, then $\hind{}{\CC}{B}$ is an $E_1$-algebra in $\LMod(\CC)$.
\end{prp}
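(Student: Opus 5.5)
The quickest route is to deduce the statement from Lemma~\ref{lem:Mod1_induces_functors}. Viewing $B$ as an $E_1$-algebra in $\Alg_{E_1}(\CC)$, i.e.\ as an object of $\Alg_{E_2}(\CC)=\Alg_{E_1}(\Alg_{E_1}(\CC))$, we apply the induced functor $\Alg_{E_1}(\underline{\mathrm{Mod}_1})$. The monoidal functor $\underline{\mathrm{Mod}_1}:\Alg_{E_1}(\CC)\to\LMod(\CC)$ then sends the $E_1$-algebra $B$ to an $E_1$-algebra $\hind{}{\CC}{B}$ in $(\LMod(\CC),\btc{\CC},\CC)$. The real work is to make the resulting structure explicit, so that it visibly agrees with the monoidal structure $(\hind{}{\CC}{B},\otc{B},B)$ already obtained from Theorem~\ref{thm:bimodule-category-monoidal}.

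We would carry this out in three steps.

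\emph{Unit.} The unit map is the left $\CC$-module functor $\CC\to\hind{}{\CC}{B}$, $c\mapsto c\ot B$, i.e.\ the free-module functor. This is $\underline{\mathrm{Mod}_1}$ applied to the unit $h_B:1\to B$, since $\hind{}{\CC}{1}\simeq\CC$.

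\emph{Multiplication.} The multiplication should be a left $\CC$-module functor
\[
\hind{}{\CC}{B}\btc{\CC}\hind{}{\CC}{B}\longrightarrow\hind{}{\CC}{B}.
\]
Its source is equivalent to $\hind{}{\CC}{B\ot B}$ by the monoidality of $\underline{\mathrm{Mod}_1}$, where $B\ot B$ carries the algebra structure twisted by $\beta$. The functor itself is induction along the algebra map $m_B:B\ot B\to B$, which is an algebra homomorphism precisely because $B$ is $E_2$. Concretely it is $M\btc{\CC}N\mapsto M\otc{B}N$, where $N$ receives its left $B$-action through the braiding, as in the functor $\alpha^+$. To see that this is a left $\CC$-module functor, we use the canonical isomorphism $c\ot(M\otc{B}N)\cong(c\ot M)\otc{B}N$. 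To see that it descends along the coequalizer defining $\btc{\CC}$, we use the fact that $\ot$ is right exact in each variable.

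\emph{Coherence.} The associator and unitors come from the canonical isomorphisms
\[
(M\otc{B}N)\otc{B}P\cong M\otc{B}(N\otc{B}P),\qquad B\otc{B}M\cong M,
\]
all of which are $\CC$-linear. Their pentagon and triangle identities hold by the universal property of coequalizers, exactly as in the proof of Theorem~\ref{thm:bimodule-category-monoidal}.

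The main obstacle is the compatibility of the $\CC$-action with the second tensor factor. The action on $M\otc{B}N$ through $N$ must agree with the action through $M$, up to the braiding. This is what makes the multiplication well defined on $\hind{}{\CC}{B}\btc{\CC}\hind{}{\CC}{B}$, rather than only on $\hind{}{\CC}{B}\btd\hind{}{\CC}{B}$. We would check it by showing that the balancing isomorphism
\[
(M\ot c)\otc{B}N\cong M\otc{B}(c\ot N)
\]
is supplied by $\beta_{c,-}$. Its coherence with the $\CC$-module structures then reduces to the hexagon identities together with the naturality of $\beta$ against the $B$-actions. This is essentially the computation already done in Theorem~\ref{thm:B_1CB_2_to_B_1B_2CB_1B_2_is_fully_faithful}.
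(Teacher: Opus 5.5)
Your first argument is valid, and it is not the route the paper takes. You view $B$ as an $E_1$-algebra in $\Alg_{E_1}(\CC)$ and push it forward along the monoidal functor $\underline{\mathrm{Mod}_1}$, i.e.\ you apply Lemma~\ref{lem:Mod1_induces_functors}. That proves the statement outright. The unit, associator and all coherence in $\LMod(\CC)$ are inherited from the algebra axioms of $B$ and the monoidal structure of $\underline{\mathrm{Mod}_1}$.

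The paper's proof never uses $\underline{\mathrm{Mod}_1}$. It starts from the monoidal structure $(\hind{}{\CC}{B},\otc{B},B)$ obtained from Theorem~\ref{thm:bimodule-category-monoidal}. It extends $\otc{B}$ to the Kelly product using right exactness in each variable, then descends it along the coequalizer defining $\btc{\CC}$, simply asserting that it is $\CC$-balanced. The unit functor, the $\CC$-linearity constraint and the coherence 2-cells are left implicit.

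Your shortcut is more complete, but it outsources the balancing to the cited equivalence $\hind{}{\CC}{B}\btc{\CC}\hind{}{\CC}{B}\simeq\hind{}{\CC}{B\ot B}$. Your explicit part is essentially the paper's construction with more of the data written out.

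In that explicit part, the step you call the main obstacle uses the wrong crossing. Right exactness only takes you from $\times$ to the Kelly product; descent to $\btc{\CC}$ is exactly this balancing.

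With $l_N=d_N\circ\beta_{B,N}$ as in $\alpha^+$, the hexagon axiom gives $l_{c\ot N}=(\mathrm{id}_c\ot l_N)\circ(\beta_{B,c}\ot\mathrm{id}_N)$. Suppose you descend $\beta_{c,M}\ot\mathrm{id}_N$ to a map $c\ot(M\otc{B}N)\to M\otc{B}(c\ot N)$. Then the two composites out of $c\ot M\ot B\ot N$ differ by an inserted double braiding $\beta_{B,c}\circ\beta_{c,B}$. Taking $M=N=B$ and inserting units shows the map descends only if $\beta_{B,c}\circ\beta_{c,B}=\mathrm{id}_{c\ot B}$, i.e.\ only for $c$ centralizing $B$.

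What does descend is $\beta^{-1}_{M,c}\ot\mathrm{id}_N$. The identity $\beta^{-1}_{M\ot B,c}=(\mathrm{id}_M\ot\beta^{-1}_{B,c})\circ(\beta^{-1}_{M,c}\ot\mathrm{id}_B)$ produces a $\beta^{-1}_{B,c}$ that cancels the $\beta_{B,c}$ inside $l_{c\ot N}$. Correspondingly, the right $\CC$-action on $\hind{}{\CC}{B}$ used to form $\btc{\CC}$ must have its associativity constraint built from $\beta^{-1}$. The coherence of the balancing then follows from the hexagon identity for $\beta^{-1}_{M,c\ot c'}$. In short, the crossing in the balancing must be opposite to the one used to make $N$ a left $B$-module.

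This is the same positive-versus-negative crossing issue the paper records in its remark comparing $u^{r,+}_M$ with $u^{r,-}_M$. With this correction your explicit verification goes through.
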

\begin{proof}
    The tensor functor $\otimes_{B}:\hind{}{\CC}{B}\times\hind{}{\CC}{B}\to \hind{}{\CC}{B}$ is cocontinuous on each variable, so there exists a unique (up to natural isomorphism) functor $\otimes_{B}:\hind{}{\CC}{B}\btd\hind{}{\CC}{B}\to\hind{}{\CC}{B}$ such that the following diagram commutes:
    \begin{align*}
        \xymatrix{
            \hind{}{\CC}{B}\td \hind{}{\CC}{B}\ar[r]^{\btd}\ar[dr]_{\otd_{B}} &\hind{}{\CC}{B}\btd\hind{}{\CC}{B}\ar@{-->}[d]^{\otd_{B}}\\
             &\hind{}{\CC}{B}
        }
    \end{align*}
    
    the tensor functor $\otimes_{B}:\hind{}{\CC}{B}\btd\hind{}{\CC}{B}\to\hind{}{\CC}{B}$ satisfies the following coequalizer diagram, so by universal property, there exists a unique functor $\otimes_{B}:\hind{}{\CC}{B}\btc{\CC}\hind{}{\CC}{B}\to\hind{}{\CC}{B}$.
    \begin{align}
        \xymatrix{
            \hind{}{\CC}{B}\btd\CC\btd\CC\btd\hind{}{\CC}{B}\ar@<0.8ex>[r]\ar[r]\ar@<-0.8ex>[r] &\hind{}{\CC}{B}\btd \CC\btd \hind{}{\CC}{B}\ar@<0.5ex>[r]\ar@<-0.5ex>[r] &\hind{}{\CC}{B}\btd \hind{}{\CC}{B}\ar[r]\ar[dr]_{\otimes_{B}} &\hind{}{\CC}{B}\btc{\CC}\hind{}{\CC}{B}\ar@{-->}[d]^{\otimes_{B}}\\
             & & &\hind{}{\CC}{B}
        }
    \end{align}
\end{proof}

And thus, 
\begin{thm}
    Let $B_1$ and $B_2$ be two $E_2$-algebras in $\CC$ and let $A\in\hind{B_1}{\Alg_{E_1}(\CC)}{B_2}$.
    Then $\hind{}{\CC}{A}$ is a $\hind{}{\CC}{B_1}$-$\hind{}{\CC}{B_2}$-bimodule in $\LMod(\CC)$.
\end{thm}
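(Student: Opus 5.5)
The plan is to deduce the statement formally from Lemma \ref{lem:Mod1_induces_functors}, and then to unwind the construction to record the actions explicitly. By Theorem \ref{thm:algebras_in_bimodule_categories}, an $E_1$ $B_1$-$B_2$-bimodule $A$ is the same as an $E_1$ $B_1$-$B_2$-bimodule in $\Alg_{E_1}(\CC)$. Here $B_1$ and $B_2$ are regarded as $E_1$-algebras in $\Alg_{E_1}(\CC)$, and $u:B_1\ot A\to A$ and $d:A\ot B_2\to A$ are algebra homomorphisms. Since $\underline{\mathrm{Mod}_1}:\Alg_{E_1}(\CC)\to\LMod(\CC)$ is monoidal, Lemma \ref{lem:lax_monoidal_functor_preserves_modules} sends this bimodule to a $\underline{\mathrm{Mod}_1}(B_1)$-$\underline{\mathrm{Mod}_1}(B_2)$-bimodule in $\LMod(\CC)$. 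This is exactly the functor $\underline{\mathrm{Mod}_1}_{B_1,B_2}$ of Lemma \ref{lem:Mod1_induces_functors}, applied to $A$. Thus $\hind{}{\CC}{A}$ is a bimodule over $\hind{}{\CC}{B_1}$ and $\hind{}{\CC}{B_2}$, which carry the $E_1$-structures of the preceding proposition.

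To make this concrete, I will write down the induced actions. The monoidal structure map of $\underline{\mathrm{Mod}_1}$ is
\[
\hind{}{\CC}{B_1}\btc{\CC}\hind{}{\CC}{A}\to\hind{}{\CC}{B_1\ot A},\qquad X\btc{\CC} N\mapsto X\ot N,
\]
using the braiding to give $X\ot N$ its right $B_1\ot A$-action. Composing with induction along $u$ then gives the left action
\[
X\btc{\CC}N\mapsto (X\ot N)\otd_{B_1\ot A}A.
\]
The right action is analogous: induction along $d$ gives $N\btc{\CC}Y\mapsto (N\ot Y)\otd_{A\ot B_2}A$. Each action is right exact in each variable and balanced over $\CC$, so by the coequalizer description of $\btc{\CC}$ (as in the proof of the $\LMod(\CC)$ monoidal structure) it descends to the relative tensor product and is a left $\CC$-module functor.

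Next I will produce the associativity, unit and middle-interchange constraints. The associator for the left action comes from transitivity of induction,
\[
\mathrm{Ind}_{u}\circ\mathrm{Ind}_{m_1\ot\id_A}\cong\mathrm{Ind}_{u\circ(m_1\ot\id_A)}=\mathrm{Ind}_{u\circ(\id_{B_1}\ot u)},
\]
combined with the monoidal structure of $\underline{\mathrm{Mod}_1}$. The equality in the middle is the module associativity of $u$, which holds in $\Alg_{E_1}(\CC)$. The unitor uses $u\circ(h_{B_1}\ot\id_A)=\id_A$ in the same way. The compatibility isomorphism between the left and right actions uses the equality
\[
d\circ(u\ot\id_{B_2})=u\circ(\id_{B_1}\ot d)
\]
as algebra maps $B_1\ot A\ot B_2\to A$.

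The main obstacle is the coherence of these constraints: pentagon, triangle, and the bimodule compatibility pentagons. My first attempt is to avoid checking them by hand. I would argue that $\underline{\mathrm{Mod}_1}$ is a monoidal $2$-functor, that is, strong monoidal with coherent constraints, as asserted by the cited theorem of \cite{Lur17}. A monoidal $2$-functor carries the coherence diagrams of a bimodule in $\Alg_{E_1}(\CC)$, which commute strictly since $\Alg_{E_1}(\CC)$ is a $1$-category, to coherent diagrams in $\LMod(\CC)$. This is precisely the content of Lemma \ref{lem:lax_monoidal_functor_preserves_modules} applied at the $2$-categorical level. If that transfer needs justification, the fallback is to check the coherence on generators $X\ot N$ using the universal properties of the coequalizers defining the relative tensor products, which reduces each diagram to associativity of the underlying relative tensor products in $\CC$.
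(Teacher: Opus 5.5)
Your proposal is correct and follows the paper's route. The paper's proof is exactly your first paragraph: $\hind{}{\CC}{A}$ is the image of $A$ under the functor $\underline{\mathrm{Mod}_1}_{B_1,B_2}$ of Lemma~\ref{lem:Mod1_induces_functors}, and your description of the actions by induction along $u$ and $d$, with constraints coming from transitivity of induction, is a faithful unwinding of that functor. One minor point: the appeal to Theorem~\ref{thm:algebras_in_bimodule_categories} is unnecessary, since the hypothesis already places $A$ in $\hind{B_1}{\Alg_{E_1}(\CC)}{B_2}$.
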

\begin{proof}
   This is the image of \(A\) under the functor $\underline{\mathrm{Mod}_1}_{B_1,B_2}$ of Lemma \ref{lem:Mod1_induces_functors}.
\end{proof}

Thus the module realization is already defined on objects and \(1\)-morphisms. To understand its higher morphisms, we next compare $B_1$-$B_2$bi-$A$-modules $\hind{}{\vind{B_1}{\CC}{B_2}}{A}\simeq \hind{}{\CC}{A}$ with ordinary \(A\)-modules in \(\CC\).
\begin{thm}\label{thm:modules_in_bimodule_categories}
    Let $B_1$ and $B_2$ be two $E_2$-algebras and let $A\in \hind{B_1}{\Alg_{E_1}(\CC)}{B_2}$, then we have 
    \begin{align*}
        \hind{B_1\ot B_2}{\vind{B_1}{\CC}{B_2}}{A}=:\hind{}{\vind{B_1}{\CC}{B_2}}{A}\simeq \hind{}{\CC}{A}
    \end{align*}
\end{thm}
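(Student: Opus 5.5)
The plan is to read the left-hand side through the bi-bimodule definition with $A_1 = B_1\ot B_2$ and $A_2 = A$. Here $B_1\ot B_2$ is the tensor unit of the monoidal category $\hind{B_1}{\CC}{B_2}$ (Theorem \ref{thm:bimodule-category-monoidal}), hence an object of $\hind{B_1}{\Alg_{E_1}(\CC)}{B_2}$ with the multiplication built from the braiding. By Theorem \ref{thm:bibimod_equivalent_to_bimod_in_bimod}, the left-hand side is then equivalent to
\[
\hind{B_1\ot B_2}{(\hind{B_1}{\CC}{B_2})}{A}.
\]
Since $B_1\ot B_2$ is the monoidal unit there, a left module over it is no extra data. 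So this category is just the category of right $A$-modules internal to $\hind{B_1}{\CC}{B_2}$. The statement therefore reduces to the claim that right $A$-modules in $\hind{B_1}{\CC}{B_2}$, with $A$ viewed as an algebra there via Theorem \ref{thm:algebras_in_bimodule_categories}, are the same as right $A$-modules in $\CC$.

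I would build the two functors directly.

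\emph{Forgetful direction.} Send $(M,u_M,d_M,r')$ to $(M,\, r'\circ p_{M,A})$. This is a right $A$-module in $\CC$ by Lemma \ref{lem:lax_monoidal_functor_preserves_modules}, using the lax monoidal structure $p$ on the forgetful functor $\hind{B_1}{\CC}{B_2}\to\CC$.

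\emph{Inverse direction.} Start from $(M,r)\in\hind{}{\CC}{A}$. Define the induced actions
\[
d_M := d^r_M = r\circ(\id_M\ot i_2), \qquad u_M := u^{r,-}_M = r\circ(\id_M\ot i_1)\circ\beta^{-1}_{M,B_1}.
\]
Proposition \ref{prp:u^r-_d^r_are_compatible} says exactly that these form a compatible $B_1$-$B_2$-bimodule structure. Next, show that $r$ is $B_1\ot B_2$-balanced, i.e.
\[
r\circ(r^{12}_M\ot\id_A) = r\circ(\id_M\ot l^{12}_A).
\]
This mirrors the computation in Lemma \ref{lem:bibimod_to_bimod_in_bimod}. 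One uses $m_A\circ(h'\ot\id_A)=l^{12}_A$ and checks that $r\circ(\id_M\ot h')$ recovers $r^{12}_M$. The balancedness then gives the factorization $r'$ through the coequalizer. The module axioms for $r'$ and its $B_1$-, $B_2$-linearity follow by precomposing with the epimorphisms $p$, as in that lemma.

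The main obstacle is the identity $r\circ(\id_M\ot h') = r^{12}_M$, because the braidings must be tracked carefully. Here $r^{12}_M$ is built from $\beta_{M,B_1}$ together with $u_M=u^{r,-}_M$, which itself contains $\beta^{-1}_{M,B_1}$; the two cancel. What remains is $r$ applied to $i_1$ and $i_2$ multiplied in $A$, which equals $r\circ(\id_M\ot h')$ by associativity of $r$ and the definition of $h'$. Positive and negative crossings have to be matched exactly here, in the spirit of the remark after Proposition \ref{prp:u^r-_d^r_are_compatible}. I expect to fix the sign conventions in this step first.

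Finally, check that the two constructions are mutually inverse. Forget-then-induce returns $r'\circ p=r$ by construction. Induce-then-forget recovers the original $u_M$ and $d_M$ by the unit argument of Lemma \ref{lem:bimod_in_bimod_to_bibimod}, which shows that $u_M=u^{r,-}_M$ and $d_M=d^r_M$ for any internal module. On morphisms both functors are the identity on underlying maps, and a map commuting with $r$ automatically commutes with the induced actions. So the equivalence holds at the level of morphisms as well.
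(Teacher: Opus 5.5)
Your argument is correct, but it runs in the opposite direction from the paper's proof.

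\textbf{The paper's route.} The paper stays inside the bi-bimodule definition. Given $(M,r)\in\hind{}{\CC}{A}$, it equips $M$ with $(u^{r,-}_M,d^r_M)$ and folds these into the left $B_1\ot B_2$-action $l^{12}_M$ of Theorem \ref{thm:B_1CB_2_to_B_1B_2CB_1B_2_is_fully_faithful}. It proves the bimodule square for $(l^{12}_M,r)$ from two sliding identities (U) and (D). It then observes that any left $B_1\ot B_2$-action is recovered from its two restrictions. Corollary \ref{crl:B_1CB_2A_equivalent_to_CA} is afterwards \emph{deduced} from the theorem together with Theorem \ref{thm:bibimod_equivalent_to_bimod_in_bimod}.

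\textbf{Your route.} You prove that corollary directly, comparing right $A$-modules internal to $\hind{B_1}{\CC}{B_2}$ with right $A$-modules in $\CC$. You then recover the theorem from Theorem \ref{thm:bibimod_equivalent_to_bimod_in_bimod}, using that $B_1\ot B_2$ is the trivial unit algebra there. This is worth stating explicitly: one checks $h'=\id$ and that $m'$ is the unitor, so the left action is forced and carries no data.

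\textbf{Comparison.} Your route is more conceptual. It avoids constructing $l^{12}_M$ and avoids identity (D) entirely, since $B_2$-linearity of $r'$ is just associativity of $r$. The price is that it needs the coequalizers and monoidal structure of Theorem \ref{thm:bimodule-category-monoidal}, and it leans on Theorem \ref{thm:bibimod_equivalent_to_bimod_in_bimod}. The paper's proof is a direct diagrammatic computation in $\CC$ that uses no relative tensor products.

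\textbf{A correction to your bookkeeping.} The identity you single out as the main obstacle is immediate. Since $h'=m_A\circ(i_1\ot i_2)$ and $u^{r,-}_M\circ\beta_{M,B_1}=r\circ(\id_M\ot i_1)$, you get $r^{12}_M=r\circ(\id_M\ot h')$ in one line.

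The genuinely braiding-sensitive step is the one you dismiss with ``as in that lemma''. After precomposing with $\id_{B_1}\ot p_{M,A}$, $B_1$-linearity of $r'$ becomes
\[
r\circ(u^{r,-}_M\ot\id_A)=u^{r,-}_M\circ(\id_{B_1}\ot r).
\]
This is exactly the paper's identity (U). It needs the sliding relation $m_A\circ(i_1\ot\id_A)=m_A\circ(\id_A\ot i_1)\circ\beta^{-1}_{A,B_1}$, which comes from multiplicativity of $u$, together with the hexagon axiom and naturality of the braiding. In Lemma \ref{lem:bibimod_to_bimod_in_bimod} the corresponding step was trivial only because there the $B_1$-action came from the left action $l$, not from $r$ through a negative crossing.
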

\begin{proof}
    By definition, an object on the left-hand side is
    an ordinary $B_1\ot B_2$-$A$-bimodule $(M,l,r)$ whose $B_1$- and
    $B_2$-actions induced from $l$ agree with those induced from $r$ in the
    coherence diagram defining a bi-bimodule.  We first construct the
    forgetful functor
    \begin{align*}
        F:\hind{}{\vind{B_1}{\CC}{B_2}}{A}&\to \hind{}{\CC}{A}\\
        (M,l,r)&\mapsto (M,r),
    \end{align*}
    which is the identity on the underlying morphisms.

    Conversely, let $(M,r)\in\hind{}{\CC}{A}$.  
    By Proposition \ref{prp:u^r-_d^r_are_compatible}, we have $(M,u^{r,-}_M,d^r_M)\in \hind{B_1}{\CC}{B_2}$.
    Hence the construction above gives a left $B_1\ot B_2$-action
    \begin{align*}
        l^{12}_M:
        B_1\ot B_2\ot M
        \xrightarrow{\id_{B_1}\ot\beta_{B_2,M}}
        B_1\ot M\ot B_2
        \xrightarrow{u_M\ot\id_{B_2}}
        M\ot B_2
        \xrightarrow{d_M}M.
    \end{align*}
    It remains to show that $l_M^{12}$ is compatible with the right
    $A$-action.  We first record the two compatibility identities
    \begin{align*}
        r\circ(u_M\ot\id_{A})
        &=u_M\circ(\id_{B_1}\ot r)
        :B_1\ot M\ot A\longrightarrow M,
        \qquad\text{(U)}\\
        r\circ(d_M\ot\id_{A})
        &=d_M\circ(r\ot\id_{B_2})
        \circ(\id_M\ot\beta_{B_2,A})
        :M\ot B_2\ot A\longrightarrow M.
        \qquad\text{(D)}
    \end{align*}
    We now explain exactly which identities imply (U) and (D).  Suppress
    associators and unitors.  The sliding identities supplied by the
    multiplicativity of $u$ and $d$ are
    \begin{align*}
        m_{A}\circ(i_1\ot\id_{A})
        &=m_{A}\circ(\id_{A}\ot i_1)
        \circ\beta^{-1}_{A,B_1},
        &&\text{(C$_1$)}\\
        m_{A}\circ(i_2\ot\id_{A})
        &=m_{A}\circ(\id_{A}\ot i_2)
        \circ\beta_{B_2,A}.
        &&\text{(C$_2$)}
    \end{align*}
    Identity (U) is the following equality of morphisms
    $B_1\ot M\ot A\to M$:
    \begin{align*}
        &r\circ(u_M\ot\id_{A})\\
        &\quad=
        r\circ(r\ot\id_{A})
        \circ(\id_M\ot i_1\ot\id_{A})
        \circ(\beta^{-1}_{M,B_1}\ot\id_{A})
        &&\text{by the definition of $u_M$}\\
        &\quad=
        r\circ(\id_M\ot m_{A})
        \circ(\id_M\ot i_1\ot\id_{A})
        \circ(\beta^{-1}_{M,B_1}\ot\id_{A})
        &&\text{by associativity of $r$}\\
        &\quad=
        r\circ(\id_M\ot m_{A})
        \circ(\id_M\ot\id_{A}\ot i_1)
        \circ(\id_M\ot\beta^{-1}_{A,B_1})
        \circ(\beta^{-1}_{M,B_1}\ot\id_{A})
        &&\text{by (C$_1$)}\\
        &\quad=
        r\circ(\id_M\ot m_{A})
        \circ(\id_M\ot\id_{A}\ot i_1)
        \circ\beta^{-1}_{M\ot A,B_1}
        &&\text{by the hexagon axiom}\\
        &\quad=
        r\circ(\id_M\ot i_1)
        \circ(r\ot\id_{B_1})
        \circ\beta^{-1}_{M\ot A,B_1}
        &&\text{by associativity of $r$}\\
        &\quad=
        r\circ(\id_M\ot i_1)
        \circ\beta^{-1}_{M,B_1}
        \circ(\id_{B_1}\ot r)
        &&\text{by naturality of $c^-_{B_1,-}$}\\
        &\quad=u_M\circ(\id_{B_1}\ot r)
        &&\text{by the definition of $u_M$}.
    \end{align*}

    Identity (D) is the following equality of morphisms
    $M\ot B_2\ot A\to M$:
    \begin{align*}
        &r\circ(d_M\ot\id_{A})\\
        &\quad=
        r\circ(r\ot\id_{A})
        \circ(\id_M\ot i_2\ot\id_{A})
        &&\text{by the definition of $d_M$}\\
        &\quad=
        r\circ(\id_M\ot m_{A})
        \circ(\id_M\ot i_2\ot\id_{A})
        &&\text{by associativity of $r$}\\
        &\quad=
        r\circ(\id_M\ot m_{A})
        \circ(\id_M\ot\id_{A}\ot i_2)
        \circ(\id_M\ot\beta_{B_2,A})
        &&\text{by (C$_2$)}\\
        &\quad=
        r\circ(r\ot\id_{A})
        \circ(\id_M\ot\id_{A}\ot i_2)
        \circ(\id_M\ot\beta_{B_2,A})
        &&\text{by associativity of $r$}\\
        &\quad=
        r\circ(\id_M\ot i_2)
        \circ(r\ot\id_{B_2})
        \circ(\id_M\ot\beta_{B_2,A})
        &&\text{by bifunctoriality of $\ot$}\\
        &\quad=
        d_M\circ(r\ot\id_{B_2})
        \circ(\id_M\ot\beta_{B_2,A})
        &&\text{by the definition of $d_M$}.
    \end{align*}

    Thus both (U) and (D) follow from the algebra-homomorphism conditions
    with the indicated, opposite braiding conventions on the two sides.

    We now prove the required bimodule square
    \begin{align*}
        \xymatrix{
            B_1\ot B_2\ot M\ot A
                \ar[r]^{l_M^{12}\ot\id_{A}}
                \ar[d]_{\id_{B_1\ot B_2}\ot r}
            &M\ot A\ar[d]^r\\
            B_1\ot B_2\ot M\ar[r]_{l_M^{12}}
            &M.
        }
    \end{align*}
    The two boundary composites are morphisms
    \begin{align*}
        B_1\ot B_2\ot M\ot A\longrightarrow M,
    \end{align*}
    and they are equal by the following direct calculation:
    \begin{align*}
        &r\circ(l_M^{12}\ot\id_{A})\\
        &=r\circ(d_M\ot\id_{A})
        \circ(u_M\ot\id_{B_2}\ot\id_{A})
        \circ(\id_{B_1}\ot\beta_{B_2,M}\ot\id_{A})\\
        \intertext{by the definition of $l_M^{12}$,}
        &=d_M\circ(r\ot\id_{B_2})
        \circ(\id_M\ot\beta_{B_2,A})
        \notag\\[-0.4em]
        &\qquad\circ(u_M\ot\id_{B_2}\ot\id_{A})
        \circ(\id_{B_1}\ot\beta_{B_2,M}\ot\id_{A})\\
        \intertext{by (D),}
        &=d_M\circ(r\ot\id_{B_2})
        \circ(u_M\ot\id_{A}\ot\id_{B_2})
        \notag\\[-0.4em]
        &\qquad\circ(\id_{B_1}\ot\id_M\ot\beta_{B_2,A})
        \circ(\id_{B_1}\ot\beta_{B_2,M}\ot\id_{A})\\
        \intertext{by naturality of $\beta_{B_2,A}$,}
        &=d_M\circ(u_M\ot\id_{B_2})
        \circ(\id_{B_1}\ot r\ot\id_{B_2})
        \notag\\[-0.4em]
        &\qquad\circ(\id_{B_1}\ot\id_M\ot\beta_{B_2,A})
        \circ(\id_{B_1}\ot\beta_{B_2,M}\ot\id_{A})\\
        \intertext{by (U),}
        &=d_M\circ(u_M\ot\id_{B_2})
        \circ(\id_{B_1}\ot\beta_{B_2,M})
        \circ(\id_{B_1\ot B_2}\ot r)\\
        \intertext{by naturality of $\beta_{B_2,-}$ and the hexagon axiom,}
        &=l_M^{12}\circ(\id_{B_1\ot B_2}\ot r).
    \end{align*}
    The last equality is the definition of $l_M^{12}$.
    Thus $l_M^{12}$ and $r$ are compatible.
    Thus $(M,l_M^{12},r)$ is an ordinary $B_1\ot B_2$-$A$-bimodule.

    We next check the additional bi-bimodule coherence.  Restricting
    $l_M^{12}$ along the units of $B_2$ and $B_1$, respectively, recovers
    $u_M$ and $d_M$.  On the other hand, the two actions induced through the
    right $A$-action are $u_M$ and $d_M$ by their definitions above.
    Therefore the two pairs of paths in the defining coherence diagram agree,
    so $(M,l_M^{12},r)$ is an object of
    $\hind{}{\vind{B_1}{\CC}{B_2}}{A}$.

    If $f:(M,r_M)\to(N,r_N)$ is a right $A$-module homomorphism, then
    naturality of the braiding and $A$-linearity of $f$ imply
    \begin{align*}
        f\circ u_M=u_N\circ(\id_{B_1}\ot f),
        \qquad
        f\circ d_M=d_N\circ(f\ot\id_{B_2}).
    \end{align*}
    Expanding $l_M^{12}$ then gives
    $f\circ l_M^{12}=l_N^{12}\circ(\id_{B_1\ot B_2}\ot f)$.
    Hence we obtain a functor
    \begin{align*}
        G:\hind{}{\CC}{A}&\to \hind{}{\vind{B_1}{\CC}{B_2}}{A}\\
        (M,r)&\mapsto (M,l^{12}_{M},r),\\
        f&\mapsto f.
    \end{align*}

    Clearly $F\circ G=\Id$.  For the other composite, let $(M,l,r)$ be a
    coherent bi-bimodule.  Let $u_M^l,d_M^l$ be the $B_1$- and
    $B_2$-actions induced from $l$, and let $u_M^r,d_M^r$ be those induced
    from $r$.  The defining coherence condition says
    \begin{align*}
        u_M^l=u_M^r,
        \qquad
        d_M^l=d_M^r.
    \end{align*}
    Moreover, any left $B_1\ot B_2
    $-action is recovered from its two
    restrictions by
    \begin{align*}
        l=d_M^l\circ(u_M^l\ot\id_{B_2})
        \circ(\id_{B_1}\ot\beta_{B_2,M}).
    \end{align*}
    Indeed, this is the associativity axiom for $l$ after writing an element
    of $B_1\ot B_2$ as the product of its $B_1$- and $B_2$-components, together
    with the two unit axioms.  Since $G\circ F$ uses $u_M^r,d_M^r$, the last
    two displayed identities imply that its reconstructed action is the
    original $l$.  Thus $G\circ F=\Id$ on objects and morphisms, and $F$ and
    $G$ are inverse equivalences.
\end{proof}

\begin{rmk}
    Physically (or topologically), $\hind{}{\CC}{A}\simeq \hind{1}{\vind{1}{\CC}{1}}{A}$ is the view point at left side, i.e. in (original phase) $\CC$;
    And $\hind{B_1\ot B_2}{\vind{B_1}{\CC}{B_2}}{A}=:\hind{}{\vind{B_1}{\CC}{B_2}}{A}\simeq \hind{}{\CC}{A}$ is the view point at right side, i.e. in (condensed phase) $\hind{A}{\vind{B_1}{\CC}{B_2}}{A}$.
\end{rmk}

\begin{crl}
    $\hind{}{\vind{B}{\CC}{B}}{B}\simeq \hind{}{\CC}{B}$ as categories.
\end{crl}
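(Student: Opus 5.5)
The corollary follows by specializing Theorem \ref{thm:modules_in_bimodule_categories} to $B_1=B_2=B$ and $A=B$. The one thing to check is that $B$ itself, with its own multiplication, is an object of $\hind{B}{\Alg_{E_1}(\CC)}{B}$.

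\emph{Step 1: $B$ is an $E_1$ $B$-$B$-bimodule.} Take $u=m_B:B\ot B\to B$ and $d=m_B:B\ot B\to B$. These are associative, unital, and mutually compatible actions by associativity of $m_B$. The coherence squares in the definition of an $E_1$ $B_1$-$B_2$-bimodule require $u$ and $d$ to be $E_1$-algebra homomorphisms. This says $m_B\circ(\id\ot\beta_{B,B}\ot\id)$ followed by multiplication agrees with $m_B\circ(m_B\ot m_B)$, which holds because $B$ is an $E_2$-algebra: its multiplication is a morphism in $\Alg_{E_1}(\CC)$, equivalently $m_B\circ\beta_{B,B}=m_B$ in the braided $1$-categorical setting. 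Equivalently, $B$ is the identity $1$-morphism on $B$ in $\Mrt_{E_2}(\CC)$, so it lies in $\hind{B}{\Alg_{E_1}(\CC)}{B}$ by Definition \ref{dfn: 2-Morita_cat}. The induced maps $i_1=m_B\circ(\id\ot h_B)$ and $i_2=m_B\circ(h_B\ot\id)$ are both the identity of $B$.

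\emph{Step 2: apply the theorem.} Theorem \ref{thm:modules_in_bimodule_categories} with $B_1=B_2=A=B$ gives
\begin{align*}
    \hind{B\ot B}{\vind{B}{\CC}{B}}{B}=:\hind{}{\vind{B}{\CC}{B}}{B}\simeq\hind{}{\CC}{B}.
\end{align*}
It is worth recording the resulting functors explicitly. The forgetful functor $F$ sends $(M,l,r)$ to $(M,r)$. The inverse $G$ sends a right $B$-module $(M,r)$ to $(M,l^{12}_M,r)$, where $u^{r,-}_M=r\circ\beta^{-1}_{M,B}$ and $d^r_M=r$. So the left $B\ot B$-action is built from $r\circ\beta^{-1}_{M,B}$ and $r$ via Theorem \ref{thm:B_1CB_2_to_B_1B_2CB_1B_2_is_fully_faithful}.

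\emph{Main obstacle.} The only delicate point is the braiding convention in Step 1. We must confirm that the sliding identities (C$_1$) and (C$_2$) used in the theorem hold for $A=B$. With $i_1=i_2=\id_B$, they become $m_B=m_B\circ\beta^{-1}_{B,B}$ and $m_B=m_B\circ\beta_{B,B}$. Both are consequences of commutativity of the $E_2$-algebra $B$, the first obtained from the second by precomposing with $\beta^{-1}_{B,B}$. Once these are checked, no new computation is needed. Note that, unlike $\hind{B}{\vind{B}{\CC}{B}}{B}\simeq\hind{}{\CC}{B}^{loc}$, no locality condition appears here, because there is no left $B$-action to be matched against the right one.
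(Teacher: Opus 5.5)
Your proposal is correct and follows the paper's route. The paper states this corollary without a separate proof, as the specialization $B_1=B_2=A=B$ of Theorem~\ref{thm:modules_in_bimodule_categories}, and your Step~1 supplies the input it leaves implicit: $B$ with $u=d=m_B$, hence $i_1=i_2=\id_B$, is the identity $1$-morphism of $B$, an object of $\hind{B}{\Alg_{E_1}(\CC)}{B}$.

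Only your closing heuristic is loosely worded. There is a left $B\ot B$-action, but the bi-bimodule coherence forces its two factors to act by $r\circ\beta^{-1}_{M,B}$ and $r\circ\beta_{B,M}$, so no condition on $r$ (such as locality) is imposed.
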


\begin{crl}\label{crl:B_1CB_2A_equivalent_to_CA}
    $\hind{}{\hind{B_1}{\CC}{B_2}}{A}\simeq \hind{}{\CC}{A}$ as categories.
\end{crl}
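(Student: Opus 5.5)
The statement to prove is Corollary~\ref{crl:B_1CB_2A_equivalent_to_CA}: $\hind{}{\hind{B_1}{\CC}{B_2}}{A}\simeq \hind{}{\CC}{A}$. The plan is to factor it through the bi-bimodule category $\hind{}{\vind{B_1}{\CC}{B_2}}{A}$, so that it becomes the composite of two equivalences already available:
\[
\hind{}{\left(\hind{B_1}{\CC}{B_2}\right)}{A}\;\simeq\;\hind{B_1\ot B_2}{\left(\hind{B_1}{\CC}{B_2}\right)}{A}\;\simeq\;\hind{B_1\ot B_2}{\vind{B_1}{\CC}{B_2}}{A}\;\simeq\;\hind{}{\CC}{A}.
\]

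The first equivalence uses that $B_1\ot B_2$ is the tensor unit of the monoidal category $(\hind{B_1}{\CC}{B_2},\otd_{B_1\ot B_2},B_1\ot B_2)$ of Theorem~\ref{thm:bimodule-category-monoidal}. In any monoidal category, a left module over the unit algebra is the same as a bare object, since the action must be the unitor. Hence right $A$-modules internal to $\hind{B_1}{\CC}{B_2}$ coincide with $B_1\ot B_2$--$A$-bimodules internal to it.

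The second equivalence is Theorem~\ref{thm:bibimod_equivalent_to_bimod_in_bimod}, applied with $A_1=B_1\ot B_2$ and $A_2=A$. For this we regard $B_1\ot B_2$ as an $E_1$ $B_1$-$B_2$-bimodule with its regular actions. It is the unit algebra in $\Alg_{E_1}(\hind{B_1}{\CC}{B_2})$, so Theorem~\ref{thm:algebras_in_bimodule_categories} places it in $\hind{B_1}{\Alg_{E_1}(\CC)}{B_2}$.

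The third equivalence is exactly Theorem~\ref{thm:modules_in_bimodule_categories}.

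In concrete terms, the composite sends an internal right $A$-module $(M,u_M,d_M,r')$ to $(M,\,r'\circ p_{M,A})$. The inverse sends $(M,r)$ to $(M,u^{r,-}_M,d^r_M)$ with the descended action $r'$. Both directions are the identity on underlying morphisms.

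The main obstacle is that Theorem~\ref{thm:bibimod_equivalent_to_bimod_in_bimod} is stated for bi-bimodules over $A_1,A_2$ living in $\hind{B_1}{\Alg_{E_1}(\CC)}{B_2}$. One must check that, for $A_1=B_1\ot B_2$, the induced maps $i_1,i_2$ are the canonical inclusions $\id\ot h_2$ and $h_1\ot\id$. Then the left actions $u^l_M,d^{l,-}_M$ coming from the unit bimodule reduce to the given $B_1$-$B_2$-actions of $M$. This uses the unitality diagram in the proof of Theorem~\ref{thm:B_1CB_2_to_B_1B_2CB_1B_2_is_fully_faithful} together with the identity $l=d^l_M\circ(u^l_M\ot\id)\circ(\id\ot\beta_{B_2,M})$ from the proof of Theorem~\ref{thm:modules_in_bimodule_categories}.

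If this bookkeeping becomes unwieldy, the fallback is a direct argument. Check that $r:M\ot A\to M$ is $B_1\ot B_2$-balanced, using the right-unit identity $m\circ(\id\ot h')=r^{12}_A$ as in the proof of Lemma~\ref{lem:bibimod_to_bimod_in_bimod}, so that it descends to $r'$. Conversely, use Proposition~\ref{prp:u^r-_d^r_are_compatible} and identities (U), (D) to show that the restricted actions make $(M,u^{r,-}_M,d^r_M)$ an object of $\hind{B_1}{\CC}{B_2}$ on which $r'$ is a module map. Finally, epimorphicity of the quotient maps $p$ gives that the two constructions are mutually inverse.
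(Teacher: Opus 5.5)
Your proposal is correct and is the paper's own proof. The paper argues by exactly the same chain: $\hind{}{(\hind{B_1}{\CC}{B_2})}{A}\simeq\hind{B_1\ot B_2}{(\hind{B_1}{\CC}{B_2})}{A}\simeq\hind{B_1\ot B_2}{\vind{B_1}{\CC}{B_2}}{A}\simeq\hind{}{\CC}{A}$, using the unit algebra, then Theorem~\ref{thm:bibimod_equivalent_to_bimod_in_bimod}, then Theorem~\ref{thm:modules_in_bimodule_categories}. Your explicit check that $A_1=B_1\ot B_2$ gives $i_1=\id\ot h_2$ and $i_2=h_1\ot\id$ is bookkeeping the paper leaves implicit.
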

\begin{proof}
    There is 
    \begin{align*}
        \hind{}{\hind{B_1}{\CC}{B_2}}{A}\simeq \hind{B_1\ot B_2}{(\hind{B_1}{\CC}{B_2})}{A}\simeq \hind{\vfus{B_1}{\ot}{ B_2}}{\vind{B_1}{\CC}{B_2}}{A}\simeq \hind{}{\CC}{A}
    \end{align*}    
\end{proof}

We now turn from module categories to morphisms between them. The first step is to fold the two endpoint actions into a single module action.
\begin{lem}\label{lem:bimod_functors_in_C Cat_are_iterations}
    Let $\CA_1$ and $\CA_2\in\Alg_{E_1}(\LMod(\CC))$ and let
    $\CM,\CN\in \hind{\CA_1}{\LMod(\CC)}{\CA_2}$.  We have
    \begin{align*}
        \hind{\CC}{\vind{\CA_1}{\Fun}{\CA_2}}{}(\CM,\CN)
        \simeq
        \Fun_{\CA_1\btc{\CC}\CA_2^{rev}}(\CM,\CN).
    \end{align*}
\end{lem}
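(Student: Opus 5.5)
The plan is to prove the equivalence by folding the two one-sided actions into a single left action of $\CA_1\btc{\CC}\CA_2^{rev}$. This is the categorified version of Theorem \ref{thm:B_1CB_2_to_B_1B_2CB_1B_2_is_fully_faithful}, where a $B_1$-$B_2$-bimodule became a $B_1\ot B_2$-module. After that, one checks that bimodule functors and bimodule natural transformations correspond on both sides.

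First I would make $\CA_1\btc{\CC}\CA_2^{rev}$ into an $E_1$-algebra in $\LMod(\CC)$. By the Proposition of \cite{DR18}, $(\LMod(\CC),\btc{\CC},\CC)$ is monoidal. The $E_2$-structure of $\CC$ (its braiding) makes this monoidal structure braided enough to equip a relative tensor product of two algebras with the factorwise multiplication. This is the categorical analogue of the algebra structure on $B_1\ot B_2$ written down at the beginning of Section \ref{section:2-Morita}.

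Given $\CM\in\hind{\CA_1}{\LMod(\CC)}{\CA_2}$, I would define the folded action
\[
(\CA_1\btc{\CC}\CA_2^{rev})\btc{\CC}\CM\to\CM,\qquad a\btc{\CC}a'\btc{\CC}m\mapsto a\odot m\odot a'.
\]
It descends to the relative tensor product because both actions are $\CC$-balanced and are themselves morphisms in $\LMod(\CC)$. The associativity and unit constraints come from the bimodule associator of $\CM$ together with the two one-sided associators, much as in the diagram chases of Theorem \ref{thm:B_1CB_2_to_B_1B_2CB_1B_2_is_fully_faithful}. Conversely, a $\CA_1\btc{\CC}\CA_2^{rev}$-module gives a left $\CA_1$-action and a right $\CA_2$-action by restricting along the unit inclusions $\CA_1\simeq\CA_1\btc{\CC}\CC\to\CA_1\btc{\CC}\CA_2^{rev}$ and $\CA_2^{rev}\to\CA_1\btc{\CC}\CA_2^{rev}$. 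The middle associator of the bimodule is then recovered from the interchange of these two restricted actions.

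On functors, I would send a $\CA_1$-$\CA_2$-bimodule functor $(F,\phi^l,\phi^r)$ in $\LMod(\CC)$ to $F$ with the module structure
\[
F(a\odot m\odot a')\cong a\odot F(m\odot a')\cong a\odot F(m)\odot a'.
\]
Conversely, a $\CA_1\btc{\CC}\CA_2^{rev}$-module functor yields $\phi^l$ and $\phi^r$ by restricting along the two unit inclusions. Both constructions are the identity on underlying $\CC$-module functors and on natural transformations. The compatibility condition for a natural transformation is checked separately on each generating tensor factor, so it is the same condition on both sides. Hence the two assignments are mutually inverse, giving an isomorphism of categories and in particular the stated equivalence.

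I expect the main obstacle to be the coherence bookkeeping in the comparison of module structures. One must show that the coherence axiom for a module functor over the folded algebra is equivalent to the three axioms for a bimodule functor: left, right, and middle compatibility. The folded module-functor structure is determined by its values on objects of the form $a\btc{\CC}a'$, because the relative tensor product is universal for balanced right exact functors; this is where Kelly–López universality in $\Cat^{rex}$ is used. I would first try reducing the folded axiom on such objects to the pentagon-type diagrams involving only $\phi^l$ and $\phi^r$, using the interchange isomorphism of $\CM$ and $\CN$.
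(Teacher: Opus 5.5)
Your proposal is essentially the paper's argument. You fold the two actions into $\CD=\CA_1\btc{\CC}\CA_2^{rev}$, build the $\CD$-module constraint on pure tensors by composing $\delta^L_F$ and $\delta^R_F$, descend it along the universal balanced functor $q$, recover the one-sided constraints by restricting along $\iota_1,\iota_2$, and use the universal property of $q$ on functors and natural transformations to show the two constructions are inverse. Two caveats. First, the monoidal structure on $\CD$ does not come from a braiding of $\LMod(\CC)$, which is in general only $E_1$-monoidal when $\CC$ is merely $E_2$; it comes from the centrality isomorphisms $H_i(c)\otimes_i x\simeq c\odot x\simeq x\otimes_i H_i(c)$ of the algebra objects, as in Lemma \ref{lem:folded_relative_tensor_monoidal}. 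Second, the paper reads $\Fun_{\CD}$ as plain right-exact $\CD$-module functors, so it cannot carry the $\CC$-module structure along unchanged as you do: it reconstructs $\xi^F$ by restriction along $j\colon\CC\to\CD$ and deduces the $\CC$-linearity of $\delta^L_F,\delta^R_F$ from centrality, which is the form used when the lemma feeds into Theorem \ref{thm:E2_EW}.
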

\begin{proof}
    Set $\CD:=\CA_1\btc{\CC}\CA_2^{rev}$ and equip $\CM$ and $\CN$
    with the left $\CD$-actions from Lemma
    \ref{lem:folded_relative_tensor_action}.

    Let $((F,\xi^F),\delta_F^L,\delta_F^R):\CM\longrightarrow\CN$ be a right-exact $\CC$-linear $\CA_1$-$\CA_2$-bimodule functor.
    Define, before passing to the relative tensor product,
    \begin{align}
        \chi^F_{a,b,m}:\quad
        F\bigl((a\rhd^{\CM}m)\lhd^{\CM}b\bigr)
        \xrightarrow{\ (\delta_F^R)_{a\rhd^{\CM}m,b}\ }
        F(a\rhd^{\CM}m)\lhd^{\CN}b \notag
        \xrightarrow{\ (\delta_F^L)_{a,m}\lhd^{\CN}\id_b\ }
        (a\rhd^{\CN}F(m))\lhd^{\CN}b.
        \label{eq:folded_bimodule_functor_constraint}
    \end{align}
    Denote the balancing isomorphism in
    \eqref{eq:folded_action_balancing} by
    $\vartheta^{\CM}_{c;a,b,m}$, and similarly for $\CN$.  The required
    balancing of \eqref{eq:folded_bimodule_functor_constraint} is the
    commutative square
    \begin{align}
        \xymatrix@C=5.5em{
            F(((c\odot_{\CA_1}a)\rhd^{\CM}m)\lhd^{\CM}b)
                \ar[r]^{\chi^F_{c\odot a,b,m}}
                \ar[d]_{F(\vartheta^{\CM}_{c;a,b,m})}
            &((c\odot_{\CA_1}a)\rhd^{\CN}F(m))\lhd^{\CN}b
                \ar[d]^{\vartheta^{\CN}_{c;a,b,F(m)}}\\
            F((a\rhd^{\CM}m)\lhd^{\CM}(c\odot_{\CA_2}b))
                \ar[r]_{\chi^F_{a,c\odot b,m}}
            &(a\rhd^{\CN}F(m))\lhd^{\CN}(c\odot_{\CA_2}b).
        }
        \label{diag:folded_functor_balancing}
    \end{align}
    After expanding $\chi^F$, this square is the composite of the
    $\CC$-linearity squares for $\delta_F^L$ and $\delta_F^R$ and their
    mixed compatibility square.  Thus it commutes, and the universal
    property of $q:\CA_1\times\CA_2^{rev}\to\CD$ descends $\chi^F$ to a
    natural isomorphism
    \begin{align*}
        \chi^F_{x,m}:F(x\rhd_{\CD}^{\CM}m)
        \xrightarrow{\ \simeq\ }
        x\rhd_{\CD}^{\CN}F(m),
        \qquad x\in\CD.
    \end{align*}
    After precomposition with $q\times q\times\id_{\CM}$, the
    associativity axiom for $\chi^F$ restricts respectively to the left
    module-functor axiom on
    $(\iota_1(a),\iota_1(a'))$, the right module-functor axiom on
    $(\iota_2(b),\iota_2(b'))$, and the mixed axiom on
    $(\iota_1(a),\iota_2(b))$.  The other mixed ordering is obtained from
    the latter by naturality with respect to the canonical isomorphism
    $\iota_1(a)\otimes_{\CD}\iota_2(b)\simeq
    \iota_2(b)\otimes_{\CD}\iota_1(a)$.  Hence the associativity axiom
    holds by the universal property; the unit axiom follows in the same
    way.  We have therefore constructed a right-exact left $\CD$-module
    functor.

    Conversely, let $F:\CM\to\CN$ be a right-exact left $\CD$-module
    functor with constraint $\chi^F$.  Restriction along the strong
    monoidal functors $\iota_1$ and $\iota_2$ of Lemma
    \ref{lem:folded_relative_tensor_monoidal} defines
    \begin{align*}
        (\delta_F^L)_{a,m}:F(a\rhd^{\CM}m)
        &\xrightarrow{\ \simeq\ }a\rhd^{\CN}F(m),\\
        (\delta_F^R)_{m,b}:F(m\lhd^{\CM}b)
        &\xrightarrow{\ \simeq\ }F(m)\lhd^{\CN}b.
    \end{align*}
    The restrictions of the module-functor axioms give the separate left
    and right module-functor axioms.  There is also a canonical monoidal
    natural isomorphism
    \begin{align}
        \kappa_{a,b}:\iota_1(a)\otimes_{\CD}\iota_2(b)
        \xrightarrow{\ \simeq\ }
        \iota_2(b)\otimes_{\CD}\iota_1(a),
        \label{eq:kappa_folded_factors}
    \end{align}
    since both sides are canonically $a\btd_{\CC}b$.  The associativity
    axiom for $\chi^F$, together with its naturality with respect to
    \eqref{eq:kappa_folded_factors}, is exactly the mixed coherence for
    $(\delta_F^L,\delta_F^R)$.

    Restricting along the strong monoidal functor $j$ in
    \eqref{eq:folded_j_functor} defines
    \begin{align}
        \xi^F_{c,m}:F(c\odot_{\CM}m)
        &\simeq F(j(c)\rhd_{\CD}^{\CM}m)
        \xrightarrow{\ \chi^F_{j(c),m}\ }
        j(c)\rhd_{\CD}^{\CN}F(m)
        \simeq c\odot_{\CN}F(m).
        \label{eq:recover_C_linearity_from_j}
    \end{align}
    Its associativity and unit axioms are restrictions of those for
    $\chi^F$.  The $\CC$-linearity of $\delta_F^L$ follows from the
    module-functor associativity for $j(c),\iota_1(a)$ and naturality with
    respect to the canonical centrality isomorphism between their two
    products.  Explicitly, it is the commutativity of
    \begin{align}
        \xymatrix@C=7em{
            F(c\odot_{\CM}(a\rhd^{\CM}m))
                \ar[r]^{(c\odot\delta_F^L)\circ\xi^F}
                \ar[d]_{F(\simeq)}
            &c\odot_{\CN}(a\rhd^{\CN}F(m))
                \ar[d]^{\simeq}\\
            F(a\rhd^{\CM}(c\odot_{\CM}m))
                \ar[r]_{(a\rhd\xi^F)\circ\delta_F^L}
            &a\rhd^{\CN}(c\odot_{\CN}F(m)).
        }
        \label{diag:delta_L_is_C_linear}
    \end{align}
    The same argument with $j(c)$ and $\iota_2(b)$ proves the
    $\CC$-linearity of $\delta_F^R$.  Thus
    $((F,\xi^F),\delta_F^L,\delta_F^R)$ is a right-exact $\CC$-linear
    $\CA_1$-$\CA_2$-bimodule functor.

    It remains to prove that the two constructions are inverse.  Starting
    with a $\CC$-linear bimodule functor, restriction of
    \eqref{eq:folded_bimodule_functor_constraint} along $\iota_1$ and
    $\iota_2$ recovers $\delta_F^L$ and $\delta_F^R$ by their unit axioms.
    The reconstructed $\CC$-module constraint is the composite
    \begin{align*}
        F(c\odot_{\CM}m)
        &\simeq F((H_1(c)\rhd^{\CM}m)\lhd^{\CM}\mathbf{1}_{\CA_2})\\
        &\xrightarrow{\ (\delta_F^R)_{H_1(c)\rhd m,\mathbf{1}}\ }
        F(H_1(c)\rhd^{\CM}m)\lhd^{\CN}\mathbf{1}_{\CA_2}\\
        &\xrightarrow{\ (\delta_F^L)_{H_1(c),m}\lhd\id\ }
        (H_1(c)\rhd^{\CN}F(m))\lhd^{\CN}\mathbf{1}_{\CA_2}
        \simeq c\odot_{\CN}F(m).
    \end{align*}
    The first arrow involving $\delta_F^R$ disappears by the right-module
    unit axiom.  The remaining composite equals the original
    $\xi^F_{c,m}$ by the $\CC$-linearity square for $\delta_F^L$, evaluated
    at $H_1(c)$ and the monoidal unit.  Hence
    \begin{align}
        \widetilde{\xi}^{F}_{c,m}=\xi^F_{c,m}.
        \label{eq:recover_original_C_constraint}
    \end{align}

    In the other direction, start with a $\CD$-module functor and let
    $\widetilde{\chi}^F$ be reconstructed from its restrictions.  The
    module-functor associativity axiom applied to
    $\iota_1(a),\iota_2(b)$, followed by naturality with respect to $\iota_1(a)\otimes_{\CD}\iota_2(b)
        \simeq q(a,b)$, gives
    \begin{align*}
        \widetilde{\chi}^F_{q(a,b),m}=\chi^F_{q(a,b),m}.
    \end{align*}
    This is not an assertion that every object of $\CD$ is a pure tensor.
    Rather, precomposition with the universal balanced functor $q$ is an
    equivalence on right-exact functor categories and on their natural
    transformations.  The displayed equality therefore implies
    $\widetilde{\chi}^F=\chi^F$ on all of $\CD$.

    Finally, a natural transformation is a $\CC$-linear
    $\CA_1$-$\CA_2$-bimodule natural transformation if and only if its
    pullback along $q$ is compatible with the constraints above.  By the
    same universal property, this is equivalent to being a $\CD$-module
    natural transformation.  The two constructions hence define inverse
    equivalences of the stated functor categories.
\end{proof}

\begin{lem}[\cite{XY25}]
    Let $\CB$ be an $E_2$-monoidal category and let $\CA$ be an $E_1$-monoidal right $\CB$-module category. 
    For any $E_2$-algebra $B\in \CB$ we have 
    \begin{align*}
        \CA\btc{\CB}\hind{}{\CB}{B}\simeq \hind{}{\CA}{B}
    \end{align*}
    is an $E_1$-monoidal equivalence.
\end{lem}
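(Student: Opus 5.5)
The plan is to reduce the statement to the $E_1$ fusion formula of Proposition \ref{prp:fusion_of_walls}, and then upgrade the resulting equivalence of categories to an $E_1$-monoidal one using the $E_2$-structure of $B$. First, forgetting monoidal structures, $\CA$ is a right $\CB$-module category and $B$ is in particular an $E_1$-algebra in $\CB$. The equivalence
\[
    \CA\btc{\CB}\hind{}{\CB}{B}\simeq \hind{}{\CA}{B}
\]
of underlying categories then follows from Proposition \ref{prp:fusion_of_walls}, in the finite setting, or from Theorem \ref{thm:AC_CA_inverse}, after writing $\CA\simeq\hind{A'}{\CB}{}$. I would make this functor explicit on generators:
\[
    \Phi:\ a\btc{\CB}(x,r_x)\longmapsto (a\odot x,\ a\odot x\odot B\to a\odot x),
\]
where the action map is induced by $r_x$. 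Balancedness of $\Phi$ uses only the right $\CB$-module associator of $\CA$. Equivalently, one can check that $\Phi$ sends the generator $a\btc{\CB}B$ to the free module $a\odot B$, and argue by right exactness, since free modules generate $\hind{}{\CA}{B}$ under finite colimits.

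Second, I would describe the monoidal structures.
\begin{itemize}
    \item On the left, the tensor product is
    \[
        (a\btc{\CB}x)\otimes(a'\btc{\CB}x')=(a\otimes a')\btc{\CB}(x\otc{B}x'),
    \]
    using the monoidal structure of $\hind{}{\CB}{B}$ from the corollary to Theorem \ref{thm:bimodule-category-monoidal}. This is well defined because the right $\CB$-action on $\CA$ is compatible with the tensor product through the braiding. That compatibility is the datum of an $E_1$-monoidal right $\CB$-module, i.e.\ a central structure $\CB\to\fZ_1(\CA)$.
    \item On the right, $\hind{}{\CA}{B}$ is monoidal via $\otc{B}$. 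Each right $B$-module in $\CA$ is made into a $B$-$B$-bimodule through the half-braiding $a\odot B\cong B\odot a$ coming from the central functor; this is the analogue of $\alpha^+$.
\end{itemize}
The monoidal constraint for $\Phi$ is then the canonical isomorphism
\[
    (a\odot x)\otc{B}(a'\odot x')\cong (a\otimes a')\odot(x\otc{B}x').
\]
It is obtained by moving $x$ past $a'$ using the half-braiding, and then using that $-\odot-$ preserves the coequalizers defining $\otc{B}$ (by right exactness in each variable). The unit constraint is $\Phi(\bone\btc{\CB}B)=B$.

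Third, I would verify the coherence axioms for this constraint. This is the step I expect to be the main obstacle. The hexagon-type compatibility of the half-braiding with the $\CB$-action must be matched against the choice of braiding used to turn right $B$-modules into bimodules. If the sign is chosen wrongly, the $\beta$ versus $\beta^{-1}$ issue flagged in the remark after Proposition \ref{prp:u^r-_d^r_are_compatible} appears. I would first check the associativity pentagon on pure generators $a\btc{\CB}x$, reducing it to naturality of the central structure together with the associativity of relative tensor products over $B$. I would then extend it to all objects by the universal property of $\btc{\CB}$, since natural transformations between right-exact balanced functors are determined on the image of $\CA\times\hind{}{\CB}{B}$. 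Because $\Phi$ is already an equivalence by the first step, checking that this constraint is coherent completes the proof.
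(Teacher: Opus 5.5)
The paper does not prove this lemma; it is quoted from \cite{XY25}, so there is no internal argument to compare against. Your Steps 2--3 are the standard argument: transport the tensor product along the central structure $\CB\to\fZ_1(\CA)$, define the constraint on pure tensors, and descend. The genuine gap is Step 1.

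Proposition \ref{prp:fusion_of_walls} assumes that $\CB$ is finite pre-multi-tensor and that $\CA$ is finite. Writing $\CA\simeq {}_{A'}\CB$ uses the reconstruction Theorem \ref{thm:reconstruct_of_module_cat}, which is also a finite result. The lemma has no finiteness hypothesis. Moreover, through Corollary \ref{crl:B1CCCB2_is_E1_equivalent_to_B1CB2}, it is used in Theorem \ref{thm:E2_EW} for finitely cocomplete $\CC$. In that particular use $\CA={}_{B_1}\CB$ is representable, so Theorem \ref{thm:AC_CA_inverse} would suffice there, but it does not cover an arbitrary $\CA$.

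Your fallback does not close the gap. That the free modules $a\odot B$ generate $\CA_B$ under finite colimits only shows that the image of $\Phi$ generates the target. It says nothing about full faithfulness, and Hom-sets in a Kelly relative tensor product are not directly computable.

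What is missing is a direct proof that $\Phi$ is the universal balanced functor. Take a $\CB$-balanced functor $H:\CA\times\CB_B\to\CE$ that is right exact in each variable, and set $\tilde H(N):=\mathrm{coeq}\bigl(H(N\odot B,B)\rightrightarrows H(N,B)\bigr)$. One arrow is induced by the action of $N$; the other is the balancing isomorphism followed by $H(N,m_B)$.
\begin{itemize}
\item Balancing, right exactness of $H(a,-)$, and the $U$-split bar coequalizer $x\otimes B\otimes B\rightrightarrows x\otimes B\to x$ in $\CB_B$ together give $\tilde H(a\odot x)\cong H(a,x)$.
\item For a right exact $G:\CA_B\to\CE$, right exactness of $G$ and the bar presentation of $N$ in $\CA_B$ give $\widetilde{G\circ\Phi}\cong G$.
\end{itemize}
This establishes the underlying equivalence with no finiteness hypothesis.

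A smaller point concerns Step 3. Before checking the pentagon, you must show that the constraint $(a\odot x)\otimes_B(a'\odot x')\cong(a\otimes a')\odot(x\otimes_B x')$ is compatible with the balancing isomorphisms in both pairs of variables. Otherwise it does not descend to $\CA\boxtimes_{\CB}\CB_B$. This check, together with well-definedness on $\otimes_B$, is what actually fixes your sign:
\begin{itemize}
\item the half-braiding used inside the constraint must be the same one used to turn right $B$-modules in $\CA$ into bimodules;
\item on objects coming from $\CB$, it must restrict to the braiding that $\alpha^+$ uses on $\CB_B$.
\end{itemize}
Finally, the unit of $\CA_B$ is $\bone_{\CA}\odot B$, not $B$.
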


\begin{crl}\label{crl:B1CCCB2_is_E1_equivalent_to_B1CB2}
    Let $\CB$ be an $E_2$-monoidal category, and let $B_1,B_2$ be two $E_2$-algebras in $\CB$.
    There is an $E_1$-monoidal equivalence
    \begin{align*}
        \hind{B_1}{\CB}{}\btc{\CB}\hind{}{\CB}{B_2}\simeq \hind{B_1}{\CB}{B_2}
    \end{align*}
\end{crl}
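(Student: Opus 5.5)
The plan is to derive the corollary from the preceding lemma of \cite{XY25}, applied with $\CA:=\hind{B_1}{\CB}{}$. First, $\hind{B_1}{\CB}{}$ is an $E_1$-monoidal category under $\otc{B_1}$: this is the left-module version of the corollary following Theorem \ref{thm:bimodule-category-monoidal}. We can either pass to the monoidal opposite or define the right $B_1$-action through the braiding, as in the functor $\alpha^+$. Second, $\hind{B_1}{\CB}{}$ has to be made into an $E_1$-monoidal right $\CB$-module category. The right action is $M\lhd c:=M\ot c$, where $B_1$ acts on the first factor. Its monoidal compatibility
\[
    (M\otc{B_1}N)\ot(c\ot c')\;\cong\;(M\ot c)\otc{B_1}(N\ot c')
\]
comes from the braiding $\beta_{c,N}$ together with the fact that $\ot$ preserves the defining coequalizers, so it passes to relative tensor products. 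The hexagon axioms then give the coherence of these isomorphisms. The lemma thus yields an $E_1$-monoidal equivalence
\[
    \hind{B_1}{\CB}{}\btc{\CB}\hind{}{\CB}{B_2}\;\simeq\;\hind{}{\bigl(\hind{B_1}{\CB}{}\bigr)}{B_2},
\]
where $B_2$ is regarded as an algebra in $\hind{B_1}{\CB}{}$ through its image under the action of $\CB$, namely the free module $B_1\ot B_2$.

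Next I would identify the right-hand side with $\hind{B_1}{\CB}{B_2}$. An object is a left $B_1$-module $M$ with a right action $M\ot B_2\to M$ that is left $B_1$-linear and associative and unital. This is exactly a compatible $B_1$-$B_2$-bimodule, and morphisms correspond on both sides. This is the one-sided analogue of Theorem \ref{thm:modules_in_bimodule_categories}, and the same unit-insertion argument should work. The monoidal structure transported along this identification is the relative tensor product over the internal algebra $B_2$. Expanding it as a coequalizer in $\CB$, I expect to recover $\otd_{B_1\ot B_2}$, the tensor product of Theorem \ref{thm:bimodule-category-monoidal}. The unit matches, since $B_1\ot B_2$ is the unit in both descriptions.

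The step I expect to be the main obstacle is showing that the transported monoidal structure really agrees with $\otd_{B_1\ot B_2}$. Concretely, one must check that the iterated quotient $(M\otc{B_1}N)$ modulo the $B_2$-balancing equals the single quotient by the $B_1\ot B_2$-action $l^{12}$ from Theorem \ref{thm:B_1CB_2_to_B_1B_2CB_1B_2_is_fully_faithful}, and that the braidings used in the two constructions coincide rather than differing by an inverse crossing. I would first compare the two coequalizer diagrams directly, using that $\ot$ is right exact in each variable so that iterated coequalizers commute. Then I would check the associator on generators $M\ot N\ot P$, where the epimorphisms $p$ allow cancellation.
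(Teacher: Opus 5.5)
Your proposal is correct and follows the paper's route. The paper gives no separate proof: the corollary is obtained from the preceding lemma of \cite{XY25} with $\CA=\hind{B_1}{\CB}{}$, together with the identification $\hind{}{(\hind{B_1}{\CB}{})}{B_2}\simeq\hind{B_1}{\CB}{B_2}$. Your additional check that the transported product is $\otd_{B_1\ot B_2}$, which the paper leaves implicit, goes through with the positive-crossing conventions you indicate: the right $B_1$-action $u_M\circ\beta_{M,B_1}$ and the half-braiding $\beta_{c,N}$ reproduce exactly the $B_1$-part of $r^{12}_M$ and the $B_2$-part of $l^{12}_N$.
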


Combining these identifications with the ordinary \(E_1\) Eilenberg--Watts theorem yields the corresponding statement one Morita level higher.
\begin{thm}[$E_2$ Eilenberg-Watts Theorem]\label{thm:E2_EW}
    \begin{align*}
        \hind{\CC}{\vind{\hind{}{\CC}{B_{1}}}{\Fun}{\hind{}{\CC}{B_{2}}}}{}(\hind{}{\CC}{A_1},\hind{}{\CC}{A_2})\simeq \hind{A_1}{\vind{B_1}{\CC}{B_2}}{A_2}
    \end{align*}
\end{thm}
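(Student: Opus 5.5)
The plan is to reduce the $E_2$ statement to the ordinary $E_1$ Eilenberg--Watts theorem (Theorem \ref{thm:generalized_Eilenberg-Watts}), applied not in $\CC$ but in the monoidal category $\hind{B_1}{\CC}{B_2}$. The argument then transports the result back through the identifications of Section \ref{subsec:two-morita-iterated-bimodules}. Concretely, I would establish the chain
\begin{align*}
    \hind{\CC}{\vind{\hind{}{\CC}{B_{1}}}{\Fun}{\hind{}{\CC}{B_{2}}}}{}(\hind{}{\CC}{A_1},\hind{}{\CC}{A_2})
    &\simeq \Fun_{\hind{}{\CC}{B_1}\btc{\CC}\hind{}{\CC}{B_2}^{rev}}(\hind{}{\CC}{A_1},\hind{}{\CC}{A_2})\\
    &\simeq \Fun_{\hind{B_1}{\CC}{B_2}}(\hind{}{(\hind{B_1}{\CC}{B_2})}{A_1},\hind{}{(\hind{B_1}{\CC}{B_2})}{A_2})\\
    &\simeq \hind{A_1}{(\hind{B_1}{\CC}{B_2})}{A_2}
    \simeq \hind{A_1}{\vind{B_1}{\CC}{B_2}}{A_2}.
\end{align*}

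The first equivalence is Lemma \ref{lem:bimod_functors_in_C Cat_are_iterations}, which folds the two $\CC$-linear actions into a single left action of $\CD=\hind{}{\CC}{B_1}\btc{\CC}\hind{}{\CC}{B_2}^{rev}$. For the second step, I would identify $\hind{}{\CC}{B_2}^{rev}$ with $\hind{B_2}{\CC}{}$ via the braiding (the functor $\alpha^{\pm}$). Corollary \ref{crl:B1CCCB2_is_E1_equivalent_to_B1CB2} then gives an $E_1$-monoidal equivalence $\CD\simeq\hind{B_1}{\CC}{B_2}$. Corollary \ref{crl:B_1CB_2A_equivalent_to_CA} supplies the underlying equivalences $\hind{}{\CC}{A_i}\simeq\hind{}{(\hind{B_1}{\CC}{B_2})}{A_i}$. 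The third step is Theorem \ref{thm:generalized_Eilenberg-Watts} for the algebras $A_1,A_2\in\Alg_{E_1}(\hind{B_1}{\CC}{B_2})$ from Theorem \ref{thm:algebras_in_bimodule_categories}. The last step is Theorem \ref{thm:bibimod_equivalent_to_bimod_in_bimod}.

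The composite should send a bi-bimodule $M$ to $-\otc{A_1}M$. I would check this directly as a consistency test. The $\hind{}{\CC}{B_i}$-bimodule constraints of $-\otc{A_1}M$ then come from the two coinciding $B_i$-actions on $M$, which is exactly the coherence \eqref{eq:coherence_of_bibimod}. The inverse sends $F$ to $F(A_1)$, whose left $A_1$-action is induced as in the $E_1$ proof and whose $B_i$-actions come from the bimodule-functor constraints evaluated at $A_1$.

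The main obstacle is the second step. I must show that, under $\CD\simeq\hind{B_1}{\CC}{B_2}$, the folded $\CD$-action on $\hind{}{\CC}{A_i}$ (built from $\hind{}{\CC}{B_1}$ acting on the left and $\hind{}{\CC}{B_2}$ on the right via $A_i$'s bimodule structure) matches the canonical left action of $\hind{B_1}{\CC}{B_2}$ on right $A_i$-modules in it, given by $X\otd_{B_1\ot B_2}-$. The braiding conventions are delicate here: Proposition \ref{prp:u^r-_d^r_are_compatible} and the remark after it show that $u^{r,-}$ must use $\beta^{-1}$. I would verify the match on generators $q(B_1\ot c, c'\ot B_2)$, i.e. free modules. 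Both sides are right exact in $\CD$, so the universal property of $q$, as in the proof of Lemma \ref{lem:bimod_functors_in_C Cat_are_iterations}, extends the comparison. On free modules both actions reduce to $c\ot - \ot c'$ twisted by the chosen crossings. If the signs fail to match, the fallback is to replace $\alpha^+$ by $\alpha^-$ in identifying $\hind{}{\CC}{B_2}^{rev}$.
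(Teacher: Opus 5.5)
Your proposal is correct and follows the paper's proof essentially step for step: the paper uses the same chain of Lemma~\ref{lem:bimod_functors_in_C Cat_are_iterations}, Corollary~\ref{crl:B1CCCB2_is_E1_equivalent_to_B1CB2}, Corollary~\ref{crl:B_1CB_2A_equivalent_to_CA}, the $E_1$ Eilenberg--Watts theorem (Theorem~\ref{thm:generalized_Eilenberg-Watts}) applied in $\hind{B_1}{\CC}{B_2}$, and Theorem~\ref{thm:bibimod_equivalent_to_bimod_in_bimod}, except that you merge its second and third steps into one. You also propose to check that the identification $\hind{}{\CC}{A_i}\simeq\hind{}{(\hind{B_1}{\CC}{B_2})}{A_i}$ intertwines the folded $\CD$-action with the canonical $\otd_{B_1\ot B_2}$-action; this addresses a point the paper passes over silently, since Corollary~\ref{crl:B_1CB_2A_equivalent_to_CA} is stated only as an equivalence of underlying categories.
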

\begin{proof}
    We have 
    \begin{align*}
        &\hind{\CC}{\vind{\hind{}{\CC}{B_{1}}}{\Fun}{\hind{}{\CC}{B_{2}}}}{}(\hind{}{\CC}{A_1},\hind{}{\CC}{A_2})\\
        &\quad\simeq \Fun_{\hind{B_1}{\CC}{}\btc{\CC}\hind{}{\CC}{B_2}}(\hind{}{\CC}{A_1},\hind{}{\CC}{A_2}) &\text{by Lemma \ref{lem:bimod_functors_in_C Cat_are_iterations}}\\
        &\quad\simeq \Fun_{\hind{B_1}{\CC}{B_2}}(\hind{}{\CC}{A_1},\hind{}{\CC}{A_2}) &\text{by Corollary \ref{crl:B1CCCB2_is_E1_equivalent_to_B1CB2}}\\
        &\quad\simeq \Fun_{\hind{B_1}{\CC}{B_2}}(\hind{}{(\hind{B_1}{\CC}{B_2})}{A_1},\hind{}{(\hind{B_1}{\CC}{B_2})}{A_2}) &\text{by Corollary \ref{crl:B_1CB_2A_equivalent_to_CA} }\\
        &\quad\simeq \hind{A_1}{(\hind{B_1}{\CC}{B_2})}{A_2} &\text{by }E_1\text{ Eilenberg-Watts Theorem}\\
        &\quad\simeq \hind{A_1}{\vind{B_1}{\CC}{B_2}}{A_2} &\text{by Theorem \ref{thm:bibimod_equivalent_to_bimod_in_bimod}}
    \end{align*}
\end{proof}

In particular, the endomorphisms of the regular module realization recover the local \(B\)-modules.
\begin{crl}
    \begin{align*}
        \hind{\CC}{\vind{\hind{}{\CC}{B}}{\Fun}{\hind{}{\CC}{B}}}{}(\hind{}{\CC}{B},\hind{}{\CC}{B})\simeq \hind{B}{\vind{B}{\CC}{B}}{B}\simeq \hind{}{\CC}{B}^{loc}
    \end{align*}
\end{crl}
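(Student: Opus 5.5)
The statement to prove is the corollary: the endomorphism category of $\hind{}{\CC}{B}$ as a $\CC$-linear $\hind{}{\CC}{B}$-bimodule is the category of local $B$-modules. I would obtain it by specializing Theorem~\ref{thm:E2_EW} and then invoking the earlier identification of $E_2$-local $E_0$ modules with local modules. No new computation should be needed; the work is in checking that each ingredient applies to the regular data.

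\textbf{Step 1: specialize the $E_2$ Eilenberg--Watts theorem.} Set $B_1=B_2=B$ and $A_1=A_2=B$. Here $B$ is regarded as an $E_1$ $B$-$B$-bimodule, that is, an object of $\hind{B}{\Alg_{E_1}(\CC)}{B}$, using its own multiplication for both actions.

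I must first check that this really is an $E_1$ $B$-$B$-bimodule. The compatibility diagram in the definition requires $u=m_B$ and $d=m_B$ to be $E_1$-algebra homomorphisms $B\ot B\to B$. This holds exactly because $B$ is $E_2$, i.e.\ $m_B$ is a morphism in $\Alg_{E_1}(\CC)$.

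Under the module realization functor (Lemma~\ref{lem:Mod1_induces_functors}), this bimodule is sent to $\hind{}{\CC}{B}$ with its regular left and right $\hind{}{\CC}{B}$-actions given by $\otc{B}$. So the left-hand side of the corollary is literally the left-hand side of Theorem~\ref{thm:E2_EW} for these choices. The theorem then gives the first equivalence, with $\hind{B}{\vind{B}{\CC}{B}}{B}$ on the right.

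\textbf{Step 2: identify with local modules.} The second equivalence is the corollary stated after Definition~\ref{dfn:cat_of_bibimod}. When $A_1=A_2=B$ and $B_1=B_2=B$, the coherence diagram \eqref{eq:coherence_of_bibimod} collapses to the locality condition $l=r\circ\beta^{-1}_{B,M}=r\circ\beta^{-1}_{M,B}$. The intended reading is that $l$ and $r$ are determined by one another, which recovers the FFRS notion of a local module.

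\textbf{Main obstacle.} The only delicate point is the identification hidden inside the proof of Theorem~\ref{thm:E2_EW}. That proof uses Corollary~\ref{crl:B_1CB_2A_equivalent_to_CA} to rewrite $\hind{}{\CC}{B}$ as the category of right $B$-modules internal to $\hind{B}{\CC}{B}$. I must check that the $\hind{B}{\CC}{B}$-module structure it induces agrees with the one coming from the folded action in Lemma~\ref{lem:bimod_functors_in_C Cat_are_iterations}.

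I would check this on generators. Take free modules $X\ot B$ and evaluate both actions there, using Theorem~\ref{thm:modules_in_bimodule_categories}, whose left $B\ot B$-action is built from $u^{r,-}$ and $d^r$. The aim is to see that both actions agree up to the canonical isomorphism $\hind{B}{\CC}{}\btc{\CC}\hind{}{\CC}{B}\simeq\hind{B}{\CC}{B}$ of Corollary~\ref{crl:B1CCCB2_is_E1_equivalent_to_B1CB2}. Once that is in place, the rest is bookkeeping, and composing the two equivalences proves the corollary.
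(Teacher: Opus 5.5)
Your proposal is correct and follows the paper's route: the corollary is the specialization $B_1=B_2=A_1=A_2=B$ of Theorem~\ref{thm:E2_EW}, composed with the earlier identification $\hind{B}{\vind{B}{\CC}{B}}{B}\simeq\hind{}{\CC}{B}^{loc}$. The compatibility check you flag as the main obstacle concerns the proof of Theorem~\ref{thm:E2_EW} itself, so it is not needed once that theorem is invoked.

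One small slip: the collapsed coherence diagram gives $l=r\circ\beta^{-1}_{M,B}$ and $r=l\circ\beta^{-1}_{B,M}$, i.e.\ $l=r\circ\beta_{B,M}=r\circ\beta^{-1}_{M,B}$. Your composite $r\circ\beta^{-1}_{B,M}$ does not typecheck, although the locality condition you intend is the right one.
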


\begin{thm}
    There is a 3-functor from $\Mrt_{E_2}(\CC)$ to $\Mrt_{E_1}(\LMod(\CC))$
    \begin{align*}
        \mathrm{Mod}_2:\Mrt_{E_2}(\CC)\to \Mrt_{E_1}(\LMod(\CC))
    \end{align*}
\end{thm}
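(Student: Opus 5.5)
We construct $\mathrm{Mod}_2$ level by level, reusing the functors already built, and check compatibility with the three composition laws of Definition \ref{dfn: 2-Morita_cat}.

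\emph{Objects and 1-morphisms.} On objects we send $B\mapsto \hind{}{\CC}{B}$, which is an $E_1$-algebra in $\LMod(\CC)$ by the proposition of \cite{DR18} above, i.e.\ we use $\Alg_{E_1}(\underline{\mathrm{Mod}_1})$ from Lemma \ref{lem:Mod1_induces_functors}. On 1-morphisms we send $A\in\hind{B_1}{\Alg_{E_1}(\CC)}{B_2}$ to the $\hind{}{\CC}{B_1}$-$\hind{}{\CC}{B_2}$-bimodule $\hind{}{\CC}{A}$, via $\underline{\mathrm{Mod}_1}_{B_1,B_2}$. Identity 1-morphisms are preserved on the nose, since $B$ is sent to the regular bimodule $\hind{}{\CC}{B}$. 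For composition along $B_2$ we need a coherent equivalence
\[
\hind{}{\CC}{A}\btc{\hind{}{\CC}{B_2}}\hind{}{\CC}{A'}\simeq \hind{}{\CC}{\vfus{A}{\otc{B_2}}{A'}}.
\]
The plan is to obtain it from monoidality of $\underline{\mathrm{Mod}_1}$, in the form $\hind{}{\CC}{A}\btc{\CC}\hind{}{\CC}{A'}\simeq\hind{}{\CC}{A\ot A'}$. We then pass to the coequalizer over $\hind{}{\CC}{B_2}$, which corresponds to the coequalizer defining $A\otc{B_2}A'$ as an algebra. This is a relative version of Theorem \ref{thm:AC_CA_inverse} and the \cite{XY25} lemma $\CA\btc{\CB}\hind{}{\CB}{B}\simeq\hind{}{\CA}{B}$.

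\emph{2- and 3-morphisms.} We send a bi-bimodule $M\in\hind{A_1}{\vind{B_1}{\CC}{B_2}}{A_2}$ to the functor $-\otc{A_1}M:\hind{}{\CC}{A_1}\to\hind{}{\CC}{A_2}$. We send a 3-morphism $f$ to $-\otc{A_1}f$. That this is a $\CC$-linear $\hind{}{\CC}{B_1}$-$\hind{}{\CC}{B_2}$-bimodule functor, and that the assignment is an equivalence on hom-categories, is exactly the content of the $E_2$ Eilenberg--Watts Theorem \ref{thm:E2_EW}. We only need to track that the chain of equivalences in its proof sends $M$ to $-\otc{A_1}M$. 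The last two steps there use the $E_1$ Eilenberg--Watts theorem inside $\hind{B_1}{\CC}{B_2}$ together with Theorem \ref{thm:bibimod_equivalent_to_bimod_in_bimod}, so this identification is immediate.

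\emph{Compositions.} Composition of 2-morphisms along $A_2$ goes to composition of functors, via the canonical isomorphism $(-\otc{A_1}M_1)\otc{A_2}M_2\cong -\otc{A_1}(M_1\otc{A_2}M_2)$. Vertical composition of 3-morphisms and the identity 2-morphism $A$ (sent to $-\otc{A}A\cong\Id$) are handled the same way.

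\emph{Main obstacle.} The hard part is composition of 2-morphisms along $B_2$, together with the interchange law of Figure \ref{fig:Morita_E2}. Under the equivalence of the 1-morphism step, we must identify $(-\otc{A_1}M)\btc{\hind{}{\CC}{B_2}}(-\otc{A_3}M')$ with $-\otc{\vfus{A_1}{\otc{B_2}}{A_3}}\vfus{M}{\otc{B_2}}{M'}$. I would first check this on the generating objects $X\btc{}Y$ with $X\in\hind{}{\CC}{A_1}$ and $Y\in\hind{}{\CC}{A_3}$. There the claim reduces to commuting two coequalizers: $(X\otc{A_1}M)\otc{B_2}(Y\otc{A_3}M')\cong (X\otc{B_2}Y)\otc{\vfus{A_1}{\otc{B_2}}{A_3}}\vfus{M}{\otc{B_2}}{M'}$. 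This uses right exactness of $\ot$ and the braiding conventions of Proposition \ref{prp:u^r-_d^r_are_compatible}. I would then extend to all of the relative tensor product by its universal property, as in the last step of Lemma \ref{lem:bimod_functors_in_C Cat_are_iterations}. Finally, the coherence data (associators, interchangers, unitors) of the tricategories are all induced by universal properties of coequalizers. So the modifications required of a trifunctor exist uniquely, and their axioms hold by uniqueness.
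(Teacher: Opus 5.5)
Your proposal is correct and follows the paper's route: the paper's proof consists only of invoking Lemma~\ref{lem:Mod1_induces_functors} for objects and 1-morphisms and Theorem~\ref{thm:E2_EW} for 2- and 3-morphisms, and it does not check compatibility with the compositions at all. Your extra treatment of composition along $B_2$ is therefore an addition to the paper's argument. Note, though, that its key input $\hind{}{\CC}{A}\btc{\hind{}{\CC}{B_2}}\hind{}{\CC}{A'}\simeq\hind{}{\CC}{\vfus{A}{\otc{B_2}}{A'}}$ does not follow from monoidality of $\underline{\mathrm{Mod}_1}$ simply by ``passing to the coequalizer''. It needs $\underline{\mathrm{Mod}_1}$ to preserve this bar-construction colimit, which the paper records (in the finite setting, via Lurie) as Proposition~\ref{prp:module_construction_preserves_relative_tensor_products}.
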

\begin{proof}
    By Lemma \ref{lem:Mod1_induces_functors} and Theorem \ref{thm:E2_EW}
\end{proof}

\begin{dfn}
    Let $F:\CC\to \CD$ be an $n$-functor. We say $F$ is {\bf $k$-fully faithful}, if for every parallel $k-1$-morphisms $f,g$, the induced functor on $n-k$-categories of $k$-morphisms, $k+1$-morphisms..., $n$-morphisms is an equivalence.
\end{dfn}

As a consequence of Theorem \ref{thm:E2_EW}, we have 
\begin{thm}
    The 3-functor
        \begin{align*}
        \mathrm{Mod}_2:\Mrt_{E_2}(\CC)\to \Mrt_{E_1}(\LMod(\CC))
    \end{align*}
    is 2-fully faithful 
\end{thm}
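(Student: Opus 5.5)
The plan is to unwind the definition of $2$-fully faithful for the $3$-functor $\mathrm{Mod}_2$. Fix $E_2$-algebras $B_1,B_2$ and two parallel $1$-morphisms $A_1,A_2\in\hind{B_1}{\Alg_{E_1}(\CC)}{B_2}$. We must show that the induced functor on the $1$-categories of $2$- and $3$-morphisms,
\begin{align*}
    \hind{A_1}{\vind{B_1}{\CC}{B_2}}{A_2}
    \longrightarrow
    \hind{\CC}{\vind{\hind{}{\CC}{B_1}}{\Fun}{\hind{}{\CC}{B_2}}}{}(\hind{}{\CC}{A_1},\hind{}{\CC}{A_2}),
\end{align*}
is an equivalence. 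Here the right-hand side is the hom-category in $\Mrt_{E_1}(\LMod(\CC))$ between the images $\mathrm{Mod}_2(A_i)=\hind{}{\CC}{A_i}$, which exist by Lemma \ref{lem:Mod1_induces_functors}.

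The first step is to identify this induced functor explicitly as $M\mapsto -\otc{A_1}M$ and $f\mapsto -\otc{A_1}f$. The underlying right exact $\CC$-module functor is the one from the $E_1$ Eilenberg--Watts theorem (Theorem \ref{thm:generalized_Eilenberg-Watts}), and its $\hind{}{\CC}{B_1}$-$\hind{}{\CC}{B_2}$-bimodule constraints are induced by the bi-bimodule coherence \eqref{eq:coherence_of_bibimod}. The second step is to check that this functor agrees, up to natural isomorphism, with the composite of equivalences in the proof of Theorem \ref{thm:E2_EW}. I would trace each link of that chain.
\begin{itemize}
    \item Lemma \ref{lem:bimod_functors_in_C Cat_are_iterations} folds the two actions into a single $\hind{B_1}{\CC}{}\btc{\CC}\hind{}{\CC}{B_2}$-action. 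This leaves the underlying functor unchanged.
    \item Corollary \ref{crl:B1CCCB2_is_E1_equivalent_to_B1CB2} and Corollary \ref{crl:B_1CB_2A_equivalent_to_CA} transport this action along equivalences that are the identity on underlying objects. They send $-\otc{A_1}M$ to the relative tensor product over $A_1$ formed inside $\hind{B_1}{\CC}{B_2}$, that is, $-\otd_{A_1}M$ with $\otd_{B_1\ot B_2}$-coequalizers.
    \item The $E_1$ Eilenberg--Watts theorem applied to the monoidal category $\hind{B_1}{\CC}{B_2}$ (Theorem \ref{thm:bimodule-category-monoidal}) sends that functor back to $M$.
    \item Theorem \ref{thm:bibimod_equivalent_to_bimod_in_bimod} is the identity on underlying objects and morphisms.
\end{itemize}
Since the composite of the chain is inverse to our induced functor, the induced functor is an equivalence. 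Because the chain is an equivalence of categories rather than a bijection on objects alone, this also covers the $3$-morphism level, so the statement follows.

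The main obstacle is the compatibility between the two relative tensor products. In $\CC$, the product $X\otc{A_1}M$ is a coequalizer over $A_1$. In $\hind{B_1}{\CC}{B_2}$, it is a coequalizer of $X\otd_{B_1\ot B_2}A_1\otd_{B_1\ot B_2}M$. One must show these coincide, and that the bimodule constraints built from \eqref{eq:coherence_of_bibimod} match those transported through the chain. I would first show that the canonical comparison map is an isomorphism, using that $p_{X,M}$ is epic and that the $B_1\ot B_2$-balancing is already forced by the $A_1$-balancing. The latter holds because $B_1\ot B_2$ acts through the unit $h_1'$, exactly as in the proof of Lemma \ref{lem:bibimod_to_bimod_in_bimod}. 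The constraint comparison would then reduce to the naturality and hexagon arguments already used in Theorem \ref{thm:modules_in_bimodule_categories}.
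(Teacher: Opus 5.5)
Your proposal is correct and takes the same route as the paper, which obtains 2-full faithfulness directly from the $E_2$ Eilenberg--Watts theorem (Theorem~\ref{thm:E2_EW}) and its chain of equivalences. The paper treats this as immediate because it builds the 2-morphism part of $\mathrm{Mod}_2$ from that theorem. Your additional check makes the identification explicit: you show that the concrete functor $M\mapsto -\otc{A_1}M$ agrees with the chain, in particular that $X\otc{A_1}M$ computed in $\CC$ coincides with the relative tensor product over $A_1$ formed in $\hind{B_1}{\CC}{B_2}$, because the $B_1\ot B_2$-balancing factors through the unit $h_1'$.
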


The preceding theorem identifies the higher morphisms between representable module categories. It is therefore natural to isolate the representable part of the target tricategory.
\begin{dfn}\label{dfn:MrtE1LmodRep}
The representable subtricategory
$\Mrt_{E_1}\bigl(\LMod^{\mathrm{rep}}(\CC)\bigr)
\subset \Mrt_{E_1}\bigl(\LMod(\CC)\bigr)$ consists of
\begin{itemize}
    \item objects are left $\CC$-module categories $\hind{}{\CC}{B_i}$ for $B_i\in \Alg_{E_2}(\CC)$;
    \item 1-morphisms from $\hind{}{\CC}{B_1}$ to $\hind{}{\CC}{B_2}$ are $\hind{}{\CC}{B_1}$-$\hind{}{\CC}{B_2}$-bimodule categories $\vind{B_1}{\hind{}{\CC}{A_i}}{B_2}$ in $\LMod(\CC)$ for $A_i\in\hind{B_1}{\Alg_{E_1}(\CC)}{B_2}$;
    \item 2-morphisms from $\vind{B_1}{\hind{}{\CC}{A_1}}{B_2}$ to $\vind{B_1}{\hind{}{\CC}{A_2}}{B_2}$ are $\hind{}{\CC}{B_1}$-$\hind{}{\CC}{B_2}$-bimodule functors $-\otimes_{A_1}M\in\hind{\CC}{\vind{\hind{}{\CC}{B_{1}}}{\Fun}{\hind{}{\CC}{B_{2}}}}{}(\hind{}{\CC}{A_1},\hind{}{\CC}{A_2})$ in $\LMod(\CC)$ for $M\in\hind{A_1}{\vind{B_1}{\CC}{B_2}}{A_2}$;
    \item 3-morphisms are $\hind{}{\CC}{B_1}$$\hind{}{\CC}{B_2}$-bimodule natural transformations in $\LMod(\CC)$.
\end{itemize}

\begin{center}
  \makebox[\textwidth][c]{%
      \resizebox{0.29\textwidth}{!}{%
      \begin{tikzpicture}[baseline=(current bounding box.center)]
        \path[fill=hmDeepCyan] (-2,-1.5) rectangle(0,1.5);
        \path[fill=hmMidCyan] (0,-1.5) rectangle(4,1.5);
        \draw[draw=hmLineGreen,line width=1.1pt] (0,-1.5) -- (0,1.5);
        \draw[draw=hmLineGreen,line width=1.1pt] (0,0) -- (2,0);
        \draw[draw=hmLineGreen,line width=0.9pt,decorate,
          decoration={brace,amplitude=4pt}]
          (0.08,0.13) -- (1.92,0.13);
        \draw[draw=hmLineLightBlue,line width=1.1pt] (2,0) -- (4,0);
        \filldraw[fill=white,draw=hmLineGreen,line width=0.7pt]
          (-0.05,-0.05) rectangle(0.05,0.05);
        \filldraw[fill=white,draw=hmLineGreen,line width=0.7pt]
          (1.95,-0.05) rectangle(2.05,0.05);

        \node at(-1.5,0){$\CC$};
        \node at(-0.3,1.1){$\hind{}{\vind{B_1}{\CC}{B_1}}{B_1}$};
        \node at(-0.3,-1.1){$\hind{}{\vind{B_2}{\CC}{B_2}}{B_2}$};
        \node at(-0.3,0){$\hind{}{\CC}{A_1}$};
        \node at(1.7,1){$\hind{B_1}{\vind{B_1}{\CC}{B_1}}{B_1}$};
        \node at(1.7,-1){$\hind{B_2}{\vind{B_2}{\CC}{B_2}}{B_2}$};
        \node at(1,-0.3){\tiny$\hind{A_1}{\vind{B_1}{\CC}{B_2}}{A_1}$};
        \node at(2.3,-0.3){\tiny$\hind{A_1}{\vind{B_1}{\CC}{B_2}}{A_2}$};
      \end{tikzpicture}%
      }%
      \hspace{0.015\textwidth}%
      \begin{tikzpicture}[baseline=0pt]
        \node[anchor=south] (formula) at (0,0.12cm) {\resizebox{0.14\textwidth}{!}{$
          \hind{}{\vind{B_1}{\CC}{B_2}}{A_1}
          \times
          \vind{B_1}{\hind{A_1}{\CC}{A_2}}{B_2}
          \to
          \hind{}{\vind{B_1}{\CC}{B_2}}{A_2}
        $}};
        \draw[draw=hmLineGreen,-{Latex[length=2.3mm]},line width=1pt]
          ([yshift=-0.12cm]formula.south west) --
          ([yshift=-0.12cm]formula.south east);
      \end{tikzpicture}%
      \hspace{0.015\textwidth}%
      \resizebox{0.29\textwidth}{!}{%
      \begin{tikzpicture}[baseline=(current bounding box.center)]
        \path[fill=hmDeepCyan] (-2,-1.5) rectangle(0,1.5);
        \path[fill=hmMidCyan] (0,-1.5) rectangle(4,1.5);
        \draw[draw=hmLineGreen,line width=1.1pt] (0,-1.5) -- (0,1.5);
        \draw[draw=hmLineLightBlue,line width=1.1pt] (0,0) -- (4,0);
        \filldraw[fill=white,draw=hmLineGreen,line width=0.7pt]
          (-0.05,-0.05) rectangle(0.05,0.05);

        \node at(-1.5,0){$\CC$};
        \node at(-0.3,1.1){$\hind{}{\vind{B_1}{\CC}{B_1}}{B_1}$};
        \node at(-0.3,-1.1){$\hind{}{\vind{B_2}{\CC}{B_2}}{B_2}$};
        \node at(-0.3,0){$\hind{}{\CC}{A_2}$};
        \node at(1.5,1){$\hind{B_1}{\vind{B_1}{\CC}{B_1}}{B_1}$};
        \node at(1.5,-1){$\hind{B_2}{\vind{B_2}{\CC}{B_2}}{B_2}$};
        \node at(2.5,-0.3){\tiny$\hind{A_2}{\vind{B_1}{\CC}{B_2}}{A_2}$};
      \end{tikzpicture}%
      }%
  }
\end{center}
\end{dfn}

Having identified the representable image of \(\mathrm{Mod}_2\), we can now compare \(2\)-Morita equivalence with Morita equivalence after module realization.
\begin{dfn}
    Let $\CC$ be an $E_2$-monoidal $n$-category.
    Two $E_2$-algebras $B_1,B_2$ are {\bf module 2-Morita equivalent} if $\hind{}{\CC}{B_1}$ and $\hind{}{\CC}{B_2}$ are 1-Morita equivalent in $\LMod_{\CC}(n\Cat)$, i.e. if there exists invertible $\hind{}{\CC}{B_1}$-$\hind{}{\CC}{B_2}$-bimodule $\CM$ in $\LMod_{\CC}(n\Cat)$.
\end{dfn}

The module realization therefore provides a second notion of Morita equivalence for \(E_2\)-algebras. The relation between the two notions is one-sided in general, as the following proposition shows.
\begin{prp}
    Let $\CC$ be an $E_2$-monoidal $n$-category.
    If two $E_2$-algebras $B_1,B_2$ are 2-Morita equivalent, then they are module 2-Morita equivalent.
\end{prp}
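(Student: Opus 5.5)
The plan is to push an equivalence in $\Mrt_{E_2}(\CC)$ forward along the 3-functor $\mathrm{Mod}_2:\Mrt_{E_2}(\CC)\to\Mrt_{E_1}(\LMod(\CC))$. Any functor of higher categories preserves equivalences of objects, since it preserves composition of 1-morphisms and identities up to coherent invertible 2-cells. An equivalence between $\hind{}{\CC}{B_1}$ and $\hind{}{\CC}{B_2}$ in $\Mrt_{E_1}(\LMod(\CC))$ is, by definition of that tricategory, an invertible $\hind{}{\CC}{B_1}$-$\hind{}{\CC}{B_2}$-bimodule category in $\LMod(\CC)$. That is exactly module 2-Morita equivalence.

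Concretely, suppose $B_1\simeq B_2$ in $\Mrt_{E_2}(\CC)$. Then there are $A\in\hind{B_1}{\Alg_{E_1}(\CC)}{B_2}$ and $A'\in\hind{B_2}{\Alg_{E_1}(\CC)}{B_1}$, together with invertible 2-morphisms
\[
\vfus{A}{\otc{B_2}}{A'}\simeq B_1,\qquad \vfus{A'}{\otc{B_1}}{A}\simeq B_2 .
\]
These 2-morphisms are bi-bimodules $M,M'$ whose composites are isomorphic via 3-morphisms to the identity bi-bimodules, as in Figure~\ref{fig:four-local-configurations}. Set $\CM:=\hind{}{\CC}{A}$. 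By Lemma~\ref{lem:Mod1_induces_functors} this is a $\hind{}{\CC}{B_1}$-$\hind{}{\CC}{B_2}$-bimodule in $\LMod(\CC)$. Similarly, $\CM':=\hind{}{\CC}{A'}$ is a bimodule in the opposite direction.

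Step 1 is to verify that $\mathrm{Mod}_2$ is compatible with 1-composition, i.e.
\[
\hind{}{\CC}{A}\btc{\hind{}{\CC}{B_2}}\hind{}{\CC}{A'}\simeq\hind{}{\CC}{\vfus{A}{\otc{B_2}}{A'}}
\]
as bimodule categories. This is monoidality of $\underline{\mathrm{Mod}_1}$ applied one level up, and it is part of the statement that $\mathrm{Mod}_2$ is a 3-functor. In a model where this is only a lax comparison, I would use the representable versions of Theorem~\ref{thm:AC_CA_inverse} and Corollary~\ref{crl:B1CCCB2_is_E1_equivalent_to_B1CB2}.

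Step 2 transports the 2-morphisms. By Theorem~\ref{thm:E2_EW}, $M$ gives a bimodule functor $-\otc{\vfus{A}{\otc{B_2}}{A'}}M$ from $\hind{}{\CC}{\vfus{A}{\otc{B_2}}{A'}}$ to $\hind{}{\CC}{B_1}$, and $M'$ gives one in the reverse direction. The 3-isomorphisms $M\otc{B_1}M'\cong\vfus{A}{\otc{B_2}}{A'}$ and $M'\otc{\vfus{A}{\otc{B_2}}{A'}}M\cong B_1$ become bimodule natural isomorphisms. They show that these two functors are mutually inverse equivalences, using the associativity of relative tensor products. This gives $\CM\btc{\hind{}{\CC}{B_2}}\CM'\simeq\hind{}{\CC}{B_1}$, and symmetrically for the other composite. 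Hence $\CM$ is invertible.

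The main obstacle is Step 1. The paper asserts the 3-functor $\mathrm{Mod}_2$ but proves only the hom-level identification, Theorem~\ref{thm:E2_EW}. So the compatibility of $\hind{}{\CC}{-}$ with the relative tensor product of $E_1$-algebras over an $E_2$-algebra must be justified. I would first reduce it, via Theorem~\ref{thm:algebras_in_bimodule_categories} and Corollary~\ref{crl:B_1CB_2A_equivalent_to_CA}, to the $E_1$ statement in the monoidal category $\hind{B_1}{\CC}{B_2}$. There it is the standard fact $\hind{}{\CC}{A}\btc{\CC_{B}}\hind{}{\CC}{A'}\simeq\hind{}{\CC}{A\otd_B A'}$, obtained from the Eilenberg--Watts description of the relative Deligne product. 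Once this is granted, everything else is formal preservation of equivalences by a functor.
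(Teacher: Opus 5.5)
Your proposal is correct and takes the route the paper intends. The paper gives no separate argument: the proposition is meant as an immediate consequence of the 3-functor $\mathrm{Mod}_2:\Mrt_{E_2}(\CC)\to\Mrt_{E_1}(\LMod(\CC))$ preserving equivalences of objects, which is your first paragraph. The compatibility with 1-composition that you isolate as Step~1 is recorded later in the paper as Proposition~\ref{prp:module_construction_preserves_relative_tensor_products}, in the finite braided setting and via Lurie's bar-construction argument, so you can cite it. Your proposed reduction to an $E_1$ statement inside $\hind{B_1}{\CC}{B_2}$ should be dropped: $A$ and $A'$ live in different bimodule categories, and their composite over $B_2$ uses the braiding of $\CC$.
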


Under cocompleteness alone, an invertible bimodule category between objects in the image of $\mathrm{Mod}_2$ need not itself be in the essential image of $\mathrm{Mod}_2$.  We work in the cocomplete model $\LMod(\CC)$ and consider the special case of unit endpoints, for which the ambient $\CC$-module and endpoint-bimodule structures can be checked explicitly.
\begin{cexpl}[Nonrepresentability]
    Let $R$ be a connected commutative $G$-ring such that $H^2_{\mathrm{\acute{e}t}}(\operatorname{Spec}R,\mathbb G_m)$ contains a non-torsion class $\alpha$.  
    Set $\CC:=\Mod_R$, the ordinary cocomplete category of $R$-modules, and $B_1=B_2=\bfone_{\CC}=R$.  Then $\hind{}{\CC}{B_1}\simeq \CC\simeq \hind{}{\CC}{B_2}$, and this object is the tensor unit of $\LMod(\CC)$.  
    Let $\mathcal G_\alpha$ be a $\mathbb G_m$-gerbe representing $\alpha$, and let
    \begin{align*}
        \CM_\alpha:=\Mod^{\heartsuit}_{R,\mathcal G_\alpha}
    \end{align*}
    be the cocomplete $R$-linear category of $\mathcal G_\alpha$-twisted $R$-modules.  Its ambient $\CC$-action and its two endpoint actions are
    \begin{align*}
        T\odot X:=T\otimes_R X,
        \qquad
        P\rhd X:=P\otimes_R X,
        \qquad
        X\lhd Q:=Q\otimes_R X.
    \end{align*}
    Associativity and the symmetry of $\otimes_R$ give the strong $\CC$-linearity and mixed-compatibility constraints
    \begin{align*}
        (T\otimes_R P)\rhd X
        \simeq T\odot(P\rhd X),
        \qquad(T\odot X)\lhd Q
        \simeq T\odot(X\lhd Q),
        \qquad(P\rhd X)\lhd Q
        \simeq P\rhd(X\lhd Q).
    \end{align*}
    Hence $\CM_\alpha$ is an internal $\hind{}{\CC}{B_1}$-$\hind{}{\CC}{B_2}$-bimodule object in $\LMod(\CC)$, not merely an ordinary external bimodule category.

    The tensor product of gerbe twists adds their cohomology classes \cite[Theorem~1.0.1]{Ste23}.  Thus $\CM_{-\alpha}$ is an internal inverse and
    \begin{align*}
        &\CM_\alpha\btd_{\hind{}{\CC}{B_2}}\CM_{-\alpha}
        \simeq \CM_\alpha\btd_{\CC}\CM_{-\alpha}
        \simeq \CC, 
        &\CM_{-\alpha}\btd_{\hind{}{\CC}{B_1}}\CM_\alpha
        \simeq \CM_{-\alpha}\btd_{\CC}\CM_\alpha
        \simeq \CC.
    \end{align*}
    These are equivalences of $\CC$-linear bimodule categories.  On the other hand, $\CM_\alpha$ has no nonzero compact-projective object \cite[Example~3.1.25]{Ste23}.  If $\CM_\alpha\simeq\hind{}{\CC}{A}=\Mod_{A}$ for an $E_1$-algebra $A\in\CC$, then the free rank-one module $A$ would be a compact-projective generator of $\CM_\alpha$, a contradiction.  Therefore $\CM_\alpha$ is an invertible target 1-morphism which is not of the form $\hind{}{\CC}{A}$.  In particular, 2-full faithfulness of $\mathrm{Mod}_2$ does not imply essential surjectivity on invertible 1-morphisms. 

    A concrete choice, obtained from the construction in \cite[Section~7.1]{CTS21}, is
    \begin{align*}
        R=\bC[x,y,z]\big/\bigl(z(1+x^3+y^3+z^3)\bigr).
    \end{align*}
    Indeed, $\operatorname{Spec}R$ is the affine chart $X_0\neq 0$ of the union of the smooth cubic surface
    \begin{align*}
        X_0^3+X_1^3+X_2^3+X_3^3=0
    \end{align*}
    and the transverse plane $X_3=0$.  Their intersection is a smooth cubic curve $E$.  Writing $S^\circ$, $H^\circ$, and $E^\circ$ for the intersections with this affine chart, the normalization sequence gives a boundary map
    \begin{align*}
        \operatorname{Pic}(S^\circ)\oplus\operatorname{Pic}(H^\circ)
        \longrightarrow \operatorname{Pic}(E^\circ)
        \longrightarrow H^2_{\mathrm{\acute{e}t}}(\operatorname{Spec}R,\mathbb G_m).
    \end{align*}
    The source of this boundary map is finitely generated.  Moreover, the kernel of $\operatorname{Pic}(E)\to\operatorname{Pic}(E^\circ)$ is generated by the finitely many removed points, whereas $\operatorname{Pic}^0(E)(\bC)$ contains elements of infinite order modulo every finitely generated subgroup.  Hence the cokernel of the first arrow contains a non-torsion class, whose image under the boundary map gives an $\alpha$ as above.
\end{cexpl}

\subsection{Condensation Defects and Their Fusion}\label{subsec:condensation-defects-fusion}

We now specialize the preceding constructions to the finite setting relevant to topological orders.  Let $\CC$ be an indecomposable finite braided multi-tensor category.
For the topological interpretation, we further assume that $\CC$ is a modular fusion category and that the $B_i$ are condensable. In this setting the relative tensor products appearing in the Morita categories exist, and the module realization developed above identifies them with fusion operations of the corresponding topological defects.

A useful feature of the bi-bimodule description is that the two directions of fusion remain visible.  We first record the two elementary fusion laws.  Fix $B_1,B_2$ and let
$A_1,A_2,A_3\in\hind{B_1}{\Alg_{E_1}(\CC)}{B_2}$.

\begin{thm}[Horizontal fusion of bi-bimodules]\label{thm:horizontal_fusion_of_bi-bimod}
There is an equivalence
\begin{align}
    \hind{A_1}{\vind{B_1}{\CC}{B_2}}{A_2}
    \btc{\hind{A_2}{\vind{B_1}{\CC}{B_2}}{A_2}}
    \hind{A_2}{\vind{B_1}{\CC}{B_2}}{A_3}
    \simeq
    \hind{A_1}{\vind{B_1}{\CC}{B_2}}{A_3}.
\end{align}
\end{thm}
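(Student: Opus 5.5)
The plan is to reduce the statement to the one-level-lower fusion formula, Theorem \ref{thm:fusion_of_1d_phase}, applied inside the monoidal category $\CD:=\hind{B_1}{\CC}{B_2}$. The bridge is Theorem \ref{thm:bibimod_equivalent_to_bimod_in_bimod}, which gives $\hind{A_i}{\vind{B_1}{\CC}{B_2}}{A_j}\simeq \hind{A_i}{\CD}{A_j}$ for every pair $i,j$. Here the $A_i$ are regarded as $E_1$-algebras in $\CD$ via Theorem \ref{thm:algebras_in_bimodule_categories}, and $\CD$ carries the monoidal structure $\otd_{B_1\ot B_2}$ of Theorem \ref{thm:bimodule-category-monoidal}.

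First, I would check that these identifications are compatible with the relevant tensor structures. For $i=j=2$, the equivalence $\hind{A_2}{\vind{B_1}{\CC}{B_2}}{A_2}\simeq\hind{A_2}{\CD}{A_2}$ should be monoidal. On the left, the product is the relative tensor product $\otc{A_2}$ of the underlying $A_2$-bimodules in $\CC$. On the right, it is the relative tensor product over $A_2$ internal to $\CD$. I expect these to agree because the coequalizer defining $M\otd_{A_2}^{\CD}N$ is computed from $M\otd_{B_1\ot B_2}A_2\otd_{B_1\ot B_2}N$. The balancing over $B_1\ot B_2$ is already forced by the bi-bimodule coherence, via the identity $l\circ(h'\ot\id)=l^{12}_M$ established in the proof of Lemma \ref{lem:bibimod_to_bimod_in_bimod}, so the two coequalizers have the same universal property. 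By the same argument, the outer factors are identified as right and left module categories over this monoidal category. This turns the left-hand side of the theorem into
\[
\hind{A_1}{\CD}{A_2}\btc{\hind{A_2}{\CD}{A_2}}\hind{A_2}{\CD}{A_3}.
\]

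Second, I would apply Theorem \ref{thm:fusion_of_1d_phase} with $\CD$ in place of $\CC$. This needs $\CD$ to be an indecomposable finite multi-tensor category and the $A_i$ to be nonzero. Finiteness and rigidity should follow from the condensable (separable, hence exact) hypothesis on $B_1\ot B_2$. Alternatively, $\CD\simeq\hind{B_1}{\CC}{}\btc{\CC}\hind{}{\CC}{B_2}$ by Corollary \ref{crl:B1CCCB2_is_E1_equivalent_to_B1CB2}, and one can use the description of $\hind{A}{\CC}{A}$ as pre-multi-tensor from \cite{DS26}. For indecomposability, I would use that $\CC$ is modular and the $B_i$ are connected condensable algebras. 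Then $\hind{}{\CC}{B_i}$ are fusion categories, and $\CD$ is a bimodule category between them over a nondegenerate base, which I would argue is indecomposable. Once these hypotheses hold, the chain of equivalences in the proof of Theorem \ref{thm:fusion_of_1d_phase} runs verbatim and yields $\hind{A_1}{\CD}{A_3}$. Transporting back along Theorem \ref{thm:bibimod_equivalent_to_bimod_in_bimod} gives the claim.

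The main obstacle is this second step, namely verifying that $\CD=\hind{B_1}{\CC}{B_2}$ satisfies the hypotheses of Theorem \ref{thm:fusion_of_1d_phase}, and in particular indecomposability. If indecomposability fails in general, I would fall back to decomposing $\CD$ into indecomposable summands. I would then apply the relative-inverse argument of Proposition \ref{prp:relative_inverse} summandwise, and require that the $A_i$ have nonzero components in the same summands. Failing that, I would add this as a standing hypothesis in the finite setting. The monoidal compatibility in the first step is routine coequalizer bookkeeping, of the same kind as in Lemma \ref{lem:bibimod_to_bimod_in_bimod}.
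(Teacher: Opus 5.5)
Your proposal is the paper's proof. You transport the three bi-bimodule categories along Theorem~\ref{thm:bibimod_equivalent_to_bimod_in_bimod} to bimodule categories internal to $\CD=\hind{B_1}{\CC}{B_2}$ and apply Theorem~\ref{thm:fusion_of_1d_phase} there; the only difference is that you make explicit what the paper's one-line argument uses without comment, namely the compatibility of the relative tensor product over $A_2$ with the one internal to $\CD$, and the hypotheses that $\CD$ is an indecomposable finite multi-tensor category and the $A_i$ are nonzero.

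Those checks are not cosmetic, and your appeal to modularity for indecomposability is the right input. For the symmetric category $\CC=\mathrm{Rep}(\mathbb{Z}_2)$ with $B_1=B_2=B=\mathrm{Fun}(\mathbb{Z}_2)$, the category $\hind{B}{\CC}{B}$ splits as a direct sum of two monoidal categories, and the claimed equivalence fails for $A_1=A_3=B\otimes B$, $A_2=B$ (ranks $1$ versus $2$). By contrast, for $\CC$ modular and the $B_i$ condensable, the Drinfeld center of $\CD\simeq\hind{B_1}{\CC}{}\btc{\CC}\hind{}{\CC}{B_2}$ is, up to reversing one braiding, $\hind{}{\CC}{B_1}^{loc}\boxtimes\hind{}{\CC}{B_2}^{loc}$. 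This center has simple unit, so $\CD$ is indecomposable and your fallback is not needed.
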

\begin{proof}
By Theorem \ref{thm:bibimod_equivalent_to_bimod_in_bimod}, the three bi-bimodule categories are ordinary bimodule categories internal to $\hind{B_1}{\CC}{B_2}$.  Hence
\begin{align*}
    &\hind{A_1}{\vind{B_1}{\CC}{B_2}}{A_2}
    \btc{\hind{A_2}{\vind{B_1}{\CC}{B_2}}{A_2}}
    \hind{A_2}{\vind{B_1}{\CC}{B_2}}{A_3} \\
    &\qquad\simeq
    \hind{A_1}{(\hind{B_1}{\CC}{B_2})}{A_2}
    \btc{\hind{A_2}{(\hind{B_1}{\CC}{B_2})}{A_2}}
    \hind{A_2}{(\hind{B_1}{\CC}{B_2})}{A_3} \\
    &\qquad\simeq
    \hind{A_1}{(\hind{B_1}{\CC}{B_2})}{A_3}
    \simeq
    \hind{A_1}{\vind{B_1}{\CC}{B_2}}{A_3},
\end{align*}
where the middle equivalence is Theorem \ref{thm:fusion_of_1d_phase}.
\end{proof}

There is also a transverse fusion in which the intermediate $E_2$-phase is removed.  Let
\begin{align*}
    A_1,A_2&\in\hind{B_1}{\Alg_{E_1}(\CC)}{B_2},
    &
    A_3,A_4&\in\hind{B_2}{\Alg_{E_1}(\CC)}{B_3}.
\end{align*}

\begin{thm}[\cite{KYZZ26}]\label{thm:functoriality_of _central_functor}
    \begin{align*}
        \hind{}{\vfus{\hind{\CC}{\vind{\CA_1}{\Fun}{\CA_2}}{}(\CM_1,\CM_2)}{\boxtimes}{\hind{\CC}{\vind{\CA_2}{\Fun}{\CA_2}}{}(\CN_1,\CN_2)}}{\hind{\CC}{\vind{\CA_2}{\Fun}{\CA_3}}{}(\CA_2,\CA_2)}
        \simeq \hind{\CC}{\vind{\CA_1}{\Fun}{\CA_3}}{}\left(\vfus{\CM_1}{\btc{\CA_2}}{\CN_1},\vfus{\CM_2}{\btc{\CA_2}}{\CN_2}\right) 
    \end{align*}
\end{thm}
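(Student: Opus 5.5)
We prove the statement by reducing both sides to bimodule categories over algebras and then applying the fusion formulas already established. The statement is a typographically garbled version of the functoriality of the central functor, cited from \cite{KYZZ26}. We read it as follows. Let $\CA_1,\CA_2,\CA_3$ be $E_1$-algebras in $\LMod(\CC)$, with $\CM_1,\CM_2$ being $\CA_1$-$\CA_2$-bimodules and $\CN_1,\CN_2$ being $\CA_2$-$\CA_3$-bimodules. The relative tensor product of the two functor categories over $\hind{\CC}{\vind{\CA_2}{\Fun}{\CA_2}}{}(\CA_2,\CA_2)$ is then equivalent to the category of $\CA_1$-$\CA_3$-bimodule functors $\vfus{\CM_1}{\btc{\CA_2}}{\CN_1}\to\vfus{\CM_2}{\btc{\CA_2}}{\CN_2}$. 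The comparison functor is ``vertical composition'' $(F,G)\mapsto F\btc{\CA_2}G$. It is balanced over $\hind{\CC}{\vind{\CA_2}{\Fun}{\CA_2}}{}(\CA_2,\CA_2)$, which is the $E_2$-center of $\CA_2$ relative to $\CC$ acting by whiskering along the middle interface, and so it descends to the relative tensor product by its universal property.

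To prove it is an equivalence, we work in the representable and finite setting of this subsection. There we write $\CA_i=\hind{}{\CC}{B_i}$ and $\CM_j=\hind{}{\CC}{A_j}$ with $A_j\in\hind{B_1}{\Alg_{E_1}(\CC)}{B_2}$, and $\CN_j=\hind{}{\CC}{A'_j}$ with $A'_j\in\hind{B_2}{\Alg_{E_1}(\CC)}{B_3}$. Theorem \ref{thm:E2_EW} then identifies the two factors with $\hind{A_1}{\vind{B_1}{\CC}{B_2}}{A_2}$ and $\hind{A'_1}{\vind{B_2}{\CC}{B_3}}{A'_2}$. It identifies the middle algebra with $\hind{B_2}{\vind{B_2}{\CC}{B_2}}{B_2}\simeq\hind{}{\CC}{B_2}^{loc}$. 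On the other side, the monoidality of $\underline{\mathrm{Mod}_1}$ gives $\hind{}{\CC}{A_j}\btc{\hind{}{\CC}{B_2}}\hind{}{\CC}{A'_j}\simeq\hind{}{\CC}{\vfus{A_j}{\otc{B_2}}{A'_j}}$. A second application of Theorem \ref{thm:E2_EW} then turns the right-hand side into $\hind{\vfus{A_1}{\otc{B_2}}{A'_1}}{\vind{B_1}{\CC}{B_3}}{\vfus{A_2}{\otc{B_2}}{A'_2}}$. The claim therefore reduces to a transverse fusion formula for bi-bimodules,
\begin{align*}
    \hind{A_1}{\vind{B_1}{\CC}{B_2}}{A_2}\btd_{\hind{}{\CC}{B_2}^{loc}}\hind{A'_1}{\vind{B_2}{\CC}{B_3}}{A'_2}
    \simeq
    \hind{\vfus{A_1}{\otc{B_2}}{A'_1}}{\vind{B_1}{\CC}{B_3}}{\vfus{A_2}{\otc{B_2}}{A'_2}},
\end{align*}
under which the comparison functor corresponds to $(M,M')\mapsto\vfus{M}{\otc{B_2}}{M'}$, the 2-composition along $B_2$ in Definition \ref{dfn: 2-Morita_cat}.

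We prove this formula by repeating the strategy of Theorem \ref{thm:fusion_of_1d_phase} one level up. Using Theorem \ref{thm:bibimod_equivalent_to_bimod_in_bimod} and Theorem \ref{thm:AC_CA_inverse} internally, we split each bi-bimodule category as a relative product of a left-module and a right-module category,
\[
\hind{A_1}{(\hind{B_1}{\CC}{B_2})}{}\btd_{\hind{B_1}{\CC}{B_2}}\hind{}{(\hind{B_1}{\CC}{B_2})}{A_2}.
\]
We then use Corollary \ref{crl:B1CCCB2_is_E1_equivalent_to_B1CB2} to factor $\hind{B_1}{\CC}{B_2}\simeq\hind{B_1}{\CC}{}\btc{\CC}\hind{}{\CC}{B_2}$. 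After these rewritings, the middle portion of the left-hand side contains the factor $\hind{}{\CC}{B_2}\btd_{\hind{}{\CC}{B_2}^{loc}}\hind{B_2}{\CC}{}$. The essential input is an analogue of Proposition \ref{prp:relative_inverse} for $E_2$-algebras: when $\CC$ is modular and $B_2$ is condensable, $\hind{}{\CC}{B_2}$ is an invertible $\CC$-$\hind{}{\CC}{B_2}^{loc}$ bimodule in the braided sense. This is the standard Witt-equivalence statement $\CC\boxtimes\overline{\hind{}{\CC}{B_2}^{loc}}\simeq\fZ_1(\hind{}{\CC}{B_2})$, and it makes the middle factor collapse to $\CC$. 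Once it collapses, the outer factors reassemble, again by Theorem \ref{thm:AC_CA_inverse} together with monoidality of $\underline{\mathrm{Mod}_1}$, into modules over $\vfus{A_j}{\otc{B_2}}{A'_j}$ in $\hind{B_1}{\CC}{B_3}$.

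We expect the main obstacle to be this collapse step, because the various actions in the middle factor use different braidings. Some come from the positive crossing (the $\alpha^{+}$ embedding) and some from the negative one ($u^{r,-}$), so we must check carefully that the relative tensor product over $\hind{}{\CC}{B_2}^{loc}$ matches the action used in the Witt equivalence. The plan is to verify this on generators $\hind{}{\CC}{A}\ni X\otimes A$, using Proposition \ref{prp:u^r-_d^r_are_compatible} and the Remark following it to fix the crossing conventions, and then to extend to all objects by right exactness. Finally, we check that the composite equivalence sends $(F,G)$ to $F\btc{\CA_2}G$ by evaluating on the free generators $A_1$ and $A'_1$, where both sides reduce to $\vfus{M}{\otc{B_2}}{M'}$.
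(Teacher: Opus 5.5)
The paper gives no argument for this statement. It cites \cite{KYZZ26} and then \emph{uses} the statement to derive Theorem \ref{thm:vertical_fusion_of_bi-bimod}. Your reduction through Theorem \ref{thm:E2_EW} and the monoidality of $\underline{\mathrm{Mod}_1}$ runs the other way, so everything rests on an independent proof of the transverse fusion formula. It also only reaches the representable case $\CA_i=\hind{}{\CC}{B_i}$ with $\CC$ modular, not arbitrary $E_1$-algebras and bimodules in $\LMod(\CC)$.

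That independent proof fails at the collapse step. The Witt statement $\CC\boxtimes\overline{\hind{}{\CC}{B}^{loc}}\simeq\fZ_1(\hind{}{\CC}{B})$ says that $\hind{}{\CC}{B}$ is a closed gapped wall. Such a wall is at best invertible one categorical level up, where composites need only be \emph{Morita equivalent} to the identity wall. It does not make $\hind{}{\CC}{B}$ an invertible $\CC$-$\hind{}{\CC}{B}^{loc}$-bimodule category. If it were, $\CC$ and $\hind{}{\CC}{B}^{loc}$ would be Morita equivalent fusion categories and so have equal global dimension. But $\dim\hind{}{\CC}{B}^{loc}=\dim\CC/(\dim B)^2$, which is strictly smaller for $B\neq\bfone$.

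In fact $\hind{}{\CC}{B}\btd_{\hind{}{\CC}{B}^{loc}}\hind{B}{\CC}{}\simeq\hind{B}{\CC}{B}$, the condensation wall. This is Morita equivalent to $\CC$ (Theorem \ref{thm:C_ACA_1-Morita_equivalent}) but not equivalent to it. It is also exactly the instance $B_1=B_3=\bfone$, $A_j=A'_j=B_2$ of the formula you are trying to prove, so the ``collapse'' is not a simplification.

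A concrete check: take $\CC=\fZ_1(\vect_{S_3})$, which has rank $8$, and let $B$ be the Lagrangian algebra with $\hind{}{\CC}{B}\simeq\vect_{S_3}$. Then $\hind{}{\CC}{B}^{loc}\simeq\vect$. The relative product is therefore the plain Deligne product $\vect_{S_3}\boxtimes\hind{B}{\CC}{}$, of rank $36$, not $\CC$.

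The $E_1$ argument of Proposition \ref{prp:relative_inverse} works because $\hind{A}{\CC}{A}$ is the full dual of $\CC$ with respect to $\hind{}{\CC}{A}$. By contrast, $\hind{}{\CC}{B}^{loc}$ is only the relative center of $\hind{}{\CC}{B}$ over $\CC$, and there is no 1-categorical relative inverse at this level. The braiding-convention issue you single out as the main obstacle is therefore secondary. Even with the correct middle factor $\hind{B_2}{\CC}{B_2}$ in place of $\CC$, you would still have to identify the result with bimodules over $\vfus{A_j}{\otc{B_2}}{A'_j}$ in $\hind{B_1}{\CC}{B_3}$. That identification is the real content of the statement, and your plan does not supply it.
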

\begin{proof}
    This is proved in the cited work.
\end{proof}

\begin{prp}[\cite{Lur17,Hau17,DSPS19}]\label{prp:module_construction_preserves_relative_tensor_products}
    Let $\CC$ be an indecomposable finite braided multi-tensor category.  Let
    $A_1\in \hind{B_1}{\Alg_{E_1}(\CC)}{B_2}$ and
    $A_2\in \hind{B_2}{\Alg_{E_1}(\CC)}{B_3}$.  Then
    \begin{align*}
        \vfus{\hind{}{\CC}{A_1}}{\btc{\hind{}{\CC}{B_2}}}{\hind{}{\CC}{A_2}}
        \simeq \hind{}{\CC}{\vfus{A_1}{\otc{B_2}}{A_2}}
    \end{align*}
    as $\hind{}{\CC}{B_1}$-$\hind{}{\CC}{B_3}$-bimodule categories in
    $\hind{\CC}{\Cat}{}$.
\end{prp}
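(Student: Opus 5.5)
We prove the equivalence by factoring both sides through the tensor unit $\CC$, in the same way as the proof of Theorem \ref{thm:fusion_of_1d_phase}, and then identifying the remaining relative tensor product with modules over the relative tensor product of algebras. Concretely, write $\hind{}{\CC}{A_2}\simeq \hind{A_2}{\CC}{}^{\mathrm{op}}$-dually and use Theorem \ref{thm:AC_CA_inverse} together with Proposition \ref{prp:fusion_of_walls} in its braided form to get
\begin{align*}
    \vfus{\hind{}{\CC}{A_1}}{\btc{\hind{}{\CC}{B_2}}}{\hind{}{\CC}{A_2}}
    \simeq \hind{}{\CC}{A_1}\btc{\hind{}{\CC}{B_2}}\bigl(\hind{}{\CC}{B_2}\btc{\CC}\hind{}{\CC}{A_2}\bigr)\Big/\!\sim,
\end{align*}
i.e. we regard $\hind{}{\CC}{A_2}$ as $\hind{}{(\hind{}{\CC}{B_2})}{A_2}$ using Corollary \ref{crl:B_1CB_2A_equivalent_to_CA} (with $B_1$ replaced by the unit). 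Then by the variant Eilenberg--Watts statement Proposition \ref{prp:fusion_of_walls} applied inside the finite multi-tensor category $\hind{}{\CC}{B_2}$ (finite by \cite{DS26}, Corollary 2.26, since $\hind{}{\CC}{B_2}\subset\hind{B_2}{\CC}{B_2}$ is a finite multi-tensor category for condensable-type $B_2$), we obtain $\hind{}{\CC}{A_1}\btc{\hind{}{\CC}{B_2}}\hind{}{(\hind{}{\CC}{B_2})}{A_2}\simeq \hind{}{(\hind{}{\CC}{A_1})}{A_2}$, the category of right $A_2$-modules in the right $\hind{}{\CC}{B_2}$-module category $\hind{}{\CC}{A_1}$.

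The second step is to identify $\hind{}{(\hind{}{\CC}{A_1})}{A_2}$, i.e. objects $M\in\CC$ with a right $A_1$-action and a right action of $A_2$ through the $B_2$-balanced structure, with $\hind{}{\CC}{\vfus{A_1}{\otc{B_2}}{A_2}}$. Here one checks that an $A_1$-module with a compatible $A_2$-action (compatibility encoded by the braiding, as in the sliding identities (C$_1$), (C$_2$) from the proof of Theorem \ref{thm:modules_in_bimodule_categories}) is the same as a module over the $E_1$-algebra $A_1\otimes A_2$ with multiplication twisted by $\beta$, on which the two $B_2$-actions agree; the latter is exactly a module over the coequalizer $\vfus{A_1}{\otc{B_2}}{A_2}$, since the quotient map $A_1\otimes A_2\to \vfus{A_1}{\otc{B_2}}{A_2}$ is an algebra epimorphism and restriction along an epimorphism is fully faithful with image the modules on which the relations act trivially. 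This is the standard statement \cite{Lur17,Hau17} that $\mathrm{Mod}$ is monoidal and preserves geometric realizations (bar constructions).

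Finally we transport the left $\hind{}{\CC}{B_1}$- and right $\hind{}{\CC}{B_3}$-actions. Both sides receive them from the outer actions of $B_1$ on $A_1$ and $B_3$ on $A_2$ (via $u^{r,-}$ and $d^r$ as in Proposition \ref{prp:u^r-_d^r_are_compatible}), and every equivalence above is induced by the identity on underlying objects of $\CC$ or by universal properties of coequalizers, so the bimodule structures match by construction. The main obstacle is the middle identification: checking that the relative tensor product of categories over $\hind{}{\CC}{B_2}$ really computes the relative tensor product of algebras requires exactness of $\otimes$ and finiteness (so that the Deligne--Kelly coequalizer is computed by modules over the bar resolution); I would handle it by reducing via Theorem \ref{thm:reconstruct_of_module_cat} to a progenerator and comparing internal homs $[P,P]$, falling back to the monadicity/Barr--Beck argument of \cite{DSPS19} if direct comparison of coequalizers is too unwieldy.
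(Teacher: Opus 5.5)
Your argument is sound in outline, but it follows a genuinely different route from the paper's.

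The paper argues abstractly. $\underline{\mathrm{Mod}}_1$ is strong monoidal and commutes with geometric realizations (Lurie's presentable result, applied in $\Ind(\CC)$ and restricted to compact objects). So it sends the bar construction $A_1\otimes B_2^{\otimes n}\otimes A_2$ degreewise to $\hind{}{\CC}{A_1}\boxtimes_{\CC}(\hind{}{\CC}{B_2})^{\boxtimes_{\CC} n}\boxtimes_{\CC}\hind{}{\CC}{A_2}$, and realizations to realizations. The concrete balanced functor $(M,N)\mapsto M\otimes_{B_2}N$ is only recorded afterwards.

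You instead stay inside the finite $1$-Morita toolbox of Section~\ref{subsec:one-morita-finite}. Proposition~\ref{prp:fusion_of_walls}, with $\CB=\hind{}{\CC}{B_2}$ and $\CN=\hind{}{\CC}{A_1}$, turns the categorical relative tensor product into $\hind{}{(\hind{}{\CC}{A_1})}{A_2}$. What remains is an elementary algebraic fact: since $\otimes$ is right exact, modules over the coequalizer algebra $A_1\otimes_{B_2}A_2$ are exactly modules over the braided product $A_1\otimes A_2$ on which the two induced $B_2$-actions coincide.

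Your route gives a self-contained $1$-categorical proof in which the comparison functor is visibly $(M,N)\mapsto M\otimes_{B_2}N$, with no $\infty$-categorical input and no passage through $\Ind$. The paper's route is more general, since the presentable statement needs neither finiteness nor a $\CB$-projective generator, and it fits the higher-Morita functoriality of \cite{Hau17}. It is, however, much less explicit in the finite case.

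Several points need repair.

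\emph{Finiteness of $\hind{}{\CC}{B_2}$.} Your justification that $\hind{}{\CC}{B_2}$ is finite multi-tensor is wrong. The statement assumes no condensability, and \cite{DS26}, Corollary~2.26 concerns $\hind{A}{\CC}{A}$ and yields only pre-multi-tensor. What Proposition~\ref{prp:fusion_of_walls} actually needs, via Lemma~\ref{lem:existence_of_progenerator} and Theorem~\ref{thm:reconstruct_of_module_cat}, is that $(\hind{}{\CC}{B_2},\otimes_{B_2},B_2)$ is itself a finite pre-multi-tensor category. Verify this directly rather than through $\hind{B_2}{\CC}{B_2}$.

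\emph{Module structures.} Corollary~\ref{crl:B_1CB_2A_equivalent_to_CA} is only an equivalence of categories. You need $\hind{}{\CC}{A_2}\simeq\hind{}{(\hind{}{\CC}{B_2})}{A_2}$ as left $\hind{}{\CC}{B_2}$-module categories. You also need the right action on $\hind{}{\CC}{A_1}$ identified with $M\triangleleft X=M\otimes_{B_2}X$. The braiding used to turn the left $B_2$-action on $A_2$ into a right one must match the one defining the algebra structure of $A_1\otimes_{B_2}A_2$.

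\emph{Where the difficulty lies.} Your second step is not the statement that $\mathrm{Mod}$ preserves geometric realizations; that is the paper's argument. The obstacle you flag at the end is already discharged by Proposition~\ref{prp:fusion_of_walls} in your first step, so no further Barr--Beck comparison is needed.

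\emph{The first display.} Replace the garbled first display by the clean statement $\hind{}{\CC}{A_1}\boxtimes_{\hind{}{\CC}{B_2}}\hind{}{(\hind{}{\CC}{B_2})}{A_2}\simeq\hind{}{(\hind{}{\CC}{A_1})}{A_2}$.
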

\begin{proof}
    Write
    \begin{align*}
        \underline{\mathrm{Mod}}_1:\Alg_{E_1}(\CC)&\longrightarrow \hind{\CC}{\Cat}{},
        &A&\longmapsto \hind{}{\CC}{A}.
    \end{align*}
    By \cite[Theorem~4.8.5.16 and Remark~4.8.5.17]{Lur17}, the module
    construction is strong monoidal and is compatible with the geometric
    realizations defining relative tensor products; see also
    \cite[Remark~5.32]{Hau17} for the induced functor on higher Morita
    categories.  In the finite setting this follows by applying the
    presentable result to $\Ind(\CC)$ and then restricting to compact objects;
    the finite balanced tensor products are also realized explicitly in
    \cite{DSPS19}.

    The relative tensor product of $A_1$ and $A_2$ is the realization of the
    two-sided bar construction
    \begin{align*}
        \vfus{A_1}{\otc{B_2}}{A_2}
        \simeq
        \left|\operatorname{Bar}_{\bullet}(A_1,B_2,A_2)\right|,
        \qquad
        \operatorname{Bar}_{n}(A_1,B_2,A_2)
        =A_1\ot B_2^{\ot n}\ot A_2.
    \end{align*}
    Strong monoidality identifies the image of its degree-$n$ term with
    \begin{align*}
        \underline{\mathrm{Mod}}_1
        \left(A_1\ot B_2^{\ot n}\ot A_2\right)
        \simeq
        \hind{}{\CC}{A_1}
        \btd\limits_{\CC}
        \underbrace{
        \hind{}{\CC}{B_2}\btd\limits_{\CC}\cdots
        \btd\limits_{\CC}\hind{}{\CC}{B_2}
        }_{n\text{ factors}}
        \btd\limits_{\CC}
        \hind{}{\CC}{A_2}.
    \end{align*}
    For $n=0$, the middle string of $\hind{}{\CC}{B_2}$ factors is omitted.
    These equivalences intertwine the face maps induced by the two
    $B_2$-actions and by the multiplication of $B_2$, as well as the degeneracy
    maps induced by its unit.  Hence they identify the two simplicial bar
    constructions.  Taking realizations gives
    \begin{align*}
        \vfus{\hind{}{\CC}{A_1}}{\btc{\hind{}{\CC}{B_2}}}{\hind{}{\CC}{A_2}}
        &\simeq
        \left|
        \operatorname{Bar}_{\bullet}
        \left(\hind{}{\CC}{A_1},\hind{}{\CC}{B_2},\hind{}{\CC}{A_2}\right)
        \right|\\
        &\simeq
        \underline{\mathrm{Mod}}_1
        \left(
        \left|\operatorname{Bar}_{\bullet}(A_1,B_2,A_2)\right|
        \right)\\
        &\simeq
        \hind{}{\CC}{\vfus{A_1}{\otc{B_2}}{A_2}}.
    \end{align*}

    Concretely, the comparison is induced by the
    $\hind{}{\CC}{B_2}$-balanced functor
    \begin{align*}
        \hind{}{\CC}{A_1}\times\hind{}{\CC}{A_2}
        &\longrightarrow
        \hind{}{\CC}{\vfus{A_1}{\otc{B_2}}{A_2}},
        &(M,N)&\longmapsto \vfus{M}{\otc{B_2}}{N}.
    \end{align*}
    Its right $\vfus{A_1}{\otc{B_2}}{A_2}$-action is the composite
    \begin{align*}
        \left(\vfus{M}{\otc{B_2}}{N}\right)
        \ot
        \left(\vfus{A_1}{\otc{B_2}}{A_2}\right)
        &\xrightarrow{\ \chi\ }
        \vfus{M\ot A_1}{\otc{B_2}}{N\ot A_2}
        \xrightarrow{\ r_M\otc{B_2}r_N\ }
        \vfus{M}{\otc{B_2}}{N},
    \end{align*}
    where $\chi$ is the canonical braided interchange map.  The endpoint
    actions make this an equivalence of
    $\hind{}{\CC}{B_1}$-$\hind{}{\CC}{B_3}$-bimodule categories.
\end{proof}

\begin{thm}[Vertical fusion of bi-bimodules]
\label{thm:vertical_fusion_of_bi-bimod}
Let $\CC$ be an indecomposable finite braided multi-tensor category.
Then there is an equivalence
\begin{align*}
    \vfus{
      \hind{A_1}{\vind{B_1}{\CC}{B_2}}{A_2}
    }{
      \btc{\hind{B_2}{\vind{B_2}{\CC}{B_2}}{B_2}}
    }{
      \hind{A_3}{\vind{B_2}{\CC}{B_3}}{A_4}
    }
    \simeq
    \hind{
      \vfus{A_1}{\otc{B_2}}{A_3}
    }{
      \vind{B_1}{\CC}{B_3}
    }{
      \vfus{A_2}{\otc{B_2}}{A_4}
    }.
\end{align*}
\end{thm}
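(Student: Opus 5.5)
The plan is to translate every bi-bimodule category into a category of bimodule functors between module categories, using the $E_2$ Eilenberg--Watts Theorem \ref{thm:E2_EW}. The vertical relative tensor product then becomes a statement about functor categories, which is exactly what Theorem \ref{thm:functoriality_of _central_functor} controls. Concretely, set $\CA_i:=\hind{}{\CC}{B_i}$ for $i=1,2,3$, $\CM_j:=\hind{}{\CC}{A_j}$ for $j=1,2$, and $\CN_j:=\hind{}{\CC}{A_{j+2}}$ for $j=1,2$. By Theorem \ref{thm:E2_EW} we have
\begin{align*}
    \hind{A_1}{\vind{B_1}{\CC}{B_2}}{A_2}&\simeq \hind{\CC}{\vind{\CA_1}{\Fun}{\CA_2}}{}(\CM_1,\CM_2),\\
    \hind{A_3}{\vind{B_2}{\CC}{B_3}}{A_4}&\simeq \hind{\CC}{\vind{\CA_2}{\Fun}{\CA_3}}{}(\CN_1,\CN_2),\\
    \hind{B_2}{\vind{B_2}{\CC}{B_2}}{B_2}&\simeq \hind{\CC}{\vind{\CA_2}{\Fun}{\CA_2}}{}(\CA_2,\CA_2).
\end{align*}
The last identification is the corollary following Theorem \ref{thm:E2_EW}, which gives $\hind{}{\CC}{B_2}^{loc}$.

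The first step is to check that these equivalences intertwine the relevant actions, so that the left-hand side of the statement is equivalent to the left-hand side of Theorem \ref{thm:functoriality_of _central_functor}. The action of a local module $N\in\hind{B_2}{\vind{B_2}{\CC}{B_2}}{B_2}$ on a bi-bimodule $M$ is $\vfus{M}{\otc{B_2}}{N}$. Under Eilenberg--Watts, this corresponds to whiskering the functor $-\otc{A_1}M$ by the $\CA_2$-bimodule endofunctor $-\otc{B_2}N$ of $\CA_2$, acting along $\CA_2$. I would verify this on the explicit formula $M\mapsto -\otc{A_1}M$ using associativity of relative tensor products. After that, Theorem \ref{thm:functoriality_of _central_functor} applies. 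It identifies the relative tensor product with $\hind{\CC}{\vind{\CA_1}{\Fun}{\CA_3}}{}(\vfus{\CM_1}{\btc{\CA_2}}{\CN_1},\vfus{\CM_2}{\btc{\CA_2}}{\CN_2})$.

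The second step uses Proposition \ref{prp:module_construction_preserves_relative_tensor_products}. It rewrites $\vfus{\CM_1}{\btc{\CA_2}}{\CN_1}\simeq\hind{}{\CC}{\vfus{A_1}{\otc{B_2}}{A_3}}$, and similarly for $\CM_2,\CN_2$, as $\CA_1$-$\CA_3$-bimodule categories in $\LMod(\CC)$. This makes the target a functor category between representable objects. Applying Theorem \ref{thm:E2_EW} once more, now with $B_1,B_3$ and the $E_1$ $B_1$-$B_3$-bimodules $\vfus{A_1}{\otc{B_2}}{A_3}$ and $\vfus{A_2}{\otc{B_2}}{A_4}$, yields $\hind{\vfus{A_1}{\otc{B_2}}{A_3}}{\vind{B_1}{\CC}{B_3}}{\vfus{A_2}{\otc{B_2}}{A_4}}$. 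This last step requires that these composites are indeed objects of $\hind{B_1}{\Alg_{E_1}(\CC)}{B_3}$, which is the well-definedness of 1-composition in $\Mrt_{E_2}(\CC)$.

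I expect the main obstacle to be the compatibility check in the first step. One must show that the equivalence of Theorem \ref{thm:E2_EW}, which is assembled from several intermediate identifications (Lemma \ref{lem:bimod_functors_in_C Cat_are_iterations}, Corollary \ref{crl:B1CCCB2_is_E1_equivalent_to_B1CB2}, Theorem \ref{thm:bibimod_equivalent_to_bimod_in_bimod}), is equivariant for the local-module actions, so that the balanced tensor products agree. My first approach would be to compare the two sides on representing objects. Both balancing isomorphisms reduce to the canonical associator of $\vfus{M}{\otc{B_2}}{N}\otc{B_2}M'$, and the braided interchange used in Proposition \ref{prp:module_construction_preserves_relative_tensor_products} matches the negative-crossing convention fixed in Definition \ref{dfn:bibimod}. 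Finiteness and indecomposability of $\CC$ are used only to ensure that the relative tensor products exist and that the cited results apply.
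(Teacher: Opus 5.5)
Your proposal is correct and follows the paper's proof step for step. The paper also passes through Theorem~\ref{thm:E2_EW} to bimodule functor categories, fuses them with Theorem~\ref{thm:functoriality_of _central_functor}, rewrites source and target via Proposition~\ref{prp:module_construction_preserves_relative_tensor_products}, and returns with Theorem~\ref{thm:E2_EW}; the only difference is that you explicitly flag the check that the Eilenberg--Watts equivalences intertwine the actions of $\hind{B_2}{\vind{B_2}{\CC}{B_2}}{B_2}$, which the paper uses without comment.
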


\begin{proof}
By the $E_2$ Eilenberg--Watts equivalence, the two bi-bimodule
categories on the left are identified with the corresponding categories
of bimodule functors in the module realization.  Thus
    \begin{align*}
        \vfus{\hind{A_1}{\vind{B_1}{\CC}{B_2}}{A_2}}{\btc{\hind{B_2}{\vind{B_2}{\CC}{B_2}}{B_2}}}{\hind{A_3}{\vind{B_2}{\CC}{B_3}}{A_4}}
        &\quad  \simeq \hind{}{\vfus{\hind{\CC}{\vind{\hind{}{\CC}{B_{1}}}{\Fun}{\hind{}{\CC}{B_{2}}}}{}(\hind{}{\CC}{A_1},\hind{}{\CC}{A_2})}{\boxtimes}{\hind{\CC}{\vind{\hind{}{\CC}{B_{2}}}{\Fun}{\hind{}{\CC}{B_{3}}}}{}(\hind{}{\CC}{A_3},\hind{}{\CC}{A_4})}}{\hind{\CC}{\vind{\hind{}{\CC}{B_{2}}}{\Fun}{\hind{}{\CC}{B_{2}}}}{}(\hind{}{\CC}{B_2},\hind{}{\CC}{B_2})} &\text{by E2 Eilenberg-Watts theorem}\\
        &\quad \simeq \hind{\CC}{\vind{\hind{}{\CC}{B_{1}}}{\Fun}{\hind{}{\CC}{B_{3}}}}{}\left(\vfus{\hind{}{\CC}{A_1}}{\btc{\hind{}{\CC}{B_2}}}{\hind{}{\CC}{A_3}},\vfus{\hind{}{\CC}{A_2}}{\btc{\hind{}{\CC}{B_2}}}{\hind{}{\CC}{A_4}}\right) &\text{by Theorem \ref{thm:functoriality_of _central_functor}}\\
        &\quad \simeq \hind{\CC}{\vind{\hind{}{\CC}{B_{1}}}{\Fun}{\hind{}{\CC}{B_{3}}}}{}\left(\hind{}{\CC}{\vfus{A_1}{\otc{B_2}}{A_3}},\hind{}{\CC}{\vfus{A_2}{\otc{B_2}}{A_4}}\right) &\text{by Proposition \ref{prp:module_construction_preserves_relative_tensor_products}}\\
        &\quad \simeq \hind{\vfus{A_1}{\otc{B_2}}{A_3}}{\vind{B_1}{\CC}{B_3}}{\vfus{A_2}{\otc{B_2}}{A_4}} &\text{by E2 Eilenberg-Watts theorem}
    \end{align*}
The middle equivalence is precisely the vertical composition of the
corresponding bimodule functors.  Proposition
\ref{prp:module_construction_preserves_relative_tensor_products}
identifies the source and target of this composite with the module
realizations of
$\vfus{A_1}{\otc{B_2}}{A_3}$ and
$\vfus{A_2}{\otc{B_2}}{A_4}$, respectively.
Applying the $E_2$ Eilenberg--Watts equivalence once more gives the
claimed bi-bimodule category.
\end{proof}

\begin{figure}[H]
    \centering
    \begin{minipage}[c]{0.4\textwidth}
        \centering
    \resizebox{0.8\textwidth}{!}{%
    \begin{tikzpicture}[baseline=(current bounding box.center)]
        \path[fill=hmDeepCyan] (-2,-1.5) rectangle(0,1.5);
        \path[fill=hmMidCyan] (0,-1.5) rectangle(4,1.5);
        \draw[draw=hmLineGreen,line width=1.1pt] (0,-1.5) -- (0,1.5);
        \draw[draw=hmLineGreen,line width=1.1pt] (0,0.5) -- (2,0.5);
        \draw[draw=hmLineLightBlue,line width=1.1pt] (2,0.5) -- (4,0.5);
        \draw[draw=hmLineGreen,line width=1.1pt] (0,-0.5) -- (2,-0.5);
        \draw[draw=hmLineLightBlue,line width=1.1pt] (2,-0.5) -- (4,-0.5);
        \filldraw[fill=white,draw=hmLineGreen,line width=0.7pt]
          (-0.05,0.45) rectangle(0.05,0.55);
        \filldraw[fill=white,draw=hmLineGreen,line width=0.7pt]
          (-0.05,-0.45) rectangle(0.05,-0.55);
        \filldraw[fill=white,draw=hmLineGreen,line width=0.7pt]
          (1.95,0.45) rectangle(2.05,0.55);
        \filldraw[fill=white,draw=hmLineGreen,line width=0.7pt]
          (1.95,-0.45) rectangle(2.05,-0.55);

        \node at(-1.5,0){$\CC$};
        \node at(-0.3,1.1){$\hind{}{\CC}{B_1}$};
        \node at(-0.3,0){$\hind{}{\CC}{B_2}$};
        \node at(-0.3,-1.1){$\hind{}{\CC}{B_3}$};
        \node[anchor=east] at(-0.08,0.5){$\hind{}{\CC}{A_1}$};
        \node[anchor=east] at(-0.08,-0.5){$\hind{}{\CC}{A_2}$};
        \node at(3,0){$\hind{B_2}{\vind{B_2}{\CC}{B_2}}{B_2}$};
        \node[above] at(1,0.5){\tiny$\hind{A_1}{\vind{B_1}{\CC}{B_2}}{A_1}$};
        \node[above] at(2.2,0.5){\tiny$\hind{A_1}{\vind{B_1}{\CC}{B_2}}{A_2}$};
        \node[below] at(1,-0.5){\tiny$\hind{A_3}{\vind{B_2}{\CC}{B_3}}{A_3}$};
        \node[below] at(2.2,-0.5){\tiny$\hind{A_3}{\vind{B_2}{\CC}{B_3}}{A_4}$};
    \end{tikzpicture}}%
\end{minipage}
    \begin{minipage}[c]{0.4\textwidth}
        \centering
    \resizebox{0.8\textwidth}{!}{%
    \begin{tikzpicture}[baseline=(current bounding box.center)]
        \path[fill=hmDeepCyan] (-2,-1.5) rectangle(0,1.5);
        \path[fill=hmMidCyan] (0,-1.5) rectangle(4,1.5);
        \draw[draw=hmLineGreen,line width=1.1pt] (0,-1.5) -- (0,1.5);
        \draw[draw=hmLineGreen,line width=1.1pt] (0,0) -- (2,0);
        \draw[draw=hmLineLightBlue,line width=1.1pt] (2,0) -- (4,0);
        \filldraw[fill=white,draw=hmLineGreen,line width=0.7pt]
          (-0.05,-0.05) rectangle(0.05,0.05);
        \filldraw[fill=white,draw=hmLineGreen,line width=0.7pt]
          (1.95,-0.05) rectangle(2.05,0.05);

        \node at(-1.5,0){$\CC$};
        \node at(-0.3,1.1){$\hind{}{\CC}{B_1}$};
        \node at(-0.3,-1.1){$\hind{}{\CC}{B_3}$};
        \node at(-0.3,0){$\hind{}{\CC}{\vfus{A_1}{\otc{B_2}}{A_3}}$};
        \node at(2.8,1){$\hind{B_1}{\vind{B_1}{\CC}{B_1}}{B_1}$};
        \node at(2.8,-1){$\hind{B_3}{\vind{B_3}{\CC}{B_3}}{B_3}$};
        \node[below] at(1,0){\tiny$\hind{\vfus{A_1}{\otc{B_2}}{A_3}}{\vind{B_1}{\CC}{B_3}}{\vfus{A_1}{\otc{B_2}}{A_3}}$};
        \node[above] at(2,0){\tiny$\hind{\vfus{A_1}{\otc{B_2}}{A_3}}{\vind{B_1}{\CC}{B_3}}{\vfus{A_2}{\otc{B_2}}{A_4}}$};
    \end{tikzpicture}}%
\end{minipage}
    
\caption{Vertical fusion of two bi-bimodule defects across the intermediate $B_2$-phase.  The two boundary labels on each side fuse by the corresponding relative tensor products over $B_2$.}
\label{fig:vertical-bibimodule-fusion}
\end{figure}

\subsubsection*{Opening the finite module realization}

The preceding fusion laws take place inside the usual finite Morita target
\begin{align*}
    \Mrt_{E_1}\bigl(\LMod^{\mathrm{fin}}(\CC)\bigr),
    \qquad
    \LMod^{\mathrm{fin}}(\CC):=\LMod_{\CC}(\Cat^{\mathrm{fin}}_{\bk}).
\end{align*}
There is a useful way to read this target geometrically.  For any monoidal category $\CD$, the tensor unit determines the unit corner of its Morita category, and the underlying $1$-morphisms in
$\Hom_{\Mrt_{E_1}(\CD)}(\mathbf 1_{\CD},\mathbf 1_{\CD})$ are precisely the objects of $\CD$ with their canonical unit actions.  Taking $\CD=\LMod^{\mathrm{fin}}(\CC)$, the tensor unit is the regular $\CC$-module category $\CC$.  Thus a representable module category $\hind{}{\CC}{A}$ may first be viewed in this unit corner.  Choosing compatible $B_1$- and $B_2$-actions then refines the same object to the representable $1$-morphism
\begin{align*}
    \vind{B_1}{\hind{}{\CC}{A}}{B_2}
    :\hind{}{\CC}{B_1}\longrightarrow\hind{}{\CC}{B_2}.
\end{align*}
Equivalently, the unit inclusions $1\to B_i$ allow us to forget the transverse $B_i$-actions and return to the ordinary bimodule picture.  The important point is that this ``opening'' does not change the underlying object; it makes additional module structures visible.

\begin{figure}[H]
\centering
\begingroup
\footnotesize
  \noindent
    \begin{minipage}[c]{0.33\textwidth}
      \centering
      \resizebox{0.9\linewidth}{!}{%
\begin{tikzpicture}
        \path[fill=hmDeepCyan] (-2,-1.5) rectangle(4,1.5);
        \draw[draw=hmLineGreen, line width=1.1pt] (0,0) -- (2,0);
         \draw[draw= hmLineLightBlue,line width=1.1pt] (2,0) -- (4,0);
        \draw[draw= hmLineLightBlue, dashed, line width=1.1pt] (-2,0) -- (0,0);
        \filldraw[fill=white,draw=hmLineGreen,line width=0.7pt] (-0.05,-0.05) rectangle(0.05,0.05);
        \filldraw[fill=white,draw=hmLineGreen,line width=0.7pt] (1.95,-0.05) rectangle(2.05,0.05);

        \node at(-1,0){$\CC$};
        \node at(-0.3,0){$\hind{}{\CC}{A_1}$};
        \node at(2,1){$\CC$};
        \node at(2,-1){$\CC$};
        \node at(1,-0.3){\tiny$\hind{A_1}{\CC}{A_1}$};
        \node at(2,0.35){\tiny$\hind{A_1}{\CC}{A_2}$};
        \node at(3,-0.3){\tiny$\hind{A_2}{\CC}{A_2}$};
      \end{tikzpicture}}
    \end{minipage}\hfill%
    \begin{minipage}[c]{0.33\textwidth}
      \centering
      \resizebox{0.9\linewidth}{!}{%
            \begin{tikzpicture}
        \path[fill=hmDeepCyan] (-2,-1.5) rectangle(0,1.5);
        \path[fill=hmMidCyan] (0,-0.8) rectangle(4,0.8);
        \path[fill=hmDeepCyan] (0,-0.8) rectangle(4,-1.5);
        \path[fill=hmDeepCyan] (0,0.8) rectangle(4,1.5);
        \draw[draw=hmLineGreen,line width=1.1pt] (0,-0.8) -- (0,0.8);
        \draw[draw=hmLineGreen, line width=1.1pt] (0,0) -- (2,0);
        \draw[draw=hmLineGreen,line width=1.1pt] (0,0.8) -- (4,0.8);
        \draw[draw=hmLineGreen, line width=1.1pt] (0,-0.8) -- (4,-0.8);
         \filldraw[draw= hmLineLightBlue,line width=1.1pt] (2,0) -- (4,0);
        \filldraw[fill=white,draw=hmLineGreen,line width=0.7pt] (-0.05,-0.05) rectangle(0.05,0.05);
        \filldraw[fill=white,draw=hmLineGreen,line width=0.7pt] (-0.05,-0.85) rectangle(0.05,-0.75);
         \filldraw[fill=white,draw=hmLineGreen,line width=0.7pt] (-0.05,0.75) rectangle(0.05,0.85);
        \filldraw[fill=white,draw=hmLineGreen,line width=0.7pt] (1.95,-0.05) rectangle(2.05,0.05);

        \node at(-1,0){$\CC$};
        \node at(-0.3,0){$\hind{}{\CC}{A_1}$};
        \node at(2,1.15){$\hind{B_1}{\vind{}{\CC}{B_1}}{B_1}$};
        \node at(-0.3,1.15){$\hind{}{\vind{}{\CC}{B_1}}{B_1}$};
        \node at(2,-1.15){$\hind{B_2}{\vind{B_2}{\CC}{}}{B_2}$};
        \node at(-0.3,-1.15){$\hind{}{\vind{B_2}{\CC}{}}{B_2}$};
        \node at(1,-0.3){\tiny$\hind{A_1}{\vind{B_1}{\CC}{B_2}}{A_1}$};
        \node at(2,0.35){\tiny$\hind{A_1}{\vind{B_1}{\CC}{B_2}}{A_2}$};
        \node at(3,-0.3){\tiny$\hind{A_2}{\vind{B_1}{\CC}{B_2}}{A_2}$};
      \end{tikzpicture}}
    \end{minipage}\hfill%
    \begin{minipage}[c]{0.33\textwidth}
      \centering
      \resizebox{0.9\linewidth}{!}{%
                  \begin{tikzpicture}
        \path[fill=hmDeepCyan] (-2,-1.5) rectangle(0,1.5);
        \path[fill=hmMidCyan] (0,-1.5) rectangle(4,1.5);
        \draw[draw=hmLineGreen,line width=1.1pt] (0,-1.5) -- (0,1.5);
        \draw[draw=hmLineGreen, line width=1.1pt] (0,0) -- (2,0);
         \filldraw[draw= hmLineLightBlue,line width=1.1pt] (2,0) -- (4,0);
        \filldraw[fill=white,draw=hmLineGreen,line width=0.7pt] (-0.05,-0.05) rectangle(0.05,0.05);
        \filldraw[fill=white,draw=hmLineGreen,line width=0.7pt] (1.95,-0.05) rectangle(2.05,0.05);

        \node at(-1.5,0){$\CC$};
        \node at(-0.3,1.1){$\hind{}{\vind{B_1}{\CC}{B_1}}{B_1}$};
        \node at(-0.3,-1.1){$\hind{}{\vind{B_2}{\CC}{B_2}}{B_2}$};
        \node at(-0.3,0){$\hind{}{\CC}{A_1}$};
        \node at(2,1){$\hind{B_1}{\vind{B_1}{\CC}{B_1}}{B_1}$};
        \node at(2,-1){$\hind{B_2}{\vind{B_2}{\CC}{B_2}}{B_2}$};
        \node at(1,-0.3){\tiny$\hind{A_1}{\vind{B_1}{\CC}{B_2}}{A_1}$};
        \node at(2,0.35){\tiny$\hind{A_1}{\vind{B_1}{\CC}{B_2}}{A_2}$};
        \node at(3,-0.3){\tiny$\hind{A_2}{\vind{B_1}{\CC}{B_2}}{A_2}$};
      \end{tikzpicture}}
      
    \end{minipage}\par
\endgroup
\caption{Passing from ordinary bimodule data in $\CC$ to bi-bimodule data after making the two adjacent $E_2$-actions explicit.  The left panel lies in the unit corner of the finite Morita target, while the middle and right panels display the same representable cell after choosing the $B_1$- and $B_2$-actions.  Thus the passage should be read as a refinement of module structure rather than as a change of the underlying object.}
\label{fig:finite-opening-bibimodule}
\end{figure}

This is the first sense in which the Morita realization provides a two-step-condensation setup: ordinary $A_1$--$A_2$ bimodule data in $\CC$ can be resolved into bi-bimodule data once the adjacent $E_2$-actions have been chosen.  The next step is to allow those outer $E_2$-actions themselves to be refined.

Suppose that
\begin{align*}
    f_1:B_1\hookrightarrow B_3,
    \qquad
    f_2:B_2\hookrightarrow B_4
\end{align*}
are morphisms of $E_2$-algebras.  Restriction of scalars gives the fully faithful functors discussed in \cref{subsec:two-morita-iterated-bimodules}, in particular
\begin{align*}
    (f_1,f_2)^*:
    \hind{A_1}{\vind{B_3}{\CC}{B_4}}{A_2}
    \longrightarrow
    \hind{A_1^*}{\vind{B_1}{\CC}{B_2}}{A_2^*}.
\end{align*}
Accordingly, if the $B_1$--$B_2$ actions on an algebra $A_i$ extend to $B_3$--$B_4$ actions, we use the same symbol $A_i$ for the chosen extension and its restriction when no confusion can arise.  In this situation
\begin{align*}
    A_i&\in\hind{B_3}{\Alg_{E_1}(\CC)}{B_4}
    \quad\Longrightarrow\quad
    (f_1,f_2)^*A_i\in\hind{B_1}{\Alg_{E_1}(\CC)}{B_2},
\end{align*}
and similarly a $2$-morphism may be viewed at the two resolutions:
\begin{align}\label{eq:restriction-second-stage-2morphism}
    M\in\hind{A_1}{\vind{B_3}{\CC}{B_4}}{A_2}
    \subset
    \hind{A_1}{\vind{B_1}{\CC}{B_2}}{A_2}
    \simeq
    \Alg_{E_0}\!\left(
      \hind{A_1}{(\hind{B_1}{\CC}{B_2})}{A_2}
    \right),
\end{align}
where the inclusion means the fully faithful restriction functor after identifying the restricted $A_i$ with the displayed $A_i$.  This is the precise sense in which the same $A_i$ and $M$ can carry both the $B_1$--$B_2$ and the refined $B_3$--$B_4$ data.

\begin{figure}[ht]
\centering
\begin{subfigure}[c]{0.45\textwidth}
    \centering
    \begin{tikzpicture}[line cap=round,line join=round]
\path[use as bounding box] (-2,-2) rectangle (2,2);

        \filldraw[fill=gray!40,draw=white]
            (-2,1.33) rectangle (2,-1.33);

        \draw[thick] (-2,0) -- (2,0);
        \fill[black] (0,0) circle (0.07);

        \node at (0,0.8) {$B_1$};
        \node at (0,-0.8) {$B_2$};

        \node at (-1,0.25) {$A_1$};
        \node at (1,0.25) {$A_2$};

        \node at (0,-0.2) {$M$};
    \end{tikzpicture}
\end{subfigure}
\begin{subfigure}[c]{0.45\textwidth}
    \centering
    \begin{tikzpicture}[line cap=round,line join=round]
        \path[use as bounding box] (-2,-2) rectangle (2,2);

        \filldraw[fill=gray!40,draw=white]
            (-2,1.33) rectangle (2,-1.33);

        \draw[thick] (-2,0) -- (2,0);
        \draw[thick]
            (-1.55,0) .. controls (-1.12,0.72) and (-0.48,0.82) .. (0,0.82)
            .. controls (0.48,0.82) and (1.12,0.72) .. (1.55,0);
        \draw[thick]
            (-1.55,0) .. controls (-1.12,-0.72) and (-0.48,-0.82) .. (0,-0.82)
            .. controls (0.48,-0.82) and (1.12,-0.72) .. (1.55,0);
        \fill[black] (0,0) circle (0.07);

        \node at (0,1.08) {$B_1$};
        \node at (0,-1.08) {$B_2$};
        \node at (0.25,0.47) {$B_3$};
        \node at (0.35,-0.52) {$B_4$};

        \node at (-1.02,0.27) {$A_1$};
        \node at (1.02,0.27) {$A_2$};
        \node at (0,-0.22) {$M$};
    \end{tikzpicture}
\end{subfigure}
\caption{Refining the outer $E_2$-actions.  Suppose the $B_1$--$B_2$ actions on $A_1,A_2$ and $M$ extend along $B_1\hookrightarrow B_3$ and $B_2\hookrightarrow B_4$.  The right panel displays the same underlying $A_1,A_2,M$ with the larger $B_3$--$B_4$ actions; restriction of scalars recovers the left panel.  This is a refinement of the same cells in $\Mrt_{E_2}(\CC)$, and provides the algebraic input for the second condensation step after applying $\mathrm{Mod}_2$.}
\label{fig:restriction-outer-actions}
\end{figure}

The figure should therefore not be read as saying that an arbitrary $B_1$--$B_2$ algebra automatically acquires $B_3$--$B_4$ actions.  Rather, it records a chosen extension of the outer actions.  This is analogous to the elementary fact that every $E_1$-algebra is canonically a $1$--$1$ bimodule algebra, while a non-trivial choice of $B_1$--$B_2$ actions is extra structure.

\subsubsection*{A topological enlargement for the second condensation step}

Definition \ref{dfn:MrtE1LmodRep} is globular: a $2$-morphism compares representable $1$-morphisms with the same two endpoint objects.  To describe a second condensation step, the pictures naturally contain non-trivial side walls
$B_1\to B_3$ and $B_2\to B_4$.  We therefore use the following enlargement of the displayed finite Morita cells.

\begin{dfn}[Topological enlargement of the finite Morita target]\label{dfn:MrtE1LmodTop}
We write
\begin{align*}
    \Mrt_{E_1}^{\mathrm{top}}\bigl(\LMod^{\mathrm{fin}}(\CC)\bigr)
\end{align*}
for the following topological cell structure.
\begin{itemize}
    \item 0-cells are the representable monoidal left $\CC$-module categories
    $\hind{}{\CC}{B}$, with $B\in\Alg_{E_2}(\CC)$.

    \item Horizontal $1$-cells from $\hind{}{\CC}{B_1}$ to
    $\hind{}{\CC}{B_2}$ are the representable bimodule categories
    \begin{align*}
        \vind{B_1}{\hind{}{\CC}{A}}{B_2},
        \qquad
        A\in\hind{B_1}{\Alg_{E_1}(\CC)}{B_2}.
    \end{align*}

    \item A vertical boundary from $\hind{}{\CC}{B}$ to
    $\hind{}{\CC}{B'}$ is specified by an $E_2$-algebra map
    $f:B\to B'$.  Under module realization it is represented by the corresponding extension-of-scalars condensation morphism; in the pictures below we keep the algebra map itself as the boundary label.

    \item Consider horizontal $1$-cells
    \begin{align*}
        \vind{B_1}{\hind{}{\CC}{A_1}}{B_2},
        \qquad
        \vind{B_3}{\hind{}{\CC}{A_2}}{B_4},
    \end{align*}
    together with vertical boundaries $f_1:B_1\to B_3$ and
    $f_2:B_2\to B_4$.  Assume that the $B_1$--$B_2$ structure on $A_1$ has been extended along $(f_1,f_2)$ to a $B_3$--$B_4$ structure; we suppress the extension from the notation.  A topological $2$-cell filling this square is represented by
    \begin{align*}
        M\in\hind{A_1}{\vind{B_3}{\CC}{B_4}}{A_2}.
    \end{align*}

\end{itemize}
\end{dfn}

\begin{rmk}
When the two vertical boundaries are identities, this reduces to the usual globular cells of
$\Mrt_{E_1}(\LMod^{\mathrm{rep}}(\CC))$, and in the finite setting to the corresponding representable part of
$\Mrt_{E_1}(\LMod^{\mathrm{fin}}(\CC))$.  Thus there is a canonical inclusion of the ordinary finite displayed cells into the topological enlargement by taking identity side walls.

For non-identity $f_1,f_2$, a $2$-cell has different endpoint objects on its upper and lower horizontal boundaries.  In this subsection we only use its displayed cells and their relative tensor product fusion, so no additional global double-categorical coherence is required.
\end{rmk}

\begin{figure}[H]
\centering
    \noindent
    \begin{minipage}[c]{0.33\textwidth}
      \centering
      \resizebox{0.9\linewidth}{!}{%
            \begin{tikzpicture}
        \path[fill=hmDeepCyan] (-2,-1.5) rectangle(0,1.5);
        \path[fill=hmMidCyan] (0,-1.5) rectangle(4,1.5);
        \draw[draw=hmLineGreen,line width=1.1pt] (0,-1.5) -- (0,1.5);
        \draw[draw=hmLineGreen, line width=1.1pt] (0,0) -- (2,0);
         \filldraw[draw= hmLineLightBlue,line width=1.1pt] (2,0) -- (4,0);
        \filldraw[fill=white,draw=hmLineGreen,line width=0.7pt] (-0.05,-0.05) rectangle(0.05,0.05);
        \filldraw[fill=white,draw=hmLineGreen,line width=0.7pt] (1.95,-0.05) rectangle(2.05,0.05);

        \node at(-1.5,0){$\CC$};
        \node at(-0.3,1.1){$\hind{}{\vind{B_1}{\CC}{B_1}}{B_1}$};
        \node at(-0.3,-1.1){$\hind{}{\vind{B_2}{\CC}{B_2}}{B_2}$};
        \node at(-0.3,0){$\hind{}{\CC}{A_1}$};
        \node at(2,1){$\hind{B_1}{\vind{B_1}{\CC}{B_1}}{B_1}$};
        \node at(2,-1){$\hind{B_2}{\vind{B_2}{\CC}{B_2}}{B_2}$};
        \node at(1,-0.3){\tiny$\hind{A_1}{\vind{B_1}{\CC}{B_2}}{A_1}$};
        \node at(2,0.35){\tiny$\hind{A_1}{\vind{B_1}{\CC}{B_2}}{A_2}$};
        \node at(3,-0.3){\tiny$\hind{A_2}{\vind{B_1}{\CC}{B_2}}{A_2}$};
      \end{tikzpicture}}
    \end{minipage}\hfill%
    \begin{minipage}[c]{0.33\textwidth}
      \centering
      \resizebox{0.9\linewidth}{!}{%
 \begin{tikzpicture}
        \path[fill=hmDeepCyan] (-2,-1.5) rectangle(0,1.5);
        \path[fill=hmMidCyan] (0,-1.5) rectangle(2,1.5);
        \path[fill=hmMidCyan] (2,0.8) rectangle(4,1.5);
        \path[fill=hmMidCyan] (2,-0.8) rectangle(4,-1.5);
         \path[fill=hmLightCyan] (2,-0.8) rectangle(4,0.8);
        \draw[draw=hmLineGreen,line width=1.1pt] (0,-1.5) -- (0,1.5);
        \draw[draw=hmLineLightBlue,line width=1.1pt] (2,-0.8) -- (2,0.8);
        \draw[draw=hmLineLightBlue,line width=1.1pt] (2,-0.8) -- (4,-0.8);
        \draw[draw=hmLineLightBlue,line width=1.1pt] (2,0.8) -- (4,0.8);
        \draw[draw=hmLineGreen, line width=1.1pt] (0,0) -- (2,0);
         \filldraw[draw= hmLineLightBlue,line width=1.1pt] (2,0) -- (4,0);
        \filldraw[fill=white,draw=hmLineGreen,line width=0.7pt] (-0.05,-0.05) rectangle(0.05,0.05);
        \filldraw[fill=white,draw=hmLineGreen,line width=0.7pt] (1.95,-0.05) rectangle(2.05,0.05);

        \node at(-1.5,0){$\CC$};
        \node at(-0.3,1.1){$\hind{}{\vind{B_1}{\CC}{B_1}}{B_1}$};
        \node at(-0.3,-1.1){$\hind{}{\vind{B_2}{\CC}{B_2}}{B_2}$};
        \node at(-0.3,0){$\hind{}{\CC}{A_1}$};
        \node at(1,0.8){$\hind{B_1}{\vind{B_1}{\CC}{B_1}}{B_1}$};
        \node at(1.8,1){$\hind{}{\CC}{B_3}$};
        \node at(1.8,-1){$\hind{}{\CC}{B_4}$};
        \node at(1,-0.3){\tiny$\hind{A_1}{\vind{B_1}{\CC}{B_2}}{A_1}$};
        \node at(2.4,-0.3){\tiny$\hind{A_1}{\vind{B_3}{\CC}{B_4}}{A_2}$};
          \node at(3,1.1){$\hind{B_3}{\vind{B_3}{\CC}{B_3}}{B_3}$};
        \node at(3,-1.1){$\hind{B_4}{\vind{B_4}{\CC}{B_4}}{B_4}$};
      \end{tikzpicture}}
    \end{minipage}\hfill%
    \begin{minipage}[c]{0.33\textwidth}
      \centering
      \resizebox{0.9\linewidth}{!}{%
 \begin{tikzpicture}
        \path[fill=hmDeepCyan] (-2,-1.5) rectangle(0,1.5);
        \path[fill=hmMidCyan] (0,-1.5) rectangle(2,1.5);
         \path[fill=hmLightCyan] (2,-1.5) rectangle(4,1.5);
        \draw[draw=hmLineGreen,line width=1.1pt] (0,-1.5) -- (0,1.5);
        \draw[draw=hmLineLightBlue,line width=1.1pt] (2,-1.5) -- (2,1.5);
        \draw[draw=hmLineGreen, line width=1.1pt] (0,0) -- (2,0);
         \filldraw[draw= hmLineLightBlue,line width=1.1pt] (2,0) -- (4,0);
        \filldraw[fill=white,draw=hmLineGreen,line width=0.7pt] (-0.05,-0.05) rectangle(0.05,0.05);
        \filldraw[fill=white,draw=hmLineGreen,line width=0.7pt] (1.95,-0.05) rectangle(2.05,0.05);

        \node at(-1.5,0){$\CC$};
        \node at(-0.3,1.1){$\hind{}{\vind{B_1}{\CC}{B_1}}{B_1}$};
        \node at(-0.3,-1.1){$\hind{}{\vind{B_2}{\CC}{B_2}}{B_2}$};
        \node at(-0.3,0){$\hind{}{\CC}{A_1}$};
        \node at(1,0.8){$\hind{B_1}{\vind{B_1}{\CC}{B_1}}{B_1}$};
        \node at(1.8,1){$\hind{}{\CC}{B_3}$};
        \node at(1.8,-1){$\hind{}{\CC}{B_4}$};
        \node at(1,-0.3){\tiny$\hind{A_1}{\vind{B_1}{\CC}{B_2}}{A_1}$};
        \node at(2.4,-0.3){\tiny$\hind{A_1}{\vind{B_3}{\CC}{B_4}}{A_2}$};
          \node at(3,1.1){$\hind{B_3}{\vind{B_3}{\CC}{B_3}}{B_3}$};
        \node at(3,-1.1){$\hind{B_4}{\vind{B_4}{\CC}{B_4}}{B_4}$};
      \end{tikzpicture}}
    \end{minipage}\par
\caption{The same iterated condensation viewed at three resolutions.  The left panel is an ordinary globular cell in $\Mrt_{E_1}(\LMod^{\mathrm{fin}}(\CC))$.  The middle panel resolves non-trivial side condensation morphisms $B_1\hookrightarrow B_3$ and $B_2\hookrightarrow B_4$, thereby viewing the cell in $\Mrt_{E_1}^{\mathrm{top}}(\LMod^{\mathrm{fin}}(\CC))$; the right panel records the corresponding $B_3$--$B_4$ bi-bimodule data.}
\label{fig:topological-three-resolutions}
\end{figure}

The second and third panels of \cref{fig:topological-three-resolutions} are the additional ``peeling'' that is unavailable in the ordinary globular target: after resolving the side condensation morphisms, the interior defect is naturally described by a bi-bimodule with the larger $B_3$--$B_4$ actions.  This is precisely the setup in which a second condensation can be fused with the first one without leaving the new language.

To formulate the functorial version of this statement, let
\begin{align*}
    \Fun^{\mathrm{top},\CC}_{(f_1,f_2)}
    \left(
      \vind{B_1}{\hind{}{\CC}{A_1}}{B_2},
      \vind{B_3}{\hind{}{\CC}{A_2}}{B_4}
    \right)
\end{align*}
denote the category of $\CC$-linear functors equipped with the two boundary change constraints induced by $f_1$ and $f_2$, together with the usual mixed coherence.  Equivalently, after choosing the $B_3$--$B_4$ extension of the source action, this is the boundary changing version of the bimodule functor category
\begin{align*}
    \hind{\CC}{\vind{\hind{}{\CC}{B_3}}{\Fun}{\hind{}{\CC}{B_4}}}{}
    \left(
      \vind{B_1}{\hind{}{\CC}{A_1}}{B_2},
      \vind{B_3}{\hind{}{\CC}{A_2}}{B_4}
    \right).
\end{align*}
The fixed-endpoint $E_2$ Eilenberg--Watts theorem proved above does not by itself identify these boundary changing cells.  We therefore isolate the required extension as a conjecture.

\begin{cnj}[$E_2$ generalized Eilenberg--Watts]\label{cnj:E2-generalized-EW}
With the data above, tensoring with a bi-bimodule induces an equivalence
\begin{align*}
    \hind{A_1}{\vind{B_3}{\CC}{B_4}}{A_2}
    \simeq
    \Fun^{\mathrm{top},\CC}_{(f_1,f_2)}
    \left(
      \vind{B_1}{\hind{}{\CC}{A_1}}{B_2},
      \vind{B_3}{\hind{}{\CC}{A_2}}{B_4}
    \right),
    \qquad
    M\longmapsto -\otc{A_1}M.
\end{align*}
In the shorthand notation above, the right-hand side is written as
\begin{align*}
    \hind{\CC}{\vind{\hind{}{\CC}{B_3}}{\Fun}{\hind{}{\CC}{B_4}}}{}
    \left(
      \vind{B_1}{\hind{}{\CC}{A_1}}{B_2},
      \vind{B_3}{\hind{}{\CC}{A_2}}{B_4}
    \right),
\end{align*}
with the boundary-change data understood.
\end{cnj}

Assuming Conjecture \ref{cnj:E2-generalized-EW}, the square represented by $M$ is literally a $2$-morphism in the topological enlargement.  Its composition with the first-stage condensation defect is the relative tensor product of the corresponding representable module categories.  In the configuration displayed below, this takes the schematic form
\begin{align}\label{eq:second-stage-topological-fusion}
    \vind{B_1}{\hind{}{\CC}{A_1}}{B_2}
    \btc{\vind{B_3}{\hind{A_1}{\CC}{A_1}}{B_2}}
    \vind{B_3}{\hind{A_1}{\CC}{A_2}}{B_4}
    \simeq
    \vind{B_3}{\hind{}{\CC}{A_2}}{B_4}.
\end{align}
This is not a new fusion operation: it is the same Morita multiplication, now applied to a non-globular $2$-cell with non-trivial side condensation boundaries.

\begin{figure}[H]
\centering
\begin{center}
  \makebox[\textwidth][c]{%
      \resizebox{0.29\textwidth}{!}{%
      \begin{tikzpicture}[baseline=(current bounding box.center)]
        \path[fill=hmDeepCyan] (-2,-1.5) rectangle(0,1.5);
        \path[fill=hmMidCyan] (0,-1.5) rectangle(2,1.5);
        \path[fill=hmLightCyan] (2,-1.5) rectangle(4,1.5);
        \draw[draw=hmLineGreen,line width=1.1pt] (0,-1.5) -- (0,1.5);
        \draw[draw=hmLineLightBlue,line width=1.1pt] (2,-1.5) -- (2,1.5);
        \draw[draw=hmLineGreen,line width=1.1pt] (0,0) -- (2,0);
        \draw[draw=hmLineGreen,line width=0.9pt,decorate,
          decoration={brace,amplitude=4pt}]
          (0.08,0.13) -- (1.92,0.13);
        \draw[draw=hmLineLightBlue,line width=1.1pt] (2,0) -- (4,0);
        \filldraw[fill=white,draw=hmLineGreen,line width=0.7pt]
          (-0.05,-0.05) rectangle(0.05,0.05);
        \filldraw[fill=white,draw=hmLineGreen,line width=0.7pt]
          (1.95,-0.05) rectangle(2.05,0.05);

        \node at(-1.5,0){$\CC$};
        \node at(-0.3,1.1){$\hind{}{\vind{B_1}{\CC}{B_1}}{B_1}$};
        \node at(-0.3,-1.1){$\hind{}{\vind{B_2}{\CC}{B_2}}{B_2}$};
        \node at(-0.3,0){$\hind{}{\CC}{A_1}$};
        \node at(1.7,1){$\hind{B_1}{\vind{B_3}{\CC}{B_3}}{B_3}$};
         \node at(1.7,-1){$\hind{B_2}{\vind{B_4}{\CC}{B_4}}{B_4}$};
        \node at(1,-0.3){\tiny$\hind{A_1}{\vind{B_1}{\CC}{B_2}}{A_1}$};
        \node at(2.3,-0.3){\tiny$\hind{A_1}{\vind{B_3}{\CC}{B_4}}{A_2}$};
        \node at(3.2,1){$\hind{B_3}{\vind{B_3}{\CC}{B_3}}{B_3}$};
        \node at(3.2,-1){$\hind{B_4}{\vind{B_4}{\CC}{B_4}}{B_4}$};
      \end{tikzpicture}%
      }%
      \hspace{0.015\textwidth}%
      \begin{tikzpicture}[baseline=0pt]
        \node[anchor=south] (formula) at (0,0.12cm) {\resizebox{0.14\textwidth}{!}{$
          \vind{B_1}{\hind{}{\CC}{A_1}}{B_2}
          \btc{\vind{B_3}{\hind{A_1}{\CC}{A_1}}{B_2}}
          \vind{B_3}{\hind{A_1}{\CC}{A_2}}{B_4}
          \simeq
          \vind{B_1}{\hind{}{\CC}{A_2}}{B_4}
        $}};
        \draw[draw=hmLineGreen,-{Latex[length=2.3mm]},line width=1pt]
          ([yshift=-0.12cm]formula.south west) --
          ([yshift=-0.12cm]formula.south east);
      \end{tikzpicture}%
      \hspace{0.015\textwidth}%
      \resizebox{0.29\textwidth}{!}{%
  \begin{tikzpicture}[baseline=(current bounding box.center)]
        \path[fill=hmDeepCyan] (-2,-1.5) rectangle(0,1.5);
        \path[fill=hmLightCyan] (0,-1.5) rectangle(4,1.5);
        \draw[draw=hmLineLightBlue,line width=1.1pt] (0,-1.5) -- (0,1.5);
        \draw[draw=hmLineLightBlue,line width=1.1pt] (0,0) -- (4,0);
        \filldraw[fill=white,draw=hmLineGreen,line width=0.7pt]
          (-0.05,-0.05) rectangle(0.05,0.05);

        \node at(-1.5,0){$\CC$};
        \node at(-0.3,1.1){$\hind{}{\vind{B_3}{\CC}{B_3}}{B_3}$};
        \node at(-0.3,-1.1){$\hind{}{\vind{B_4}{\CC}{B_4}}{B_4}$};
        \node at(-0.3,0){$\hind{}{\CC}{A_2}$};
        \node at(1.5,1){$\hind{B_3}{\vind{B_3}{\CC}{B_3}}{B_3}$};
        \node at(1.5,-1){$\hind{B_4}{\vind{B_4}{\CC}{B_4}}{B_4}$};
        \node at(2.5,-0.3){\tiny$\hind{A_2}{\vind{B_3}{\CC}{B_4}}{A_2}$};
      \end{tikzpicture}%
      }%
  }
\end{center}
\caption{A second-stage condensation defect as a topological $2$-cell and its fusion with the first-stage defect.  The side walls change the outer labels from $(B_1,B_2)$ to $(B_3,B_4)$, and the middle relative tensor product is the composition law in $\Mrt_{E_1}^{\mathrm{top}}(\LMod^{\mathrm{fin}}(\CC))$.}
\label{fig:topological-second-stage-fusion}
\end{figure}

The relation with two-step condensation is now transparent.  The first passage
\begin{align*}
    \LMod^{\mathrm{fin}}(\CC)
    \rightsquigarrow
    \Mrt_{E_1}\bigl(\LMod^{\mathrm{fin}}(\CC)\bigr)
\end{align*}
opens an ordinary module object into a representable Morita cell and exposes the first pair of $E_2$-actions.  The inclusion
\begin{align*}
    \Mrt_{E_1}\bigl(\LMod^{\mathrm{fin}}(\CC)\bigr)
    \hookrightarrow
    \Mrt_{E_1}^{\mathrm{top}}\bigl(\LMod^{\mathrm{fin}}(\CC)\bigr)
\end{align*}
then allows non-trivial side condensation morphisms and exposes a second pair of outer actions.  Repeating the same relative-tensor-product calculus fuses the intermediate strips.  
\subsubsection*{Automorphism-induced invertible walls}

The topological enlargement also clarifies the role of algebra automorphisms in two-step condensation.  Let $B$ be a condensable $E_2$-algebra and let
$\varphi\in\Aut_{\Alg_{E_2}(\CC)}(B)$.  Precomposing one of the regular $B$-actions with $\varphi$ gives a twisted regular bi-bimodule, which we denote by $B^{\mathrm{tw}}$.  Since $\varphi$ is invertible, $B^{\mathrm{tw}}$ is invertible with inverse obtained from $\varphi^{-1}$.  Under $\hind{B}{\vind{B}{\CC}{B}}{B}$
this determines an invertible one-dimensional domain wall in the $B$-condensed phase; denote it by $\Phi_{\varphi}$.  The wall may be non-trivial before the intermediate $B$-strip is fused away.

\begin{figure}[H]
\centering
\noindent
\begin{minipage}[c]{0.329\textwidth}
  \centering
  \resizebox{\linewidth}{!}{%
  \begin{tikzpicture}[every node/.style={font=\tiny}]
    \path[fill=hmDeepCyan] (0,-1.1) rectangle (4.3,1.1);
    \path[fill=hmLightCyan] (1.6,-0.65) rectangle (4.3,0.65);
    \draw[draw=hmLineGreen,line width=1pt]
      (1.6,-0.65) -- (1.6,0.65);
    \draw[draw=hmLineGreen,line width=1pt]
      (1.6,0.65) -- (4.3,0.65);
    \draw[draw=hmLineGreen,line width=1pt]
      (1.6,-0.65) -- (4.3,-0.65);
    \filldraw[fill=white,draw=hmLineGreen,line width=0.7pt]
      (1.53,0.58) rectangle (1.67,0.72);
    \filldraw[fill=white,draw=hmLineGreen,line width=0.7pt]
      (1.53,-0.72) rectangle (1.67,-0.58);
    \draw[draw=hmLineLightBlue,dashed,line width=1pt]
      (1.6,0) -- (4.3,0);

    \node at (0.45,0.35) {$\CC$};
    \node[anchor=east] at (1.55,0) {$\hind{}{\CC}{B^{tw}}$};
    \node[anchor=south east] at (1.55,0.6)
      {$\hind{}{\vind{}{\CC}{B}}{B}$};
    \node[anchor=north east] at (1.55,-0.6)
      {$\hind{}{\vind{B}{\CC}{}}{B}$};
    \node[above] at (2.95,0.65)
      {$\hind{B}{\vind{}{\CC}{B}}{B}$};
    \node[below] at (2.95,-0.65)
      {$\hind{B}{\vind{B}{\CC}{}}{B}$};
    \node at (2.95,0)
      {$\hind{B^{tw}}{\vind{B}{\CC}{B}}{B^{tw}}$};
  \end{tikzpicture}%
  }
\end{minipage}\hspace{0.02\textwidth}%
\begin{minipage}[c]{0.042\textwidth}
  \centering
  $\longrightarrow$
\end{minipage}\hspace{0.02\textwidth}%
\begin{minipage}[c]{0.273\textwidth}
  \centering
  \resizebox{\linewidth}{!}{%
  \begin{tikzpicture}[every node/.style={font=\scriptsize}]
    \path[fill=hmDeepCyan] (0,-1.1) rectangle (3.6,1.1);
    \draw[draw=hmLineGreen,line width=1pt]
      (1.4,0) -- (3.6,0);
    \filldraw[fill=white,draw=hmLineGreen,line width=0.7pt]
      (1.33,-0.07) rectangle (1.47,0.07);

    \node at (0.9,0.55) {$\CC$};
    \node[anchor=south east] at (1.53,0.1) {$\hind{}{\CC}{B}$};
    \node[above] at (2.65,0.1) {$\hind{B}{\CC}{B}$};
  \end{tikzpicture}%
  }
\end{minipage}\par
\caption{An algebra automorphism of $B$ may induce a non-trivial invertible one-dimensional wall $\Phi_{\varphi}$ inside $\hind{B}{\vind{B}{\CC}{B}}{B}$. After the two $B$-condensation side walls are fused away, only the underlying $1$--$1$ bimodule remains, so the automorphism twist has no effect on the resulting fused defect in $\CC$.}
\label{fig:automorphism-closing}
\end{figure}

\begin{prp}[Automorphism twists are invisible after closing]\label{prp:automorphism_wall_closing}
Let $\Phi_{\varphi}$ be the invertible wall induced by an algebra automorphism $\varphi$ of $B$.  Fusing it with the two $B$-condensation side walls gives the same fused defect as in the untwisted case.  At the level of bi-bimodule labels,
\begin{align*}
      \vfus{\vfus{B}{\otc{B}}{B^{\mathrm{tw}}}}{\otc{B}}{B}
    \simeq
    \vind{1}{B^{\mathrm{tw}}}{1}
    \cong
    B.
\end{align*}
\end{prp}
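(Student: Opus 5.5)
We reduce the statement to the unit isomorphisms of the relative tensor product, read with the correct module structures. First fix the data. Write $B^{\mathrm{tw}}=(B,m_B,h_B,u,d_\varphi)$, where $u=m_B$ is the regular left action and $d_\varphi=m_B\circ(\id_B\ot\varphi)$ is the twisted right action. Check that this is an object of $\hind{B}{\Alg_{E_1}(\CC)}{B}$. Since $\varphi$ is an $E_2$-algebra map, $d_\varphi$ is an $E_1$-algebra homomorphism, and the compatibility square of the definition of $E_1$ $B_1$-$B_2$-bimodules holds because it holds for the regular action. Invertibility follows by composing with $B^{\mathrm{tw}^{-1}}$ built from $\varphi^{-1}$, which gives back the regular bimodule $B$.

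The side walls are the regular $E_1$ bimodules $B\in\hind{1}{\Alg_{E_1}(\CC)}{B}$ and $B\in\hind{B}{\Alg_{E_1}(\CC)}{1}$. The first step is the left fusion $\vfus{B}{\otc{B}}{B^{\mathrm{tw}}}$, computed as the coequalizer of the two maps $B\ot B\ot B^{\mathrm{tw}}\rightrightarrows B\ot B^{\mathrm{tw}}$. The multiplication $m_B\colon B\ot B\to B$ coequalizes them and splits the coequalizer via the unit, exactly as in the classical identity $B\otc{B}N\cong N$. So the fusion is $B^{\mathrm{tw}}$ with its left $B$-action forgotten to a $1$-action, and its twisted right action retained. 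By the proposition on 1-composition being well defined, the $E_1$-structure transported along this isomorphism is $m_B$.

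The second step is the right fusion $\vfus{B^{\mathrm{tw}}}{\otc{B}}{B}$ over the twisted action, with $B$ now the regular left module. The same unit argument applies, with the coequalizing map $x\ot y\mapsto x\,\varphi(y)$. Its splitting uses $h_B$, and the identity $\varphi\circ h_B=h_B$ holds because $\varphi$ is unital. The result is the underlying object $B$ with both outer actions trivial, that is, $\vind{1}{B^{\mathrm{tw}}}{1}$. Associativity of the Morita composition, together with the associator of relative tensor products cited in the proof that the tricategory is well defined, shows that the order of the two fusions is irrelevant.

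The last step identifies $\vind{1}{B^{\mathrm{tw}}}{1}\cong B$ in $\Alg_{E_1}(\CC)=\hind{1}{\Alg_{E_1}(\CC)}{1}$. After forgetting the actions, $B^{\mathrm{tw}}$ has multiplication $m_B$ and unit $h_B$, so the identity map is an isomorphism. Alternatively, $\varphi$ itself is an algebra isomorphism. For the defect-level statement we apply $\mathrm{Mod}_2$ and use Proposition \ref{prp:module_construction_preserves_relative_tensor_products} to obtain $\hind{}{\CC}{B}\btc{\hind{}{\CC}{B}}\hind{}{\CC}{B^{\mathrm{tw}}}\btc{\hind{}{\CC}{B}}\hind{}{\CC}{B}\simeq\hind{}{\CC}{B}$ as $\CC$-module categories, matching the untwisted case.

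The main obstacle is bookkeeping in the second fusion. There we must check that the coequalizer over the twisted action really carries the $E_1$-structure $m_B$ and not a $\varphi$-conjugated one. I would handle this by writing the induced multiplication explicitly through the braided interchange map $\chi$, and using that $\varphi$ is multiplicative and commutes with the braiding by naturality of $\beta$.
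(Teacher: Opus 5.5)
Your proposal is correct and takes essentially the paper's route. The first isomorphism is the unit property of the relative tensor product over $B$, and the second uses that the $\varphi$-twist lives only in an outer $B$-action, which is forgotten once both exterior actions are restricted to trivial $1$-actions; your checks that $B^{\mathrm{tw}}$ is a genuine $E_1$ $B$-$B$-bimodule and that the transported multiplication is $m_B$ (via multiplicativity of $\varphi$, $\varphi\circ h_B=h_B$, and naturality of the braiding) fill in details the paper leaves implicit.
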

\begin{proof}
The first equivalence is the unit property of the relative tensor product over $B$.  The difference between $B^{\mathrm{tw}}$ and the regular bimodule $B$ lies entirely in the $B$-action precomposed with $\varphi$.  Once the two exterior $B$-actions are closed by restriction along the unit $1\to B$, this twist is forgotten, and $B^{\mathrm{tw}}$ has the same underlying $1$--$1$ bimodule as $B$.  The bi-bimodule multiplication therefore identifies the closed composite with the untwisted one.
\end{proof}

At the categorical level, Proposition
\ref{prp:automorphism_wall_closing} is precisely what makes the
automorphism twist disappear under the vertical fusion shown in
Figure~\ref{fig:automorphism-closing}.  Applying
Theorem~\ref{thm:vertical_fusion_of_bi-bimod} twice gives
\begin{align*}
\vfus{
    \vfus{
        \hind{B}{\vind{}{\CC}{B}}{B}
    }{
        \btc{\hind{B}{\vind{B}{\CC}{B}}{B}}
    }{
        \hind{B^{\mathrm{tw}}}{\vind{B}{\CC}{B}}{B^{\mathrm{tw}}}
    }
}{
    \btc{\hind{B}{\vind{B}{\CC}{B}}{B}}
}{
    \hind{B}{\vind{B}{\CC}{}}{B}
}
\simeq
\hind{
    \vfus{\vfus{B}{\otc{B}}{B^{\mathrm{tw}}}}{\otc{B}}{B}
}{
    \CC
}{
    \vfus{\vfus{B}{\otc{B}}{B^{\mathrm{tw}}}}{\otc{B}}{B}
}
\simeq
\hind{B}{\CC}{B},
\end{align*}
where the last equivalence is exactly
Proposition~\ref{prp:automorphism_wall_closing}.
Thus the non-trivial wall
$\hind{B^{\mathrm{tw}}}{\vind{B}{\CC}{B}}{B^{\mathrm{tw}}}$
in the intermediate $B$-condensed phase leaves no twist in the vertically
fused wall.

The same cancellation occurs at the endpoint point defect.  The point
$\hind{}{\CC}{B^{\mathrm{tw}}}$ in the left-hand side of
Figure~\ref{fig:automorphism-closing} is carried by the same composite
algebra label, and hence after closing the two transverse condensation
walls one obtains
\begin{align*}
    \hind{}{\CC}{
        \vfus{\vfus{B}{\otc{B}}{B^{\mathrm{tw}}}}{\otc{B}}{B}
    }
    \simeq
    \hind{}{\CC}{B}.
\end{align*}
Consequently, both the one-dimensional wall and its endpoint reduce to
their untwisted counterparts after closing.  The reason is not that
$\Phi_\varphi$ is trivial inside the $B$-condensed phase, but that the
two $B$-actions carrying the twist have been fused away; the remaining
$1$--$1$ actions on $B^{\mathrm{tw}}$ are trivial.

Thus an algebra automorphism may produce a genuine invertible wall inside the intermediate condensed phase, while having no effect on the final defect after the intermediate strip is fused away.  Equivalently, two second-stage condensation configurations that differ only by insertion of an automorphism-induced wall $\Phi_{\varphi}$ represent the same fused defect in $\CC$.  This is the bi-bimodule reformulation of the automorphism ambiguity in two-step condensation discussed in \cite{XY25}; in the present language the mechanism is simply relative-tensor-product multiplication together with restriction of the outer $B$-actions.

\subsection{Local 2-Morita Equivalence and Local Modules}\label{subsec:local-two-morita}
The twice-looped endomorphism category of an \(E_2\)-algebra \(B\) in $\Mrt_{E_2}(\CC)$ 
\begin{align*}
    \Omega_{B}^2(\Mrt_{E_2}(\CC))=\hind{B}{\vind{B}{\CC}{B}}{B}\simeq \hind{}{\CC}{B}.
\end{align*}
recovers its category of $E_2$-local $E_0$-modules. This motivates a weaker notion of \(2\)-Morita equivalence detected only by this local invariant.

\begin{dfn}
    Two $E_2$-algebras $B_1$ and $B_2$ are {\bf locally 2-Morita equivalent} if their $E_2$-locl $E_0$-module categories $\hind{B_1}{\vind{B_1}{\CC}{B_1}}{B_1}$ and $\hind{B_2}{\vind{B_2}{\CC}{B_2}}{B_2}$ are $E_2$-monoidal equivalent.
\end{dfn}

Since iterated endomorphism categories are invariant under equivalence of objects in the Morita tricategory, \(2\)-Morita equivalence always implies this local version.
\begin{prp}\label{prp:2-Morita_equiv_implies_local_2-Morita_equiv}
    If two $E_2$-algebras are 2-Morita equivalent, then they are locally 2-Morita equivalent.
\end{prp}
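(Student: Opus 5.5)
We argue in the Morita tricategory $\Mrt_{E_2}(\CC)$ directly, mirroring the proof of Proposition \ref{prp:1-Morita_implies_locally_1-Morita}. Suppose $B_1\simeq B_2$ in $\Mrt_{E_2}(\CC)$. We fix an invertible $1$-morphism $A\in\hind{B_1}{\Alg_{E_1}(\CC)}{B_2}$ with inverse $A'\in\hind{B_2}{\Alg_{E_1}(\CC)}{B_1}$, together with invertible $2$-morphisms
\begin{align*}
    \vfus{A}{\otc{B_2}}{A'}\simeq B_1,
    \qquad
    \vfus{A'}{\otc{B_1}}{A}\simeq B_2,
\end{align*}
as in Figure \ref{fig:four-local-configurations}. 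Conjugation by $A$ gives a functor on endomorphism bicategories
\begin{align*}
    \Omega_{B_1}\Mrt_{E_2}(\CC)\longrightarrow\Omega_{B_2}\Mrt_{E_2}(\CC),
    \qquad
    X\longmapsto \vfus{\vfus{A'}{\otc{B_1}}{X}}{\otc{B_1}}{A}.
\end{align*}
This is an equivalence of monoidal bicategories, with inverse given by conjugation by $A'$; the unit and counit are built from the two invertible $2$-morphisms above, and monoidality comes from the associators of the tricategory.

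Next we apply $\Omega$ once more. The conjugation equivalence sends the unit $1$-morphism $B_1$ to $\vfus{\vfus{A'}{\otc{B_1}}{B_1}}{\otc{B_1}}{A}\simeq \vfus{A'}{\otc{B_1}}{A}\simeq B_2$. So it induces an equivalence of endomorphism categories of the units,
\begin{align*}
    \hind{B_1}{\vind{B_1}{\CC}{B_1}}{B_1}=\Omega^2_{B_1}\Mrt_{E_2}(\CC)
    \;\simeq\;
    \Omega^2_{B_2}\Mrt_{E_2}(\CC)=\hind{B_2}{\vind{B_2}{\CC}{B_2}}{B_2}.
\end{align*}
On objects it is given concretely by $M\mapsto \vfus{\vfus{A'}{\otc{B_1}}{M}}{\otc{B_1}}{A}$, using the composition along $B$ from Definition \ref{dfn: 2-Morita_cat} and then conjugating by the identification with $B_2$.

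Finally, this equivalence must be $E_2$-monoidal. The two tensor products on $\Omega^2$ are horizontal composition along $A$ and vertical composition along $B$, and the braiding comes from the Eckmann--Hilton interchange of Figure \ref{fig:Morita_E2}. Any equivalence of monoidal bicategories preserving the unit induces a braided, i.e.\ $E_2$-, monoidal equivalence on $\Omega^2$ of the units, because both the horizontal and vertical compositions are preserved up to coherent isomorphism, and the interchange isomorphisms are preserved too.

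The main obstacle is this last step: the tricategory coherence in Definition \ref{dfn: 2-Morita_cat} is only asserted, not fully verified. To make the argument checkable, we would instead use the 2-full faithfulness of $\mathrm{Mod}_2$ together with the $E_2$ Eilenberg--Watts theorem (Theorem \ref{thm:E2_EW}). This transports the question to $\Mrt_{E_1}(\LMod(\CC))$, where $\Omega^2$ of $\hind{}{\CC}{B}$ is the $E_2$-monoidal category of bimodule endofunctors of the identity. There the invertible bimodule $\hind{}{\CC}{A}$ gives the equivalence by the standard Drinfeld-center-type argument: conjugation by an invertible bimodule induces a braided equivalence of the endofunctors of the identity.
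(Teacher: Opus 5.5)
Your conjugation argument is correct and is essentially the paper's proof written out in detail. The paper simply asserts that $\Omega^2_{B_i}\Mrt_{E_2}(\CC)\simeq \hind{B_i}{\vind{B_i}{\CC}{B_i}}{B_i}$ is invariant, as an $E_2$-monoidal category, under equivalence of objects in the Morita tricategory.

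Your fallback route through $\mathrm{Mod}_2$ is not needed, and it does not actually make the argument more checkable. That $\mathrm{Mod}_2$ is a 3-functor compatible with the $E_2$-structures on $\Omega^2$ is itself only asserted in the paper, so it rests on the same unverified coherence.
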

\begin{proof}
    For equivalent object $B_1$ and $B_2$ in $\Mrt_{E_2}(\CC)$, we have $\Omega^2_{B_1}(\Mrt_{E_2}(\CC))\simeq \Omega^2_{B_2}(\Mrt_{E_2}(\CC))$ as $E_2$-monoidal categories.
\end{proof}

In general, the local invariant need not a priori recover the full \(2\)-Morita class. For condensable algebras in a modular fusion category, however, the converse holds.

\begin{thm}[\cite{XY25}, Proposition~C.11]
    For a modular fusion category $\CC$, two condensable $E_2$-algebra $B_1$ and $B_2$ are $E_2$-Morita equivalent if and only if the are locally 2-Morita equivalent.
\end{thm}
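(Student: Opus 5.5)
The forward direction is Proposition \ref{prp:2-Morita_equiv_implies_local_2-Morita_equiv}, so the plan concerns only the converse. Assume $\hind{B_1}{\vind{B_1}{\CC}{B_1}}{B_1}\simeq\hind{B_2}{\vind{B_2}{\CC}{B_2}}{B_2}$ as $E_2$-monoidal categories. By the Corollary identifying $E_2$-local $E_0$-modules with local modules, this says $\hind{}{\CC}{B_1}^{loc}\simeq\hind{}{\CC}{B_2}^{loc}$ as braided fusion categories. The aim is to build an explicit invertible $1$-morphism $B_1\to B_2$ in $\Mrt_{E_2}(\CC)$, i.e.\ an $E_1$ $B_1$-$B_2$-bimodule $A$ together with inverse bi-bimodules, as in Figure \ref{fig:four-local-configurations}.

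The first step uses modularity. For a condensable $B$ in a modular $\CC$, the condensed phase $\hind{}{\CC}{B}^{loc}$ is again modular, and the wall $\hind{}{\CC}{B}$ satisfies $\fZ_1(\hind{}{\CC}{B})\simeq\CC\boxtimes\overline{\hind{}{\CC}{B}^{loc}}$. A braided equivalence $\phi:\hind{}{\CC}{B_1}^{loc}\to\hind{}{\CC}{B_2}^{loc}$ then yields a Lagrangian algebra $L_\phi$ in $\hind{}{\CC}{B_1}^{loc}\boxtimes\overline{\hind{}{\CC}{B_2}^{loc}}$, namely the graph of $\phi$. Pulling $L_\phi$ back through the two free-module functors, I would obtain an $E_1$-algebra $A:=$ the image of $L_\phi$ in $\CC$ under the forgetful functors. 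This $A$ is the candidate invertible wall between the $B_1$- and $B_2$-phases, i.e.\ the invertible domain wall of $\CC\boxtimes\overline{\CC}$ composed with both condensations.

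Next I would check that $A\in\hind{B_1}{\Alg_{E_1}(\CC)}{B_2}$, using Theorem \ref{thm:algebras_in_bimodule_categories} so that it suffices to produce an algebra in $\hind{B_1}{\CC}{B_2}$. The $B_i$-actions come from $L_\phi$ being a module over $B_1\boxtimes B_2$ via the local-module structure. Invertibility of $A$ as a $1$-morphism is tested after applying $\mathrm{Mod}_2$. Since $\mathrm{Mod}_2$ is $2$-fully faithful and (in this finite setting) the relevant invertible bimodule categories are representable, it suffices to show that $\hind{}{\CC}{A}$ is an invertible $\hind{}{\CC}{B_1}$-$\hind{}{\CC}{B_2}$-bimodule in $\LMod(\CC)$. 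For this I would use the fusion formulas, Theorem \ref{thm:vertical_fusion_of_bi-bimod} and Proposition \ref{prp:module_construction_preserves_relative_tensor_products}, to compute
\begin{align*}
\vfus{\hind{}{\CC}{A}}{\btc{\hind{}{\CC}{B_2}}}{\hind{}{\CC}{A^{op}}}\simeq\hind{}{\CC}{\vfus{A}{\otc{B_2}}{A^{op}}},
\end{align*}
and then compare $\vfus{A}{\otc{B_2}}{A^{op}}$ with $B_1$. That comparison reduces to $L_\phi\otimes_{B_2\text{-loc}}L_\phi^{op}$ being the diagonal Lagrangian algebra, which holds because $\phi$ is an equivalence. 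Once the categorical composites are identified with the identities, $2$-full faithfulness of $\mathrm{Mod}_2$ lifts the $3$-isomorphisms back to $\Mrt_{E_2}(\CC)$.

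I expect the main obstacle to be this last step: showing that $\hind{}{\CC}{A}$ is invertible, and in particular that the relative tensor product over the \emph{local} category, rather than over $\hind{}{\CC}{B_2}$, is what collapses. My first approach would be to use Proposition \ref{prp:relative_inverse}, which gives $\hind{}{\CC}{B}\btc{\hind{B}{\CC}{B}}\hind{B}{\CC}{}\simeq\CC$. Combining it with the ENO-type classification of invertible bimodules by braided autoequivalences, as in the $E_1$ local theorem, $\mathrm{BrPic}$ would identify the Morita class purely through $\phi$.
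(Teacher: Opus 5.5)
\paragraph{Genuine gap: the construction of the invertible 1-morphism.} The converse rests entirely on this construction, and your candidate does not work. You take $A$ to be ``the image of $L_\phi$ in $\CC$ under the forgetful functors.'' That is not the wall determined by $\phi$: a Lagrangian algebra $L$ encodes its wall through the module category $(\CD_1\boxtimes\overline{\CD_2})_L$, not through its underlying object.

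The construction already fails for $B_1=B_2=B$ and $\phi=\id$. Take $\CC=\fZ_1(\vect_{\mathbb{Z}_2})$ and $B=\bone\oplus e$. Then $\hind{}{\CC}{B}^{loc}\simeq\vect$ and $L_{\id}=\bone$, whose image is $B\otimes B\cong\mathrm{Fun}(\mathbb{Z}_2\times\mathbb{Z}_2)$ with the diagonal $\mathbb{Z}_2$-action. This algebra splits into two orbit summands, so $\hind{}{\CC}{B\otimes B}\simeq\hind{}{\CC}{B}\oplus\hind{}{\CC}{B}$. That is a decomposable, hence non-invertible, $\hind{}{\CC}{B}$-$\hind{}{\CC}{B}$-bimodule, whereas the identity $1$-morphism is $B$ itself.

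Everything downstream inherits this problem. The reduction of $\vfus{A}{\otc{B_2}}{A^{op}}\sim B_1$ to ``$L_\phi\otimes L_\phi^{op}$ is diagonal'' is asserted, not derived, and you leave invertibility open as the ``main obstacle.'' Also, composition in $\Mrt_{E_1}(\LMod(\CC))$ is the relative tensor product over $\hind{}{\CC}{B_2}$. There is no tensor product over the local category that needs to collapse.

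\paragraph{How to close it.} Do not write $A$ down; instead use the identity you already quoted, $\fZ_1(\hind{}{\CC}{B_i})\simeq\CC\boxtimes\overline{\hind{}{\CC}{B_i}^{loc}}$. This is where M\"uger's double centralizer theorem enters, and it is the input behind the argument of \cite{XY25} that the paper simply cites.
\begin{enumerate}
\item $\id_\CC\boxtimes\phi$ is a braided equivalence $\fZ_1(\hind{}{\CC}{B_1})\simeq\fZ_1(\hind{}{\CC}{B_2})$ compatible with the two embeddings of $\CC$.
\item By ENO it is induced by an invertible $\hind{}{\CC}{B_1}$-$\hind{}{\CC}{B_2}$-bimodule category $\CM$. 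Compatibility on $\CC$ is what makes $\CM$ and its inverse bimodules in $\LMod^{fin}(\CC)$.
\item Finite reconstruction over $\hind{B_1}{\CC}{B_2}$, together with Theorem \ref{thm:algebras_in_bimodule_categories} and Corollary \ref{crl:B_1CB_2A_equivalent_to_CA}, gives $\CM\simeq\hind{}{\CC}{A}$ with inverse $\hind{}{\CC}{A'}$ for some $E_1$ bimodules $A$ and $A'$.
\item Proposition \ref{prp:module_construction_preserves_relative_tensor_products} and $2$-full faithfulness of $\mathrm{Mod}_2$ then lift the invertible composites to $\Mrt_{E_2}(\CC)$.
\end{enumerate}
Your closing remark about BrPic points in this direction. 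As written, though, it is aimed at $\CC$-$\CC$ bimodules via Proposition \ref{prp:relative_inverse} rather than at $\fZ_1(\hind{}{\CC}{B_i})$, and it is not carried out.
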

\begin{proof}
    $\Rightarrow$: Proposition \ref{prp:2-Morita_equiv_implies_local_2-Morita_equiv}.
    $\Leftarrow$: It was proved in \cite{XY25} using double centralizer theorem \cite{Mueg03b}.
\end{proof}

\section{n-Morita Equivalences}\label{sec:n-morita}
This section formulates the general recursive pattern of $n$-Morita equivalence suggested by the preceding $n=1,2$ constructions and compares the resulting descriptions at each morphism level.

\begin{dfn}[Haugseng's recursive model, \cite{Hau17}]
    Let $\CC$ be an $E_{m+2}$-monoidal $1$-category.
    The $(n+1)$-category $\Mrt_{E_n}(\CC)$ consists of 
    \begin{itemize}
        \item objects are $E_n$-algebras in $\CC$, i.e. objects in $\Alg_{E_n}(\CC)$;
        \item for two $E_n$-algebras $\mathsf{X}_1$ and $\mathsf{X}_2$, the mapping $n$-category is $\Mrt_{E_{n-1}}(\hind{\mathsf{X}_1}{\CC}{\mathsf{X}_2})$.
    \end{itemize}
\end{dfn}
In details, we have 
\begin{dfn}[Haugseng's iterated-bimodule model, \cite{Hau17}]
    Let $\CC$ be an $E_{m+2}$-monoidal $1$-category.
    The $(n+1)$-category $\Mrt^{Haug}_{E_n}(\CC)$ consists of 
    \begin{itemize}
        \item objects are $E_n$-algebras in $\CC$, i.e. objects in $\Alg_{E_n}(\CC)$;
        \item for two $E_n$-algebras $\mathsf{X}_1$ and $\mathsf{X}_2$, 1-morphisms are objects of $\Alg_{E_{n-1}}(\hind{\sX_1}{\CC}{\sX_2})$;
        \item 2-morphisms are objects of $\Alg_{E_{n-2}}(\hind{\sW_1}{\hind{\sX_1}{\CC}{\sX_2}}{\sW_2})$;
        \item $\cdots$.
    \end{itemize}
\end{dfn}

\begin{dfn}[GS's iterated-bimodule model]
    Let $\CC$ be an $E_{m+2}$-monoidal $1$-category.
    The $(n+1)$-category $\Mrt^{GS}_{E_n}(\CC)$ consists of 
    \begin{itemize}
        \item objects are $E_n$-algebras in $\CC$, i.e. objects in $\Alg_{E_n}(\CC)$;
        \item for two $E_n$-algebras $\mathsf{X}_1$ and $\mathsf{X}_2$, 1-morphisms are objects of $\hind{\sX_1}{\Alg_{E_{n-1}}(\CC)}{\sX_2}$;
        \item 2-morphisms are objects of $\hind{\sW_1}{\hind{\sX_1}{\Alg_{E_{n-2}}(\CC)}{\sX_2}}{\sW_2}$;
        \item $\cdots$.
    \end{itemize}
\end{dfn}

\begin{dfn}[bi-bimodule model]
    Let $\CC$ be an $E_{m+2}$-monoidal $1$-category.
    The $(n+1)$-category $\Mrt_{E_n}(\CC)$ consists of 
    \begin{itemize}
        \item objects are $E_n$-algebras in $\CC$, i.e. objects in $\Alg_{E_n}(\CC)$;
        \item for two $E_n$-algebras $\mathsf{X}_1$ and $\mathsf{X}_2$, 1-morphisms are objects of $\hind{\sX_1}{\Alg_{E_{n-1}}(\CC)}{\sX_2}$;
        \item 2-morphisms are objects of $\hind{\sW_1}{\vind{\sX_1}{\Alg_{E_{n-2}}(\CC)}{\sX_2}}{\sW_2}$;
        \item $\cdots$.
    \end{itemize}
\end{dfn}
where our higher morphism is defined by bi-bi-$\cdots$-bi-bimodules, i.e. a bimodule category that satisfies all higher coherence conditions.

\begin{crl}
    Let $\mathsf{X}_1$ and $\mathsf{X}_2$ be two $E_n$-algebras in an $E_m$-monoidal category $\CC$ ($m\geq n$), then $\hind{\mathsf{X}_1}{\Alg_{E_{n-1}}(\CC)}{\mathsf{X}_2}\simeq \Alg_{E_{n-1}}(\hind{\mathsf{X}_1}{\CC}{\mathsf{X}_2})$.
\end{crl}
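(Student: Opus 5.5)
The plan is to argue by induction on $n$, with Theorem~\ref{thm:algebras_in_bimodule_categories} as the case $n=2$ and the template for each step. Let $\mathsf{X}_1,\mathsf{X}_2\in\Alg_{E_n}(\CC)$. By the working definition, $\Alg_{E_n}(\CC)=\Alg_{E_1}(\Alg_{E_{n-1}}(\CC))$. So $\mathsf{X}_1,\mathsf{X}_2$ are $E_1$-algebras in the $E_{m-1}$-monoidal category $\CD:=\Alg_{E_{n-1}}(\CC)$, and an object of $\hind{\mathsf{X}_1}{\Alg_{E_{n-1}}(\CC)}{\mathsf{X}_2}$ is an $E_{n-1}$-algebra with compatible actions. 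Under Dunn additivity, the $E_{n-1}$ structure splits into the direction transverse to the actions together with $n-2$ further directions. The aim is to peel off one $E_1$ direction at a time, commuting $\Alg_{E_1}$ past the bimodule construction.

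\textbf{Key step (one $E_1$ direction).} For an $E_m$-monoidal $\CE$ with $m\ge 2$ and $E_2$-algebras $\mathsf{Y}_1,\mathsf{Y}_2$, I would prove
\[
\hind{\mathsf{Y}_1}{\Alg_{E_1}(\CE)}{\mathsf{Y}_2}\simeq \Alg_{E_1}\bigl(\hind{\mathsf{Y}_1}{\CE}{\mathsf{Y}_2}\bigr),
\]
with $\hind{\mathsf{Y}_1}{\CE}{\mathsf{Y}_2}$ carrying the monoidal structure $\otimes_{\mathsf{Y}_1\ot\mathsf{Y}_2}$. This is exactly Theorem~\ref{thm:algebras_in_bimodule_categories}. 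Its proof uses only four ingredients: the braiding, the fully faithful embedding of Theorem~\ref{thm:B_1CB_2_to_B_1B_2CB_1B_2_is_fully_faithful}, the monoidal structure of Theorem~\ref{thm:bimodule-category-monoidal}, and the lax monoidality of the forgetful functor. All four hold verbatim for any $E_m$-monoidal $\CE$ with $m\ge2$ that admits right-exact coequalizers.

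\textbf{Iteration.} Writing $\Alg_{E_{n-1}}=\Alg_{E_1}\circ\Alg_{E_{n-2}}$, I apply the key step with $\CE=\Alg_{E_{n-2}}(\CC)$, which is $E_{m-n+2}$-monoidal, so $m\ge n$ keeps the braiding available. Here $\mathsf{Y}_i=\mathsf{X}_i$ is regarded as an $E_2$-algebra in $\CE$. This gives
\[
\hind{\mathsf{X}_1}{\Alg_{E_{n-1}}(\CC)}{\mathsf{X}_2}\simeq\Alg_{E_1}\bigl(\hind{\mathsf{X}_1}{\Alg_{E_{n-2}}(\CC)}{\mathsf{X}_2}\bigr).
\]
The induction hypothesis applied to the inner category yields $\Alg_{E_{n-2}}(\hind{\mathsf{X}_1}{\CC}{\mathsf{X}_2})$, which leaves
\[
\Alg_{E_1}\Alg_{E_{n-2}}\bigl(\hind{\mathsf{X}_1}{\CC}{\mathsf{X}_2}\bigr)=\Alg_{E_{n-1}}\bigl(\hind{\mathsf{X}_1}{\CC}{\mathsf{X}_2}\bigr).
\]

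\textbf{Main obstacle.} The identification in the induction step is not just an equivalence of categories: it must be an equivalence of $E_{n-2}$-monoidal categories. That is what allows the outer $\Alg_{E_1}$ to be transported across it, and what makes the result agree with the $E_{n-1}$-monoidal structure on $\hind{\mathsf{X}_1}{\CC}{\mathsf{X}_2}$ coming from the remaining $n-1$ directions of $\mathsf{X}_i$. So I would strengthen the inductive statement to include monoidality. I would check this on the tensor product $\otimes_{\mathsf{X}_1\ot\mathsf{X}_2}$: the forgetful functor to $\CC$ commutes with the relevant coequalizers, because tensoring preserves them and the epimorphism arguments used in Lemma~\ref{lem:alg_bimod_is_bimod_alg} apply. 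Compatibility of the braided interchange maps should then follow from the hexagon identities, much as in Proposition~\ref{prp:u^r-_d^r_are_compatible}.
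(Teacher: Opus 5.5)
Your induction is sound at the paper's level of rigor, but it peels the $E_1$ layers in the opposite order. The paper writes $\Alg_{E_{n-1}}=\Alg_{E_{n-2}}\circ\Alg_{E_1}$ and applies Theorem~\ref{thm:algebras_in_bimodule_categories} over the base $\CC$ \emph{inside} $\Alg_{E_{n-2}}$. It then invokes the induction hypothesis bare, after changing the ambient category to $\CE=\Alg_{E_1}(\CC)$, where the $\mathsf{X}_i$ are $E_{n-1}$-algebras. You instead apply the theorem bare over $\Alg_{E_{n-2}}(\CC)$, and invoke the induction hypothesis over the unchanged base $\CC$ but inside $\Alg_{E_1}$. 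So the monoidal burden lands in different places. The paper tacitly needs the equivalence of Theorem~\ref{thm:algebras_in_bimodule_categories}, which is stated only as an equivalence of categories, to be $E_{n-2}$-monoidal, and never remarks on it. You need the level-$(n-1)$ equivalence to be monoidal, and you say so.

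Two refinements of your flag.

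First, both sides of $\hind{\mathsf{X}_1}{\Alg_{E_{n-2}}(\CC)}{\mathsf{X}_2}\simeq\Alg_{E_{n-2}}(\hind{\mathsf{X}_1}{\CC}{\mathsf{X}_2})$ are naturally only $E_1$-monoidal, because the $\mathsf{X}_i$ are merely $E_2$-algebras in $\Alg_{E_{n-2}}(\CC)$. So this step needs $E_1$-monoidality, not $E_{n-2}$-monoidality. However, proving that by your route needs the level-$(n-2)$ equivalence to be $E_2$-monoidal, and so on. The correct strengthened hypothesis is therefore ``for $E_{k+j}$-algebras, the level-$k$ equivalence is $E_j$-monoidal.'' Its only input is the $E_j$-monoidality of Theorem~\ref{thm:algebras_in_bimodule_categories} --- the same unproved lemma the paper's route rests on. The paper's ordering is more economical: it needs no strengthening of the inductive statement, only a monoidal upgrade of one concrete equivalence. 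Yours makes the dependence explicit.

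Second, both routes use the two-sided theorem over bases $\Alg_{E_k}(\CC)$, whose tensor product is not right exact (it does not even preserve the initial object $\bfone$). What holds, and suffices, is that reflexive coequalizers are created by the forgetful functor to $\CC$ and preserved by $\ot$. The coequalizers defining $\otd_{\mathsf{X}_1\ot\mathsf{X}_2}$ are reflexive, with common section given by the unit of $\mathsf{X}_1\ot\mathsf{X}_2$.
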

\begin{proof}
    We prove by induction.
    We have proved for $n=0$ and $n=1$ case, suppose this holds for $E_{n-1}$-algebras $\mathsf{X}_1$, $\mathsf{X}_2$ in some category $\CE$, i.e. $\hind{\mathsf{X}_1}{\Alg_{E_{n-2}}(\CE)}{\mathsf{X}_2}\simeq \Alg_{E_{n-2}}(\hind{\mathsf{X}_1}{\CE}{\mathsf{X}_2})$, now we prove for $E_n$ case.
     
    Note that $\Alg_{E_i}(\CD)=\Alg_{E_{i-1}}(\Alg_{E_1}(\CD))$ for any $E_{m}$-monoidal category $\CD$ ($m\geq i$) and $i$.
    Let $\CD$ be $\hind{\mathsf{X}_1}{\CC}{\mathsf{X}_2}$, we have $\Alg_{E_{n-1}}(\hind{\mathsf{X}_1}{\CC}{\mathsf{X}_2})\simeq \Alg_{E_{n-2}}(\Alg_{E_1}(\hind{\mathsf{X}_1}{\CC}{\mathsf{X}_2}))$.
    By Theorem \ref{thm:algebras_in_bimodule_categories}, we have 
    \begin{align*}
        \Alg_{E_{n-2}}(\Alg_{E_1}(\hind{\mathsf{X}_1}{\CC}{\mathsf{X}_2}))\simeq \Alg_{E_{n-2}}(\hind{\mathsf{X}_1}{\Alg_{E_{1}}(\CC)}{\mathsf{X}_2}).
    \end{align*}
    Since $\mathsf{X}_1$, $\mathsf{X}_2$ are now $E_{n-1}$-algebras in $\Alg_{E_{1}}(\CC)$, let $\CE=\Alg_{E_{1}}(\CC)$ in $E_{n-1}$ case, we have 
    \begin{align*}
        \hind{\mathsf{X}_1}{\Alg_{E_{n-2}}(\Alg_{E_{1}}(\CC))}{\mathsf{X}_2}\simeq \Alg_{E_{n-2}}(\hind{\mathsf{X}_1}{\Alg_{E_{1}}(\CC)}{\mathsf{X}_2})
    \end{align*}
    Therefore we have $\hind{\mathsf{X}_1}{\Alg_{E_{n-1}}(\CC)}{\mathsf{X}_2}\simeq \Alg_{E_{n-1}}(\hind{\mathsf{X}_1}{\CC}{\mathsf{X}_2})$.
\end{proof}

Using above corollary, we can iteratively prove the following statement and even for more layers
\begin{crl}
    Let $\mathsf{X}_1$ and $\mathsf{X}_2$ be two $E_n$-algebras in an $E_m$-monoidal category $\CC$ ($m\geq n$) and let $W_1$ and $W_2$ be two $E_{n-1}$-algebra $\mathsf{X}_1$-$\mathsf{X}_2$-bimodules, i.e. objects in $\hind{\mathsf{X}_1}{\Alg_{E_{n-1}}(\CC)}{\mathsf{X}_2}\simeq \Alg_{E_{n-1}}(\hind{\mathsf{X}_1}{\CC}{\mathsf{X}_2})$.
    We have 
    \begin{align*}
        \hind{W_1}{\hind{\mathsf{X}_1}{\Alg_{E_{n-2}}(\CC)}{\mathsf{X}_2}}{W_2}\simeq \Alg_{E_{n-2}}(\hind{W_1}{\hind{\mathsf{X}_1}{\CC}{\mathsf{X}_2}}{W_2})
    \end{align*}
\end{crl}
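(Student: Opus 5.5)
The plan is to reduce the statement to the preceding corollary, applied one level higher, using the monoidal structure on $\hind{\mathsf{X}_1}{\CC}{\mathsf{X}_2}$ from Theorem~\ref{thm:bimodule-category-monoidal}. Set $\CD:=\hind{\mathsf{X}_1}{\CC}{\mathsf{X}_2}$. By the preceding corollary, $W_1,W_2$ are $E_{n-1}$-algebras in $\CD$, and in particular $E_1$-algebras in $\CD$. So $\hind{W_1}{\CD}{W_2}$ makes sense as the category of $W_1$-$W_2$-bimodules internal to $\CD$. In the same way, $\hind{W_1}{\Alg_{E_{n-2}}(\CD)}{W_2}$ makes sense, since $\Alg_{E_{n-1}}(\CD)=\Alg_{E_1}(\Alg_{E_{n-2}}(\CD))$ exhibits $W_i$ as $E_1$-algebras in the $E_1$-monoidal category $\Alg_{E_{n-2}}(\CD)$.

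The first step is to identify the left-hand side with bimodules over $W_1,W_2$ inside $\Alg_{E_{n-2}}(\CD)$. Applying the preceding corollary with $n-1$ in place of $n$ gives $\hind{\mathsf{X}_1}{\Alg_{E_{n-2}}(\CC)}{\mathsf{X}_2}\simeq\Alg_{E_{n-2}}(\CD)$. I will need this to be an equivalence of $E_1$-monoidal categories, not merely of categories. For this I check that the equivalence of Theorem~\ref{thm:algebras_in_bimodule_categories}, and hence its iterates, is compatible with the relative tensor product $\otd_{\mathsf{X}_1\ot\mathsf{X}_2}$. That compatibility follows because both sides compute the tensor product by the same coequalizer on underlying objects, and the forgetful functors preserve it. Under this monoidal equivalence the $W_i$ correspond, so bimodule categories correspond. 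This gives
\begin{align*}
    \hind{W_1}{\hind{\mathsf{X}_1}{\Alg_{E_{n-2}}(\CC)}{\mathsf{X}_2}}{W_2}\simeq\hind{W_1}{\Alg_{E_{n-2}}(\CD)}{W_2}.
\end{align*}

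The second step applies the preceding corollary (the $E_{n-1}$ version) inside the monoidal category $\CD$, with the $E_{n-1}$-algebras $W_1,W_2$. This yields $\hind{W_1}{\Alg_{E_{n-2}}(\CD)}{W_2}\simeq\Alg_{E_{n-2}}(\hind{W_1}{\CD}{W_2})$, which is the right-hand side. Composing the two steps proves the claim. Iterating the same two moves handles further layers.

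The main obstacle is the ambient hypothesis. The preceding corollary requires the base to be $E_m$-monoidal with $m\geq n$, but Theorem~\ref{thm:bimodule-category-monoidal} only makes $\CD$ $E_1$-monoidal in general, whereas the second step treats $W_i$ as $E_{n-1}$-algebras in $\CD$, which needs $\CD$ to be $E_{n-1}$-monoidal. I would first try to show that $\CD$ inherits an $E_{m-1}$-monoidal structure when $\CC$ is $E_m$-monoidal and $\mathsf{X}_i$ are $E_n$-algebras, by the argument of Theorem~\ref{thm:bimodule-category-monoidal} applied in each remaining direction. If that fails, I would restrict the second step to its $E_1$-layer, using only Theorem~\ref{thm:algebras_in_bimodule_categories}, and treat the remaining $E_{n-2}$-structure formally via $\Alg_{E_{n-2}}$ applied levelwise.
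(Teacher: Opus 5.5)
Your proposal is essentially the paper's argument. The paper only remarks that the statement follows by iterating the preceding corollary, and your two steps make that iteration explicit: first the level-$(n-1)$ corollary for $\mathsf{X}_1,\mathsf{X}_2$, upgraded to an $E_1$-monoidal equivalence $\hind{\mathsf{X}_1}{\Alg_{E_{n-2}}(\CC)}{\mathsf{X}_2}\simeq\Alg_{E_{n-2}}(\CD)$ so that $W_1$-$W_2$-bimodules transport, then the level-$(n-1)$ corollary inside $\CD$ for $W_1,W_2$. This includes the monoidality check that the paper leaves implicit.

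The ambient issue you flag is not an additional obstacle. The hypothesis $W_i\in\Alg_{E_{n-1}}(\hind{\mathsf{X}_1}{\CC}{\mathsf{X}_2})$, like the preceding corollary itself, already presupposes an $E_{n-1}$-monoidal structure on $\CD$, and that is exactly what your second step needs. So you should take it as given rather than aim for the unnecessary $E_{m-1}$ structure, and you can drop the vague levelwise fallback.
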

 
Using above sequence of corollaries we can show the equivalence of each level of morphisms between Haugseng's model and GS's model 
\begin{thm}
    For each $k\leq n$, we have 
    \begin{align*}
        \hom^{k}_{\Mrt_{E_n}^{Haug}(\CC)}\simeq \hom^{k}_{\Mrt_{E_n}^{GS}(\CC)}
    \end{align*}
\end{thm}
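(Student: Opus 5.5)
The plan is to argue by induction on the morphism level $k$, comparing the two models cell by cell. In each model a $k$-morphism is described by an iterated recipe. In Haugseng's model one first forms an iterated bimodule category and then takes $E_{n-k}$-algebras inside it. In the GS model one takes $E_{n-k}$-algebras first and then forms iterated bimodules in $\Alg_{E_{n-k}}(\CC)$. So the whole content is a commutation of the two operations ``take $E_j$-algebras'' and ``take bimodules over the previously fixed cells''. The two corollaries immediately preceding the statement give exactly this commutation at levels $1$ and $2$, and the iteration remark there asserts the same for more layers. I will formalize that remark as the inductive step.

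The base case $k=0$ is trivial, since both models have $\Alg_{E_n}(\CC)$ as objects. For $k=1$ the first corollary above gives
\begin{align*}
\hind{\sX_1}{\Alg_{E_{n-1}}(\CC)}{\sX_2}\simeq \Alg_{E_{n-1}}(\hind{\sX_1}{\CC}{\sX_2}).
\end{align*}
For the inductive step I will fix parallel cells $\sX_1,\sX_2$, then $\sW_1,\sW_2$, and so on up to level $k-1$. I will set $\CD_{k-1}$ to be the iterated bimodule category of $\CC$ over these cells, so that Haugseng's $k$-morphisms are $\Alg_{E_{n-k}}(\CD_{k-1})$. By the induction hypothesis, the $(k-1)$-cells are $E_{n-k+1}$-algebras in $\CD_{k-2}$. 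Equivalently, via Theorem \ref{thm:algebras_in_bimodule_categories} applied inside $\CD_{k-2}$, they are $E_{n-k}$-algebras in $\Alg_{E_1}(\CD_{k-2})$. I then apply the level-one corollary with the ambient category replaced by $\CD_{k-2}$ and the algebras replaced by the two $(k-1)$-cells. This yields
\begin{align*}
\Alg_{E_{n-k}}(\CD_{k-1})\simeq {}_{\sW_1}\bigl(\Alg_{E_{n-k}}(\CD_{k-2})\bigr)_{\sW_2}.
\end{align*}
I will then unwind the inner $\Alg_{E_{n-k}}(\CD_{k-2})$ by the induction hypothesis at the previous level. This gives the GS iterated bimodule category in $\Alg_{E_{n-k}}(\CC)$. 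One must also check that morphisms match, which follows because every equivalence used is identity on underlying objects and morphisms of $\CC$. At $k=n$ the algebra structure is $E_0$, so the comparison reduces to an equality of iterated bimodule categories. Here I only compare the underlying cells and disregard the pointing convention noted in the introduction.

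The main obstacle is the hypotheses at each stage. Applying Theorem \ref{thm:algebras_in_bimodule_categories} inside $\CD_{k-2}$ requires $\CD_{k-2}$ to be $E_m$-monoidal with $m\ge 2$, finitely cocomplete, and with a right-exact tensor product, so that relative tensor products exist. Theorem \ref{thm:bimodule-category-monoidal} supplies only one monoidal structure on a bimodule category over $E_2$-algebras. I would therefore first prove the lemma that the bimodule category of an $E_{j}$-algebra pair in an $E_{j+r}$-monoidal category is $E_{j+r-1}$-monoidal over the relative tensor product, which would verify the $E_{m}$ budget decreasing by one per level. I expect this bookkeeping, together with the compatibility of the relative tensor product $\otimes_{\sX_1\ot\sX_2}$ with the forgetful functors used in the induction, to be the delicate part. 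I would try to handle it by reducing each level to the fully faithful embedding of Theorem \ref{thm:B_1CB_2_to_B_1B_2CB_1B_2_is_fully_faithful}.
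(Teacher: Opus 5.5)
Your plan is essentially the paper's argument: the paper proves the level-one commutation $\hind{\sX_1}{\Alg_{E_{n-1}}(\CC)}{\sX_2}\simeq\Alg_{E_{n-1}}(\hind{\sX_1}{\CC}{\sX_2})$ via Theorem~\ref{thm:algebras_in_bimodule_categories}, and obtains the higher levels by iterating it with the ambient category replaced by the preceding bimodule category, which is exactly your inductive step. When you write it out, two points that the paper also leaves implicit need care: unwinding $\Alg_{E_{n-k}}(\CD_{k-2})$ requires the previous-level comparison for $E_{n-k}$-algebras rather than $E_{n-k+1}$-algebras, so make the inductive hypothesis uniform in the algebra degree (regarding the lower cells as algebras of lower degree); and the comparison equivalences must be monoidal so that the $\sW_i$-bimodule structures transport.
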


Using Theorem \ref{thm:bibimod_equivalent_to_bimod_in_bimod}, we have 
\begin{prp}
    there is an equivalence 
    \begin{align*}
        \hind{\sW_1}{\hind{\sX_1}{\Alg_{E_{n-2}}(\CC)}{\sX_2}}{\sW_2}\simeq \hind{\sW_1}{\vind{\sX_1}{\Alg_{E_{n-2}}(\CC)}{\sX_2}}{\sW_2}
    \end{align*}
    for any $E_n$-algebras $X_1, X_2$ and their $E_{n-1}$-bimodules $W_1$ and $W_2$.
\end{prp}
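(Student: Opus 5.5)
The plan is to reduce the general statement to the $n=2$ case already settled in Theorem \ref{thm:bibimod_equivalent_to_bimod_in_bimod}, by replacing the ambient category $\CC$ with $\CE:=\Alg_{E_{n-2}}(\CC)$. Since $\CC$ is $E_m$-monoidal with $m\geq n$, $\CE$ inherits an $E_{m-n+2}$-monoidal structure, and this is at least $E_2$. The Working Definition gives $\Alg_{E_n}(\CC)\simeq\Alg_{E_2}(\CE)$, so $\sX_1,\sX_2$ become $E_2$-algebras in $\CE$. By the first corollary of this section, the $E_{n-1}$ bimodules $\sW_1,\sW_2$ are objects of $\hind{\sX_1}{\Alg_{E_{n-1}}(\CC)}{\sX_2}\simeq\hind{\sX_1}{\Alg_{E_1}(\CE)}{\sX_2}$. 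In other words, they are exactly $E_1$ $\sX_1$-$\sX_2$-bimodules in $\CE$.

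Under this dictionary, the right-hand side $\hind{\sW_1}{\vind{\sX_1}{\Alg_{E_{n-2}}(\CC)}{\sX_2}}{\sW_2}$ is by definition the bi-bimodule category $\hind{\sW_1}{\vind{\sX_1}{\CE}{\sX_2}}{\sW_2}$. Its objects are $\sW_1$-$\sW_2$-bimodules in $\CE$ whose induced $\sX_i$-actions satisfy the coherence diagram \eqref{eq:coherence_of_bibimod}. Likewise, the left-hand side is $\hind{\sW_1}{(\hind{\sX_1}{\CE}{\sX_2})}{\sW_2}$. Theorem \ref{thm:bibimod_equivalent_to_bimod_in_bimod}, applied with $\CC$ replaced by $\CE$ and with $(B_1,B_2,A_1,A_2)=(\sX_1,\sX_2,\sW_1,\sW_2)$, then gives the equivalence directly. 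Concretely, Lemma \ref{lem:bibimod_to_bimod_in_bimod} supplies one direction and Lemma \ref{lem:bimod_in_bimod_to_bibimod} the other.

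The main obstacle is checking that the hypotheses used in Section \ref{subsec:two-morita-iterated-bimodules} actually hold for $\CE$. Those arguments need a braiding, which $\CE$ has since it is at least $E_2$. They also need finite cocompleteness with $\ot$ right exact in each variable, because relative tensor products over $\sX_1\ot\sX_2$ must exist inside $\CE$ and coequalizers must be epic there.

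- \textbf{Colimits in $\CE$.} Colimits in $\Alg_{E_{n-2}}(\CC)$ are not computed underlying in $\CC$, so I would argue that the relevant coequalizers are reflexive and hence sifted. Sifted colimits are created by the forgetful functor $\Alg_{E_{n-2}}(\CC)\to\CC$, since $\ot$ preserves sifted colimits separately in each variable (as assumed in Definition \ref{dfn:higher_morita_gs}).
- \textbf{Epimorphisms and tensoring.} With colimits created underlying, the quotient maps $p$ stay epic and tensoring preserves them. This is exactly what the cancellation steps in Lemmas \ref{lem:alg_bimod_is_bimod_alg} and \ref{lem:bibimod_to_bimod_in_bimod} require.
- \textbf{Bimodules in $\CE$ versus bimodules in $\CC$.} I would use Proposition \ref{prp:forget_is_monoidal}, iterated, to see that the forgetful functor $\CE\to\CC$ is strong monoidal and braided. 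The bimodule structures in $\CE$ are then bimodule structures in $\CC$ whose action maps are $E_{n-2}$-algebra homomorphisms.

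Once this is in place, the $n=2$ argument transfers verbatim. The induction of the previous corollary can then be repeated to handle further layers in the same way.
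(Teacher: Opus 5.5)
Your proposal is correct and follows the paper's route. The paper simply invokes Theorem~\ref{thm:bibimod_equivalent_to_bimod_in_bimod} with the ambient category $\CC$ replaced by $\Alg_{E_{n-2}}(\CC)$, taking $(B_1,B_2)=(\sX_1,\sX_2)$ and $(A_1,A_2)=(\sW_1,\sW_2)$. Your added check fills in a step the paper leaves implicit: the defining relative tensor products are reflexive coequalizers, so they are created by the forgetful functor to $\CC$, and the cocompleteness and exactness hypotheses of Section~\ref{subsec:two-morita-iterated-bimodules} therefore transfer to $\Alg_{E_{n-2}}(\CC)$.
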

So at least for $n=2$ we can prove that 
\begin{thm}
    For each $k\leq n$, we have 
    \begin{align*}
        \hom^{k}_{\Mrt_{E_n}^{Haug}(\CC)}\simeq \hom^{k}_{\Mrt_{E_n}^{GS}(\CC)}\simeq \hom^{k}_{\Mrt_{E_n}(\CC)}
    \end{align*}
\end{thm}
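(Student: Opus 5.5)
The plan is to reduce the statement to the two levelwise identifications already established, one for each model pair. The level $k=0$ is immediate, since all three models have $\Alg_{E_n}(\CC)$ as objects. Since the statement is asserted "at least for $n=2$", I will carry out the argument for $n\le 2$ and check $k=1$ and $k=2$ separately; the level of $3$-morphisms (maps of iterated bimodules) then follows because every identification constructed below is the identity on underlying morphisms in $\CC$.

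For $k=1$, Haugseng's $1$-morphisms are $\Alg_{E_{n-1}}(\hind{\sX_1}{\CC}{\sX_2})$. The GS and bi-bimodule models both use $\hind{\sX_1}{\Alg_{E_{n-1}}(\CC)}{\sX_2}$. The first corollary of Section~\ref{sec:n-morita} identifies these. For $n=2$ that corollary is exactly Theorem~\ref{thm:algebras_in_bimodule_categories}, and for $n=1$ it is tautological. This settles the first equivalence, and the second is an equality of definitions at this level.

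For $k=2$, I will compose two identifications. The second corollary of Section~\ref{sec:n-morita} gives
\[
\Alg_{E_{n-2}}\bigl(\hind{\sW_1}{\hind{\sX_1}{\CC}{\sX_2}}{\sW_2}\bigr)\simeq\hind{\sW_1}{\hind{\sX_1}{\Alg_{E_{n-2}}(\CC)}{\sX_2}}{\sW_2},
\]
which is the Haugseng--GS comparison. The proposition just stated, which is Theorem~\ref{thm:bibimod_equivalent_to_bimod_in_bimod} for $n=2$ with $E_0$-algebras being plain objects, identifies the right-hand side with the bi-bimodule category $\hind{\sW_1}{\vind{\sX_1}{\Alg_{E_{n-2}}(\CC)}{\sX_2}}{\sW_2}$. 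For $n=2$ both functors forget nothing: they keep the underlying object, replace $l,r$ by their descended versions $l',r'$, or do the reverse. So the composite is again identity on underlying objects and morphisms.

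The main obstacle is well-definedness of the $k=2$ comparison. Haugseng's bimodules over $\sW_i$ use the $E_1$-algebra structure of $\sW_i$ internal to $\hind{\sX_1}{\CC}{\sX_2}$. The bi-bimodule side uses the $E_1$-algebra structure in $\CC$ together with the induced $\sX_i$-actions. I must check that the equivalence of Theorem~\ref{thm:algebras_in_bimodule_categories} matches these two structures: the multiplication $m'$ should descend from $m$, and the unit $h'$ should be induced by $h$. I would verify this by inspecting Lemmas~\ref{lem:alg_bimod_is_bimod_alg} and~\ref{lem:bimod_alg_is_alg_bimod}. Once that holds, Theorem~\ref{thm:bibimod_equivalent_to_bimod_in_bimod} applies verbatim. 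For $n>2$ the same pattern would require the unproved higher bi-bimodule equivalence, which is why I restrict the claim to $n\le 2$.
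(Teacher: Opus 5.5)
Your proposal is correct and follows the paper's own route. At level $1$, the Haugseng--GS comparison comes from the first corollary of Section~\ref{sec:n-morita}, which for $n=2$ is Theorem~\ref{thm:algebras_in_bimodule_categories}; at level $2$, the second corollary gives the Haugseng--GS comparison and the proposition derived from Theorem~\ref{thm:bibimod_equivalent_to_bimod_in_bimod} gives the GS--bi-bimodule comparison. You restrict to $n\le 2$ exactly as the paper does, and your extra check that the internal algebra structure on $\sW_i$ is the one built in Lemma~\ref{lem:alg_bimod_is_bimod_alg} (with $m'$ descended from $m$ and $h'$ induced by $h$) is what the paper uses implicitly when it invokes Theorem~\ref{thm:bibimod_equivalent_to_bimod_in_bimod}.
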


We propose that the category of bi-bi-bimodules is equivalent to the category of bimodules internalized to bi-bi-modules and this equivalence also holds for $n=4,5,\dots$ cases.
Thus for any $n$, there should be 
\begin{cnj}
    For each $k\leq n$, we have 
    \begin{align*}
        \hom^{k}_{\Mrt_{E_n}^{Haug}(\CC)}\simeq \hom^{k}_{\Mrt_{E_n}^{GS}(\CC)}\simeq \hom^{k}_{\Mrt_{E_n}(\CC)}
    \end{align*}
\end{cnj}

Fully faithfulness keeps the $E_1$-algebraic defect data visible after passing to module categories.
    We now propose the higher analogue:   
    \[
    \mathsf{Mod_n}\colon
    \Mrt_{E_n}(\CC)
    \longrightarrow
    \Mrt_{E_{n-1}}\bigl(\LMod(\CC)\bigr).
  \]
        \vspace{-1em}
  \[
    \begin{array}{@{}l@{\quad}c@{\quad}l@{\quad}c@{\quad}l@{}}
      E_n \text{-algs}
      & \longrightarrow &
      E_{n-1}\text{-alg bimodules}
      & \longrightarrow &
      E_{n-2}\text{-alg bimodules between}\, E_{n-1}\text{-alg bimodules}
      \cdots.\\[0.4em]
      E_{n-1}\text{-module cats}
      & \longrightarrow &
      E_{n-2}\text{-bimodule cats}
      & \longrightarrow &
      E_{n-3}\text{-bimodule cats between} E_{n-2}\text{-bimodule cats}
      \cdots.
    \end{array}
  \]

\appendix
\appendixpage

\section{Some Results on category of $B_1$-$B_2$-bimodules}
We collect several general facts about monoidal structures and
free--forgetful adjunctions that will be used below.
We first recall two elementary ways of inheriting and transporting
monoidal structures. We then apply these observations to the category
$\hind{B_1}{\CC}{B_2}$ of $B_1$-$B_2$-bimodules.
The main purpose is to identify the free functor
$B_1\ot -\ot B_2$ as a strong monoidal left adjoint and, consequently,
to equip the corresponding forgetful functor with a canonical lax
monoidal structure.

\begin{lem}\label{lem:inheritance_monoidal_of_full_sub_cat}
    Let $(\CC,\otimes,\bfone,\alpha,\lambda,\rho)$ be a monoidal category.
    Let $\CE\subset \CC$ be a full subcategory of $\CC$ such that
    \begin{enumerate}
        \item $\bfone\in \CE$
        \item and for any $U,V\in \CE$, $U\ot V\in \CE$.
    \end{enumerate}
    Then $\CE$ inherits the monoidal sturcture from $\CC$ and the inclusion functor $I:\CE\to \CC$ is strong monoidal.
\end{lem}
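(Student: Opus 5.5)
Since $\CE$ is full, I will simply restrict every piece of the monoidal structure of $\CC$ to $\CE$ and show that nothing is missing. By hypothesis (2), for $U,V\in\CE$ the object $U\ot V$ lies in $\CE$. For morphisms $f:U\to U'$ and $g:V\to V'$ in $\CE$, the tensor $f\ot g:U\ot V\to U'\ot V'$ is a morphism in $\CC$ between objects of $\CE$, and fullness puts it in $\CE$. So the bifunctor $\ot:\CC\times\CC\to\CC$ restricts to a bifunctor $\ot_{\CE}:\CE\times\CE\to\CE$, and functoriality is inherited directly. Hypothesis (1) gives $\bfone\in\CE$, which I take as the unit.

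Next I will restrict the structure isomorphisms. For $U,V,W\in\CE$, the components $\alpha_{U,V,W}:(U\ot V)\ot W\to U\ot(V\ot W)$, $\lambda_U:\bfone\ot U\to U$ and $\rho_U:U\ot\bfone\to U$ are morphisms of $\CC$ between objects of $\CE$ (using (1) and (2) repeatedly). Fullness makes them morphisms of $\CE$. Their inverses likewise lie in $\CE$, so they are isomorphisms in $\CE$. Naturality, the pentagon axiom and the triangle axiom are equations between composites of morphisms. Those equations already hold in $\CC$, and composition in $\CE$ is the composition of $\CC$, so they hold in $\CE$. Hence $(\CE,\ot_{\CE},\bfone,\alpha,\lambda,\rho)$ is a monoidal category.

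For the inclusion $I$, I will take the structure maps $J_{U,V}:I(U)\ot I(V)\to I(U\ot_{\CE}V)$ and $\varepsilon:\bfone\to I(\bfone)$ to be identities, since these objects coincide on the nose. The coherence diagrams for a monoidal functor then reduce to $\alpha=\alpha$, $\lambda=\lambda$ and $\rho=\rho$. The identities are invertible, so $I$ is strong (indeed strict) monoidal.

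There is no real obstacle here. The only point needing care is to invoke fullness at each stage: for $f\ot g$, for the structure isomorphisms, and for their inverses. Without fullness, the associator and unitors need not belong to $\CE$.
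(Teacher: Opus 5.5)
Your proposal is correct and follows essentially the same route as the paper. Both restrict $\otimes$, the unit $\bfone$, and the associator and unitors to $\CE$, using fullness to place $f\otimes g$ and the structure isomorphisms (with their inverses) in $\CE$, and inherit all coherence axioms from $\CC$. You additionally choose identity structure maps for the inclusion $I$, making it strict monoidal, which spells out a point the paper leaves implicit.
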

\begin{proof}
    We construct monoidal structure on $\CE$ by directly inheritance.
    \begin{itemize}
        \item First we restrict the tensor product on $\CE$ to obtain $\otimes_{\CE}:\CE\times \CE\to \CE$.
        The closedness of $\otimes_{\CE}$ on object is ensured by assumption, and on morphism level, since $\CE$ is a full subcategory of $\CC$, so it is also quite obvious.
        Thus $\otimes_{\CE}$ is indeed a bifunctor.

        \item  Tensor unit of $\CE$ is again $\bfone$.
        
        \item For any $U,V,W\in \CE$, $U\ot_{\CE}(V\ot_{\CE} W)$ and $(U\ot_{\CE}V\ot_{\CE}) W$ are both object in $\CE$, so the associator $\alpha_{U,V,W}$ is also an isomorphism in $\CE$ by fully faithfulness.
        We can directly use it as the associator in $\CE$.
        
        \item The same argument applies to the left and right unitors.
    \end{itemize}
    All coherence axioms are inherited from $\CC$.
\end{proof}

The preceding observation also allows a monoidal structure to be
transported along a fully faithful functor whose essential image is closed
under tensor products.
\begin{lem}
    Let $(\CC,\otimes,\bfone,\alpha,\lambda,\rho)$ be a monoidal category.
    Let $F:\CD\to \CC$ be a fully faithful functor such that 
    \begin{enumerate}
        \item there exists an object $I\in \CD$ such that $F(I)\cong \bfone$.
        \item for any $X,Y\in \CD$, there is an object $T\in \CD$ such that there is a natural isomorphism 
        \begin{align*}
            F(T)\cong F(X)\otimes F(Y) 
        \end{align*}
    \end{enumerate}
    Then $\CD$ can be equipped with a monoidal structure such that $F$ is a strong monoidal functor.
\end{lem}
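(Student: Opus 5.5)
The plan is to transport every piece of monoidal structure from $\CC$ to $\CD$ along $F$, using full faithfulness to pull back morphisms. First choose the data. For each pair $X,Y\in\CD$ fix an object $X\otimes_{\CD}Y:=T$ together with the isomorphism $\phi_{X,Y}:F(X)\otimes F(Y)\xrightarrow{\sim}F(X\otimes_{\CD}Y)$ provided by hypothesis (2). Set the unit to be $I$, with $\phi_0:\bfone\xrightarrow{\sim}F(I)$ from hypothesis (1). On morphisms $f:X\to X'$ and $g:Y\to Y'$, define $f\otimes_{\CD}g$ as the unique preimage under $F$ of
\[
\phi_{X',Y'}\circ(F(f)\otimes F(g))\circ\phi_{X,Y}^{-1}.
\]
This preimage exists and is unique because $F$ is fully faithful. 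Functoriality of $\otimes_{\CD}$ then follows from functoriality of $\otimes$ in $\CC$ together with faithfulness of $F$. By construction, $\phi$ is natural in both variables.

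Next, define the constraints by the same device. The associator $a_{X,Y,Z}$ is the unique morphism whose image under $F$ is the composite
\[
F((X\otimes_{\CD}Y)\otimes_{\CD}Z)\xrightarrow{\phi^{-1}}F(X\otimes_{\CD}Y)\otimes F(Z)\xrightarrow{\phi^{-1}\otimes\id}(F X\otimes F Y)\otimes F Z\xrightarrow{\alpha}F X\otimes(F Y\otimes F Z)\xrightarrow{\id\otimes\phi}F X\otimes F(Y\otimes_{\CD}Z)\xrightarrow{\phi}F(X\otimes_{\CD}(Y\otimes_{\CD}Z)).
\]
The unitors $l_X$ and $r_X$ are defined analogously, using $\phi_0$, $\phi$ and $\lambda,\rho$. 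Each constraint is an isomorphism because $F$ reflects isomorphisms, being fully faithful. Naturality of each constraint holds because it holds after applying the faithful functor $F$, where it reduces to naturality of $\phi$, $\alpha$, $\lambda$ and $\rho$.

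Then verify the pentagon and triangle axioms. Apply $F$ to both sides of each axiom and expand using the definitions. The $\phi$'s and $\phi^{-1}$'s cancel in adjacent pairs, so each side becomes the corresponding composite in $\CC$ conjugated by a single comparison isomorphism, for instance $F X\otimes F Y\otimes F Z\otimes F W\to F(\cdots)$. The axioms in $\CC$ then give equality, and faithfulness of $F$ lifts that equality back to $\CD$. Finally, $(F,\phi,\phi_0)$ is strong monoidal: the hexagon and unit compatibility conditions for a monoidal functor are literally the defining equations of $a$, $l$ and $r$ after rearranging.

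The main obstacle is bookkeeping rather than any conceptual difficulty. The choice of $T$ in hypothesis (2) is only up to isomorphism, and one must make sure the pasted $\phi$'s are used coherently in the pentagon check. I would handle this by fixing the choices once at the start, as above, and by proving a small auxiliary claim: every composite built from $\phi$, $\phi_0$ and the constraints of $\CC$ between two fixed bracketings is determined by its source and target. This is a Mac Lane coherence argument carried out in $\CC$, and it reduces all the axiom checks to that single claim.
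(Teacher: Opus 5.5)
Your proposal is correct and is essentially the paper's argument. The paper restricts to the essential image $\mathrm{EssIm}(F)$, which is monoidal by the preceding lemma, and encodes your choices of $T$ and $\phi_{X,Y}$ as a quasi-inverse $G$ with counit $\epsilon$, setting $X\otimes_{\CD}Y:=G(F(X)\otimes F(Y))$, pulling the associator back along $F$ by full faithfulness, and checking the axioms after applying $F$. Your auxiliary coherence claim is valid but more than you need: naturality of $\alpha$ and of $\phi$ already reduces each side of the pentagon to the corresponding side in $\CC$, conjugated by the same comparison isomorphisms.
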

\begin{proof}
    Let $\CE=\mathrm{EssIm}(F)$, the full subcategory of $\CC$ consists of objects that are isomorphic to $F(X)$ for some $X\in \CD$.  
    Then by Lemma \ref{lem:inheritance_monoidal_of_full_sub_cat}, we have $\CE$ is monoidal.

    Now since $F:\CD\to \CE$ is fully faithful and essentially surjective, then $F$ is an equivalence between categories, meaning there is a quasi-inverse $G:\CE\to \CD$ and natural isomorphism $\epsilon:FG\cong \Id_{\CE}$.

    Define $X\otimes_{\CD} Y:=G(F(X)\otimes F(Y))=GF(T)\cong T$ and $\bfone_{\CD}:=G(\bfone)=GF(I)\cong I$.
    Since $F, G,\otimes$ are both functors, so $\otimes_{\CD}$ is a bifunctor.

    Now define $\alpha^{\CD}_{X,Y,Z}$ be the image of $F(X\ot_{\CD}(Y\ot_{\CD}Z))\xrightarrow{\epsilon_{}}F(X)\otimes F(Y\otimes_{\CD}Z)\xrightarrow{\epsilon}F(X)\ot(F(Y)\ot F(Z))\xrightarrow{\alpha_{X,Y,Z}}(F(X)\ot F(Y))\ot F(Z)\xrightarrow{\epsilon^{-1}}F(X\otimes_{\CD}Y)\otimes F(Z)\xrightarrow{\epsilon^{-1}}F((X\ot_{\CD}Y)\ot_{\CD}Z)$ under $G$.
    By fullness, this image is unique.

    The pentagon of $\alpha^{\CD}$ in $\CD$ now can be proved by transferring it into $\CC$.
\end{proof}

We now specialize these observations to bimodule categories.
The next two lemmas show that the usual free--forgetful adjunction between
$\CC$ and $\hind{B_1}{\CC}{B_2}$ is compatible with the relevant monoidal
structures.

\begin{lem}\label{lem:free_functor_is_monoidal}
    The free functor
    \begin{align*}
        B_1\ot-\ot B_2:\CC&\to \hind{B_1}{\CC}{B_2}\\
        X&\mapsto (B_1\ot X\ot B_2,m_1\ot\id,\id\ot m_2)
    \end{align*}
    is strong monoidal.
\end{lem}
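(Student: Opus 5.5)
We have to show that the free functor $F:=B_1\ot-\ot B_2$ is strong monoidal, where $\CC$ carries $\ot$ and $\hind{B_1}{\CC}{B_2}$ carries $\otd_{B_1\ot B_2}$ with unit $B_1\ot B_2$ (Theorem \ref{thm:bimodule-category-monoidal}).

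\textbf{Unit constraint.} The unit constraint is the canonical isomorphism $F(\bfone)=B_1\ot\bfone\ot B_2\cong B_1\ot B_2$. By construction it identifies the free actions $(m_1\ot\id,\id\ot m_2)$ with the bimodule structure on the unit object used in that theorem.

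\textbf{Tensor constraint.} I will produce a natural isomorphism
\[
\mu_{X,Y}:(B_1\ot X\ot B_2)\otd_{B_1\ot B_2}(B_1\ot Y\ot B_2)\xrightarrow{\ \sim\ }B_1\ot X\ot Y\ot B_2 .
\]
Via Theorem \ref{thm:B_1CB_2_to_B_1B_2CB_1B_2_is_fully_faithful}, the right $B_1\ot B_2$-action on $F(X)$ is $r^{12}$. This action braids the incoming $B_1$ past $X\ot B_2$ into the left $B_1$ slot and multiplies there, then multiplies the $B_2$ into the right $B_2$ slot. I will show that $F(X)$ is a free right $B_1\ot B_2$-module, isomorphic to $(B_1\ot X)\ot(B_1\ot B_2)$ up to the braiding $\beta_{B_2,B_1}$ and the commutativity of $B_1$. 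The isomorphism is $\id_{B_1}\ot\id_X\ot\beta$-type reshuffling followed by nothing else.

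The standard fact $N\ot R\otd_R P\cong N\ot P$ for a free right $R$-module, with $R=B_1\ot B_2$, then gives the coequalizer explicitly. The candidate map $B_1 X B_2 B_1 Y B_2\to B_1 X Y B_2$ moves the middle $B_1$ to the left across $X$ and $B_2$ using the braiding, multiplies with $m_1$, and multiplies the middle $B_2$ into the right $B_2$ after braiding it past $B_1Y$, again with $m_2$. I will check three things:
\begin{enumerate}[(a)]
\item this map coequalizes $r^{12}_{F(X)}\ot\id$ and $\id\ot l^{12}_{F(Y)}$, which uses associativity of $m_i$ and $m_i\circ\beta_{B_i,B_i}=m_i$;
\item the section $B_1XYB_2\to B_1XB_2B_1YB_2$ obtained by inserting units $h_2,h_1$ exhibits the diagram as a split (reflexive) coequalizer, so the map is the coequalizer;
\item the induced map is a $B_1$-$B_2$-bimodule map, which holds because the outer actions only touch the outer factors.
\end{enumerate}

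Naturality in $X,Y$ is evident, since $X$ and $Y$ enter only through identities and braiding naturality.

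\textbf{Coherence.} For the associativity hexagon and the unit triangles, I will precompose with the epimorphic quotient maps $p$ and reduce to identities between morphisms out of $B_1XB_2B_1YB_2B_1ZB_2$. Both sides collapse to the same composite of braidings and multiplications. Equality then follows from the hexagon axioms together with the associativity and commutativity of $B_1$ and $B_2$.

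\textbf{Main obstacle.} The hard part is step (b) together with the coherence: one must keep the braiding conventions consistent. The left action $l^{12}$ uses $\beta_{B_2,M}$ while $r^{12}$ uses $\beta_{M,B_1}$. To show the unit-insertion section really splits the fork, the $E_2$-commutativity of each $B_i$ must absorb exactly one crossing. If the split-coequalizer argument turns awkward, my fallback is to compute both sides as $\Hom$ out of the coequalizer, using the free–forgetful adjunction. Concretely, I would show that for any bimodule $N$, maps out of either side into $N$ correspond to morphisms $X\ot Y\to N$ in $\CC$, and conclude by Yoneda.
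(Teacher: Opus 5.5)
Your proposal is correct in outline, but it is organized differently from the paper's proof.

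\textbf{Comparison with the paper.} The paper factors $F=F_1F_2$ through $\hind{}{\CC}{B_2}$, with $F_2=-\ot B_2$ and $F_1=B_1\ot-$, and proves two one-sided statements.
\begin{itemize}
\item For $F_2$, it descends $\overline\alpha_{X,Y}$ (braid the $B_2$ of the first factor past $Y$ via $\beta_{B_2,Y}$, then multiply) to an isomorphism $F_2(X)\otd_{B_2}F_2(Y)\cong F_2(X\ot Y)$.
\item For $F_1$, it descends $\overline\gamma_{M,N}$ (braid the $B_1$ of the second factor past $M$ via $\beta_{M,B_1}$, then multiply) to an isomorphism $F_1(M)\otd_{B_1\ot B_2}F_1(N)\cong F_1(M\otd_{B_2}N)$.
\item In both cases the inverse is ``insert a unit, then quotient.'' One composite is the identity by the unit axiom. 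The other is the identity after precomposition with the epimorphic quotient, by the balancing relation.
\end{itemize}

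You build the composite constraint in one step. After precomposition with the quotient map, the paper's $(\id_{B_1}\ot\alpha_{X,Y})\circ\gamma_{F_2(X),F_2(Y)}$ is exactly your $\Phi$, so the two isomorphisms coincide.

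What each route buys:
\begin{itemize}
\item Yours does not need the monoidal structure on $\hind{}{\CC}{B_2}$, nor the mixed comparison between $\otd_{B_1\ot B_2}$ and $\otd_{B_2}$. Through freeness over $R=B_1\ot B_2$, it also makes the relevant coequalizer split, hence absolute.
\item The paper's factorization keeps each check one-sided, with one braiding convention and one commutativity relation at a time. It also matches the factorization $U=U_2U_1$ used for the adjunction in the next lemma.
\end{itemize}

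\textbf{Two points to repair.}

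\emph{First, the freeness claim is misstated.} $F(X)$ contains only one copy of $B_1$, so it cannot be $(B_1\ot X)\ot(B_1\ot B_2)$. The correct statement is that
$\beta_{X,B_1}\ot\id_{B_2}\colon X\ot(B_1\ot B_2)\to B_1\ot X\ot B_2$
is an isomorphism of right $R$-modules from the free module onto $(F(X),r^{12})$. Checking this is where $m_1\circ\beta_{B_1,B_1}=m_1$ absorbs the crossing of the incoming $B_1$ with the outer one.

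\emph{Second, step (b) is incomplete as stated.} Your unit-insertion map $s$ is only a section of $\Phi$. Inserting units in the middle $R$-slot only makes the pair reflexive. Neither fact makes $\Phi$ a coequalizer.

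A genuine split fork also needs a map $t$ with $(r^{12}\ot\id)\circ t=\id$ and $(\id\ot l^{12})\circ t=s'\circ\Phi$. The data that work are transported from $X\ot R$:
\begin{itemize}
\item $t$ moves the outer $B_1$ of $F(X)$ (across $X$ via $\beta^{-1}_{X,B_1}$) and the outer $B_2$ of $F(X)$ into the $R$-slot, leaving units behind;
\item the matching section $s'$ puts units in both slots of $F(X)$, so it is not your $s$.
\end{itemize}

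The simpler fix is the paper's device, which works with your $s$. Verify $p\circ s\circ\Phi=p$ by applying the balancing relation twice: once to slide the middle $B_2$ through $r^{12}$ and $l^{12}$, and once to slide the middle $B_1$. Together with $\Phi\circ s=\id$ and $p$ epic, this shows that the induced map and $p\circ s$ are mutually inverse.
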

\begin{proof}
    We factor the free functor as
    \begin{align*}
        \CC
        \xrightarrow{F_2=-\ot B_2}
        \hind{}{\CC}{B_2}
        \xrightarrow{F_1=B_1\ot-}
        \hind{B_1}{\CC}{B_2},
    \end{align*}
    where
    \begin{align*}
        F_2(X)&=(X\ot B_2,\id_X\ot m_2),\\
        F_1(M,d_M)&=(B_1\ot M,m_1\ot\id_M,
        \id_{B_1}\ot d_M).
    \end{align*}
    Thus $F=F_1F_2$ is canonically the functor in the statement.  We
    prove that both factors are strong monoidal.  We omit canonical
    associators and unitors in the formulas below.

    For right $B_2$-modules $M,N$, denote the quotient map by
    \begin{align*}
        p^2_{M,N}:M\ot N\longrightarrow M\otd_{B_2}N,
    \end{align*}
    where the left $B_2$-action on $N$ is
    $d_N\circ\beta_{B_2,N}$.  For $X,Y\in\CC$, consider
    \begin{align*}
        \overline\alpha_{X,Y}:
        X\ot B_2\ot Y\ot B_2
        \xrightarrow{\id_X\ot\beta_{B_2,Y}\ot\id_{B_2}}
        X\ot Y\ot B_2\ot B_2\xrightarrow{\id_{X\ot Y}\ot m_2}
        X\ot Y\ot B_2.
    \end{align*}
    Associativity and braided commutativity of $m_2$, together with the
    hexagon axiom, give the balancing identity
    \begin{align*}
        \overline\alpha_{X,Y}
        \circ\bigl((\id_X\ot m_2)\ot\id_{Y\ot B_2}\bigr)
        =
        \overline\alpha_{X,Y}
        \circ\bigl(\id_{X\ot B_2}\ot
        ((\id_Y\ot m_2)\circ\beta_{B_2,Y\ot B_2})\bigr).
    \end{align*}
    Hence $\overline\alpha_{X,Y}$ induces a natural morphism
    \begin{align*}
        \alpha_{X,Y}:
        F_2(X)\otd_{B_2}F_2(Y)
        \longrightarrow F_2(X\ot Y).
    \end{align*}
    Its inverse is
    \begin{align*}
        F_2(X\ot Y)=X\ot Y\ot B_2
        \xrightarrow{\id_X\ot h_2\ot\id_{Y\ot B_2}}
        X\ot B_2\ot Y\ot B_2\xrightarrow{p^2_{F_2(X),F_2(Y)}}
        F_2(X)\otc{B_2}F_2(Y).
    \end{align*}
    Indeed, one composite is the identity by the unit axiom of $B_2$;
    after precomposition with $p^2_{F_2(X),F_2(Y)}$, the other is the
    identity by the defining $B_2$-balancing relation.  Since a
    coequalizer map is an epimorphism, $\alpha_{X,Y}$ is an isomorphism.

    We next prove the relative one-sided statement for $F_1$.  For
    $M,N\in\hind{}{\CC}{B_2}$, define
    \begin{align*}
        \overline\gamma_{M,N}:
        B_1\ot M\ot B_1\ot N
        \xrightarrow{\id_{B_1}\ot\beta_{M,B_1}\ot\id_N}
        B_1\ot B_1\ot M\ot N
        \xrightarrow{m_1\ot\id_{M\ot N}}
        B_1\ot M\ot N
        \xrightarrow{\id_{B_1}\ot p^2_{M,N}}
        B_1\ot(M\otc{B_2}N).
    \end{align*}
    Let $l^{12}$ and $r^{12}$ be the induced $K$-actions used in the
    definition of $\otd_K$.  Expanding those actions, associativity and
    braided commutativity of $m_1$ identify their $B_1$-parts, while
    the defining relation
    \begin{align*}
        p^2_{M,N}\circ(d_M\ot\id_N)
        =p^2_{M,N}\circ
        \bigl(\id_M\ot(d_N\circ\beta_{B_2,N})\bigr)
    \end{align*}
    identifies their $B_2$-parts.  Consequently,
    \begin{align*}
        \overline\gamma_{M,N}\circ
        (r^{12}_{F_1(M)}\ot\id_{F_1(N)})
        =
        \overline\gamma_{M,N}\circ
        (\id_{F_1(M)}\ot l^{12}_{F_1(N)}).
    \end{align*}
    Thus $\overline\gamma_{M,N}$ descends uniquely to a morphism in
    $\hind{B_1}{\CC}{B_2}$,
    \begin{align*}
        \gamma_{M,N}:
        F_1(M)\otd_KF_1(N)
        \longrightarrow F_1(M\otd_{B_2}N).
    \end{align*}
    Its inverse is the unique morphism $\delta_{M,N}$ satisfying
    \begin{align*}
        \delta_{M,N}\circ(\id_{B_1}\ot p^2_{M,N})
        =q^K_{F_1(M),F_1(N)}\circ
        (\id_{B_1\ot M}\ot h_1\ot\id_N),
    \end{align*}
    where $q^K$ is the quotient map defining $\otd_K$.  The right-hand
    side is $B_2$-balanced because $q^K$ balances the algebra map
    $h_1\ot\id_{B_2}:B_2\to K$.  The unit axiom of $B_1$ and the
    $K$-balancing relation give
    \begin{align*}
        \gamma_{M,N}\delta_{M,N}
        \circ(\id_{B_1}\ot p^2_{M,N})
        &=\id_{B_1}\ot p^2_{M,N},\\
        \delta_{M,N}\gamma_{M,N}\circ q^K_{F_1(M),F_1(N)}
        &=q^K_{F_1(M),F_1(N)}.
    \end{align*}
    Both quotient maps are epimorphisms, so $\delta_{M,N}$ and
    $\gamma_{M,N}$ are mutually inverse.

    The tensor constraint of $F=F_1F_2$ is the composite
    \begin{align*}
        F(X)\otd_{B_1\ot B_2}F(Y)
        \xrightarrow{\gamma_{F_2(X),F_2(Y)}}
        B_1\ot\bigl(F_2(X)\otd_{B_2}F_2(Y)\bigr)
        \xrightarrow{\id_{B_1}\ot\alpha_{X,Y}}
        B_1\ot X\ot Y\ot B_2
        =F(X\ot Y),
    \end{align*}
    and the unit constraint is the canonical isomorphism
    \begin{align*}
        B_1\ot B_2
        \xrightarrow{\sim}
        B_1\ot\bfone\ot B_2=F(\bfone).
    \end{align*}
    It remains to record coherence.  After precomposition with the
    relevant quotient maps, the two sides of the associativity diagram
    for $F_2$ are the same morphism obtained by moving three
    $B_2$-factors together and multiplying them.  Similarly, the two
    sides for $F_1$ move three $B_1$-factors together, multiply them,
    and then pass from $M\ot N\ot P$ to the iterated relative tensor
    product over $B_2$.  The two equalities follow from the $E_2$
    algebra coherences and the relative-tensor associator, respectively.
    The quotient maps are epimorphisms, so the associativity diagrams
    commute.  The unit diagrams reduce in the same way to the unit
    axioms of $B_1$ and $B_2$.  Therefore $F$ is strong monoidal.
\end{proof}

Having established the monoidal behavior of the free functor, we now
identify its right adjoint. This will allow the lax monoidal structure on
the forgetful functor to be obtained formally from the adjunction.
\begin{lem}\label{lem:free_forget_adjunction}
    The forgetful functor 
    \begin{align*}
        U:\hind{B_1}{\CC}{B_2}&\to \CC\\
        (M,l,r)&\mapsto M
    \end{align*}
    is the right adjoint of the free functor $B_1\ot -\ot  B_2$.
\end{lem}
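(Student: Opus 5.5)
To prove Lemma \ref{lem:free_forget_adjunction}, I exhibit the unit and counit explicitly and check the triangle identities. I treat the adjunction as the composite of the two one-sided free--forgetful adjunctions used in the proof of Lemma \ref{lem:free_functor_is_monoidal}. That is, $F_2=-\ot B_2\dashv U_2$ between $\CC$ and $\hind{}{\CC}{B_2}$, and $F_1=B_1\ot-\dashv U_1$ between $\hind{}{\CC}{B_2}$ and $\hind{B_1}{\CC}{B_2}$. Since adjunctions compose and $U=U_2U_1$, the statement follows once each factor is known to be an adjunction.

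For $F_2\dashv U_2$, I take the unit to be
\[
\eta_X=\id_X\ot h_2:X\to X\ot B_2 .
\]
For a right $B_2$-module $(M,d_M)$, I take the counit to be $\varepsilon_M=d_M:M\ot B_2\to M$.

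\begin{itemize}
\item $\varepsilon_M$ is a right $B_2$-module map, which is exactly the associativity axiom of $d_M$.
\item The triangle identity $U_2\varepsilon\circ\eta U_2=\id$ reads $d_M\circ(\id_M\ot h_2)=\id_M$, which is the unit axiom of $d_M$.
\item The triangle identity $\varepsilon F_2\circ F_2\eta=\id$ reads $(\id_X\ot m_2)\circ(\id_X\ot h_2\ot\id_{B_2})=\id$, which is the unit axiom of $B_2$.
\end{itemize}

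The adjunction $F_1\dashv U_1$ is handled identically. The unit is $h_1\ot\id_M$ and the counit is $u_M:B_1\ot M\to M$. The one extra point is that the unit $h_1\ot\id_M$ must be a morphism in $\hind{}{\CC}{B_2}$, i.e.\ it must commute with the right $B_2$-actions; this holds because the action on $B_1\ot M$ is $\id_{B_1}\ot d_M$. Likewise, the counit $u_M$ must be a $B_1$-$B_2$-bimodule map. Left $B_1$-linearity is the associativity of $u_M$. Right $B_2$-linearity is the compatibility of $u_M$ and $d_M$ in the bimodule $M$.

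Alternatively, one can verify the composite adjunction directly. The unit is $h_1\ot\id_X\ot h_2$ and the counit is $d_M\circ(u_M\ot\id_{B_2})$. Equivalently, there is a natural bijection
\[
\Hom_{\hind{B_1}{\CC}{B_2}}(B_1\ot X\ot B_2,M)\cong\Hom_{\CC}(X,M),\qquad f\mapsto f\circ(h_1\ot\id_X\ot h_2),
\]
with inverse $g\mapsto d_M\circ(u_M\ot\id_{B_2})\circ(\id_{B_1}\ot g\ot\id_{B_2})$.

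I expect no serious obstacle, since every identity involved is a unit, associativity, or compatibility axiom. The only step needing care is that the counit is a bimodule homomorphism, which uses the bimodule compatibility $u_M$ vs.\ $d_M$. No braiding appears, because the free bimodule actions are $m_1\ot\id$ and $\id\ot m_2$ acting on the outer factors. Naturality in $X$ and $M$ is immediate from bifunctoriality of $\ot$.
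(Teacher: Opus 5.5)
Your proposal is correct and follows essentially the same route as the paper. You factor $U=U_2U_1$ through the one-sided adjunctions $F_1\dashv U_1$ and $F_2\dashv U_2$, and your explicit bijection $f\mapsto f\circ(h_1\ot\id_X\ot h_2)$, with inverse $g\mapsto d_M\circ(u_M\ot\id_{B_2})\circ(\id_{B_1}\ot g\ot\id_{B_2})$, together with the unit $h_1\ot\id_X\ot h_2$ and counit $d_M\circ(u_M\ot\id_{B_2})$, is exactly what the paper writes down. The only difference is that you also check the triangle identities factor by factor, whereas the paper checks that the two hom-set maps are mutually inverse; this difference is immaterial.
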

\begin{proof}
    Let
    \begin{align*}
        U_1:\hind{B_1}{\CC}{B_2}\longrightarrow
        \hind{}{\CC}{B_2}
        \qquad\text{and}\qquad
        U_2:\hind{}{\CC}{B_2}\longrightarrow\CC
    \end{align*}
    forget the left $B_1$-action and the right $B_2$-action,
    respectively.  Then $U=U_2U_1$.  With the notation of the previous
    proof, the usual free--forgetful adjunctions give
    \begin{align*}
        F_1=B_1\ot-\dashv U_1,
        \qquad
        F_2=-\ot B_2\dashv U_2.
    \end{align*}
    Hence their composite satisfies $F_1F_2\dashv U_2U_1$.  We spell
    out the resulting adjunction in order to fix all structure maps.

    Let $X\in\CC$ and let
    $(M,u_M,d_M)\in\hind{B_1}{\CC}{B_2}$.  Define
    \begin{align*}
        \Phi_{X,M}:\Hom_{\hind{B_1}{\CC}{B_2}}
        (B_1\ot X\ot B_2,M)
        &\longrightarrow \Hom_{\CC}(X,M),\\
        g&\longmapsto
        g\circ(h_1\ot\id_X\ot h_2).
    \end{align*}
    Conversely, for $f:X\to M$, set
    \begin{align*}
        \Psi_{X,M}(f):
        B_1\ot X\ot B_2
        \xrightarrow{\id_{B_1}\ot f\ot\id_{B_2}}
        B_1\ot M\ot B_2
        \xrightarrow{u_M\ot\id_{B_2}}
        M\ot B_2
        \xrightarrow{d_M}M.
    \end{align*}
    Associativity of $u_M$ and $d_M$ shows that $\Psi_{X,M}(f)$ is
    left $B_1$-linear and right $B_2$-linear; the mixed
    compatibility of the two actions identifies the two possible orders
    in which those linearity conditions are applied.  Thus
    $\Psi_{X,M}(f)$ is a morphism in
    $\hind{B_1}{\CC}{B_2}$.

    The unit axioms of the two actions immediately give
    \begin{align*}
        \Phi_{X,M}\Psi_{X,M}(f)=f.
    \end{align*}
    If $g:B_1\ot X\ot B_2\to M$ is a bimodule morphism, then its
    left and right linearity say
    \begin{align*}
        u_M\circ(\id_{B_1}\ot g)
        &=g\circ(m_1\ot\id_{X\ot B_2}),\\
        d_M\circ(g\ot\id_{B_2})
        &=g\circ(\id_{B_1\ot X}\ot m_2).
    \end{align*}
    Substituting these identities into $\Psi_{X,M}\Phi_{X,M}(g)$ and
    using the unit axioms for $m_1$ and $m_2$ yields
    \begin{align*}
        \Psi_{X,M}\Phi_{X,M}(g)=g.
    \end{align*}
    Therefore $\Phi_{X,M}$ and $\Psi_{X,M}$ are mutually inverse and
    natural in $X$ and $M$.

    The unit and counit of the adjunction are, explicitly,
    \begin{align*}
        \eta_X&=h_1\ot\id_X\ot h_2:
        X\longrightarrow B_1\ot X\ot B_2,\\
        \epsilon_M&=d_M\circ(u_M\ot\id_{B_2}):
        B_1\ot M\ot B_2\longrightarrow M.
    \end{align*}
    Their triangle identities are the unit identities for the regular
    actions on the free bimodule and for the actions $u_M,d_M$.
    Consequently, $B_1\ot-\ot B_2\dashv U$.
\end{proof}

\begin{thm}[Doctrinal adjunction, \cite{Kel74}]\label{thm:doctrinal_adjuncion}
    Let $F \dashv U$ be an adjunction. 
    If $F$ is strongly monoidal, then $U$ is lax monoidal.
\end{thm}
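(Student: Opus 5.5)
We construct the lax monoidal structure on $U$ directly from the adjunction $F\dashv U$, with unit $\eta$ and counit $\epsilon$, and the strong monoidal structure on $F$. Write $\mu^F_{X,Y}:F(X)\otimes F(Y)\xrightarrow{\sim}F(X\otimes Y)$ and $\mu^F_0:\bfone\xrightarrow{\sim}F(\bfone)$ for the structure isomorphisms of $F$. The lax structure on $U$ is then defined as the mate of these inverses:
\begin{align*}
    \mu^U_{M,N}:\ U(M)\otimes U(N)
    &\xrightarrow{\ \eta\ }
    UF\bigl(U(M)\otimes U(N)\bigr)
    \xrightarrow{\ U((\mu^F)^{-1})\ }
    U\bigl(FU(M)\otimes FU(N)\bigr)
    \xrightarrow{\ U(\epsilon_M\otimes\epsilon_N)\ }
    U(M\otimes N),\\
    \mu^U_0:\ \bfone
    &\xrightarrow{\ \eta\ } UF(\bfone)\xrightarrow{\ U((\mu^F_0)^{-1})\ } U(\bfone).
\end{align*}
Equivalently, $\mu^U_{M,N}$ is the morphism corresponding under the adjunction bijection to $\epsilon_M\otimes\epsilon_N\circ(\mu^F)^{-1}:F(UM\otimes UN)\to M\otimes N$. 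Naturality of $\mu^U$ in $M$ and $N$ follows from the naturality of $\eta$, $\epsilon$ and $\mu^F$.

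The substantive step is checking the associativity and unit axioms for $(U,\mu^U,\mu^U_0)$. Since $\Hom(X,U(P))\cong\Hom(F(X),P)$ is a natural bijection, two parallel morphisms into $U(P)$ agree exactly when their adjoint transposes $F(X)\to P$ agree. So we transpose both sides of the associativity hexagon for $\mu^U$. Using the triangle identities, each side becomes a composite of $\epsilon_L\otimes\epsilon_M\otimes\epsilon_N$ with a bracketing of the inverses of $\mu^F$ and the associator. These two composites coincide by the associativity axiom of the strong monoidal functor $F$ together with naturality of the associator. The unit axioms are handled in the same way, reducing to the unit axioms for $(F,\mu^F,\mu^F_0)$.

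We expect the only real obstacle to be the bookkeeping in this transposition step, in particular commuting $\epsilon$ past $\mu^F$ by naturality. The cleanest route is to observe that the assignment ``take the mate'' is functorial with respect to pasting, which is Kelly's doctrinal adjunction argument; the hexagon then becomes the mate of the (inverted) hexagon for $F$. As a byproduct, $\eta$ and $\epsilon$ are monoidal natural transformations, so the adjunction lifts to a monoidal adjunction. Applied to Lemma \ref{lem:free_functor_is_monoidal} and Lemma \ref{lem:free_forget_adjunction}, this gives the lax monoidal structure on the forgetful functor $\hind{B_1}{\CC}{B_2}\to\CC$, whose structure map is the quotient $p_{M,N}:M\otimes N\to M\otd_{B_1\otimes B_2}N$.
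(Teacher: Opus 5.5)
Your proposal is correct. The paper gives no proof of this statement and only cites \cite{Kel74}, and your mate construction $\mu^U_{M,N}=U(\epsilon_M\otimes\epsilon_N)\circ U((\mu^F)^{-1})\circ\eta_{UM\otimes UN}$, $\mu^U_0=U((\mu^F_0)^{-1})\circ\eta_{\bfone}$, with the axioms checked on adjoint transposes, is Kelly's doctrinal-adjunction argument specialized to monoidal functors (it only uses the oplax structure $(\mu^F)^{-1}$, so that weaker hypothesis on $F$ would suffice); your closing remark also agrees with Lemma \ref{lem:lax_monoidal_structure_on_forget_of_B_1CB_2}, where this mate is the quotient $p_{M,N}$ with unit $h_1\otimes h_2$.
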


Applying this result to the free--forgetful adjunction gives the monoidal
structure on the forgetful functor that will be used below.
\begin{lem}\label{lem:lax_monoidal_structure_on_forget_of_B_1CB_2}
    Let $B_1$ and $B_2\in \Alg_{E_2}(\CC)$.
    Then the forgetful functor
    \begin{align*}
        U:\hind{B_1}{\CC}{B_2}&\to \CC\\
        (M,l,r)&\mapsto M
    \end{align*}
    admits a canonical lax monoidal structure
    \begin{align*}
        h_1\ot h_2: 1\to B_1\ot B_2\\
        \nabla_{M,N}:=M\ot N\xrightarrow{p_{M,N}} M\otd\limits_{B_1\ot B_2} N
    \end{align*}
\end{lem}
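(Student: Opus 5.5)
The plan is to obtain the lax monoidal structure on $U$ formally from doctrinal adjunction, and then to identify the resulting structure maps with the displayed ones. By Lemma \ref{lem:free_functor_is_monoidal}, the free functor $F:=B_1\ot-\ot B_2$ is strong monoidal, with tensor constraint
\[
\phi_{X,Y}:F(X)\otd\limits_{B_1\ot B_2}F(Y)\xrightarrow{\sim}F(X\ot Y),
\]
built from $\gamma$ and $\alpha$. By Lemma \ref{lem:free_forget_adjunction}, $F\dashv U$, with unit $\eta_X=h_1\ot\id_X\ot h_2$ and counit $\epsilon_M=d_M\circ(u_M\ot\id_{B_2})$. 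Theorem \ref{thm:doctrinal_adjuncion} then supplies a lax monoidal structure on $U$ whose structure maps are the mates of $\phi$ and of the unit constraint. Explicitly, the unit map is
\[
1\xrightarrow{\eta_1}UF(1)=B_1\ot B_2,
\]
and the binary map $\nabla_{M,N}$ is the composite
\[
UM\ot UN\xrightarrow{\eta}UF(UM\ot UN)\xrightarrow{U(\phi^{-1})}U\bigl(FUM\otd_{B_1\ot B_2}FUN\bigr)\xrightarrow{U(\epsilon_M\otd\epsilon_N)}U\bigl(M\otd_{B_1\ot B_2}N\bigr).
\]

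The unit map is immediate: $\eta_1=h_1\ot h_2$ under $B_1\ot 1\ot B_2\cong B_1\ot B_2$. This agrees with the claimed unit.

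For the binary map, I would show $\nabla_{M,N}=p_{M,N}$ by chasing the composite through the explicit inverses from Lemma \ref{lem:free_functor_is_monoidal}. The inverse of $\alpha$ inserts $h_2$ between the factors and then applies $p^2$. The inverse $\delta$ of $\gamma$ inserts $h_1$ and then applies $q^K$. Tracing an element $m\ot n$ through the composite gives
\[
m\ot n\mapsto (h_1\ot m\ot h_2)\ot(h_1\ot n\ot h_2)\mapsto[\text{class in }FUM\otd FUN]\mapsto p_{M,N}(\epsilon_M(h_1\ot m\ot h_2)\ot\epsilon_N(h_1\ot n\ot h_2)).
\]
By the unit axioms of $u$ and $d$, the last expression equals $p_{M,N}(m\ot n)$. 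The only care needed is that the extra units inserted by $\phi^{-1}$ all land on unit slots that $\epsilon$ absorbs. This holds because every inserted factor is a unit $h_i$, and the braidings involving units are trivial by naturality.

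I expect the main obstacle to be purely bookkeeping: the compatibility of $\epsilon_M\otd\epsilon_N$ with the quotient maps, i.e. checking that $(\epsilon_M\otd\epsilon_N)\circ q^K=p_{M,N}\circ(\epsilon_M\ot\epsilon_N)$. This follows from the definition of the relative tensor product on morphisms, together with the fact that $\epsilon$ is a bimodule map, which Theorem \ref{thm:B_1CB_2_to_B_1B_2CB_1B_2_is_fully_faithful} shows is also a $B_1\ot B_2$-bimodule map. Once the maps are identified, naturality and the coherence axioms for $(U,\nabla,h_1\ot h_2)$ come for free from Theorem \ref{thm:doctrinal_adjuncion}. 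They can also be checked directly: associativity of $\nabla$ reduces, after cancelling epimorphic quotients, to the associator of relative tensor products, and unitality reduces to the unit identity $\lambda_M\circ p_{B_1\ot B_2,M}=l^{12}_M$.
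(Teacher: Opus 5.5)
Your proposal is correct and follows the same route as the paper, which proves the lemma in one line by combining Lemma \ref{lem:free_forget_adjunction}, Lemma \ref{lem:free_functor_is_monoidal} and doctrinal adjunction (Theorem \ref{thm:doctrinal_adjuncion}). Your additional mate computation, showing that the induced structure maps really are $h_1\ot h_2$ and $p_{M,N}$, checks out; the paper leaves this identification implicit.
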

\begin{proof}
    By Lemma \ref{lem:free_forget_adjunction}, we have $B_1\otimes -\otimes B_2$ is left adjoint of $U$.
    Then by Lemma \ref{lem:free_functor_is_monoidal} and Theorem \ref{thm:doctrinal_adjuncion}, we are done.
\end{proof}

\section{Module Categories}\label{sec:appendix-module-categories}

This appendix collects the module-category constructions used in the main text.
We first recall the basic notions of module categories, module functors, and
module natural transformations, which together form the 2-category
$\LMod(\CC)$. We then describe algebra and bimodule objects in
$\LMod(\CC)$. Finally, we introduce a folded description of bimodule
categories in terms of the relative tensor product
$\CA_1\btc{\CC}\CA_2^{\mathrm{rev}}$.

\begingroup

\let\EM\CM
\let\EN\CN
\let\one\bone
\let\ot\otimes

\newcommand{\Obj}{\operatorname{Obj}}

\subsection{Module Categories}

Let $(\CC,\otimes,\one,\alpha)$ be an $E_1$-algebra in the 2-category
$\Cat$, i.e. a monoidal category.

\begin{dfn}
    A {\bf left $\CC$-module category} consists of a tuple
    $(\EM,\odot,\delta,\lambda^\EM)$, where
    \begin{itemize}
        \item $\EM$ is a category;

        \item $\odot:\CC\times\EM\to\EM$ is a bifunctor;

        \item
        \[
            \delta:
            \odot\circ(\ot\times\Id_\EM)
            \Rightarrow
            \odot\circ(\Id_\CC\times\odot)
        \]
        is an invertible natural transformation, called the
        {\bf module associator},
        \begin{align*}
            \xymatrix{
                \CC\times\CC\times\EM
                \ar[r]^{\Id_{\CC}\times\odot}
                \ar[d]_{\ot\times\Id_{\EM}}
                \drtwocell\omit{\delta^{-1}}
                &
                \CC\times\EM
                \ar[d]^{\odot}
                \\
                \CC\times\EM
                \ar[r]_{\odot}
                &
                \EM;
            }
        \end{align*}

        \item
        \[
            \lambda^\EM:
            \odot\circ(I\times\Id_\EM)
            \Rightarrow
            \Id_\EM
        \]
        is an invertible natural transformation expressing the unit
        constraint:
        \begin{align*}
            \xymatrix{
                \ast\times\EM
                \ar[r]^{I\times\Id_{\EM}}
                \ar[dr]
                \drtwocell\omit{\lambda^{\EM}}
                &
                \CC\times\EM
                \ar[d]_{\odot}
                \\
                &
                \EM.
            }
        \end{align*}
    \end{itemize}

    These data satisfy the usual associativity and unit coherence conditions.

    \begin{enumerate}
        \item The associativity coherence is expressed by the commutativity
        of the following diagram:
        \[
        \begin{tikzcd}[row sep=3.6em]
        {\CC\times \CC\times \CC\times \EM} &&&& {\CC\times \CC\times \EM} \\
        \\
        &&& {\CC\times \CC\times \EM} &&&& {\CC\times \EM} \\
        \\
        {\CC\times \CC\times \EM} &&&& {\CC\times \EM} \\
        \\
        &&& {\CC\times \EM} &&&& \EM \\
        &&&&&&&& {}
        \arrow["{\otimes\times \Id \times \Id}"{description}, draw={rgb,255:red,213;green,117;blue,93}, from=1-1, to=1-5]
        \arrow["{\Id\times \Id\times\odot}"{description}, draw={rgb,255:red,213;green,117;blue,93}, from=1-1, to=3-4]
        \arrow["{\Id\times\otimes\times \Id}"{description}, from=1-1, to=5-1]
        \arrow["{\Id\times\odot}"{description}, from=1-5, to=3-8]
        \arrow["{\otimes\times \Id}"{description, pos=0.4}, draw={rgb,255:red,213;green,117;blue,93}, from=1-5, to=5-5]
        \arrow["{\otimes\times \Id}"{description}, from=3-4, to=3-8]
        \arrow["{\Id\times\odot}"{description, pos=0.6}, draw={rgb,255:red,213;green,117;blue,93}, from=3-4, to=7-4]
        \arrow["\odot"{description}, from=3-8, to=7-8]
        \arrow["{\otimes\times \Id}"{description}, from=5-1, to=5-5]
        \arrow["{\Id\times\odot}"{description}, from=5-1, to=7-4]
        \arrow["\odot"{description}, draw={rgb,255:red,213;green,117;blue,93}, from=5-5, to=7-8]
        \arrow["\odot"{description}, draw={rgb,255:red,213;green,117;blue,93}, from=7-4, to=7-8]
        \arrow[equals,from=1-5, to=3-4,shorten <=14pt, shorten >=14pt]
        \arrow["\delta",Rightarrow, from=5-5, to=3-8,shorten <=14pt, shorten >=14pt]
        \arrow["\delta", Rightarrow, swap,from=5-5, to=7-4,shorten <=12pt, shorten >=12pt]
        \arrow["\delta", Rightarrow, from=3-8, to=7-4,shorten <=18pt, shorten >=18pt]
        \arrow["{\Id\times\delta}",swap, Rightarrow, from=5-1, to=3-4,shorten <=14pt, shorten >=14pt]
        \arrow["{\alpha\times \Id}",swap, Rightarrow, from=1-5, to=5-1,shorten <=14pt, shorten >=14pt]
        \end{tikzcd}
        \]

        \item The unit coherence is expressed by the commutativity of
        \[
        \begin{tikzcd}[row sep=2.4em]
            && {\CC\times \CC \times \EM} &&&&&& {\CC\times \EM} \\
            \\
            {\ast\times \CC\times \EM} &&&&&& {\ast\times \EM} \\
            \\
            \\
            && {\CC\times\EM} &&&&&& \EM
            \arrow["{\Id\times \odot}"{description}, from=1-3, to=1-9]
            \arrow["{\otimes}"{description},draw={rgb,255:red,214;green,92;blue,92}, from=1-3, to=6-3]
            \arrow["{\odot}"{description}, from=1-9, to=6-9]
            \arrow["{I\times \Id}"{description}, draw={rgb,255:red,214;green,92;blue,92}, from=3-1, to=1-3]
            \arrow["{\Id_{\ast}\times \odot}"{description}, draw={rgb,255:red,214;green,92;blue,92}, from=3-1, to=3-7]
            \arrow[""{name=0, anchor=center, inner sep=0}, from=3-1, to=6-3]
            \arrow["{I\times \Id}"{description}, from=3-7, to=1-9]
            \arrow[""{name=1, anchor=center, inner sep=0},draw={rgb,255:red,214;green,92;blue,92},from=3-7, to=6-9]
            \arrow["\odot"{description}, draw={rgb,255:red,214;green,92;blue,92}, from=6-3, to=6-9]
            \arrow["{\lambda\times\Id_{\EM}}"{description}, Rightarrow, from=1-3, to=0,shorten <=14pt, shorten >=14pt]
            \arrow["{\lambda^\EM}"{description}, Rightarrow, from=1-9, to=1,shorten <=14pt, shorten >=14pt]
            \arrow["\delta^{-1}"{description, pos=0.6}, bend left=20, Rightarrow, from=1-9, to=6-3,shorten <=14pt, shorten >=14pt]
        \end{tikzcd}
        \]
    \end{enumerate}
\end{dfn}

For later reference, we record the same coherence conditions in components.

\begin{prp}
    For $x,y\in\Obj(\CC)$ and $m\in\Obj(\EM)$, the module associator
    has components
    \[
        \delta_{x,y,m}:(x\otimes y)\odot m
        \longrightarrow
        x\odot(y\odot m),
    \]
    and the unit constraint has components
    \[
        \lambda^\EM_m:\one\odot m\longrightarrow m.
    \]

    For any $x,y,z\in\CC$ and $m\in\EM$, the following pentagon commutes:
    \begin{align*}
        \xymatrix{
            & ((x \otimes y) \otimes z) \odot m
            \ar[dl]_{\alpha_{x,y,z} \odot \id_m}
            \ar[dr]^{\delta_{x \otimes y,z,m}}
            \\
            (x \otimes (y \otimes z)) \odot m
            \ar[d]^{\delta_{x,y \otimes z,m}}
            &&
            (x \otimes y) \odot (z \odot m)
            \ar[d]^{\delta_{x,y,z \odot m}}
            \\
            x \odot ((y \otimes z) \odot m)
            \ar[rr]^{\id_x \odot \delta_{y,z,m}}
            &&
            x \odot (y \odot (z \odot m)).
        }
    \end{align*}

    For any $x\in\CC$ and $m\in\EM$, the following triangle commutes:
    \begin{align*}
        \xymatrix{
            (\one\otimes x)\odot m
            \ar[r]^{\delta_{\one,x,m}}
            \ar[dr]_{\lambda_x\odot \id_m}
            &
            \one\odot(x\odot m)
            \ar[d]^{\lambda^\EM_{x\odot m}}
            \\
            &
            x\odot m.
        }
    \end{align*}
\end{prp}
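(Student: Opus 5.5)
The plan is to obtain the component statements by evaluating the natural transformations $\delta$ and $\lambda^\EM$ of the preceding definition at objects, and then to read off the two pasting diagrams componentwise. First, $\delta$ is a natural transformation between the functors $\odot\circ(\ot\times\Id_\EM)$ and $\odot\circ(\Id_\CC\times\odot)$ on $\CC\times\CC\times\EM$. Its value at $(x,y,m)$ is therefore a morphism $(x\ot y)\odot m\to x\odot(y\odot m)$, which I denote $\delta_{x,y,m}$. In the same way, evaluating $\lambda^\EM$ at $(\ast,m)$ gives $\lambda^\EM_m:\one\odot m\to m$. Invertibility and naturality of these components come directly from the definition.

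For the pentagon, I would evaluate the associativity pasting diagram at a quadruple $(x,y,z,m)\in\CC\times\CC\times\CC\times\EM$. Each face of the diagram becomes one arrow.
\begin{itemize}
\item The face $\alpha\times\Id$ becomes $\alpha_{x,y,z}\odot\id_m$.
\item The face $\Id\times\delta$ becomes $\id_x\odot\delta_{y,z,m}$.
\item The three $\delta$-faces become $\delta_{x\ot y,z,m}$, $\delta_{x,y,z\odot m}$ and $\delta_{x,y\ot z,m}$, according to which objects their source functors send $(x,y,z,m)$ to.
\end{itemize}
The face marked with an equality is the interchange square for $\ot\times\odot$ acting on disjoint factors. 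It holds strictly, since $(\ot\times\Id)\circ(\Id\times\Id\times\odot)=(\Id\times\odot)\circ(\ot\times\Id\times\Id)$ as functors, so its component is an identity. The coloured boundary and the black boundary give the two composites, and equality of the pasted 2-cells means their components agree. This is exactly the commutativity of the displayed pentagon; the orientation is $\delta$ itself rather than $\delta^{-1}$, after inverting where the diagram uses $\delta^{-1}$.

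For the triangle, I would evaluate the unit pasting diagram at $(\ast,x,m)$. The face $\lambda\times\Id_\EM$ gives $\lambda_x\odot\id_m$, the face $\delta^{-1}$ gives $\delta^{-1}_{\one,x,m}$, and the face $\lambda^\EM$ gives $\lambda^\EM_{x\odot m}$. Equality of the two pastings yields $\lambda^\EM_{x\odot m}=(\lambda_x\odot\id_m)\circ\delta^{-1}_{\one,x,m}$. Precomposing with $\delta_{\one,x,m}$ then gives the stated triangle.

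The main obstacle is bookkeeping rather than mathematics. The pasting diagrams must be read with the correct orientation of each 2-cell, in particular whether $\delta$ or $\delta^{-1}$ appears on a given face. One also has to justify that the unlabelled face commutes strictly. I would settle this by checking the source and target functor of each face on objects, and then using the interchange law for horizontal and vertical composition of natural transformations to convert the 2-cell equation into an equation of components.
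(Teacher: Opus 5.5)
Your proposal is correct and matches the paper, which gives no separate argument: it presents the proposition as the componentwise transcription of the two pasting diagrams in the preceding definition. That is exactly your face-by-face evaluation, including the strictly commuting unlabelled faces and the inversion of the face labelled $\delta^{-1}$.

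One small reading correction, which does not affect your computation. The coloured arrows in the cube mark the common source path $((x\otimes y)\otimes z)\odot m$ and the common target path $x\odot(y\odot(z\odot m))$ of both pastings, and the black arrows are interior edges. So the two composites being compared are the pasting of the $\alpha\times\Id$, $\Id\times\delta$ and $\delta_{x,y\otimes z,m}$ faces, versus the pasting of the equality face with the remaining two $\delta$-faces.
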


\subsection{Module Functors and Module Natural Transformations}

We next recall morphisms between module categories.

\begin{dfn}
    Let $(\EM,\odot_\EM,\delta^\EM,\lambda^\EM)$ and
    $(\EN,\odot_\EN,\delta^\EN,\lambda^\EN)$ be two left
    $\CC$-module categories.

    A {\bf left $\CC$-module functor}
    $(F,\zeta):\EM\to\EN$ consists of
    \begin{itemize}
        \item a functor $F:\EM\to\EN$;

        \item an invertible natural transformation
        \[
            \zeta:
            \odot_\EN\circ(\Id_\CC\times F)
            \Rightarrow
            F\circ\odot_\EM,
        \]
        represented by
        \begin{align*}
            \xymatrix{
                \CC\times\EM
                \ar[r]^{\Id_\CC\times F}
                \ar[d]_{\odot_\EM}
                \drtwocell\omit{\zeta}
                &
                \CC\times\EN
                \ar[d]^{\odot_\EN}
                \\
                \EM
                \ar[r]_{F}
                &
                \EN.
            }
        \end{align*}
    \end{itemize}

    These data satisfy the following associativity and unit coherence
    conditions.

    \begin{enumerate}
        \item
        \[
        \begin{tikzcd}[row sep=3.6em]
        {\CC\times \CC\times \EM} &&&& {\CC\times \CC\times \EN} \\
        \\
        &&& {\CC\times \EM} &&&& {\CC\times \EN} \\
        \\
        {\CC\times \EM} &&&& {\CC\times \EN} \\
        \\
        &&& {\EM} &&&& \EN \\
        &&&&&&&& {}
        \arrow["{\Id \times \Id \times F}"{description}, draw={rgb,255:red,213;green,117;blue,93}, from=1-1, to=1-5]
        \arrow["{\Id \times \odot_{\EM}}"{description}, draw={rgb,255:red,213;green,117;blue,93}, from=1-1, to=3-4]
        \arrow["{\otimes \times \Id}"{description}, from=1-1, to=5-1]
        \arrow["{\Id \times \odot_{\EN}}"{description}, from=1-5, to=3-8]
        \arrow["{\otimes \times \Id}"{description, pos=0.4}, draw={rgb,255:red,213;green,117;blue,93}, from=1-5, to=5-5]
        \arrow["{\Id\times F}"{description}, from=3-4, to=3-8]
        \arrow["{\odot_{\EM}}"{description, pos=0.6}, draw={rgb,255:red,213;green,117;blue,93}, from=3-4, to=7-4]
        \arrow["\odot_{\EN}"{description}, from=3-8, to=7-8]
        \arrow["{\Id\times F}"{description}, from=5-1, to=5-5]
        \arrow["{\odot_{\EM}}"{description}, from=5-1, to=7-4]
        \arrow["\odot_{\EN}"{description}, draw={rgb,255:red,213;green,117;blue,93}, from=5-5, to=7-8]
        \arrow["F"{description}, draw={rgb,255:red,213;green,117;blue,93}, from=7-4, to=7-8]
        \arrow["\Id\times\zeta",Rightarrow,from=1-5, to=3-4,shorten <=14pt, shorten >=14pt]
        \arrow["\delta^{\EN}",Rightarrow, from=5-5, to=3-8,shorten <=14pt, shorten >=14pt]
        \arrow["\zeta", Rightarrow, swap,from=5-5, to=7-4,shorten <=12pt, shorten >=12pt]
        \arrow["\zeta", Rightarrow, from=3-8, to=7-4,shorten <=18pt, shorten >=18pt]
        \arrow["{\delta^{\EM}}",swap, Rightarrow, from=5-1, to=3-4,shorten <=14pt, shorten >=14pt]
        \arrow[equals, from=1-5, to=5-1,shorten <=14pt, shorten >=14pt]
        \end{tikzcd}
        \]

        \item
        \[
        \begin{tikzcd}[row sep=2.4em]
            && {\CC\times \EM} &&&&&& {\CC\times \EN} \\
            \\
            {\ast\times \EM} &&&&&& {\ast\times \EN} \\
            \\
            \\
            && \EM &&&&&& \EN
            \arrow["{\Id_\CC\times F}"{description}, from=1-3, to=1-9]
            \arrow["{\odot_{\EM}}"{description}, from=1-3, to=6-3]
            \arrow["{\odot_{\EN}}"{description}, draw={rgb,255:red,214;green,92;blue,92}, from=1-9, to=6-9]
            \arrow["{I_\CC\times \Id_\EM}"{description}, from=3-1, to=1-3]
            \arrow["{\Id_{\ast}\times F}"{description}, draw={rgb,255:red,214;green,92;blue,92}, from=3-1, to=3-7]
            \arrow["{I_\CC\times \Id_\EN}"{description}, draw={rgb,255:red,214;green,92;blue,92}, from=3-7, to=1-9]
            \arrow["F"{description}, draw={rgb,255:red,214;green,92;blue,92}, from=6-3, to=6-9]
            \arrow[""{name=0, anchor=center, inner sep=0}, draw={rgb,255:red,214;green,92;blue,92}, from=3-1, to=6-3]
            \arrow[""{name=1, anchor=center, inner sep=0}, from=3-7, to=6-9]
            \arrow["{\lambda^\EM}"{description}, Rightarrow, from=1-3, to=0,shorten <=14pt, shorten >=14pt]
            \arrow["{\lambda^\EN}"{description}, Rightarrow, from=1-9, to=1,shorten <=14pt, shorten >=14pt]
            \arrow["\zeta"{description, pos=0.6}, bend left=20, Rightarrow, from=1-9, to=6-3,shorten <=14pt, shorten >=14pt]
        \end{tikzcd}
        \]
    \end{enumerate}
\end{dfn}

Equivalently, the structure map has components
\[
    \zeta_{x,m}:
    x\odot_\EN F(m)
    \longrightarrow
    F(x\odot_\EM m).
\]
The coherence conditions can be written componentwise as follows.

\begin{prp}
    For $x,y\in\CC$ and $m\in\EM$, the following diagram commutes:
    \begin{align*}
        \xymatrix{
            (x\otimes y)\odot_\EN F(m)
            \ar[r]_{\delta^\EN_{x,y,F(m)}}
            \ar[d]_{\zeta_{x\otimes y,m}}
            &
            x\odot_\EN(y\odot_\EN F(m))
            \ar[r]^{\id\odot_\EN\zeta_{y,m}}
            &
            x\odot_\EN F(y\odot_\EM m)
            \ar[dl]^{\zeta_{x,y\odot m}}
            \\
            F((x\otimes y)\odot_\EM m)
            \ar[r]^{F(\delta^\EM_{x,y,m})}
            &
            F(x\odot_\EM(y\odot_\EM m)).
        }
    \end{align*}

    The unit compatibility is expressed by
    \begin{align*}
        \xymatrix{
            \one\odot_\EN F(m)
            \ar[dr]_{\lambda^\EN_{F(m)}}
            \ar[rr]^{\zeta_{\one,m}}
            &&
            F(\one\odot_\EM m)
            \ar[dl]^{F(\lambda^\EM_m)}
            \\
            &
            F(m).
        }
    \end{align*}
\end{prp}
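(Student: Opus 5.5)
The statement to prove is the componentwise form of the module functor coherence: the pentagon-shaped square relating $\zeta_{x\otimes y,m}$, $\delta^\EN$, $\id\odot_\EN\zeta_{y,m}$, $\zeta_{x,y\odot m}$ and $F(\delta^\EM)$, together with the unit triangle. The plan is to read both diagrams off the pasting diagrams in the definition by evaluating each 2-cell at an object. No new structure is needed; the content is bookkeeping of whiskerings.

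For associativity, evaluate the first pasting diagram in the definition of a left $\CC$-module functor at a triple $(x,y,m)\in\CC\times\CC\times\EM$. Each 2-cell contributes one component.
\begin{itemize}
\item The whiskered cell $\Id\times\zeta$ followed by $\odot_\EN$ gives $\id_x\odot_\EN\zeta_{y,m}$.
\item The cell $\delta^\EN$ gives $\delta^\EN_{x,y,F(m)}$.
\item The two copies of $\zeta$ give $\zeta_{x,y\odot m}$ and $\zeta_{x\otimes y,m}$.
\item The cell $\delta^\EM$ whiskered by $F$ gives $F(\delta^\EM_{x,y,m})$.
\item The equality cell contributes identities, by strict interchange of $\Id\times\Id\times F$ with $\otimes\times\Id$.
\end{itemize}
Equality of the two pastings (the coloured boundary path versus the other) at this object is then exactly the equality of the two composites from $(x\otimes y)\odot_\EN F(m)$ to $F(x\odot_\EM(y\odot_\EM m))$. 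This is the displayed diagram once the orientation of $\delta^\EM$ is adjusted.

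The same procedure at $(\ast,m)$ handles the unit. The cell $\lambda^\EM$ whiskered by $F$ gives $F(\lambda^\EM_m)$, the cell $\lambda^\EN$ at $F(m)$ gives $\lambda^\EN_{F(m)}$, and $\zeta$ gives $\zeta_{\one,m}$. This yields the triangle.

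The only real obstacle is matching directions. The paper writes $\zeta$ as a 2-cell $\odot_\EN\circ(\Id\times F)\Rightarrow F\circ\odot_\EM$, so its components are $x\odot F(m)\to F(x\odot m)$. Meanwhile $\delta$ is given as $(x\otimes y)\odot m\to x\odot(y\odot m)$, but it appears as $\delta^{-1}$ in the defining square. I would fix the convention that every 2-cell is evaluated in its stated direction and invert $\delta$ where the pasting uses $\delta^{-1}$. Then I check that the resulting hexagon of components closes after cancelling the inverse pair, which reduces it to the displayed four-term diagram. If a mismatch remains, it can only come from the pasting orientation, and correcting it changes no content since all cells are invertible.
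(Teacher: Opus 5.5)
Your proposal is correct and takes the same approach as the paper. The paper gives no separate argument: the proposition is just the componentwise reading of the two pasting diagrams in the definition of a left $\CC$-module functor. Your orientation worries are unnecessary, though. In the module-functor pasting the cells $\delta^{\CM}$ and $\delta^{\CN}$ appear uninverted; the $\delta^{-1}$ label occurs only in the module-category square. So reading every cell in its stated direction gives the displayed pentagon directly, with no hexagon and no inverse pair to cancel.
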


\begin{dfn}
    Let $(F,\zeta^F)$ and $(G,\zeta^G)$ be two left
    $\CC$-module functors from $\EM$ to $\EN$.
    A {\bf left $\CC$-module natural transformation}
    \[
        \gamma:(F,\zeta^F)\Rightarrow(G,\zeta^G)
    \]
    is a natural transformation $\gamma:F\Rightarrow G$ compatible
    with the module structures, as expressed by the following diagram:
    \[
    \begin{tikzcd}[row sep=2.4em]
    {\CC\times \EM} &&& {\CC\times \EN} \\
    \\
    \\
    \EM &&& \EN
    \arrow[""{name=0, anchor=center, inner sep=0}, "{\Id_\CC\times F}"{description}, draw={rgb,255:red,214;green,92;blue,92}, bend left=20, from=1-1, to=1-4]
    \arrow[""{name=1, anchor=center, inner sep=0}, "{\Id_\CC\times G}"{description}, bend right=20, from=1-1, to=1-4]
    \arrow["{\odot_\EM}"{description}, draw={rgb,255:red,214;green,92;blue,92}, from=1-1, to=4-1]
    \arrow["{\odot_{\EN}}"{description}, draw={rgb,255:red,214;green,92;blue,92}, from=1-4, to=4-4]
    \arrow[""{name=2, anchor=center, inner sep=0}, "F"{description}, bend left=20, from=4-1, to=4-4]
    \arrow[""{name=3, anchor=center, inner sep=0}, "G"{description}, draw={rgb,255:red,214;green,92;blue,92}, bend right=20, from=4-1, to=4-4]
    \arrow["{1\times\gamma}", Rightarrow, from=0, to=1,shorten <=3pt, shorten >=3pt]
    \arrow["\gamma", Rightarrow, from=2, to=3,shorten <=3pt, shorten >=3pt]
    \arrow["{\zeta^G}"{description}, bend left=15, Rightarrow, from=1-4, to=4-1,shorten <=10pt, shorten >=10pt]
    \arrow["{\zeta^F}"{description}, bend right=15, Rightarrow, from=1-4, to=4-1,shorten <=10pt, shorten >=10pt]
    \end{tikzcd}
    \]
\end{dfn}

In components, this condition takes the familiar form:

\begin{prp}
    A left $\CC$-module natural transformation satisfies
    \begin{align*}
        \xymatrix{
            x\odot_\EN F(m)
            \ar[r]^{\zeta^F_{x,m}}
            \ar[d]_{\id_x\odot\gamma_m}
            &
            F(x\odot_\EM m)
            \ar[d]^{\gamma_{x\odot m}}
            \\
            x\odot_\EN G(m)
            \ar[r]_{\zeta^G_{x,m}}
            &
            G(x\odot_\EM m).
        }
    \end{align*}
\end{prp}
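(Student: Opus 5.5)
The statement is the componentwise form of the compatibility condition for a left $\CC$-module natural transformation. The plan is simply to unpack the pasting diagram in the preceding definition and evaluate it on objects. By definition, $\gamma:(F,\zeta^F)\Rightarrow(G,\zeta^G)$ is compatible with the module structures when two composite 2-cells from $\odot_\EN\circ(\Id_\CC\times F)$ to $G\circ\odot_\EM$ agree.

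The first composite whiskers $1\times\gamma$ by $\odot_\EN$ and then applies $\zeta^G$. The second applies $\zeta^F$ and then whiskers $\gamma$ by $\odot_\EM$.

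The first step is to compute components at $(x,m)\in\CC\times\EM$. Whiskering $\Id_\CC\times\gamma$ by $\odot_\EN$ has component $\id_x\odot_\EN\gamma_m$. The 2-cell $\zeta^G$ has component $\zeta^G_{x,m}:x\odot_\EN G(m)\to G(x\odot_\EM m)$. Composing vertically gives
\[
\zeta^G_{x,m}\circ(\id_x\odot\gamma_m).
\]
Similarly, the second path gives
\[
\gamma_{x\odot m}\circ\zeta^F_{x,m},
\]
since whiskering $\gamma$ by $\odot_\EM$ has component $\gamma_{x\odot_\EM m}$.

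The second step is to observe that two natural transformations between functors $\CC\times\EM\to\EN$ are equal exactly when all their components agree. Vertical composition and whiskering are computed componentwise in $\Cat$. Hence equality of the two pasted 2-cells is equivalent to commutativity of the displayed square for all $x$ and $m$.

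The only point needing care is orientation. The definition draws $\zeta^F$ and $\zeta^G$ as 2-cells from $\odot_\EN\circ(\Id\times F)$ to $F\circ\odot_\EM$, matching the component convention $\zeta_{x,m}:x\odot_\EN F(m)\to F(x\odot_\EM m)$ recorded just before. So I would check that the pasting in the diagram is read in this direction: both paths start at $x\odot F(m)$ and end at $G(x\odot m)$. With that fixed, the argument is immediate; there is no substantive obstacle beyond this bookkeeping.
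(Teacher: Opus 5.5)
Your proposal is correct and matches the paper, which gives no separate proof and presents the square simply as the componentwise form of the pasting condition in the definition of a left $\CC$-module natural transformation. Evaluating the two pasted 2-cells at $(x,m)$ gives $\zeta^G_{x,m}\circ(\id_x\odot\gamma_m)=\gamma_{x\odot m}\circ\zeta^F_{x,m}$, and this is exactly that translation, with the orientation fixed by the convention $\zeta_{x,m}:x\odot_{\CN}F(m)\to F(x\odot_{\CM}m)$.
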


\begin{dfn}
    The category $\Fun_\CC(\EM,\EN)$ consists of
    \begin{itemize}
        \item objects given by left $\CC$-module functors
        $\EM\to\EN$;
        \item morphisms given by left $\CC$-module natural transformations.
    \end{itemize}
\end{dfn}

\begin{dfn}
    The 2-category $\LMod(\CC)$ consists of
    \begin{itemize}
        \item left $\CC$-module categories as objects;
        \item the categories $\Fun_\CC(\EM,\EN)$ as hom categories.
    \end{itemize}
\end{dfn}

\endgroup

\subsection{Algebra and Bimodule Objects in $\LMod(\CC)$}

We now unpack the algebraic structures in $\LMod(\CC)$ that appear in
the module realization of 2-Morita theory.

\begin{dfn}
    An $E_1$-algebra in $\LMod(\CC)$ consists of
    \[
        \bigl(
        (\CA,L_\CA,\alpha^{L_\CA},\lambda^{L_\CA}),
        (H,\xi^H),
        (\otimes,\xi^\otimes),
        \alpha,\lambda,\rho
        \bigr),
    \]
    where
    \begin{itemize}
        \item
        $(\CA,L_\CA,\alpha^{L_\CA},\lambda^{L_\CA})$
        is an object of $\LMod(\CC)$;

        \item $(H,\xi^H)$ is a 1-morphism in $\LMod(\CC)$.
        In particular, $H:\CC\to\CA$ is a functor and $\xi^H$
        is the corresponding module natural isomorphism. Diagrammatically,
        \begin{align*}
            \xymatrix{
                &\ast
                \ar[dl]_{H_{\CC}}
                \ar[dr]^{H}
                \\
                \CC
                \ar[rr]
                &&
                \CA;
            }
        \end{align*}

        \item $(\otimes,\xi^\otimes)$ is a 1-morphism in
        $\LMod(\CC)$, with underlying functor
        \[
            \otimes:
            \CA\btc{\CC}\CA
            \longrightarrow
            \CA;
        \]

        \item $\alpha$, $\lambda$, and $\rho$ are invertible
        2-morphisms in $\LMod(\CC)$ giving the associator and unit
        constraints.
    \end{itemize}
    These data satisfy the usual monoidal coherence conditions.
\end{dfn}

Thus an $E_1$-algebra in $\LMod(\CC)$ may be regarded as a monoidal
$\CC$-module category whose multiplication and unit are themselves
$\CC$-module functors.

Let now
\[
    \CA_1,\CA_2\in\Alg_{E_1}(\LMod(\CC)).
\]

\begin{dfn}
    An object
    \[
        \CM\in
        \hind{\CA_1}{\LMod(\CC)}{\CA_2}
    \]
    consists of
    \[
        \bigl(
        (\CM,L,\alpha^\CM,\lambda^\CM),
        (\rhd,\xi^{\CA_1}),
        \alpha^{\CA_1},\lambda^{\CA_1},
        (\lhd,\xi^{\CA_2}),
        \alpha^{\CA_2},\rho^{\CA_2},
        \delta
        \bigr),
    \]
    where
    \begin{itemize}
        \item $(\CM,L,\alpha^\CM,\lambda^\CM)$ is an object of
        $\LMod(\CC)$;

        \item $(\rhd,\xi^{\CA_1})$ is a 1-morphism in
        $\LMod(\CC)$ whose underlying functor is
        \[
            \rhd:
            \CA_1\btc{\CC}\CM
            \longrightarrow
            \CM;
        \]

        \item $\alpha^{\CA_1}$ and $\lambda^{\CA_1}$ give the
        associativity and unit constraints for the left action;

        \item $(\lhd,\xi^{\CA_2})$ is a 1-morphism in
        $\LMod(\CC)$ whose underlying functor is
        \[
            \lhd:
            \CM\btc{\CC}\CA_2
            \longrightarrow
            \CM;
        \]

        \item $\alpha^{\CA_2}$ and $\rho^{\CA_2}$ give the
        associativity and unit constraints for the right action;

        \item
        \[
            \delta:
            \rhd\circ(\Id\btc{\CC}\lhd)
            \Rightarrow
            \lhd\circ(\rhd\btc{\CC}\Id)
        \]
        is an invertible 2-morphism in $\LMod(\CC)$ expressing the
        compatibility of the two actions.
    \end{itemize}
\end{dfn}

\begin{rmk}
    In the notation used in the main text, this is precisely the structure
    encoded by
    \[
        \hind{\CA_1}{\vind{\CC}{\Cat}{1}}{\CA_2}.
    \]
\end{rmk}

Morphisms between such bimodule objects must preserve both actions.

\begin{dfn}
    Let
    \[
        \CM,\CN\in
        \hind{\CA_1}{\LMod(\CC)}{\CA_2}.
    \]
    A 1-morphism from $\CM$ to $\CN$ consists of
    \[
        ((F,\xi^F),\delta^L,\delta^R),
    \]
    where
    \begin{itemize}
        \item $(F,\xi^F)$ is a 1-morphism in $\LMod(\CC)$, with
        underlying functor $F:\CM\to\CN$;

        \item $\delta^L$ is an invertible 2-morphism in
        $\LMod(\CC)$ comparing $F$ with the left $\CA_1$-actions:
        \begin{align*}
            \xymatrix{
                \CA_1\btc{\CC}\CM
                \ar[r]^{\Id\btc{\CC}F}
                \ar[d]_{\rhd^\CM}
                &
                \CA_1\btc{\CC}\CN
                \ar[d]^{\rhd^\CN}
                \\
                \CM
                \ar[r]_{F}
                &
                \CN;
            }
        \end{align*}

        \item $\delta^R$ is an invertible 2-morphism in
        $\LMod(\CC)$ comparing $F$ with the right $\CA_2$-actions:
        \begin{align*}
            \xymatrix{
                \CM\btc{\CC}\CA_2
                \ar[r]^{F\btc{\CC}\Id}
                \ar[d]_{\lhd^\CM}
                &
                \CN\btc{\CC}\CA_2
                \ar[d]^{\lhd^\CN}
                \\
                \CM
                \ar[r]_{F}
                &
                \CN.
            }
        \end{align*}
    \end{itemize}
    These data satisfy the usual compatibility conditions with the
    associativity, unit, and mixed bimodule constraints.
\end{dfn}

\subsection{The Folded Relative Tensor Product}

The preceding description involves two separate actions, by $\CA_1$
from the left and by $\CA_2$ from the right. We now combine them into
a single action. This is the categorical analogue of replacing an
$A_1$-$A_2$-bimodule by a left module over
$A_1\otimes A_2^{\mathrm{op}}$.

\begin{lem}
\label{lem:folded_relative_tensor_monoidal}
    Let
    $\CA_1,\CA_2\in\Alg_{E_1}(\LMod(\CC))$ and set
    \[
        \CD:=\CA_1\btc{\CC}\CA_2^{rev}.
    \]
    Then $\CD$ has a canonical monoidal structure characterized by
    \begin{align}
        (a\btd_{\CC}b)\otimes_{\CD}(a'\btd_{\CC}b')
        &:=
        (a\otimes_1a')\btd_{\CC}(b'\otimes_2b),
        \label{eq:folded_relative_tensor_product}\\
        \mathbf{1}_{\CD}
        &:=
        \mathbf{1}_{\CA_1}\btd_{\CC}\mathbf{1}_{\CA_2}.
        \notag
    \end{align}

    Moreover, the functors
    \begin{align*}
        \iota_1:\CA_1&\longrightarrow\CD,
        &a&\longmapsto a\btd_{\CC}\mathbf{1}_{\CA_2},\\
        \iota_2:\CA_2^{rev}&\longrightarrow\CD,
        &b&\longmapsto \mathbf{1}_{\CA_1}\btd_{\CC}b
    \end{align*}
    are strong monoidal.

    If $H_i:\CC\to\CA_i$ is the unit 1-morphism of $\CA_i$, then
    there is a strong monoidal functor
    \begin{align}
        j:=\iota_1H_1
        \simeq
        \iota_2H_2:
        \CC\longrightarrow\CD.
        \label{eq:folded_j_functor}
    \end{align}
\end{lem}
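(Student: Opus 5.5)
The plan is to build the monoidal structure on $\CD=\CA_1\btc{\CC}\CA_2^{rev}$ from the universal property of the relative tensor product, and then read off $\iota_1,\iota_2,j$. First, define the tensor product on the balanced level. Consider the functor
\[
(\CA_1\times\CA_2^{rev})\times(\CA_1\times\CA_2^{rev})\longrightarrow \CD,\qquad ((a,b),(a',b'))\longmapsto (a\otimes_1a')\btd_{\CC}(b'\otimes_2 b).
\]
It is right exact in each variable, since $\otimes_i$ is, as a $1$-morphism in $\LMod(\CC)$. Check that it is $\CC$-balanced separately in the pair $(a,b)$ and in $(a',b')$. For the first pair, this means producing a natural isomorphism between the values at $(c\odot a,b)$ and $(a,c\odot b)$. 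That isomorphism comes from the $\CC$-module constraints $\xi^{\otimes_1}$ and $\xi^{\otimes_2}$: move $c$ out of $a\otimes_1a'$, across the balancing of $\btd_{\CC}$, and back into $b'\otimes_2 b$. Applying the universal property of the relative tensor product twice (as in the proposition from \cite{DR18} on $\LMod(\CC)$) descends this to a right exact functor $\otimes_{\CD}:\CD\btd\CD\to\CD$ satisfying \eqref{eq:folded_relative_tensor_product} on pure tensors. The unit is $\mathbf 1_{\CA_1}\btd_{\CC}\mathbf 1_{\CA_2}$.

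Second, construct the associator and unitors. On pure tensors these are $\alpha_1\btd_{\CC}\alpha_2^{-1}$ (reversed order in the second factor) and $\lambda_1\btd\rho_2$, $\rho_1\btd\lambda_2$. They are compatible with the balancing isomorphisms because $\alpha_i,\lambda_i,\rho_i$ are $2$-morphisms in $\LMod(\CC)$, i.e.\ $\CC$-module natural transformations. The key point, used repeatedly, is that precomposition with the universal balanced functor $q$ is an equivalence on right exact functor categories and on natural transformations. Consequently the pentagon and triangle axioms need only be checked after precomposition with $q$, where they reduce to the axioms in $\CA_1$ and $\CA_2^{rev}$ separately.

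Third, $\iota_1$ and $\iota_2$ are strong monoidal. For $\iota_1$, the constraint is
\[
\iota_1(a)\otimes_{\CD}\iota_1(a')=(a\otimes_1a')\btd(\mathbf 1\otimes_2\mathbf 1)\xrightarrow{\id\btd\lambda_2}\iota_1(a\otimes_1 a'),
\]
and the coherence reduces to the triangle axiom of $\CA_2$; the case of $\iota_2$ is symmetric. For $j$, the unit $1$-morphism $H_i$ is a $\CC$-module functor, so $H_i(c)\simeq c\odot\mathbf 1_{\CA_i}$ via $\xi^{H_i}$. The balancing of $\btd_{\CC}$ then gives
\[
H_1(c)\btd\mathbf 1\simeq (c\odot \mathbf 1)\btd\mathbf 1\simeq \mathbf 1\btd(c\odot\mathbf 1)\simeq \mathbf 1\btd H_2(c),
\]
which is natural in $c$. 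We also need to check that this isomorphism is monoidal, i.e.\ compatible with the strong monoidal structures that $H_i$ carries as units of $E_1$-algebras in $\LMod(\CC)$. Composing strong monoidal functors then makes $j$ strong monoidal.

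I expect the main obstacle to be the second step. Making the associator well defined requires its components to commute with the balancing isomorphisms introduced in the first step, and this means a careful coherence comparison between the balancing data and the mixed constraints $\xi^{\otimes_i}$ (the pentagon-type axioms in the definition of $E_1$-algebras in $\LMod(\CC)$). I would try first to verify this on generators $c\odot(-)$ using only the module-functor associativity axiom for $\otimes_i$, and then propagate it via the universal property.
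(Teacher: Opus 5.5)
Your proposal is correct and follows essentially the same route as the paper: descend $(a,b,a',b')\mapsto (a\otimes_1a')\btd_{\CC}(b'\otimes_2b)$ through the universal balanced functor $q$, induce the associator and unitors from those of $\CA_1$ and $\CA_2$ and check coherence after precomposing with $q$, and obtain $j$ from the balancing isomorphism $H_1(c)\btd_{\CC}\mathbf{1}\simeq\mathbf{1}\btd_{\CC}H_2(c)$. The one point you leave vague is the monoidal structure on $\iota_2H_2$: because $\iota_2$ is defined on $\CA_2^{rev}$, you must regard $H_2$ as a strong monoidal functor $\CC\to\CA_2^{rev}$, which the paper does using the braiding $\beta^{\CC}$ of $\CC$, and the balancing isomorphism should be shown monoidal with respect to that structure rather than the one $H_2$ carries into $\CA_2$.
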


\begin{proof}
    We use $\odot$ for the left $\CC$-actions and also for the
    associated right actions obtained from the braiding of $\CC$.
    The algebra-object structure of $\CA_i$ supplies coherent
    centrality isomorphisms
    \begin{align}
        H_i(c)\otimes_i x
        \simeq
        c\odot_{\CA_i}x
        \simeq
        x\otimes_iH_i(c).
        \label{eq:centrality_of_algebra_object_in_CCat}
    \end{align}
    These are induced by the $\CC$-linearity and balancing of the
    multiplication
    $\CA_i\btd_{\CC}\CA_i\to\CA_i$, together with its two unit
    constraints.

    Let
    \[
        q:\CA_1\times\CA_2^{rev}\longrightarrow\CD,
        \qquad
        (a,b)\longmapsto a\btd_{\CC}b
    \]
    be the universal $\CC$-balanced right-exact bifunctor.
    Consider the four-variable right-exact functor
    \[
        \overline{\mu}(a,b,a',b')
        :=
        q(a\otimes_1a',b'\otimes_2b).
    \]

    Its balancing in the first pair is given by
    \begin{align*}
        \overline{\mu}(c\odot_{\CA_1}a,b,a',b')
        &\simeq
        q\bigl(
        c\odot_{\CA_1}(a\otimes_1a'),
        b'\otimes_2b
        \bigr)\\
        &\simeq
        q\bigl(
        a\otimes_1a',
        c\odot_{\CA_2}(b'\otimes_2b)
        \bigr)\\
        &\simeq
        q\bigl(
        a\otimes_1a',
        b'\otimes_2(c\odot_{\CA_2}b)
        \bigr)\\
        &=
        \overline{\mu}(a,c\odot_{\CA_2}b,a',b').
    \end{align*}

    Similarly, balancing in the second pair is
    \begin{align*}
        \overline{\mu}(a,b,c\odot_{\CA_1}a',b')
        &\simeq
        q\bigl(
        c\odot_{\CA_1}(a\otimes_1a'),
        b'\otimes_2b
        \bigr)\\
        &\simeq
        q\bigl(
        a\otimes_1a',
        c\odot_{\CA_2}(b'\otimes_2b)
        \bigr)\\
        &\simeq
        q\bigl(
        a\otimes_1a',
        (c\odot_{\CA_2}b')\otimes_2b
        \bigr)\\
        &=
        \overline{\mu}(a,b,a',c\odot_{\CA_2}b').
    \end{align*}

    Hence the universal property of the relative Kelly tensor product
    descends $\overline{\mu}$ uniquely to a right-exact functor
    \[
        \mu_{\CD}:\CD\btd\CD\to\CD,
    \]
    giving \eqref{eq:folded_relative_tensor_product}.

    After precomposition with $q^{\times3}$, the two bracketings are
    \[
        ((a\otimes_1a')\otimes_1a'')
        \btd_{\CC}
        \bigl(b''\otimes_2(b'\otimes_2b)\bigr)
    \]
    and
    \[
        (a\otimes_1(a'\otimes_1a''))
        \btd_{\CC}
        \bigl((b''\otimes_2b')\otimes_2b\bigr).
    \]
    The associators of $\CA_1$ and $\CA_2$ therefore induce the
    associator of $\CD$. The unitors are induced in the same way.
    The pentagon and triangle identities follow from those of
    $\CA_1$ and $\CA_2$ by the universal property.

    Formula \eqref{eq:folded_relative_tensor_product} immediately
    gives the strong monoidal structures on $\iota_1$ and $\iota_2$.
    Moreover,
    \[
        H_i(c)\otimes_iH_i(c')
        \simeq
        H_i(c\otimes_{\CC}c'),
        \qquad
        H_i(\mathbf{1}_{\CC})
        \simeq
        \mathbf{1}_{\CA_i},
    \]
    so $\iota_1H_1$ is strong monoidal.

    For the second presentation, regard
    $H_2:\CC\to\CA_2^{rev}$ as a strong monoidal functor via
    \begin{align*}
        H_2(c)\otimes_{\CA_2^{rev}}H_2(c')
        &=
        H_2(c')\otimes_2H_2(c)
        \simeq
        H_2(c'\otimes_{\CC}c)\\
        &\xrightarrow{H_2(\beta^\CC_{c',c})}
        H_2(c\otimes_{\CC}c').
    \end{align*}
    Hence $\iota_2H_2$ is strong monoidal as well.

    Finally, the balancing isomorphism of $q$ gives a monoidal natural
    isomorphism
    \[
        H_1(c)\btd_{\CC}\mathbf{1}_{\CA_2}
        \simeq
        \mathbf{1}_{\CA_1}\btd_{\CC}H_2(c),
    \]
    proving \eqref{eq:folded_j_functor}.
\end{proof}

\begin{rmk}
    In the finite linear setting, the corresponding monoidal structure
    on the relative tensor product of a monoidal right module and a
    monoidal left module over a braided multi-tensor category is recorded
    in \cite[Section~2.2]{KYZ21}; see also \cite{DSPS19} for the balanced
    right-exact universal property of the finite relative tensor product.

    Lemma \ref{lem:folded_relative_tensor_monoidal} is the finitely
    cocomplete version: it uses neither finiteness nor rigidity, but only
    the existence and universal property of the relative Kelly tensor
    product in $\Cat^{\mathrm{rex}}$.
\end{rmk}

The previous lemma folds the two algebra actions into a single monoidal
category. The next lemma shows that the same construction folds an
$\CA_1$-$\CA_2$-bimodule category into a left $\CD$-module category.

\begin{lem}
\label{lem:folded_relative_tensor_action}
    Let
    $\CA_1,\CA_2\in\Alg_{E_1}(\LMod(\CC))$ and set
    \[
        \CD:=\CA_1\btc{\CC}\CA_2^{rev},
    \]
    equipped with the monoidal structure of
    Lemma \ref{lem:folded_relative_tensor_monoidal}.
    For every
    \[
        \CM\in
        \hind{\CA_1}{\LMod(\CC)}{\CA_2},
    \]
    the formula
    \begin{align}
        (a\btc{\CC}b)\rhd_{\CD}^{\CM}m
        &:=
        (a\rhd^\CM m)\lhd^\CM b
        \label{eq:folded_relative_tensor_action}
    \end{align}
    defines a left $\CD$-module structure on $\CM$.
    Its restriction along the strong monoidal functor
    $j:\CC\to\CD$ of \eqref{eq:folded_j_functor} is canonically
    the original $\CC$-action on $\CM$.
\end{lem}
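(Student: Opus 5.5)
The plan is to define the action exactly as in the proof of Lemma \ref{lem:folded_relative_tensor_monoidal}: first at the level of the universal balanced bifunctor $q:\CA_1\times\CA_2^{rev}\to\CD$, then descend along $q$. Concretely, I consider the three-variable right-exact functor
\begin{align*}
    \overline{\rhd}(a,b,m):=(a\rhd^{\CM}m)\lhd^{\CM}b .
\end{align*}
I then check that it is $\CC$-balanced in the pair $(a,b)$. The balancing isomorphism $\vartheta_{c;a,b,m}$ is the composite of four maps: the centrality isomorphism \eqref{eq:centrality_of_algebra_object_in_CCat} and the $\CC$-linearity $\xi^{\CA_1}$ of $\rhd$, giving $(c\odot a)\rhd m\simeq c\odot(a\rhd m)$; the $\CC$-linearity $\xi^{\CA_2}$ of $\lhd$, which moves $c$ across, $(c\odot x)\lhd b\simeq x\lhd(c\odot b)$; and $\CA_2$'s centrality. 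By the universal property of $\btd_{\CC}$ in $\Cat^{rex}$, $\overline{\rhd}$ descends uniquely to $\rhd^{\CM}_{\CD}:\CD\btd\CM\to\CM$, which yields \eqref{eq:folded_relative_tensor_action}.

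For the module structure, I build the associator by precomposing with $q\times q\times\id_{\CM}$. The two bracketings become
\begin{align*}
    ((a\otimes_1a')\rhd m)\lhd(b'\otimes_2 b)
    \quad\text{and}\quad
    (a\rhd((a'\rhd m)\lhd b'))\lhd b .
\end{align*}
These are identified using $\alpha^{\CA_1}$, $\alpha^{\CA_2}$ and the mixed constraint $\delta$, which commutes $\lhd b'$ past $a\rhd$; the order reversal $b'\otimes_2 b$ is exactly why $\CA_2^{rev}$ appears. The unitor comes from $\lambda^{\CA_1}$ and $\rho^{\CA_2}$. For the pentagon and triangle, I evaluate on pure tensors, where they reduce to the pentagons and triangles of the two actions together with the bimodule coherence of $\delta$. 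They then hold on all of $\CD$ because precomposition with $q$ is an equivalence on categories of right-exact functors and their natural transformations. The same remark already used in the proof of Lemma \ref{lem:bimod_functors_in_C Cat_are_iterations} means no pure-tensor assumption on objects is needed.

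For the restriction along $j=\iota_1H_1$, I compute
\begin{align*}
    j(c)\rhd_{\CD}m=(H_1(c)\rhd m)\lhd\mathbf 1_{\CA_2}\simeq H_1(c)\rhd m\simeq c\odot_{\CM}m ,
\end{align*}
using the right unitor and then the $\CC$-linearity of $\rhd$ together with the unit constraint $H_1(c)\simeq c\odot\mathbf 1_{\CA_1}$. Finally I check that this isomorphism intertwines the module associators. Equivalently, it should be compatible with the monoidal structure on $j$ and with the alternative presentation $\iota_2H_2$ from \eqref{eq:folded_j_functor}.

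The main obstacle is coherence of the balancing $\vartheta$. It must be compatible with the associator and unit of the $\CC$-action, so that $\overline{\rhd}$ is balanced in the strict sense required by the universal property. The associator must also respect the balancings of all three variables. Matching $H_2$'s monoidal structure, which involves $\beta^{\CC}$, with the $\CC$-linearity constraint of $\lhd$ is also delicate. I would handle this by writing every constraint as a pasting of the data $\xi^{\CA_i}$, $\delta$ and the centrality maps, and appealing to the coherence axioms of $\CM$ as a bimodule object in $\LMod(\CC)$.
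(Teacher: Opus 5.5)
Your proposal is correct and follows essentially the same route as the paper: you descend $(a\rhd m)\lhd b$ along the universal balanced functor $q$, build the associator from $\alpha^{\CA_1}$, $\alpha^{\CA_2}$ and the mixed constraint $\delta$ after precomposing with $q\times q\times\id_{\CM}$, and recover the $\CC$-action by comparing the two presentations $\iota_1H_1\simeq\iota_2H_2$ of $j$. The only differences are cosmetic. Your balancing passes through $c\odot(a\rhd m)$, whereas the paper passes through $a\rhd(c\odot m)$. Also, the step $(c\odot x)\lhd b\simeq x\lhd(c\odot b)$ is really the balancing built into $\lhd$ being defined on $\CM\btc{\CC}\CA_2$, not its $\CC$-linearity constraint $\xi^{\CA_2}$.
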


\begin{proof}
    Let
    \[
        q:\CA_1\times\CA_2^{rev}
        \longrightarrow
        \CD,
        \qquad
        (a,b)\longmapsto a\btc{\CC}b
    \]
    be the universal $\CC$-balanced right-exact bifunctor.
    Before passing to the relative tensor product, define
    \[
        \overline{\mu}_{\CM}(a,b,m)
        :=
        (a\rhd^\CM m)\lhd^\CM b.
    \]

    The balancing needed to descend this functor is the canonical
    composite
    \begin{align}
        ((c\odot_{\CA_1}a)\rhd^\CM m)\lhd^\CM b
        &\simeq
        (a\rhd^\CM(c\odot_\CM m))\lhd^\CM b\\
        &\simeq
        (a\rhd^\CM m)\lhd^\CM(c\odot_{\CA_2}b).
        \label{eq:folded_action_balancing}
    \end{align}
    Here we use that the two actions and their mixed constraint are
    1- and 2-morphisms in $\LMod(\CC)$.
    Thus \eqref{eq:folded_relative_tensor_action} is well-defined.

    Its module associator is induced from
    \begin{align*}
        a\rhd^\CM
        \bigl((a'\rhd^\CM m)\lhd^\CM b'\bigr)
        \lhd^\CM b
        &\simeq
        \bigl(a\rhd^\CM(a'\rhd^\CM m)\bigr)
        \lhd^\CM b'\lhd^\CM b\\
        &\simeq
        ((a\otimes_1a')\rhd^\CM m)
        \lhd^\CM(b'\otimes_2b).
    \end{align*}
    Its coherence follows from the left, right, and mixed bimodule
    coherences after precomposition with
    $q^{\times2}\times\id_\CM$.

    Finally, the two presentations of $j(c)$ give
    \begin{align*}
        j(c)\rhd_{\CD}^{\CM}m
        &\simeq
        (H_1(c)\rhd^\CM m)
        \lhd^\CM\mathbf{1}_{\CA_2}
        \simeq
        c\odot_\CM m,\\
        j(c)\rhd_{\CD}^{\CM}m
        &\simeq
        (\mathbf{1}_{\CA_1}\rhd^\CM m)
        \lhd^\CM H_2(c)
        \simeq
        c\odot_\CM m.
    \end{align*}
    These composites agree by
    \eqref{eq:folded_action_balancing} and the mixed bimodule unit axiom.
    Hence the restricted action is precisely the original $\CC$-action.
\end{proof}

The two lemmas above therefore provide the folded realization
\[
    \hind{\CA_1}{\LMod(\CC)}{\CA_2}
    \longrightarrow
    \LMod\!\left(
        \CA_1\btc{\CC}\CA_2^{\mathrm{rev}}
    \right),
\]
which sends an $\CA_1$-$\CA_2$-bimodule category to the same underlying
category equipped with the folded action
\[
    (a\btc{\CC}b)\rhd m
    :=
    (a\rhd m)\lhd b.
\]
This is the form of the construction used in the main text.

\newpage

\bibliographystyle{alpha_zzh}
\bibliography{Top}

\end{document}